\newtheorem{theorem}{Theorem}[section]
\newtheorem{lemma}[theorem]{Lemma}
\newtheorem{corollary}[theorem]{Corollary}
\newtheorem{proposition}[theorem]{Proposition}
\newtheorem{definition}[theorem]{Definition}
\newtheorem{problem}[theorem]{Problem}
\newtheorem{observation}[theorem]{Observation}
\newtheorem{remark}[theorem]{Remark}
\newtheorem{assumption}[theorem]{Assumption}
\definecolor{cblue}{rgb}{0.36, 0.54, 0.66}
\definecolor{cred}{rgb}{1, 0, 0}
\definecolor{winered}{rgb}{0.5,0,0}
\author{Evangelos Kosinas\thanks{University of Ioannina and Archimedes, Athena RC, Greece. E-mail: \texttt{ekosinas@cs.uoi.gr}. This work has been partially supported by project MIS 5154714 of the National Recovery and Resilience Plan Greece 2.0 funded by the European Union under the NextGenerationEU Program.}}
\begin{document}

\title{An Optimal $3$-Fault-Tolerant Connectivity Oracle}

\maketitle

\begin{abstract}
We present an optimal oracle for answering connectivity queries in undirected graphs in the presence of at most three vertex failures. Specifically, we show that we can process a graph $G$ in $O(n+m)$ time, in order to build a data structure that occupies $O(n)$ space, which can be used in order to answer queries of the form \emph{``given a set $F$ of at most three vertices, and two vertices $x$ and $y$ not in $F$, are $x$ and $y$ connected in $G\setminus F$?''} in constant time, where $n$ and $m$ denote the number of vertices and edges, respectively, of $G$. The idea is to rely on the DFS-based framework introduced by Kosinas [ESA'23], for handling connectivity queries in the presence of multiple vertex failures. Our technical contribution is to show how to appropriately extend the toolkit of the DFS-based parameters, in order to optimally handle up to three vertex failures. Our approach has the interesting property that it does not rely on a compact representation of vertex cuts, and has the potential to provide optimal solutions for more vertex failures. Furthermore, we show that the DFS-based framework can be easily extended in order to answer vertex-cut queries, and the number of connected components in the presence of multiple vertex failures. In the case of three vertex failures, we can answer such queries in $O(\log n)$ time.
\end{abstract}

\clearpage

\small

\tableofcontents

\clearpage

\normalsize

\section{Introduction}

\subsection{Problem definition}
This paper is motivated by the following problem. Let $G$ be an undirected graph, and let $d_{\star}$ be a positive integer. The goal is to efficiently build a data structure, that occupies as little space as possible, so that we can efficiently answer queries of the following form:\\

``\emph{Given a set of vertices $F$, with $|F|\leq d_{\star}$, that have failed to work, and two vertices $x$ and $y$ not in $F$, are $x$ and $y$ connected in $G\setminus F$?}".\\

This problem has received significant attention in the last few years \cite{DBLP:journals/siamcomp/DuanP20,DBLP:conf/icalp/PilipczukSSTV22,DBLP:conf/focs/LongS22,DBLP:conf/esa/Kosinas23,DBLP:conf/icalp/LongW24}, and it can be characterized as an instance of \emph{``emergency planning"}~\cite{DBLP:conf/focs/PatrascuT07}, or as a data structure problem in the \emph{fault-tolerant}, or \emph{sensitivity} setting. 
The potential usefulness of such data structures in real-world applications is obvious: since networks in real life are prone to failures, it may be worthwhile to spend some time in order to preprocess the graph, so that we can deal efficiently with (temporary) malfunctions.

What makes this particular problem so interesting, and justifies the variety of the approaches that have been taken so far, is the difficulty in simultaneously optimizing all the parameters of efficiency: i.e., the preprocessing time, the space usage, the time to answer the queries, and (in some cases) the time to separately handle the set $F$ of failed vertices. Thus, the solutions that exist in the literature provide various trade-offs, and none of them outcompetes the others in every measure of efficiency (see Table~\ref{table:bounds1}).

However, for the cases where $d_{\star}\in\{1,2\}$, one can provide an optimal solution by relying on some older works. Specifically, for $d_{\star}=1$, one can use a simple DFS-based approach~\cite{DBLP:journals/siamcomp/Tarjan72}, and for $d_{\star}=2$ one can use the SPQR trees~\cite{DBLP:journals/algorithmica/BattistaT96,DBLP:journals/siamcomp/HopcroftT73,DBLP:conf/gd/GutwengerM00}, that represent compactly all $2$-vertex cuts of a biconnected graph. These cases admit of an \emph{optimal} solution in the sense that the data structures can be constructed in $O(n+m)$ time, they occupy $O(n)$ space, and they can answer any connectivity query in $O(1)$ time, where $n$ and $m$ denote the number of vertices and edges, respectively, of $G$. That these preprocessing and query times are optimal is obvious. The claim of the optimality of the $O(n)$ space usage is justified by the $\Omega(\min\{m,d_{\star}n\})$ lower bound on the bits of space usage that are provably necessary for some classes of graphs \cite{DBLP:journals/siamcomp/DuanP20}. 

Two related kinds of queries, that do not involve a particular pair of vertices, but enquire for the global impact of the failed vertices on the connectivity of $G$, are: 

\begin{enumerate}
\item{\emph{``Is $G\setminus F$ connected?"} (vertex-cut query)}
\item{\emph{``What is the number of connected components of $G\setminus F$?"}}
\end{enumerate}

A conditional lower bound for vertex-cut queries was given in \cite{DBLP:conf/focs/LongS22}, and only very recently \cite{arxivMerav} have provided an efficient solution for them. The optimal oracles mentioned above for $d_{\star}\in\{1,2\}$ can also answer queries $1$ and $2$. Specifically, a DFS-based approach can easily answer both such queries in constant time when $|F|=1$, and an $O(n)$-time preprocessing on the SPQR-based data structure can answer at least the first kind of queries in $O(\log n)$ time (as mentioned in \cite{arxivMerav}).

In this paper, it is sufficient to assume that the input graph $G$ is connected. This is because, if one has an oracle (for any of the problems that we have mentioned), that works for connected graphs, then one can simply initialize an instance of the oracle on every connected component of the graph, and thus have in total an oracle providing the same bounds as the original. Furthermore, we can use the sparsifier of Nagamochi and Ibaraki~\cite{DBLP:journals/algorithmica/NagamochiI92}, in order to replace every ``$m$" in the bounds that we state with ``$\min\{m,d_{\star}n\}$" (with the exception of the preprocessing time, in which ``$m$" must necessarily appear as an additive factor).

\subsection{Our contribution}
Our main contribution is an optimal oracle for the case $d_{\star}=3$. This can be stated as the following:

\begin{theorem}
\label{theorem:main}
We can process an undirected graph $G$ with $n$ vertices and $m$ edges in $O(n+m)$ time, in order to build a data structure with $O(n)$ size, which can answer queries of the form ``given a set $F$ of at most three vertices, and two vertices $x$ and $y$ not in $F$, are $x$ and $y$ connected in $G\setminus F$?'' in constant time. 
\end{theorem}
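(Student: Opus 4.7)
The plan is to build on the DFS-based framework of Kosinas and to extend its toolkit of auxiliary parameters so that queries with up to three failed vertices can still be answered in constant time. The preprocessing begins by rooting a DFS tree $T$ at an arbitrary vertex of the (assumed connected) graph $G$ and computing, in $O(n+m)$ time and $O(n)$ space, DFS numbers, parent pointers, subtree sizes, the standard $\mathit{low}$/$\mathit{high}$ arrays, and all of the DFS parameters that Kosinas already uses for one and two failures. Alongside, I would precompute the standard $O(n)$-space $O(1)$-time data structures for ancestor, level-ancestor, and LCA queries on $T$. The intended invariant is the familiar one: after removing $F$, each connected component of $G\setminus F$ is a union of explicit tree regions of $T\setminus F$, and the back-edges that avoid $F$ dictate which regions merge.

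Given a query $(F,x,y)$ with $|F|\le 3$, I would order the failed vertices $f_1,f_2,f_3$ by DFS number and branch on the finite list of possible ancestor-descendant configurations of $F$ in $T$ (all pairwise incomparable; exactly one ancestor-descendant pair; a full chain; and so on). In each configuration, $T\setminus F$ splits into only $O(1)$ named regions: the subtrees hanging from the children of each $f_i$ (with the subtrees containing the other failed vertices carved out), plus the region ``above'' the topmost $f_i$. Using ancestor/LCA/level-ancestor queries, I can identify in $O(1)$ time which region contains $x$ and which contains $y$. The remaining question, and the core of the oracle, is to decide whether those two regions become joined in $G\setminus F$ through a sequence of back-edges that avoid $F$.

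I expect the main obstacle to be the design of DFS-based parameters that answer this region-merging question for every $|F|\le 3$. The approach is to attach to each vertex $v$ a constant-size summary that encodes, for each relevant choice of up to three ancestors to be deleted, which other regions the subtree $T(v)$ can reach via back-edges whose endpoints dodge the deleted vertices; because only $O(1)$ such configurations are relevant per vertex, the total space stays $O(n)$. The delicate part is twofold: first, proving that these summaries can be computed in $O(n+m)$ total time by a single bottom-up pass, in the style of Kosinas's analysis for two failures; and second, showing that the query algorithm can compose the $O(1)$ relevant summaries in constant time even when mergers cascade, for instance when two descendant regions must first connect to each other through a back-edge and only then, via that union, reach across a third failed ancestor. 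Handling these cascades is where the genuinely new parameters beyond the two-failure toolkit must be introduced, and where I would expect most of the technical work to concentrate.

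Once the parameters are in place, the theorem follows by assembling the preprocessing in $O(n+m)$ time and $O(n)$ space and answering each query by the case analysis sketched above, with correctness proved on a per-configuration basis from the DFS-tree characterization of the components of $G\setminus F$. If any particular configuration resists a direct parameter, I would aim to reduce it to several simpler region-merging subqueries, each resolvable by a lighter DFS parameter, so as to keep every per-vertex summary of constant size and the preprocessing linear.
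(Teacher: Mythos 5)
Your proposal reproduces the paper's starting point -- the DFS-tree decomposition of $T\setminus F$ into $O(1)$ internal components plus hanging subtrees, the case split on the ancestor--descendant configuration of $F$, and the reduction of the query to a region-merging question -- but it stops exactly where the theorem's actual content begins. The entire difficulty is to exhibit \emph{which} constant-size, configuration-independent per-vertex parameters suffice and to prove, case by case, that they do. You assert that one can attach to each vertex a summary covering ``each relevant choice of up to three ancestors to be deleted'' and that ``only $O(1)$ such configurations are relevant per vertex,'' but this is precisely the claim that needs proof, and as literally stated it fails: a vertex of depth $h$ has $\Theta(h^3)$ choices of three ancestors, so a per-configuration table is not $O(n)$ space. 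The paper avoids this by storing query-independent quantities ($\mathit{low}_1,\mathit{low}_2$, $\mathit{high}^p_1,\mathit{high}^p_2$, the leftmost/rightmost/maximum points $L_p,R_p,M_p$, skipping points, leftmost/rightmost points into nested segments, and counters $|B_p(v)|$, $\mathit{sumY}(v)$, $\mathit{numLow}$, $\mathit{numHigh}$) and then carrying out a long case analysis (on the location of $M_p(c)$ and $M_p(d)$) showing these always determine the connectivity among the internal components.

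Two further ingredients that your sketch does not anticipate are essential. First, the mediation of a connection between two internal components by a hanging subtree: there can be $\Omega(n)$ hanging subtrees of a failed vertex, so one cannot examine them individually; the paper handles whole batches at once by working on two permutations $T_\mathit{lowInc}$ and $T_\mathit{highDec}$ of the base DFS tree, in which children with similar $\mathit{low}$ or $\mathit{high}_p$ values are consecutive, so that a single extremal child certifies the existence or absence of a mediating subtree. Second, in the hardest configurations no direct parameter certifies the required back-edge at all, and the paper resorts to indirect counting arguments (comparing $\mathit{sumY}(d)-\mathit{sumY}(c)-S$ against $(|B_p(d)|-|B_p(c)|-N)\cdot u$) to detect a back-edge landing in a given region. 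Your fallback of ``reducing to several simpler region-merging subqueries'' does not engage with either obstacle. As it stands the proposal is a correct research plan aligned with the paper's framework, but it does not constitute a proof of the theorem.
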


We note that Theorem~\ref{theorem:main} constitutes a genuine theoretical contribution, since the existence of an optimal oracle for $d_{\star}>2$ was an open problem prior to our work (see our discussion in Section~\ref{section:previouswork}). Furthermore, the solution that we provide has the following notable characteristics:

\begin{enumerate}[label={(\arabic*)}]
\item{It encompasses the cases $d_{\star}\in\{1,2\}$, and has the potential to be extended to an optimal solution for $d_{\star}>3$.}
\item{The underlying data structure does not rely on a computation or compact representation of vertex cuts. Instead, we work directly on a DFS tree, and this enables us to report efficiently the number of connected components upon removal of a set of vertices.}
\item{The most complicated data structures that we use are: the optimal DSU data structure of Gabow and Tarjan~\cite{DBLP:journals/jcss/GabowT85} (that works on a predetermined tree of unions) on the RAM model of computation, an optimal level-ancestors query data structure~\cite{DBLP:journals/tcs/BenderF04}, an optimal RMQ data structure (e.g., \cite{DBLP:conf/cpm/FischerH06}), and an optimal oracle for NCA queries (e.g., \cite{DBLP:journals/siamcomp/BuchsbaumGKRTW08}). Thus, in practice one could use alternative (but still very efficient) implementations of those data structures, in order to have a working oracle operating in near-optimal efficiency.}
\end{enumerate}

In order to establish Theorem~\ref{theorem:main}, we rely on the DFS-based framework introduced by Kosinas~\cite{DBLP:conf/esa/Kosinas23}, for designing an oracle for any fixed $d_{\star}$. The general idea in \cite{DBLP:conf/esa/Kosinas23} is to use a DFS tree $T$ of the graph, and determine the connectivity relation in $G\setminus F$ between some specific subtrees of $T\setminus F$, which are called \emph{internal components} (see Section~\ref{section:internalAndHanging}). This connectivity relation is sufficient in order to be able to answer efficiently any connectivity query. The input $d_{\star}$ for constructing the oracle is vital in order to compute a set of DFS-based parameters that are used in order to handle up to $d_{\star}$ failures. What differs for us in the case $d_{\star}=3$, is that we choose an alternative set of DFS-based parameters (coupled with, and motivated by, a much more involved case analysis for the internal components), which can be computed in linear time, occupy $O(n)$ space, and, if handled carefully, can be used in order to establish the connectivity between the internal components in constant time. (After that, every connectivity query on $G\setminus F$ can be handled precisely as in \cite{DBLP:conf/esa/Kosinas23}.) 

One of the main attributes of the oracle from \cite{DBLP:conf/esa/Kosinas23}, that makes it particularly attractive for our purposes, is its simplicity, both in itself, and in relation to all the other approaches that have been taken so far for general $d_{\star}$. In fact, we show that we can easily extend the underlying data structure, so that it can answer efficiently both vertex-cut queries, and queries for the number of connected components of $G\setminus F$. Specifically, we have the following (see Section~\ref{section:numberOfComponents}):

\begin{theorem}
\label{theorem2}
Let $G$ be an undirected graph with $n$ vertices and $m$ edges, and let $d_{\star}$ be a positive integer. We can process $G$ in $O(d_{\star}m\log{n})$ time, in order to build a data structure with size $O(d_{\star}m\log{n})$, which can answer queries of the form ``given a set $F$ of at most $d_{\star}$ vertices, what is the number of the connected components of $G\setminus F$?", in $O((d^4+ 2^dd^2)\log{n})$ time, where $d=|F|$.
\end{theorem}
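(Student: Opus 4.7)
My plan is to extend the DFS-based connectivity oracle of Kosinas (ESA'23) so that, beyond pairwise $(x,y)$ queries, it also reports the number of connected components of $G \setminus F$. The key observation is that the components of $G \setminus F$ are obtained by partitioning the internal components of $T \setminus F$ according to the non-tree edges of $G$ that survive, so the component count equals $|\mathcal{I}(F)|$ minus the number of non-redundant merges, where $\mathcal{I}(F)$ denotes the set of internal components.

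First I would invoke Kosinas's structural characterisation of internal components to show that $G \setminus F$ admits a system of $O(d^2 + 2^d d^2)$ canonical representatives, with the property that every merge between arbitrary internal components is witnessed by a merge between two representatives. The $O(d^2)$ term arises naturally from pairs of hanging subtrees of distinct $f \in F$; the $2^d$ factor reflects the need to stratify the ``upper'' pieces of the tree by which subset of $F$ is considered as ancestors, while the extra $d^2$ counts pairs within each such stratum.

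Next I would precompute, for every edge $(u,v) \in E$ and each failure level $k \le d_\star$, a compact witness that answers in $O(\log n)$ time whether $(u,v)$ induces a merge between two given representatives. Organising these witnesses into $O(d_\star)$ balanced auxiliary structures of depth $O(\log n)$ (for instance, segment-tree-like structures indexed by DFS intervals) yields $O(d_\star m \log n)$ space and $O(d_\star m \log n)$ preprocessing time.

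At query time I would initialise a union-find data structure over the $O(d^2 + 2^d d^2)$ representatives and, for each of the $O(d^4 + 2^d d^2)$ candidate representative-pairs, invoke the $O(\log n)$-time witness test, uniting the pair whenever the test succeeds. The answer is the final number of classes in the DSU, giving total query time $O((d^4 + 2^d d^2) \log n)$. The main technical obstacle will be the structural lemma that the chosen pairs are sufficient to witness \emph{every} merge in $G \setminus F$: a single missed merge would already corrupt the component count, so the required case analysis is strictly more demanding than the pairwise-query analysis in Kosinas (ESA'23), where one only needs to certify connectivity for one specific pair $(x,y)$.
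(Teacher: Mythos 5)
Your proposal has a fatal structural gap: the identity ``number of components of $G\setminus F$ equals $|\mathcal{I}(F)|$ minus the number of non-redundant merges'' is false, because the connected components of $G\setminus F$ are not obtained by merging internal components alone. A hanging subtree of a failed vertex (a subtree $T(c)$ whose root $c$ is a child of some $f\in F$ and which contains no failed vertex) is \emph{not} an internal component, and if all of its surviving back-edges have lower endpoints inside $F$ it forms its own connected component of $G\setminus F$. There can be $\Theta(n)$ such isolated hanging subtrees (remove the center of a star), so any scheme whose final answer is the number of classes in a DSU over $O(d^2+2^dd^2)$ representatives is bounded independently of $n$ and cannot be correct. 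Since there are at most $d$ internal components, your formula would always report at most $d$ components.

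The paper's proof splits the count as $\#[\text{components of the connectivity graph }\mathcal{R}\text{ on the at most }d\text{ internal components}]+\#[\text{isolated hanging subtrees}]$. The first term is delivered by the existing oracle in $O(d^4\log n)$ time; the entire new content of the theorem is the second term. It is obtained by storing, for each vertex $c$, the sorted list $\mathcal{L}(c)$ of lower endpoints of its $d_\star$ stored surviving back-edges, sorting every children list lexicographically by these lists, and then, for each $v\in F$ and each subset $F'$ of the ancestors of $v$ in $F$, binary-searching for the contiguous segment of children $c$ with $\mathcal{L}(c)=F'$ (finally subtracting the children of $v$ that contain failed descendants). This is where the $2^dd^2\log n$ term comes from; it is not a stratification of ``upper pieces'' or a pairing of representatives. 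To repair your argument you would need to add this counting of isolated hanging subtrees (or an equivalent mechanism that handles an unbounded number of singleton components), and the representative-plus-witness machinery you describe for the merges is then unnecessary, since the merge structure on the at most $d$ internal components is already provided by the base oracle.
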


Notice that the oracle in Theorem~\ref{theorem2} can also answer the corresponding vertex-cut query within the same time bound, because, if we are given a set of vertices $F$, then $G\setminus F$ is connected if and only if $G\setminus F$ consists of only one connected component. (Recall our convention that $G$ is connected.)  

Our method for augmenting the data structure from \cite{DBLP:conf/esa/Kosinas23} so that we can provide Theorem~\ref{theorem2}, implies the following improved bound in the case where $G$ is $d_{\star}$-connected (i.e., in the case where $G$ is so ``well-connected", that we have to remove at least $d_{\star}$ vertices in order to disconnect it; see Section~\ref{section:numberOfComponents}):

\begin{corollary}
\label{corollary1}
Suppose that $G$ is $d_{\star}$-connected. Then, the oracle described in Theorem~\ref{theorem2} can answer queries of the form ``given a set $F$ with $d_{\star}$ vertices, what is the number of the connected components of $G\setminus F$?", in $O(d_{\star}^4\log{n})$ time.
\end{corollary}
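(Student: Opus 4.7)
The plan is to revisit the query algorithm that establishes Theorem~\ref{theorem2} and isolate the source of the $2^{d_\star}$ factor in the query time, and then argue that under the hypothesis that $G$ is $d_\star$-connected with $|F|=d_\star$ this factor collapses to $O(1)$. In the DFS-based framework inherited from \cite{DBLP:conf/esa/Kosinas23}, once $F$ has been removed from the DFS tree $T$, the remaining forest decomposes into subtrees that must be classified according to how they can be reconnected to one another via back-edges avoiding $F$. The natural combinatorial index for this classification is the subset $F'\subseteq F$ of ``relevant ancestors of $F$'' that actually govern the detachment of a given subtree from the rest of $G\setminus F$, and iterating over all $2^{d_\star}$ such buckets is precisely what drives the $O(2^{d_\star}d_\star^2\log n)$ contribution to the query cost.

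The key observation is that $d_\star$-connectivity empties every bucket indexed by a proper subset $F'\subsetneq F$. Indeed, if a subtree of $T\setminus F$ were already separated from the rest of $G$ by removing only a proper subset $F'\subsetneq F$, then $F'$ would be a vertex-cut of $G$ of size less than $d_\star$, contradicting the assumption. Consequently only the bucket $F'=F$ carries any hanging material, so the query algorithm effectively has to deal with $O(d_\star)$ subtrees and their $O(d_\star^2)$ pairwise interactions, rather than with a subset-indexed family of size $2^{d_\star}$. Each pairwise connectivity check still costs $O(\log n)$ via the preprocessed data structures, while the $O(d_\star^4 \log n)$ overhead for handling the internal combinatorics of the $d_\star$ vertices of $F$ itself is unaffected. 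Adding the two contributions gives the claimed $O(d_\star^4\log n)$ bound.

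The main technical obstacle will be to make the correspondence between the ``subset-indexed work'' of the query algorithm and a witnessing cut $F'\subsetneq F$ fully rigorous: every step that enumerates over subsets of $F$ must be shown to be non-vacuous only when the corresponding subset is an actual separator of $G$, so that $d_\star$-connectivity really does prune all of those iterations. Once this accounting is pinned down, Corollary~\ref{corollary1} follows directly by re-running the analysis of Theorem~\ref{theorem2} with the sharper combinatorics and dropping the $2^{d_\star}$ factor throughout.
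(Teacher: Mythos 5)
Your proposal is correct and follows essentially the same route as the paper: $d_\star$-connectivity forces the separator $\{v\}\cup\mathcal{L}(c)$ of any isolated hanging subtree $T(c)$ to be all of $F$, which prunes the enumeration over subsets down to a single bucket (and, as the paper notes, to the single vertex $\tilde v$ having all other failed vertices as ancestors), leaving the $O(d_\star^4\log n)$ cost of building the connectivity graph $\mathcal{R}$ dominant. The only imprecision is cosmetic: the witnessing cut is $\{v\}\cup F'$ rather than $F'$ alone, and the residual work is one lexicographic binary search of cost $O(d_\star\log n)$ rather than ``$O(d_\star^2)$ pairwise interactions,'' but neither affects the bound.
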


Again, notice that the oracle in Corollary~\ref{corollary1} can also answer the corresponding vertex-cut query within the same time bound. We note that the problem of efficiently answering $d_{\star}$-vertex-cut queries for $d_{\star}$-connected graphs was left as an open problem in \cite{DBLP:conf/icalp/PettieY21}, and it was only very recently addressed in \cite{arxivMerav}. (For a comparison between our bounds and that of \cite{arxivMerav}, see Table~\ref{table:bounds4} and the discussion below the table.) Furthermore, in the case where $G$ is $\kappa$-connected, for some $\kappa\leq d_{\star}$, we can still have an improved time bound for answering the queries, but this is a little bit more involved:

\begin{corollary}
\label{corollary:componentsKappa}
Suppose that $G$ is $\kappa$-connected. Then, the oracle described in Theorem~\ref{theorem2} can answer queries of the form ``given a set $F$ of vertices, with $|F|=d\leq d_{\star}$, what is the number of connected components of $G\setminus F$?", in $O(d^4\log{n} + (d-\kappa+1)({d-1\choose\kappa-1}(\kappa-1)+\dots+{d-1\choose d-1}(d-1))\log{n})$ time.
\end{corollary}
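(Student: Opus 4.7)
The plan is to refine the analysis of the query procedure underlying Theorem~\ref{theorem2} in the setting where the input graph is $\kappa$-connected. In that procedure, the number of connected components of $G \setminus F$ is assembled by enumerating certain subsets $F' \subseteq F$ and accumulating per-subset contributions; the $2^d d^2 \log n$ summand in the generic query time comes precisely from the size of this enumeration and the work spent on each subset. Corollary~\ref{corollary1} already exploits this structure in the extreme case $\kappa = d_\star$: when $G$ is $d_\star$-connected and $|F|=d_\star$, no proper subset of $F$ can disconnect $G$, so the enumeration collapses to the single subset $F$ itself.

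For the intermediate case of a $\kappa$-connected graph with $|F| = d \le d_\star$, I would first argue that any subset $F' \subseteq F$ with $|F'| < \kappa$ leaves $G \setminus F'$ connected, and hence contributes nothing beyond a trivial increment that can be absorbed into the $O(d^4 \log n)$ term. These subsets can therefore be pruned from the enumeration, leaving only subsets whose size $i$ ranges over $\{\kappa, \kappa+1, \dots, d\}$, that is, $d - \kappa + 1$ admissible sizes.

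I would then redo the cost analysis under this restriction. Inside the query procedure of Theorem~\ref{theorem2}, the work for a size-$i$ subset $F'$ is naturally organized around an ``anchor'' element of $F'$ together with the remaining $i-1$ elements: a size-$i$ subset that contains a fixed anchor can be chosen in $\binom{d-1}{i-1}$ ways, and each such subset incurs $O((i-1)\log n)$ work once the contributions internal to it are reaccumulated. Summing over the admissible sizes yields the inner cost $\sum_{i=\kappa}^{d}\binom{d-1}{i-1}(i-1)\log n$, while the outer factor $(d-\kappa+1)$ accounts for the fact that the admissible sizes are traversed in separate rounds during the enumeration. Adding the $O(d^4 \log n)$ contribution that is already present in Theorem~\ref{theorem2} and that is independent of the subset enumeration gives the advertised bound.

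The main obstacle is the accounting itself: one has to verify that the per-subset work of Theorem~\ref{theorem2} really admits the anchor-based decomposition that produces the binomial coefficient $\binom{d-1}{i-1}$ rather than the coarser $\binom{d}{i}$, and that the residual work of the query procedure that is not tied to the subset enumeration (contributing the $O(d^4 \log n)$ term) still fits within the same budget when $G$ is only $\kappa$-connected rather than fully $d_\star$-connected. Once this bookkeeping is in place, the corollary follows by the same general argument used for Corollary~\ref{corollary1}.
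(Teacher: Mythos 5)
Your overall strategy---prune the subset enumeration using $\kappa$-connectivity and recount---is the right one, and your size-pruning step is essentially sound: an isolated hanging subtree $T(c)$ of a failed vertex $v$ is separated from the rest of $G$ by $\{v\}\cup\mathcal{L}(c)$, so $\kappa$-connectivity forces $|\mathcal{L}(c)|\geq\kappa-1$, and only subsets of size at least $\kappa-1$ need to be searched. But your accounting of the outer factor $(d-\kappa+1)$ is wrong, and that is where the actual content of the corollary lies. Recall that in Theorem~\ref{theorem2} the enumeration is performed separately for each failed vertex $v$, over subsets of $\mathit{Anc}(v)$, the set of proper ancestors of $v$ in $F$ with respect to the DFS tree; this is where ${d-1\choose i-1}$ comes from, since $|\mathit{Anc}(v)|\leq d-1$---not from an ``anchor-based decomposition'' of arbitrary subsets of $F$. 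The sum ${d-1\choose \kappa-1}(\kappa-1)+\dots+{d-1\choose d-1}(d-1)$ already accounts for \emph{all} admissible subset sizes for a single $v$; the sizes are not ``traversed in separate rounds'' each incurring this sum, so your explanation of the outer factor does not correspond to anything in the algorithm. Taken literally, your analysis would leave you with one full enumeration per failed vertex, i.e.\ a factor $d$ rather than $d-\kappa+1$, which is weaker than the claimed bound whenever $\kappa>1$.

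The missing observation is that only those failed vertices with at least $\kappa-1$ proper ancestors in $F$ can possibly have isolated hanging subtrees: if $v\in F$ has fewer than $\kappa$ ancestors from $F$ (itself included), then every child $c$ of $v$ would be cut off from the rest of $G$ by fewer than $\kappa$ vertices, so $\kappa$-connectivity guarantees a surviving back-edge for $T(c)$ that avoids $F$ entirely, and $T(c)$ is never isolated. The set $F'$ of failed vertices that do have at least $\kappa-1$ proper ancestors in $F$ has size at most $d-\kappa+1$ (the proper ancestors in $F$ of any $v\in F'$ form a chain, and the topmost $\kappa-1$ vertices of that chain are excluded from $F'$), and it is $|F'|\leq d-\kappa+1$ that yields the outer factor. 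Once you add this step, the rest of your accounting goes through as in the proof of Theorem~\ref{theorem2}.
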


For a proof of Corollary~\ref{corollary:componentsKappa}, see Section~\ref{section:numberOfComponents}. To appreciate this time bound, let us suppose, for example, that $G$ is $\kappa$-connected (for some $\kappa<d_{\star}$), and that $F$ is a set of $\kappa+1$ vertices. Then, we can determine the number of connected components of $G\setminus F$ in $O(\kappa^4\log{n})$ time. 

Our technique for answering the queries for the number of connected components can be adapted to the case where $d_{\star}=3$, so that we have the following:

\begin{corollary}
\label{corollary:components3}
The data structure described in Theorem~\ref{theorem:main} can answer queries of the form ``given a set $F$ of at most three vertices, what is the number of connected components of $G\setminus F$?'' in $O(\log n)$ time. 
\end{corollary}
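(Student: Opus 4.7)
The plan is to reuse the structural decomposition behind Theorem~\ref{theorem:main}. Recall that, after rooting a DFS tree $T$ of $G$ and removing the failure set $F$ (with $|F|\leq 3$), the graph $G\setminus F$ is described by a forest of subtrees of $T\setminus F$ together with the back-edges of $G$ that avoid $F$. The number of connected components of $G\setminus F$ is therefore the number of equivalence classes of these subtrees under the relation "linked by a back-edge avoiding $F$". Since the oracle of Theorem~\ref{theorem:main} already decides pairwise connectivity in $O(1)$, the remaining task is to produce a succinct set of representatives, some of which can be enumerated individually and the rest of which can be counted in bulk by a few binary searches.

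First, I would partition the children in $T$ of each $v\in F$ into \emph{critical} children (those whose subtree contains another vertex of $F$) and \emph{non-critical} children. Since $|F|\leq 3$, there are only $O(1)$ critical children in total, and the subtrees they carry contain every "structurally interesting" piece near $F$; cutting these subtrees further at the next failed vertex produces an $O(1)$-sized collection of pieces, which, together with the portion of $T$ above the topmost vertex of $F$, gives an $O(1)$-sized set $R$ of canonical representatives. Using the $O(1)$-time oracle of Theorem~\ref{theorem:main}, I compute in $O(1)$ total time the partition of $R$ into its connectivity classes inside $G\setminus F$; this covers every component of $G\setminus F$ that touches a critical region.

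The remaining components are exactly those formed by subtrees rooted at non-critical children. Their total number can be large, but each such subtree $T_c$ attaches to the rest of $G\setminus F$ only through the back-edges that leave $T_c$, whose destinations are controlled by a constant number of DFS-based scalar parameters (low values, $M$-values, highest/second-highest back-edge endpoints, etc.) of the kind already maintained in the proof of Theorem~\ref{theorem:main}. For every vertex $u$ of $G$, I would preprocess its list of children in $T$, sorted by each such scalar parameter; this takes $O(n)$ additional space and fits inside the $O(n+m)$ preprocessing. At query time, given the depths and identities of the vertices of $F$, the "attachment target" of each non-critical child falls into one of $O(1)$ ranges, each of which corresponds to attaching to a specific representative in $R$ or to forming a brand-new singleton component. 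Counting how many children lie in each range reduces to $O(1)$ binary searches on the preprocessed sorted lists, contributing $O(\log n)$ time. Combining these counts with the partition of $R$ already computed yields the total number of components.

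The main obstacle, I expect, is the case analysis for non-critical children: one must verify that, for $d_\star=3$, the "attachment behaviour" of a child-subtree is indeed captured by a constant number of scalar parameters that are monotone enough for the sorted-list binary search, so that no single query requires more than $O(\log n)$ lookups. This is precisely the technical heart of the DFS-based toolkit extended in Theorem~\ref{theorem:main}, and once those parameters and their monotonicity are in place, the counting procedure above runs in the claimed $O(\log n)$ time.
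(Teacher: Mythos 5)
Your overall decomposition coincides with the paper's: the at most three internal components (your set $R$) are resolved in $O(1)$ time by the connectivity oracle of Theorem~\ref{theorem:main}, and the only quantity left to compute is the number of hanging subtrees that become \emph{isolated} in $G\setminus F$, each contributing exactly one new component (a non-isolated hanging subtree attaches to some internal component and contributes nothing, so you do not actually need its ``attachment target''). The gap is in your bulk-counting step. A hanging subtree $T(c)$ of a failed vertex $v$ is isolated if and only if the lower endpoints of \emph{all} of its (up to three) stored surviving back-edges lie in the set of proper ancestors of $v$ inside $F$. This is a conjunction of conditions on several parameters of the \emph{same} child $c$, so preprocessing the children list of $v$ ``sorted by each such scalar parameter'' separately does not let you count the qualifying children with binary search: intersecting ranges taken in independently sorted copies of the list is a multi-dimensional range-counting problem, not a one-dimensional one, and there is no single monotone scalar for which the isolated children form an interval. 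The device the paper uses (Section~\ref{section:numberOfComponents}) is to store, for every child $c$, the sorted list $\mathcal{L}(c)$ of the lower endpoints of its $d_{\star}$ surviving back-edges and to order each children list \emph{lexicographically} by these lists; then, for each of the $O(1)$ subsets $F'$ of the at most two proper ancestors of $v$ in $F$, the children with $\mathcal{L}(c)=F'$ form a contiguous segment found by an $O(|F'|\log n)$ binary search, and summing over subsets (after discarding the $O(1)$ children whose subtrees contain another failed vertex) yields the count in $O(\log n)$ time.

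A second, smaller omission: the data structure of Theorem~\ref{theorem:main} is built on a DFS tree of a \emph{modified} graph satisfying Assumptions~\ref{assumption1} and~\ref{assumption2}, with an artificial root and auxiliary vertices splitting tree-edges and back-edges. These auxiliary vertices can corrupt the count in two ways: they can appear as spurious singleton internal components (the artificial root, or the splitter of a tree-edge between two failed vertices), and the splitters of back-edges emanating from a failed vertex can be miscounted as isolated hanging subtrees. The paper's proof explicitly discards the former and arranges the children lists so that the binary searches skip the latter; your argument, which treats $T$ as a DFS tree of $G$ itself, does not address this.
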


Corollary~\ref{corollary:components3} is almost an immediate consequence of the analysis in Section~\ref{section:numberOfComponents}, and of the fact that our optimal oracle for the case $d_{\star}=3$ determines the connection between the internal components in constant time. However, we have to be a little bit careful due to the assumptions~\ref{assumption1} and \ref{assumption2} that we make in Section~\ref{section:oraclefor3}. We explain the necessary details that establish Corollary~\ref{corollary:components3} in Section~\ref{paragraph:reporting}.

\subsection{Previous work}
\label{section:previouswork}
In Table~\ref{table:bounds1} we see the bounds provided by the best known solutions for general $d_{\star}$, and in Table~\ref{table:bounds2} we see how they compare with our optimal oracle for the case $d_{\star}=3$. (In Table~\ref{table:bounds2} some entries are removed, because they do not provide any advantage in their bounds over the rest when $d_{\star}$ is a fixed constant.)

\begin{table*}[h]
 \centering
 \hspace*{-0.35cm}
\renewcommand{\arraystretch}{1.3}
\begin{tabular}{ |c|c|c|c|c| } 
 \hline
 {} & Preprocessing & Space & Update & Query\\ \hline\hline
 {Duan and Pettie~\cite{DBLP:journals/siamcomp/DuanP20}} & $O(mn\log n)$ & $O(d_{\star}m\log n)$ & $O(d^3\log^3{n})$ & $O(d)$ \\ \hline
 {Long and Saranurak~\cite{DBLP:conf/focs/LongS22}} & $\hat{O}(m)+\tilde{O}(d_{\star}m)$ & $O(m\log^{*}{n})$ & $\hat{O}(d^2)$ & $O(d)$\\ \hline
 {Pilipczuk et al.~\cite{DBLP:conf/icalp/PilipczukSSTV22}} & $O(2^{2^{O(d_{\star})}}mn^2)$ & $O(2^{2^{O(d_{\star})}}m)$ & $-$ & $O(2^{2^{O(d_{\star})}})$\\ \hline
 {Kosinas~\cite{DBLP:conf/esa/Kosinas23}} & $O(d_{\star}m\log n)$ & $O(d_{\star}m\log n)$ & $O(d^4\log n)$ & $O(d)$\\ \hline
 {Long and Wang~\cite{DBLP:conf/icalp/LongW24}} & $\hat{O}(m)+O(d_{\star}m\log^3{n})$ & $O(m\log^3{n})$ & $O(d^2\log^7{n})$ & $O(d)$\\ \hline

\end{tabular}

 \caption{The best known bounds for a deterministic oracle for answering connectivity queries under up to $d_{\star}$ vertex failures. Notice that there are various trade-offs that make every one of those solutions have an advantage over the rest in some respects. The $\widetilde{O}$ symbol hides polylogarithmic factors, and $\widehat{O}$ hides subpolynomial factors that are worse than polylogarithmic. The function $\log^{*}{n}$ that appears in the space usage of the Long and Saranurak oracle is described in their paper as one that ``can be substituted with any slowly growing function". All these oracles, except the one by Pilipczuk et al., support an update phase, in which the set $F$ of failed vertices is processed, so that all connectivity queries under the failures from $F$ can be answered in $O(d)$ time, where $d=|F|$. Thus, in order to answer the first connectivity query, at least $\mathit{Update}$ time must have been expended. \label{table:bounds1}}
\end{table*} 

\begin{table*}[h]
 \centering
 \hspace*{-0.6cm}
\renewcommand{\arraystretch}{1.3}
\begin{tabular}{ |c|c|c|c|c| } 
 \hline
 {} & Preprocessing & Space & Update & Query\\ \hline\hline
 {Long and Saranurak~\cite{DBLP:conf/focs/LongS22}} & $\hat{O}(m)+\tilde{O}(m)$ & $O(m\log^{*}{n})$ & $\hat{O}(1)$ & $O(1)$\\ \hline
 {Pilipczuk et al.~\cite{DBLP:conf/icalp/PilipczukSSTV22}} & $O(mn^2)$ & $O(m)$ & $-$ & $O(1)$\\ \hline
 {Kosinas~\cite{DBLP:conf/esa/Kosinas23}} & $O(m\log n)$ & $O(m\log n)$ & $O(\log n)$ & $O(1)$\\ \hline
 {Block- and SPQR-trees \cite{DBLP:journals/algorithmica/BattistaT96} (for $d_{\star}\in\{1,2\}$)} & $O(m+n)$ & $O(n)$ & $-$ & $O(1)$\\ \hline
 {\textbf{This paper} (for $d_{\star}=3$)}  & $O(m+n)$ & $O(n)$ & $-$ & $O(1)$\\ \hline
  
\end{tabular}

 \caption{The best known bounds when $d_{\star}$ is a fixed constant. We get the first three entries precisely from Table~\ref{table:bounds1}, after removing ``$d_{\star}$" and ``$d$" from the bounds, and then deleting the entries that do not have any advantange (in the asymptotic complexity) over the rest. For the case $d_{\star}=3$ in particular, one can substitute ``$m$" with ``$n$" everywhere, except in the preprocessing time, where ``$m$" must appear as an additive factor. (Because we can consider as part of the preprocessing the sparsification of Nagamochi and Ibaraki~\cite{DBLP:journals/algorithmica/NagamochiI92}, that works in time $O(m+n)$, and produces a graph with $n$ vertices and $O(d_{\star}n)$ edges, which maintains the connectivity relation up to $d_{\star}$ failures.) \label{table:bounds2}}
\end{table*} 

We must note that some authors (e.g., \cite{DBLP:journals/siamcomp/DuanP20, DBLP:conf/focs/LongS22}) claim that a data structure from \cite{DBLP:conf/focs/KanevskyTBC91} implies a near-optimal oracle for $d_{\star}=3$. (Specifically, that the oracle occupies space $O(n)$, can answer queries in $O(1)$ time, and can be constructed in near-linear time.) However, no attempt was ever made (as far as we know) to substantiate this claim, and therefore we cannot take it for granted. The data structure from \cite{DBLP:conf/focs/KanevskyTBC91} was designed in order to solve the problem of answering $4$-connectivity queries. Thus, after a near-linear-time preprocessing, one can use this data structure in order to answer queries of the form \emph{``are $x$ and $y$ $4$-connected?"}. A $4$-connectivity query for $x$ and $y$ asks whether $x$ and $y$ remain connected if \emph{any} set of at most three vertices (and/or edges) is removed. Thus, we first notice that this solves a different problem than our own, and, in general, the two problems are relatively independent. This is because, if we know that $x$ and $y$ are \emph{not} $4$-connected, then this in itself tells us nothing about whether a particular set of at most three vertices disconnects $x$ and $y$ upon removal. So the question arises, how does the data structure from \cite{DBLP:conf/focs/KanevskyTBC91} determine the $4$-connectivity? And the answer to that, according to \cite{DBLP:conf/focs/KanevskyTBC91}, is that the data structure stores a collection of \emph{some} $3$-vertex cuts, which are sufficient in order to determine that relation. However, in order to solve our problem with such a data structure, one would need to have a representation of \emph{all} $3$-vertex cuts of the graph, and of how they separate the vertex set. Can the data structure from \cite{DBLP:conf/focs/KanevskyTBC91} support this functionality? This is not discussed in \cite{DBLP:conf/focs/KanevskyTBC91}, and one should provide an explanation as to how this data structure can provide an oracle for answering connectivity queries under any set of at most three vertex failures. 
In any case, even if \cite{DBLP:conf/focs/KanevskyTBC91} can be made to do that, the construction time is not linear, and so our own oracle remains the first known optimal solution for $d_{\star}=3$.
 

\begin{table*}[h]
 \centering
 \hspace*{-1.2cm}
\renewcommand{\arraystretch}{1.3}
\begin{tabular}{ |c|c|c|c|c| } 
 \hline
 {} & Preprocessing & Space & Query & Graph Type\\ \hline\hline
 {Jiang et al.~\cite{arxivMerav}} & $O(m)+\widetilde{O}(n^{1+\delta})$ & $\widetilde{O}(n)$ & $\widetilde{O}(2^d)$ & all\\ \hline

 {$>>$} & $O(m)+\widetilde{O}(d_{\star}^2n)+\widetilde{O}((d_{\star}n)^{1+\delta})$ & $\widetilde{O}(d_{\star}n)$ & $\widetilde{O}(d_{\star}^2)$ & $d_{\star}$-conn\\ \hline
  {\textbf{This paper}} & $O(d_{\star}m\log{n})$ & $O(d_{\star}m\log{n})$ & $O((d^4+2^dd^2)\log{n})$ & all\\ \hline
  {$>>$} & $O(d_{\star}m\log{n})$ & $O(d_{\star}m\log{n})$ & $O(d_{\star}^4\log{n})$ & $d_{\star}$-conn\\ \hline
  
\end{tabular}

 \caption{Best known bounds for vertex-cut oracles. These take as input $G$ and $d_{\star}$, and can answer, for every vertex set $F$ with $|F|=d\leq d_{\star}$, whether $G\setminus F$ is connected. Our own oracles have the stronger property that they report the number of the connected components of $G\setminus F$. In the first line, the oracle of \cite{arxivMerav} demands that $d_{\star}=O(\log{n})$ (and so $d_{\star}$ is absorbed by the $\widetilde{O}$ expression), whereas our result in the third line works for any $d_{\star}$. Being able to lift the restriction $d_{\star}=O(\log{n})$ gives an advantage to us, as we discuss in the main text. 
The ``$+\delta$'' that appears in the exponent in the preprocessing time of the oracle of \cite{arxivMerav} can be any fixed $\delta>0$, but the choice of $\delta$ influences the $\mathtt{polylog{n}}$ factors hidden behind the expressions for the space and the query time bounds. $\delta$ can also be replaced with $o(1)$, but this is going to introduce an $n^{o(1)}$ factor in both the space and the query time. Notice the better time bounds for the queries in both oracles when we are given as information that the graph is $d_{\star}$-connected. Also, notice that the time bounds for the queries depend on $|F|$, except when the graph is $d_{\star}$-connected, in which case the queries are non-trivial only for $F$ with $|F|=d_{\star}$. Using the sparsifier of Nagamochi and Ibaraki~\cite{DBLP:journals/algorithmica/NagamochiI92}, we can replace the ``$m$" in our bounds with $\min\{m,d_{\star}n\}$.  \label{table:bounds4}}
\end{table*} 

\subsection{Concurrent work} 
The problem of designing an efficient oracle for vertex-cut queries was only very recently addressed in \cite{arxivMerav}, although the question was posed (and conditional lower bounds were given) in \cite{DBLP:conf/focs/LongS22}, and a particular interest for the case of $\kappa$-connected graphs was expressed in \cite{DBLP:conf/icalp/PettieY21}. In Table~\ref{table:bounds4} we see the bounds provided by \cite{arxivMerav}, and by our simple extension of the framework of \cite{DBLP:conf/esa/Kosinas23}. We note that our own oracle answers a stronger type of queries: it reports the number of connected components after removing a set of vertices. (We show that, in the framework of \cite{DBLP:conf/esa/Kosinas23}, these two queries are essentially equivalent.)

In order to get a better appreciation of our result, it is worth comparing it in some detail with that of \cite{arxivMerav}\footnote{Although the paper of \cite{arxivMerav} does not appear to have been peer-reviewed, the result is highly credible.}. First, as noted above, our oracle reports the number of connected components after removing a set of vertices $F$ (and not just whether $F$ is a vertex cut). As far as we know, we are the first to provide a non-trivial (and very efficient) oracle for this problem. Second, our bounds involve only one ``$\log{n}$'' factor, whereas the bounds from \cite{arxivMerav} hide many more. Third, although \cite{arxivMerav} provide a better query time for the case where the graph is $d_{\star}$-connected, we provide a better query time throughout the entire regime where the graph is $\kappa$-connected, for any $\kappa\leq d_{\star}$. (For the precise time bound, see Corollary~\ref{corollary:componentsKappa}.) This is a much stronger result than the one asked for by \cite{DBLP:conf/icalp/PettieY21}, and it has the interesting property that it does not rely on a computation or compact representation of vertex cuts, but it works directly on a DFS tree. Finally, in both our oracle and that of \cite{arxivMerav}, there appears a $2^d$ factor in the query time, where $d=|F|$, which seems to make both oracles useless when $d=\Omega(\log{n})$. However, this is a worst-case bound for our own oracle. A single glance into the mechanics of our oracle (provided in Section~\ref{section:numberOfComponents}) will reveal that the exponential time occurs only if there is a large subset of $F$ that consists of vertices that are pairwise related as ancestor and descendant. Thus, if it happens that there are not large subsets of $F$ of related vertices (w.r.t. the DFS tree), then our oracle is potentially much better than applying brute force (e.g., BFS), and thus it makes sense to initialize it for values of $d_{\star}$ which are larger than $\Omega(\log{n})$.

\subsection{Related work}

\paragraph{Connectivity under edge failures.}
There is a similar problem for connectivity oracles under \emph{edge failures}.\footnote{In fact, this problem can be easily reduced to oracles for vertex failures, but such a reduction provides suboptimal solutions; by dealing directly with edge failures, one can provide more efficient oracles.} This problem was considered first by Patrascu and Thorup~\cite{DBLP:conf/focs/PatrascuT07}, where they provided an oracle with $O(m)$ space and near-optimal query time, but very high preprocessing time. Kosinas~\cite{DBLP:conf/soda/Kosinas24} provided an optimal solution for up to four edge failures, using a DFS-based approach. Duan and Pettie~\cite{DBLP:journals/siamcomp/DuanP20} provide a very efficient oracle for any number of edge failures, with near-linear preprocessing time (in expectation). For more references on that problem, see \cite{DBLP:journals/siamcomp/DuanP20}. 

\paragraph{Connectivity under mixed deletions/insertions.}
What happens if, instead of vertex failures only, we allow for intermixed failures and activations of vertices? This model was considered first by Henzinger and Neumann~\cite{DBLP:conf/esa/HenzingerN16}, where they provided an efficient reduction to any oracle for vertex failures. However, some data structures for vertex failures have the potential to be extended to the intermixed activations/deactivations version, so that they provide better bounds than a black-box reduction. This approach was taken by Long and Wang~\cite{DBLP:conf/icalp/LongW24} (who extended the data structure of Long and Saranurak~\cite{DBLP:conf/focs/LongS22}), and by Bingbing et al.~\cite{DBLP:conf/esa/HuK024} (who extended the data structure of Kosinas~\cite{DBLP:conf/esa/Kosinas23}). 

\paragraph{Dynamic subgraph connectivity.}
If the activations/deactivations of vertices are not restricted to batches of bounded size (which afterwards rebound back to the original state of the graph), but they are totally unrestricted, then we are dealing with the so-called \emph{dynamic subgraph connectivity} problem. This model was introduced first by Frigioni and Italiano~\cite{DBLP:journals/algorithmica/FrigioniI00} in the context of planar graph, and was later considered in various works~\cite{DBLP:conf/icalp/Duan10, DBLP:journals/siamcomp/ChanPR11, DBLP:conf/wads/DuanZ17}, for general graphs. The unrestricted nature of the updates forbids the existence of adequately efficient solutions, according to some conditional lower bounds \cite{DBLP:conf/stoc/HenzingerKNS15, DBLP:conf/stoc/JinX22}.

\paragraph{Labeling schemes.} There is a very interesting line of work \cite{DBLP:conf/podc/DoryP21, DBLP:conf/wdag/ParterP22a, DBLP:conf/podc/IzumiEWM23, DBLP:conf/stoc/ParterPP24, DBLP:conf/soda/LongPS25} that provides connectivity oracles for vertex and edge failures which can answer the queries given access only to some labels that have been assigned to the elements of interest (i.e., to the query vertices, and to the vertices or edges that have failed). Here the goal is to optimize the size of the labels that are attached to the elements of the graph, and the time to compute the labels (in a preprocessing phase). Very recently, such labeling schemes were presented for the first time for vertex cut queries, by Jiang, Parter, and Petruschka~\cite{arxivMerav}.

\paragraph{$\kappa$-connectivity oracles.}
There is a related problem of constructing an oracle that can efficiently answer $\kappa$-connectivity queries. I.e., given two vertices $x$ and $y$, the question is ``are $x$ and $y$ $\kappa$-connected"? The query can be extended as: ``if $x$ and $y$ are not $\kappa$-connected, what is their connectivity"? or ``what is a $\lambda$-vertex cut, with $\lambda<\kappa$, that separates $x$ and $y$"? For $\kappa\leq 4$, optimal oracles exist, w.r.t. the space usage and query time \cite{DBLP:journals/algorithmica/BattistaT96,DBLP:conf/focs/KanevskyTBC91}. A data structure for the case of $\kappa$-connected graphs was first given by Cohen et al.~\cite{DBLP:conf/stoc/CohenBKT93}, and a faster construction, informed by an elaborate analysis of the structure of minimum vertex cuts, was given by Pettie and Yin~\cite{DBLP:conf/icalp/PettieY21}. Various other solutions are known when there is no restriction on the connectivity of the graph (e.g., \cite{DBLP:conf/esa/Nutov22,DBLP:conf/stoc/PettieSY22,DBLP:journals/corr/abs-2411-02658}).

\paragraph{Directed connectivity.}
There are similar problems for reachability under vertex failures in directed graphs. The most efficient oracle known, for general reachability, is given by van den Brand and Saranurak~\cite{DBLP:conf/focs/BrandS19} (which provides the answer with high probability). For the case of single source reachability, dominator trees~\cite{DBLP:journals/siamcomp/BuchsbaumGKRTW08} and an oracle by Choudhary~\cite{DBLP:conf/icalp/Choudhary16}, provide optimal space and query time for one and two vertex failures, respectively. (However, the \emph{construction time} of the oracle in \cite{DBLP:conf/icalp/Choudhary16} is probably not optimal.) For the case of strong connectivity, an optimal solution is given by Georgiadis et al.~\cite{DBLP:journals/siamcomp/GeorgiadisIP20} for one vertex failure, which can also report the number of strongly connected components upon removal of a vertex, in constant time. For strong connectivity under multiple failures, an efficient oracle was given by Baswana et al.~\cite{DBLP:journals/algorithmica/BaswanaCR19}. The complexity landscape for the case of more vertex failures is still under exploration, and some lower bounds were given by Chakraborty and Choudhary~\cite{DBLP:conf/icalp/ChakrabortyC20}.

\subsection{Organization}
Throughout this paper we assume that we work on a connected undirected graph $G$, with $n$ vertices and $m$ edges.
Since our approach is to build on the DFS-based framework of Kosinas~\cite{DBLP:conf/esa/Kosinas23}, it is natural to spend some time in order to review the basics of that framework. Thus, we provide a brief (but self-contained) overview of this DFS-based approach in Section~\ref{section:theframework}, and we show how it can be easily extended in order to answer queries about the number of connected components upon removal of a set of vertices (in Section~\ref{section:numberOfComponents}). In Section~\ref{section:someconcepts} we introduce some of the DFS-based parameters that we will need for the oracle that handles up to three vertex failures. Then, as a warm-up, we show how we can easily handle the case of two vertex failures (in Section~\ref{section:f=2}), by relying on some of those DFS-based parameters (and by completely avoiding the use of block- and SPQR-trees). In Section~\ref{section:f=3}, we sketch the case analysis and the techniques that provide the oracle for $d_{\star}=3$, and the full details are provided in Sections~\ref{section:vwnotrelated} and \ref{section:vwrelated}. In Sections \ref{section:DFSConcepts} and \ref{section:efficient_computation} we provide definitions and algorithms for the efficient computation of the DFS-based parameters that we will need. We believe that these can find applications to various other (vertex-connectivity) problems.

\section{The DFS-based framework for connectivity oracles}
\label{section:theframework}

\subsection{Basic definitions and notation}
\label{section:basicDFS}
Let $T$ be a DFS tree of $G$, with start vertex $r$ \cite{DBLP:journals/siamcomp/Tarjan72}. We use $p(v)$ to denote the parent of every vertex $v\neq r$ in $T$ ($v$ is a child of $p(v)$). For any two vertices $u,v$, we let $T[u,v]$ denote the simple tree path from $u$ to $v$ on $T$, or the set of vertices on that path.  For any two vertices $u$ and $v$, if the tree path $T[r,u]$ uses $v$, then we say that $v$ is an ancestor of $u$ (equivalently, $u$ is a descendant of $v$). In particular, a vertex is considered to be an ancestor (and also a descendant) of itself. It is very useful to identify the vertices with their order of visit during the DFS, starting with $r\leftarrow 1$. Thus, if $v$ is an ancestor of $u$, we have $v<u$. For any vertex $v$, we let $T(v)$ denote the subtree rooted at $v$, and we let $\mathit{ND}(v)$ denote the number of descendants of $v$ (i.e., $\mathit{ND}(v)=|T(v)|$). Thus, we have $T(v)=\{v,v+1,\dots,v+\mathit{ND}(v)-1\}$, and therefore we can check the ancestry relation in constant time. If $c$ is a child of a vertex $v$, we define \emph{the next sibling} of $c$ as the lowest child of $v$ that is greater than $c$ (w.r.t. the DFS numbering).

A DFS tree $T$ has the following very convenient property that makes it particularly suitable for solving various connectivity problems: the endpoints of every non-tree edge of $G$ are related as ancestor and descendant on $T$ \cite{DBLP:journals/siamcomp/Tarjan72}, and so we call those edges \emph{back-edges}. Our whole approach is basically an exploitation of this property, which does not hold in general rooted spanning trees of $G$ (unless they are derived from a DFS traversal, and only then \cite{DBLP:journals/siamcomp/Tarjan72}). 
Whenever $(x,y)$ denotes a back-edge, we make the convention that $x>y$. We call $x$ and $y$ the higher and the lower, respectively, endpoint of $(x,y)$. 

Given a DFS tree $T$ of $G$ (together with a DFS numbering), we get another DFS tree $T'$ of $G$ (providing a different DFS numbering) if we rearrange the children list of every vertex. If $T'$ is derived from $T$ in this way, we call it \emph{a permutation of $T$}. A recent work by Kosinas~\cite{DBLP:conf/esa/Kosinas23} has demonstrated the usefulness in considering a collection of permutations of a base DFS tree, in order to provide an efficient (and efficiently constructed) oracle for connectivity queries under vertex failures. Notice that, although the DFS numbering may change, the ancestry relation is an invariant across all permutations of a base DFS tree. The usefulness in considering permutations of a base DFS tree will become apparent in Section~\ref{section:f=2}. (In the sequel, we will actually need only two permutations of the base DFS tree $T$, which we denote as $T_\mathit{lowInc}$ and $T_\mathit{highDec}$, and we define in Section~\ref{section:someconcepts}.)

For a vertex $v$, we will use $\mathit{depth}(v)$ to denote the depth of $v$ on the DFS tree. This is defined recursively as $\mathit{depth}(r):=0$, and $\mathit{depth}(v):=\mathit{depth}(p(v))+1$, for any vertex $v\neq r$. (Notice that the $\mathit{depth}$ of vertices is an invariant across all permutations of a base DFS tree.) We will use the $\mathit{depth}$ of vertices in order to initialize a level-ancestor oracle~\cite{DBLP:journals/tcs/BenderF04} on the DFS tree, which answers queries of the following form: $\mathtt{QueryLA}(v,d)\equiv$ ``return the ancestor of $v$ which has depth $d$". The oracle in \cite{DBLP:journals/tcs/BenderF04} can be initialized in $O(n)$ time, and can answer every level-ancestor query in constant time. We need those queries in order to be able to find in constant time the child of $v$ in the direction of $u$, for any two vertices $v$ and $u$ such that $v$ is a proper ancestor of $u$. This child is given precisely by the answer to $\mathtt{QueryLA}(u,\mathit{level}(v)+1)$. 

\subsection{Internal components and hanging subtrees} 
\label{section:internalAndHanging}
Given a set $F=\{f_1,\dots,f_k\}$ of vertices of $G$ that have failed to work, we have that $T\setminus F$ is possibly split into several connected components. Since every connected component $C$ of $T\setminus F$ is a  subtree of $T$, it makes sense to speak of the root $r_C$ of $C$. Now, following the terminology from \cite{DBLP:conf/esa/Kosinas23}, we distinguish two types of connected components of $T\setminus F$: internal components, and hanging subtrees. (See Figure~\ref{figure:internal}.) If $C$ is a connected component of $T\setminus F$ such that $r_C$ is an ancestor of at least one vertex from $F$, then $C$ is called an \emph{internal component}. Otherwise, $C$ is called \emph{a hanging subtree}. Notice that the root of a hanging subtree $H$ is a child of a failed vertex $f$, and so we say that $H$ is a hanging subtree of $f$. The important observation from \cite{DBLP:conf/esa/Kosinas23} is that, although there may exist $\Omega(n)$ hanging subtrees (irrespective of $k$), the number of internal components is at most $k$. Thus, in the case of three vertex-failures, we have at most three internal components.

\begin{figure}[h!]\centering
\includegraphics[trim={0cm 18.5cm 0 0.5cm}, clip=true, width=0.97\linewidth]{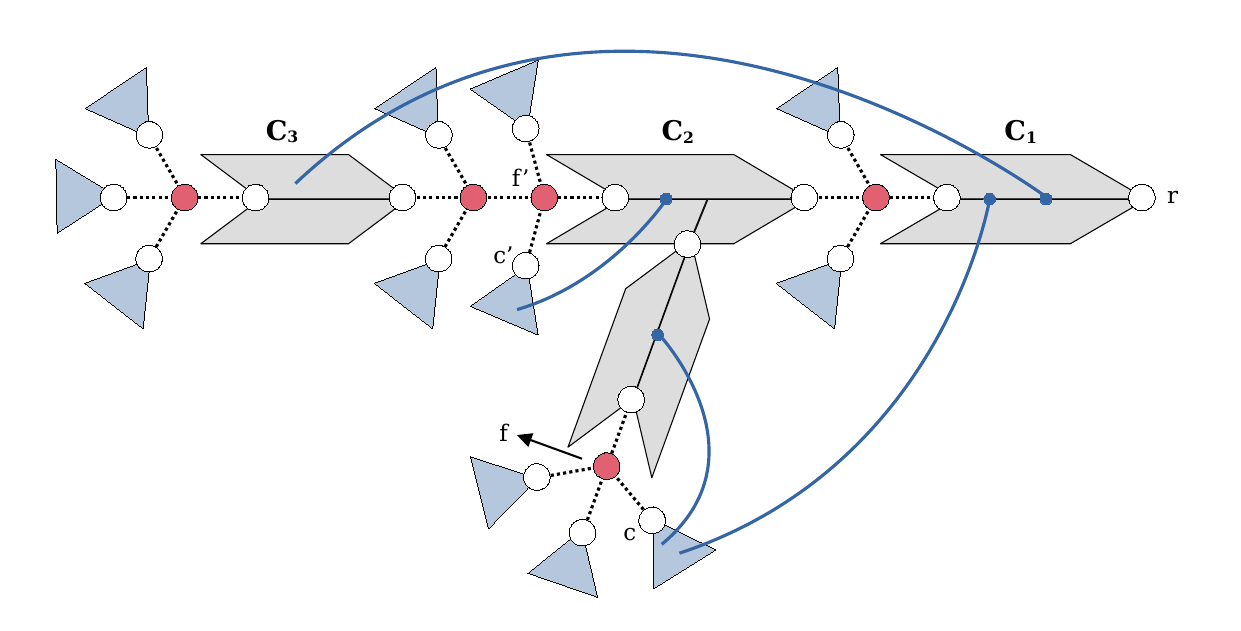}
\caption{\small{An example of internal components (in gray) and hanging subtrees (in blue) of a DFS tree rooted at $r$, given a set of vertices that have failed to work. (The failed vertices are coloured red.) We have that $T(c)$ is a hanging subtree of the failed vertex $f$, and $T(c')$ is a hanging subtree of $f'$. Notice that there two kinds of back-edges that remain in the graph: those that connect internal components, and those that connect hanging subtrees and internal components. (There is no direct connection between two hanging subtrees with a back-edge.) We have that $C_3$ remains connected with $C_1$ directly with a back-edge, and $C_1$ remains connected with $C_2$ through the mediation of the hanging subtree $T(c)$.}}\label{figure:internal}
\end{figure}  

The main challenge in determining the connectivity relation in $G\setminus F$ is to determine the connection between the internal components of $T\setminus F$. (For more on that, see Section~\ref{section:internal}.) This is because, for a hanging subtree with root $c$, it is sufficient to know at least $k$ back-edges of the form $(x,y)$ with $x\in T(c)$ and $y<p(c)$, with pairwise distinct lower endpoints. We call such a set of back-edges \emph{the surviving back-edges of $T(c)$}. Then, we have that $T(c)$ is connected with an internal component of $T\setminus F$ in $G\setminus F$, if and only if there is at least one surviving back-edge of $T(c)$ whose lower endpoint is not in $F$. Furthermore, in order for $T(c)$ to be connected with another hanging subtree $T(c')$ in $G\setminus F$, it is necessary that both of them remain connected with an internal component. Thus, by establishing the connection in $G\setminus F$ between the internal components of $T\setminus F$, and by having stored, for every vertex $c$, a set of (candidate) surviving back-edges for $T(c)$, we can now easily determine in $O(k)$ time whether two vertices remain connected in $G\setminus F$. (For the details, we refer to \cite{DBLP:conf/esa/Kosinas23}.)

\subsection{Computing the number of connected components of $G\setminus F$}
\label{section:numberOfComponents}
The distinction into internal components and hanging subtrees of $T\setminus F$ is the key to determine the number of connected components of $G\setminus F$. As we will discuss in Section~\ref{section:internal}, there is a graph $\mathcal{R}$, called the connectivity graph, whose nodes correspond to the internal components of $T\setminus F$, and whose connected components capture the connectivity relation between the internal components of $T\setminus F$ in $G\setminus F$. More precisely, two internal components are connected in $G\setminus F$ if and only if their corresponding nodes are connected in $\mathcal{R}$. Now, if we consider a hanging subtree of $T\setminus F$, there are two possibilities: either it is isolated in $G\setminus F$ (i.e., a connected component of $G\setminus F$), or it is connected with an internal component through a surviving back-edge. Thus, the number of connected components of $G\setminus F$ is: $\#$[connected components of $\mathcal{R}$]+$\#$[isolated hanging subtrees].

The data structure from \cite{DBLP:conf/esa/Kosinas23} can build the connectivity graph $\mathcal{R}$ in $O(d^4\log{n})$ time, where $d=|F|$. Thus, it remains to compute the number of isolated hanging subtrees of $T\setminus F$ in $G\setminus F$. To do this, we process the failed vertices from $F$ (in any order), and for each of them we determine the number of its children that induce isolated hanging subtrees. For every vertex $v\in F$, and every child $c$ of $v$, we have that $T(c)$ induces an isolated hanging subtree if and only if: $(1)$ no vertex from $F$ is a descendant of $c$, and $(2)$ every surviving back-edge $(x,y)$ of $T(c)$ has $y\in F$. (See Figure~\ref{figure:isolated}.) Notice that $(1)$ is necessary in order to ensure that $T(c)$ is indeed a hanging subtree, and then $(2)$ ensures that $T(c)$ is isolated.

\begin{figure}[h!]\centering
\includegraphics[trim={0cm 22cm 0 0.5cm}, clip=true, width=1\linewidth]{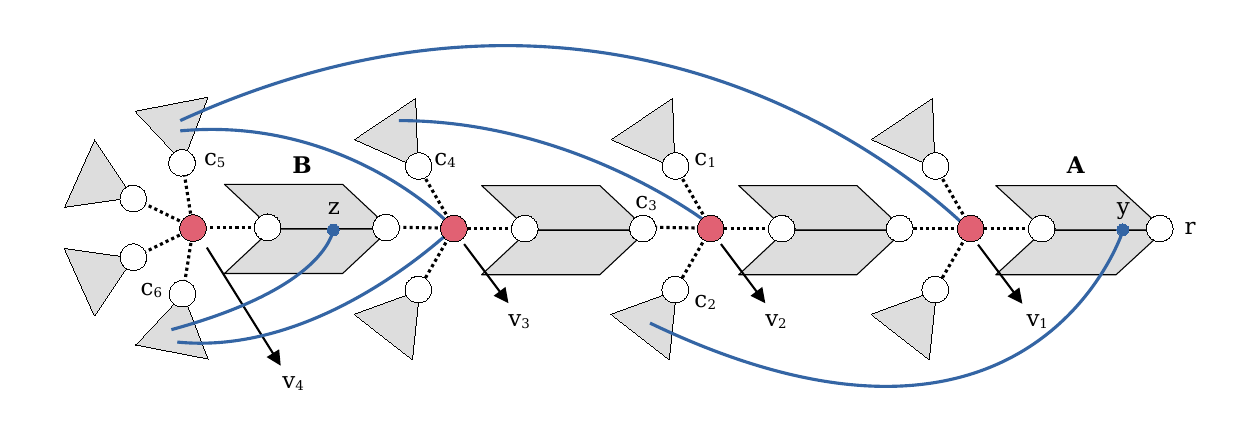}
\caption{\small{The vertices $\{v_1,v_2,v_3,v_4\}$ have failed to work. $T(c_1)$ is an isolated hanging subtree of $v_2$, because it has no surviving back-edges. On the other hand, $T(c_2)$ is not an isolated hanging subtree of $v_2$, because it has a surviving back-edge whose lower endpoint is not a failed vertex. $T(c_3)$ is not a hanging subtree of $v_2$, because there are descendants of $c_3$ that are failed vertices. $T(c_4)$ is an isolated hanging subtree of $v_3$, because the lower endpoint of its only surviving back-edge is $v_2$. $T(c_5)$ is an isolated hanging subtree of $v_4$, because the lower endpoints of its two surviving back-edges are $v_1$ and $v_3$. $T(c_6)$ is a non-isolated hanging subtree of $v_4$, because it has a surviving back-edge whose lower endpoint is not a failed vertex. Following the notation in the main text, and assuming that we store up to four surviving back-edges, we have the sorted lists $\mathcal{L}(c_1)=\{\bot,\bot,\bot,\bot\}$, $\mathcal{L}(c_2)=\{y,\bot,\bot,\bot\}$, $\mathcal{L}(c_3)=\{v_1,\bot,\bot,\bot\}$, $\mathcal{L}(c_4)=\{v_2,\bot,\bot,\bot,\}$, $\mathcal{L}(c_5)=\{v_1,v_3,\bot,\bot\}$, and $\mathcal{L}(c_6)=\{v_3,z,\bot,\bot\}$.}}\label{figure:isolated}
\end{figure}

To facilitate the search for the number of children of a vertex $v\in F$ that induce isolated hanging subtrees, we need to enrich the data structure from \cite{DBLP:conf/esa/Kosinas23} as follows. Recall that, for every vertex $c$, the data structure stores $d_{\star}$ surviving back-edges for $T(c)$. What we really care about are the lower endpoints of those back-edges. Thus, for every vertex $c$, we store the lower endpoints of the $d_{\star}$ surviving back-edges of $T(c)$ in a list $\mathcal{L}(c)$, sorted in increasing order (w.r.t. the DFS numbering). (Notice that some of the elements in $\mathcal{L}(c)$ may be $\bot$, and we make the convention that $\bot$ is an element greater than every vertex; see Figure~\ref{figure:isolated} for an illustration.) Then, we sort the children list of every vertex in increasing order w.r.t. the $\mathcal{L}$ lists of its children, where the comparison between the $\mathcal{L}$ lists is done lexicographically. We can use bucket-sort in order to have all children lists sorted thus in $O(d_{\star}n)$ time in total. So, for a vertex $v$, let $C(v)$ be its sorted children list. (Notice that the asymptotic complexity of the construction time and the space usage of the data structure stays the same.)

Now, given a vertex $v\in F$, we will show how to count the number of children of $v$ that induce isolated hanging subtrees. (Then, the total sum of the number of those children, for all vertices $v\in F$, gives the number of the isolated hanging subtrees of $T\setminus F$ in $G\setminus F$.) Let $v_1,\dots,v_k$ be the proper ancestors of $v$ in $F$, sorted in increasing order. Now let $c$ be a child of $v$ that induces a hanging subtree. Then, $T(c)$ is isolated if and only if $\mathcal{L}(c)\subseteq\{v_1,\dots,v_k\}$ (where we ignore all ``$\bot$" entries in $\mathcal{L}(c)$). Thus, for every subset $F'$ of $\{v_1,\dots,v_k\}$ (including the empty set), we search for (the endpoints of) the segment $\mathcal{S}$ of $C(v)$ that consists of all children $c$ of $v$ with $\mathcal{L}(c)=F'$. Notice that $\mathcal{S}$ is indeed a segment (i.e., consists of consecutive entries) of $C(v)$, and can be found in $O(|F'|\log n)$ time using binary search on $C(v)$. (This is because the comparisons are done lexicographically, but we only need to access the first $|F'|$ entries of the $\mathcal{L}$ lists of the vertices in $C(v)$.) Since $|F'|\leq k<|F|=d$, and this is done for every subset of $\{v_1,\dots,v_k\}$, we have that all these segments $\mathcal{S}$ can be found in $O(2^dd\log n)$ time in total for $v$, and thus in $O(2^dd^2\log n)$ time in total for all vertices in $F$. (This establishes the time bound for the query for the number of connected components of $G\setminus F$.) It is important to notice that, two \emph{distinct} subsets $F_1$ and $F_2$ of $\{v_1,\dots,v_k\}$, provide \emph{disjoint} segments $\mathcal{S}_1$ and $\mathcal{S}_2$.

Now we are almost done. Since we know the endpoints of $\mathcal{S}$, we also know its size, but we have to subtract from $|\mathcal{S}|$ the number of children of $v$ in $\mathcal{S}$ that do not induce hanging subtrees. To do that, we process all vertices $v'$ from $F$, and for every $v'$ that is a proper descendant of $v$, we ask for the child $c$ of $v$ that is an ancestor of $v'$. (We note that $c$ can be found in constant time using a level-ancestor query, as explained in Section~\ref{section:basicDFS}; we assume that we have initialized an oracle for such queries on the DFS tree that is given by the children lists $C(v)$.) Then, we can check in constant time whether $c\in \mathcal{S}$. If that is the case, then we mark $c$ as a child that does not induce a hanging subtree. After the processing of all those $v'$, we can subtract from $|\mathcal{S}|$ the number of vertices in $\mathcal{S}$ that do not induce hanging subtrees of $v$ (which are precisely all the vertices that we have marked). 
This discussion establishes Theorem~\ref{theorem2}. 


\paragraph{The case where $G$ is $d_{\star}$-connected.}
The computation of the number of connected components of $G\setminus F$ can be sped up if we know that $G$ is $\kappa$-connected, for some $\kappa\leq d_{\star}$. (Of course, if $\kappa>d_{\star}$, then $G\setminus F$ is connected, for any set of vertices $F$ with $|F|\leq d_{\star}$.) As a warm-up, let us consider the case where $\kappa=d_{\star}$. Then, $G\setminus F$ may be disconnected only if $|F|=d_{\star}$. (In every other case, i.e., when $|F|<d_{\star}$, we can immediately report that $G\setminus F$ consists of a single connected component.)

So let $F$ be a vertex set with $|F|=d_{\star}$. Notice that every vertex $v$ in $F$, except possibly one, has the property that there are less than $d_{\star}$ vertices in $F$ that are ancestors of $v$. Then, for such a vertex $v$, we have that no child $c$ of $v$ induces an isolated hanging subtree. This is precisely due to the fact that $G$ is $d_{\star}$-connected: otherwise, i.e., if there was a child $c$ of $v$ such that $T(c)$ induces an isolated hanging subtree, then the set of the ancestors of $v$ in $F$ would constitute a set with less than $d_{\star}$ vertices whose removal would disconnect $T(c)$ from the rest of the graph. Thus, if no vertex in $F$ has $d_{\star}$ ancestors from $F$, then $G\setminus F$ is connected. Otherwise, let $\tilde{v}$ be the unique vertex in $F$ with the property that all vertices from $F$ are ancestors of it. Then, only $\tilde{v}$ may have isolated hanging subtrees. But it is easy to find them with a single binary search in $C(\tilde{v})$: it is sufficient to search for those children $c$ of $\tilde{v}$ in $C(\tilde{v})$ with $\mathcal{L}(c)=F\setminus\{\tilde{v}\}$. (Because, since $G$ is $d_{\star}$-connected, every child $c$ of $\tilde{v}$ has at least $d_{\star}-1$ non-$\bot$ entries in $\mathcal{L}(c)$.) This binary search (for the endpoints of the corresponding segment of $C(\tilde{v})$) takes time $O(d_{\star}\log{n})$ (because the comparisons in this binary search are done lexicographically on the $\mathcal{L}$ lists, which have $d_{\star}$ entries). Thus, we can find the number of connected components of $G\setminus F$ in $O(d_{\star}^4\log{n})$ time (i.e., the query time is dominated by the computation of the connectivity graph $\mathcal{R}$). This establishes Corollary~\ref{corollary1}.

\paragraph{The case where $G$ is $\kappa$-connected (for $\kappa\leq d_{\star}$).}
For the general case where $G$ is $\kappa$-connected, with $\kappa\leq d_{\star}$, the query for the number of the connected components of $G\setminus F$ is non-trivial only if $\kappa\leq d\leq d_{\star}$, where $d=|F|$. Here we are guided by the same intuition as in the previous paragraph (for the case where $G$ is $d_{\star}$-connected).

So let $F$ be a set of failed vertices with $|F|=d$ and $\kappa\leq d\leq d_{\star}$, and let $F'$ be the subset of $F$ that consists of those vertices that have at least $\kappa$ ancestors from $F$. Then, it should be clear that only vertices from $F'$ can provide isolated hanging subtrees. To see this, consider a vertex $v\in F$ that has less than $\kappa$ ancestors from $F$ (where $v$ is included as an ancestor of itself from $F$). Then, every child $c$ of $v$ has less than $\kappa$ ancestors from $F$, and therefore, by removing all of them, we have that $T(c)$ is still connected with the remaining graph (since the graph is $\kappa$-connected). This implies the existence of a surviving back-edge for $T(c)$ -- which still exists even if we remove all of $F$ from $G$, and demonstrates that $T(c)$ remains connected with at least one internal component of $T\setminus F$ in $G\setminus F$.

Now, for every $v\in F'$, we let $\mathit{Anc}(v)$ denote the set of its proper ancestors from $F$, and let $N(v)=|\mathit{Anc}(v)|$. Then, for every $v\in F'$, and for every subset $A$ of $\mathit{Anc}(v)$ with at least $\kappa-1$ elements, we must perform a binary search, as explained in Section~\ref{section:numberOfComponents}, for the segment of $C(v)$ that consists of those vertices $c$ with $\mathcal{L}(c)=A$. Thus, we perform ${N(v)\choose \kappa-1}+\dots+{N(v)\choose N(v)}$ binary searches, which, in total, take time $O(({N(v)\choose \kappa-1}(\kappa-1)+\dots+{N(v)\choose N(v)}N(v))\log{n})$.

Thus, we can provide the following upper bound for the total time for finding the number of isolated hanging subtrees. Notice that $F'$ consists of at most $d-\kappa+1$ vertices, and that $N(v)\leq d-1$ for every $v\in F'$. Thus, the most pessimistic upper bound that we can get is $O((d-\kappa+1)({d-1\choose\kappa-1}(\kappa-1)+\dots+{d-1\choose d-1}(d-1))\log{n})$. Notice that this bound dominates the time that it takes to find, for every $v\in F'$, for every subset $A$ of $\mathit{Anc}(v)$, the elements in the segment of $C(v)$, that corresponds to $A$, that are ancestors of vertices from $F$ (since $v$ has at most $d-\kappa+1$ descendants from $F$).

To appreciate this time bound, let us suppose, for example, that $G$ is $\kappa$-connected, and that $F$ is a set of $\kappa+1$ vertices. Then, we can determine the number of isolated hanging subtrees of $T\setminus F$ in $O(\kappa^2\log n)$ time, and thus the number of connected components of $G\setminus F$ in time $O(\kappa^4\log{n})$. (Thus, the query time is still dominated by the time to compute the connectivity graph $\mathcal{R}$.)


\subsection{Determining the connection between the internal components}
\label{section:internal}
Upon receiving the set $F$ of failed vertices, the main problem is to determine the connectivity relation in $G\setminus F$ between the internal components of $T\setminus F$. Here it helps to distinguish two basic types of connection between internal components: either $(1)$ directly with a back-edge, or $(2)$ through the mediation of a hanging subtree. These two connections can only exist for two internal components that are related as ancestor and descendant. So let $C$ and $C'$ be two distinct internal components such that $r_C$ is a descendant of $r_{C'}$. Now, if $(1)$ there is a back-edge $(x,y)$ such that $x\in C$ and $y\in C'$, then we say that $C$ and $C'$ are connected \emph{directly with a back-edge}. And if $(2)$ there is a hanging subtree $H$ for which there exist a back-edge $(x,y)$ with $x\in H$ and $y\in C$, and a back-edge $(x',y')$ with $x'\in H$ and $y'\in C'$, then we say that $C$ and $C'$ are connected \emph{through the mediation of $H$}. (Notice that, in this case, we have that $r_H$ must be a descendant of $r_C$.) It is not difficult to verify the following:

\begin{lemma}[Implicit in \cite{DBLP:conf/esa/Kosinas23}]
\label{lemma:basic}
Let $F$ be a set of failed vertices, and let $C$ and $C'$ be two internal components of $T\setminus F$. Then $C$ and $C'$ are connected in $G\setminus F$ if and only if:  there is a sequence of internal components $C_1,\dots,C_t$, with $C_1=C$ and $C_t=C'$, such that $C_i$ and $C_{i+1}$ are connected either directly with a back-edge or through the mediation of a hanging subtree, for every $i\in\{1,\dots,t-1\}$.
\end{lemma}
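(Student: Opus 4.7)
The plan is to establish the equivalence by first proving a structural property of the back-edges that survive in $G\setminus F$, and then analyzing how a connecting walk in $G\setminus F$ decomposes with respect to the components of $T\setminus F$ that it visits.

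The key structural observation is that no surviving back-edge in $G\setminus F$ directly connects two distinct hanging subtrees. Indeed, let $(x,y)$ be a back-edge with $x>y$, and suppose $x$ lies in a hanging subtree $H$ of a failed vertex $f$ (so $r_H$ is a child of $f$ and $H=T(r_H)$, because $r_H$ has no descendant in $F$). Being an ancestor of $x$, the endpoint $y$ either lies in $T(r_H)=H$, or on $T[r,f]$. In the latter case, $y\notin F$ forces $y\neq f$, so $y$ is a proper ancestor of $f$. The component of $T\setminus F$ containing $y$ has a root that is an ancestor of $y$, hence an ancestor of $f\in F$, and is therefore internal. Thus any surviving back-edge between two distinct components of $T\setminus F$ has at least one endpoint inside an internal component.

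For the backward direction, assume the sequence $C_1,\dots,C_t$ exists. Between consecutive $C_i$ and $C_{i+1}$ we have either a direct surviving back-edge, or a hanging subtree $H$ with two surviving back-edges, one to $C_i$ and one to $C_{i+1}$, which yields a path in $G\setminus F$ between a vertex of $C_i$ and a vertex of $C_{i+1}$ using tree edges inside $H$ together with the two back-edges. Concatenating these local paths, and using that each internal component is a subtree of $T\setminus F$ (hence connected in $G\setminus F$), gives a walk from any vertex of $C=C_1$ to any vertex of $C'=C_t$. Conversely, pick any walk $P$ in $G\setminus F$ from a vertex of $C$ to a vertex of $C'$, and list the components of $T\setminus F$ visited by $P$ as $D_1,D_2,\dots,D_k$, with $D_1=C$ and $D_k=C'$. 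Each transition $D_j\to D_{j+1}$ uses a back-edge surviving in $G\setminus F$ between the two components, and by the structural observation, whenever $D_j$ is a hanging subtree, both $D_{j-1}$ and $D_{j+1}$ (when present) must be internal components; in particular, no two consecutive entries of the sequence are both hanging subtrees. Extracting the subsequence of internal components $C_1=D_{j_1},\dots,C_t=D_{j_t}$, each pair $(C_i,C_{i+1})$ satisfies $j_{i+1}-j_i\in\{1,2\}$, giving either a direct back-edge (when the gap is $1$) or exactly one mediating hanging subtree (when the gap is $2$).

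The main obstacle is the structural observation, which pins down precisely where the lower endpoint of a back-edge incident to a hanging subtree can land; without it, one cannot cleanly extract the internal subsequence, nor bound by one the number of hanging subtrees between two consecutive internal components. Everything after that is routine bookkeeping on the walk.
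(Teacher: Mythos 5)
Your proof is correct. The paper states this lemma without proof (it is cited as implicit in prior work, with the key fact that no back-edge joins two distinct hanging subtrees only remarked informally in Section~2.2 and the caption of Figure~1), and your argument --- pinning down that any surviving back-edge incident to a hanging subtree has its lower endpoint in an internal component, then contracting the walk's component sequence so that hanging subtrees appear only as isolated mediators between internal components --- is exactly the intended reasoning, carried out completely.
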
 

Lemma~\ref{lemma:basic} implies that, in order to determine the connectivity relation between the internal components, it is sufficient to check, for every pair of internal components $C$ and $C'$ (that are related as ancestor and descendant), whether they are connected directly with a back-edge, or through the mediation of a hanging subtree. If this is the case, then we add an artificial edge between (some representative nodes for) $C$ and $C'$. Thus, we have built a graph $\mathcal{R}$, which in \cite{DBLP:conf/esa/Kosinas23} was called \emph{the connectivity graph}, whose nodes are (representatives of) the internal components, and whose edges represent the existence of a connection either with a back-edge or through a hanging subtree. Then, it is sufficient to compute the connected components of $\mathcal{R}$, in order to get the connectivity relation between all internal components.

So the problem we have to solve is the following: given two internal components $C$ and $C'$ (that are related as ancestor and descendant), how can we determine efficiently whether $C$ and $C'$ are connected directly with a back-edge, or through the mediation of a hanging subtree? To determine the existence of a back-edge that directly connects $C$ and $C'$ will usually be a very straightforward task for us, with the exception of some cases where we will have to use an indirect, counting argument. (Recall that we will only have to deal with at most three internal components, but still our task is challenging.) The problem of establishing a connection through the mediation of a hanging subtree is, in general, much more difficult, because the number of hanging subtrees can be $\Omega(n)$, and thus it is forbidding to check for each of them, explicitly, whether it provides some desired back-edges. Thus, we resort to somehow considering large batches of hanging subtrees at once, and this is why we work with specific permutations of a base DFS tree. The idea is to rearrange the children lists, so that children with ``similar properties" (which capture properties of their induced subtrees) appear as consecutive siblings in the children lists, and so we can process large segments of those lists at once. 
We will provide a concrete instance of this idea in the following section.

\section{Designing an oracle for $d_{\star}=3$}
\label{section:oraclefor3}
Let $F$ be a set of at most three vertices that have failed to work. In order to determine the connectivity between two vertices $x$ and $y$ on $G\setminus F$, we use the DFS-based framework outlined in Section~\ref{section:theframework}. Thus, it is sufficient to establish the connectivity between the (at most three) internal components of $T\setminus F$, where $T$ is a DFS tree of $G$.

In order to reduce the number of cases that may appear, it is very convenient to make the following two assumptions.

\begin{assumption}
\label{assumption1}
The root of $T$ is not a failed vertex, and no two failed vertices are related as parent and child.
\end{assumption}

\begin{assumption}
\label{assumption2}
Every back-edge has the property that its higher endpoint is a leaf of $T$, which is not a failed vertex.
\end{assumption}

First, let us provide a brief rationale for Assumptions~\ref{assumption1} and \ref{assumption2}. We note that these two assumptions follow from the observation, that splitting the edges of the graph does not change the connectivity structure under vertex failures. To be specific, let $E'$ be a subset of the edges of $G$, and suppose that we substitute every edge $e=(x,y)\in E'$ with two edges $(x,z_e)$ and $(z_e,y)$, where $z_e$ is a new \emph{auxiliary} vertex that \emph{splits} the edge $e$. Let $G'$ be the resulting graph. (Notice that the vertex set of $G'$ consists of that of $G$, with the addition of some auxiliary vertices.) Then it is easy to see that, if we are given a set of vertices $F$ of $G$, and two vertices $x$ and $y$ of $G\setminus F$, then the connectivity of $x$ and $y$ in $G\setminus F$ is the same as that of $x$ and $y$ in $G'\setminus F$. Thus, we have the luxury to split some edges of $G$, if that is convenient for us.

Now, we first perform a depth-first search with start vertex $s$, and let $T_0$ be the resulting DFS tree. In order to ensure that no failed vertex can be the root of the DFS tree, we attach an artificial root $r$ as the parent of $s$. Let $T$ be the resulting DFS tree. Then, in order to ensure that no two failed vertices can be related as parent and child, we split every tree edge $(v,p(v))$, where $v\notin\{r,s\}$. Furthermore, in order to ensure that the higher endpoint of every back-edge is a leaf and cannot be a failed vertex, we consider every vertex $v$ of $G$, we split the back-edges whose higher endpoint is $v$, and we let the auxiliary vertices that split them become children of $v$. An example of the result of those operations is given in Figure~\ref{figure:DFS}. It is easy to see that we can perform those splittings and reorganize the adjacency lists in $O(n+m)$ time, and run again a depth-first search, so that the resulting DFS tree $T$ satisfies Assumptions~\ref{assumption1} and \ref{assumption2}.

\begin{figure}[h!]\centering
\includegraphics[trim={1.2cm 20cm 0 1cm}, clip=true, width=1.1\linewidth]{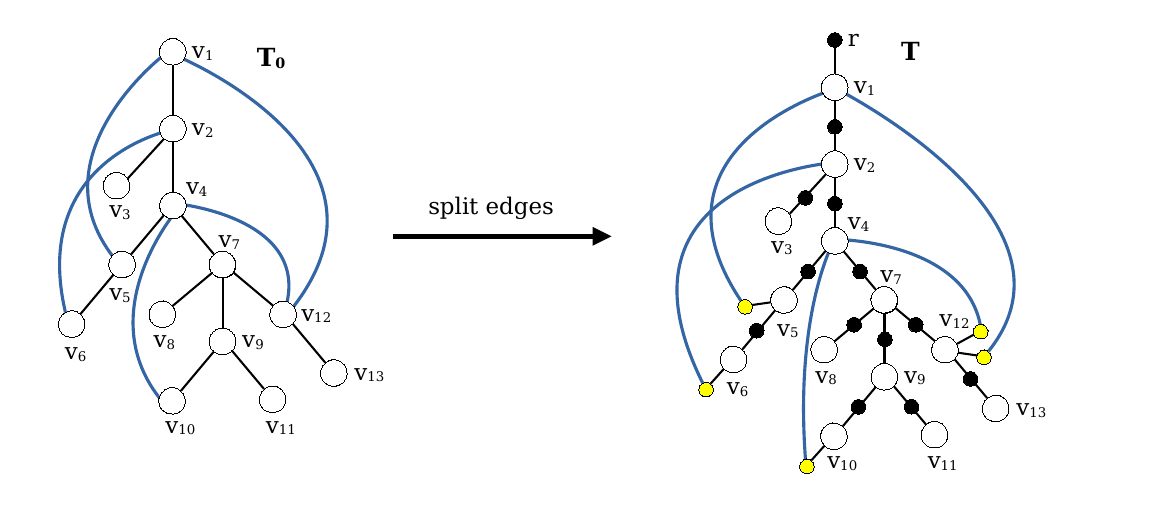}
\caption{\small{The result of adding an auxiliary root $r$ to the DFS tree $T_0$, and splitting the edges of the graph. The original vertices of the graph are coloured white. The artificial root, and the artificial vertices that split the tree-edges are coloured black. The children of the ordinary vertices that split the back-edges that stem from them are coloured yellow.}}\label{figure:DFS}
\end{figure}

Now, let us first note that the case where $F$ consists of a single vertex is trivial. Assumption~\ref{assumption1} implies that there is a single internal component in $T\setminus F$, and thus there is nothing to do with it: every connectivity query for two vertices $x$ and $y$ in $G\setminus F$ is answered with the help of the surviving back-edges (if needed), as explained in Section~\ref{section:internalAndHanging}. 

In the following section we introduce some elementary DFS-based concepts. These will help us in order to handle easily the case where $|F|=2$ (in Section~\ref{section:f=2}). This case provides an opportunity to demonstrate some of the techniques that will also be useful for handling the case where $|F|=3$. Finally, in Section~\ref{section:f=3}, we outline an analysis into cases, for determining the connection of the internal components of $T\setminus F$ when $|F|=3$. The full details for treating those cases are provided in Sections~\ref{section:vwnotrelated} and \ref{section:vwrelated}.

\subsection{Some concepts defined on a DFS tree}
\label{section:someconcepts}
Here we introduce some DFS-based concepts that will help the reader follow our discussion in the next two subsections. These are defined for all vertices (although they may be $\bot$ for some of them), and can be computed in linear time in total. For a more thorough treatment and motivation for those concepts, we refer to Section~\ref{section:DFSConcepts}.

The most fundamental concept that we rely on is that of \emph{the leaping back-edges over the parent of a vertex}. Specifically, for every vertex $v$, we let $B_p(v)$ denote the set of all back-edges $(x,y)$ such that $x\in T(v)$ and $y<p(v)$. The sets $B_p$ prove to be very useful for vertex-connectivity purposes, but we cannot afford to compute all of them explicitly, since there are graphs with $O(n)$ edges for which the total size of all $B_p$ sets (for some DFS trees) is $\Omega(n^2)$. Instead, we will rely on parameters that capture some properties of the back-edges in those sets that will be sufficient for our purposes.

First, we introduce some concepts that are defined w.r.t. the lower endpoints of the back-edges in the $B_p$ sets.
%
%
For a vertex $v$, we define $\mathit{low}(v):=\min\{y\mid \exists (x,y)\in B_p(v)\}$ and $\mathit{high}_p(v):=\max\{y\mid \exists (x,y)\in B_p(v)\}$. The $\mathit{low}$ points were used by Tarjan~\cite{DBLP:journals/siamcomp/Tarjan72} in order to solve various algorithmic problems of small connectivity, and the $\mathit{high}_p$ points were introduced by Georgiadis and Kosinas~\cite{DBLP:conf/isaac/GeorgiadisK20} in order to solve some algorithmic problems that relate to vertex-edge cuts.

Next, we introduce some concepts that are defined w.r.t. the higher endpoints of the back-edges in the $B_p$ sets.
For a vertex $v$, we define $\mathit{L}_p(v):=\min\{x\mid \exists (x,y)\in B_p(v)\}$ and $\mathit{R}_p(v):=\max\{x\mid \exists (x,y)\in B_p(v)\}$. These are called the \emph{leftmost} and the \emph{rightmost} point, respectively, of $v$. Furthermore, we also define $M_p(v):=\mathit{nca}\{L_p(v),R_p(v)\}$. Notice that $M_p(v)$ (as a vertex) is an invariant across all permutations of a base DFS tree, although $L_p(v)$ and $R_p(v)$ may vary.

We define the permutations $T_\mathit{lowInc}$ and $T_\mathit{highDec}$ of the base DFS tree $T$. The children lists in $T_\mathit{lowInc}$ are sorted in increasing order w.r.t. the $\mathit{low}$ points (where we consider $\bot$ to be the largest element), and the children lists in $T_\mathit{highDec}$ are sorted in decreasing order w.r.t. the $\mathit{high}_p$ points (where we consider $\bot$ to be the smallest element). Once the $\mathit{low}$ and $\mathit{high}_p$ points of all vertices are computed, we can easily construct $T_\mathit{lowInc}$ and $T_\mathit{highDec}$ in $O(n)$ time with bucket-sort.

\subsection{The case $|F|=2$}
\label{section:f=2}

Let $F=\{u,v\}$. Then there are two possibilities: either $u$ and $v$ are not related as ancestor and descendant, or one of them is an ancestor of the other. In the first case, Assumption~\ref{assumption1} implies that there is a single internal component in $T\setminus F$, and therefore this case is trivial.

So let us consider the second case, and let us assume w.l.o.g. that $u$ is an ancestor of $v$. Then, by Assumption~\ref{assumption1}, we have the situation depicted in Figure~\ref{figure:AB_MAIN}. Thus, there are two internal components, $A$ and $B$, and the goal is to determine whether they are connected, either directly with a back-edge, or through a hanging subtree of $v$. For this, we can use Lemma~\ref{lemma:lbelowwonT_high_MAIN}, which provides a very simple constant-time testable criterion for the connectivity of $A$ and $B$ in $G\setminus F$. This lemma also provides a good example of the techniques and arguments that we will employ in the most demanding cases as well. 

\begin{figure}[h!]\centering
\includegraphics[trim={0cm 23cm 0 0.5cm}, clip=true, width=1\linewidth]{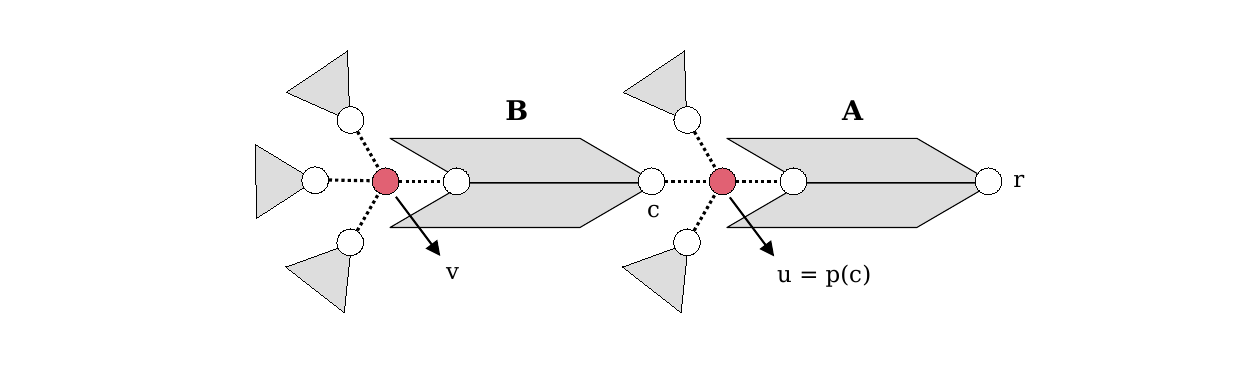}
\caption{\small{An illustration of the situation analyzed in Lemma~\ref{lemma:lbelowwonT_high_MAIN}. The set of failed vertices is $\{v,p(c)\}$, and the goal is to check whether the parts $A$ and $B$ remain connected. To do this, we rely on the leftmost and the rightmost points, $L_p(c)$ and $R_p(c)$, respectively, that provide back-edges from $T(c)$ to $T[p(p(c)),r]$.}}\label{figure:AB_MAIN}
\end{figure}

\begin{lemma}
\label{lemma:lbelowwonT_high_MAIN}
Let $c$ and $v$ be two vertices such that $c$ is a proper ancestor of $v$, and $p(c)\neq r$. Let $A$ be the set of vertices that are not descendants of $p(c)$, and let $B$ be the set of vertices that are descendants of $c$, but not of $v$. (See Figure~\ref{figure:AB_MAIN}.) Let $L_p(c)$ and $R_p(c)$ be the leftmost and the rightmost points of $c$ on $T_\mathit{highDec}$. Then, after removing $v$ and $p(c)$ from the graph, we have that $A$ is connected with $B$ if and only if one of the following is true:
\begin{enumerate}[label={(\arabic*)}]
\item{One of $L_p(c)$ and $R_p(c)$ is on $B$.}
\item{Both  $L_p(c)$ and $R_p(c)$ are proper descendants of $v$, and $\mathit{high}_p(d)\in B$, where $d$ is the child of $v$ that is an ancestor of $L_p(c)$.}
\end{enumerate} 
\end{lemma}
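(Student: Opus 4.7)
The plan is to reduce the connectivity of $A$ and $B$ in $G\setminus F$ to the existence of back-edges with specific structural properties, and then show that these properties are detected by the two conditions stated in the lemma. First, by Assumption~\ref{assumption1} we have that $A$ and $B$ are the only internal components of $T\setminus F$, and the hanging subtrees are of the form $T(c')$ for siblings $c'$ of $c$, as well as $T(d_i)$ for children $d_i$ of $v$. By Lemma~\ref{lemma:basic}, $A$ and $B$ are connected in $G\setminus F$ iff either (i) there is a back-edge directly from $B$ to $A$, or (ii) some hanging subtree $H$ contains a back-edge to $A$ and a back-edge to $B$. A quick check shows that only the subtrees $T(d_i)$ can mediate (the back-edges out of a $T(c')$ reach only $A$ or the failed $p(c)$). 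I would then reformulate (i) as ``some $(x,y)\in B_p(c)$ has $x\in B$'' and (ii) as ``some child $d_i$ of $v$ satisfies both that $T(d_i)$ contains a higher endpoint of a back-edge in $B_p(c)$ and $\mathit{high}_p(d_i)\geq c$,'' since $\mathit{high}_p(d_i)\geq c$ is precisely the condition that $T(d_i)$ has a back-edge reaching into $B$.

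To handle (i), I would rely on the contiguity of subtrees in any DFS numbering: in $T_\mathit{highDec}$ the set $T(v)$ is a contiguous interval of size $\mathit{ND}(v)$. Since $L_p(c)$ and $R_p(c)$ are the minimum and maximum (in the DFS numbering of $T_\mathit{highDec}$) among the higher endpoints of $B_p(c)$, every such higher endpoint $x$ satisfies $L_p(c)\leq x\leq R_p(c)$. Thus, if both $L_p(c)$ and $R_p(c)$ are descendants of $v$, the entire set of higher endpoints of $B_p(c)$ lies in $T(v)$, and (i) fails. Conversely, if some higher endpoint lies in $B$, then at least one of $L_p(c), R_p(c)$ must also lie in $B$ (otherwise both would be in $T(v)$, forcing that higher endpoint to lie in $T(v)$). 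Since $L_p(c)$ and $R_p(c)$ are themselves higher endpoints of back-edges in $B_p(c)$ whose lower endpoints lie in $A$, the equivalence ``(i) iff condition~(1)'' follows.

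The last step, and the one that genuinely uses the specific sorting of $T_\mathit{highDec}$, is the equivalence of (ii) with condition~(2) in the case where (i) fails (so $L_p(c), R_p(c)\in T(v)$). Let $d$ be the child of $v$ that is an ancestor of $L_p(c)$. I would first observe that $d$ is the leftmost child of $v$ whose subtree contains any higher endpoint of $B_p(c)$: if some sibling $d'$ to the left of $d$ contained a higher endpoint $x'$, then by DFS order $x' < L_p(c)$, contradicting the minimality of $L_p(c)$. Now comes the key appeal to $T_\mathit{highDec}$: the children of $v$ are sorted in decreasing order of $\mathit{high}_p$, so among the children of $v$ whose subtrees contain a higher endpoint of $B_p(c)$, the leftmost one (namely $d$) achieves the largest $\mathit{high}_p$ value. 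Consequently, ``some child $d'$ of $v$ satisfies the two requirements of (ii)'' iff ``$d$ does,'' iff $\mathit{high}_p(d)\geq c$, i.e., $\mathit{high}_p(d)\in B$. The main subtlety lies precisely in this final step: without the $T_\mathit{highDec}$ arrangement, a single pointer like $L_p(c)$ would not suffice to locate an optimal mediating child of $v$, and one might be forced to examine every child of $v$, defeating the constant-time goal.
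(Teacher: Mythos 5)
Your proposal is correct and follows essentially the same route as the paper's proof: you use the contiguity of $T(v)$ in the DFS numbering to show that a direct $B$--$A$ back-edge exists iff one of $L_p(c),R_p(c)$ lands in $B$, and you use the decreasing $\mathit{high}_p$ order of the children of $v$ in $T_\mathit{highDec}$ to argue that the child $d$ containing $L_p(c)$ is the optimal candidate mediator, exactly as the paper does. The only point you gloss over is the appeal to Assumption~\ref{assumption2} to guarantee that $L_p(c)$ and $R_p(c)$ cannot equal $v$ itself (so that the child $d$ of $v$ above $L_p(c)$ is well defined and condition~(2) is stated with ``proper descendants''); the paper invokes this explicitly.
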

\begin{remark}
\normalfont
The idea behind the ``$\Rightarrow$" direction is the following. If $(1)$ is not true, then $B$ and $A$ are not connected directly with a back-edge. Therefore, there must exist a hanging subtree of $v$ that connects $B$ and $A$. Then, since we are working on $T_\mathit{highDec}$, the child $d$ of $v$ that is an ancestor of $L_p(c)$ must necessarily induce such a subtree. The ``$\Leftarrow$" direction is immediate. 
\end{remark}

\begin{proof}
First of all, notice that $A$ and $B$ are the two internal components that appear on $T$ when $v$ and $p(c)$ are removed from the graph. Also, notice that each of the conditions $(1)$ and $(2)$ implies that neither of $L_p(c)$ and $R_p(c)$ is $\bot$. Furthermore, if either of $L_p(c)$ and $R_p(c)$ is $\bot$, then there is no back-edge $(x,y)$ where $x$ is a descendant of $c$, and $y$ is a proper ancestor of $p(c)$, and therefore it is impossible that $A$ and $B$ remain connected in $G\setminus\{v,p(c)\}$. Thus, in the following we can assume that neither of $L_p(c)$ and $R_p(c)$ is $\bot$.

$(\Rightarrow)$ As noted previously, in order for $A$ and $B$ to be connected in $G\setminus\{v,p(c)\}$, there has to be a back-edge $(x,y)\in B_p(c)$, and so neither of $L_p(c)$ and $R_p(c)$ is $\bot$. Now let us assume that $(1)$ is not true. By definition, we have that both $L_p(c)$ and $R_p(c)$ are descendants of $c$. Thus, since $(1)$ is true, we have that both $L_p(c)$ and $R_p(c)$ must be descendants of $v$. 

First, we can observe that $B$ is not connected with $A$ directly with a back-edge. To see this, suppose the contrary. Then, there is a back-edge $(x,y)$ such that $x\in B$ and $y\in A$. This means that $x$ is a descendant of $c$, but not of $v$, and $y$ is a proper ancestor of $p(c)$. In particular, this implies that $(x,y)\in B_p(c)$, and therefore $L_p(c)\leq x\leq R_p(c)$. Since $x$ is not a descendant of $v$, we have that either $x<v$ or $v+\mathit{ND}(v)\leq x$. If $x<v$, then $L_p(c)\leq x$ implies that $L_p(c)<v$, which is impossible, since $L_p(c)$ is a descendant of $v$. And if $v+\mathit{ND}(v)\leq x$, then $x\leq R_p(c)$ implies that $v+\mathit{ND}(v)\leq R_p(c)$, which is also impossible, since $R_p(c)$ is a descendant of $v$. This shows that there is no back-edge that connects $B$ and $A$ directly. Thus, since $B$ remains connected with $A$ in $G\setminus\{v,p(c)\}$, this must be through hanging subtrees (of $v$).

So let $d'$ be a child of $v$ such that $T(d')$ is connected with both $B$ and $A$ through back-edges. Thus, there is a back-edge $(x,y)$ such that $x\in T(d')$ and $y\in B$. This implies that $y$ is a descendant of $c$, but not a descendant of $v$. Therefore, we have $(x,y)\in B_p(d')$ and $y\geq c$. Thus, $\mathit{high}_p(d')\geq c$. Furthermore, there is a back-edge $(x',y')$ with $x'\in T(d')$ and $y'\in A$. Since $d'$ is a descendant of $c$, this implies that $(x',y')\in B_p(c)$. Therefore, we have $L_p(c)\leq x'\leq R_p(c)$. Now, Assumption~\ref{assumption2} guarantees that neither of $L_p(c)$ and $R_p(c)$ is $v$. Thus, both of them are proper descendants of $v$. So let $d$ be the child of $v$ that is an ancestor of $L_p(c)$. Then, since $x'$ is a descendant of $d'$ with $L_p(c)\leq x'$, we infer that $d\leq d'$. Therefore, since we work on $T_\mathit{highDec}$, we have $\mathit{high}_p(d)\geq\mathit{high}_p(d')$, and therefore $\mathit{high}_p(d)\geq c$. Thus, since $\mathit{high}_p(d)$ is a proper ancestor of $p(d)=v$, we conclude that $\mathit{high}_p(d)\in B$.

$(\Leftarrow)$ First, let us assume $(1)$, and let $z$ be one among $L_p(c)$ and $R_p(c)$ that is in $B$. Then, we have that $(z,l_1(z))$ is a back-edge that connects $B$ and $A$. Now let us assume $(2)$. Notice that $(L_p(c),l_1(L_p(c)))$ is a back-edge such that $L_p(c)$ is a descendant of $d$, and $l_1(L_p(c))\in A$. Thus $T(d)$ is connected with $A$. Then, since $\mathit{high}_p(d)\in B$, we have that $T(d)$ is a hanging subtree of $v$ that connects $A$ and $B$ (because there is a back-edge $(x,\mathit{high}_p(d))\in B_p(d)$, which establishes the connection between $T(d)$ and $B$).
\end{proof}

\subsection{The case $|F|=3$}
\label{section:f=3}

In the case $|F|=3$, we initially determine the ancestry relation between the vertices from $F$. Thus, we have the following cases (see also Figure~\ref{figure:cases}):

\begin{enumerate}[label={(\arabic*)}] 
\item{No two vertices from $F$ are related as ancestor and descendant.}
\item{Two vertices from $F$ are related as ancestor and descendant, but the third is not related in such a way with the first two.}
\item{One vertex from $F$ is an ancestor of the other two, but the other two are not related as ancestor and descendant.}
\item{Any two vertices from $F$ are related as ancestor and descendant.}
\end{enumerate}
Notice that cases $(1)$-$(4)$ are mutually exclusive, and exhaust all possibilities for the ancestry relation between the vertices from $F$. 

\begin{figure}[h!]
\includegraphics[trim={0cm 11cm 0 0.5cm}, clip=true, width=1\linewidth]{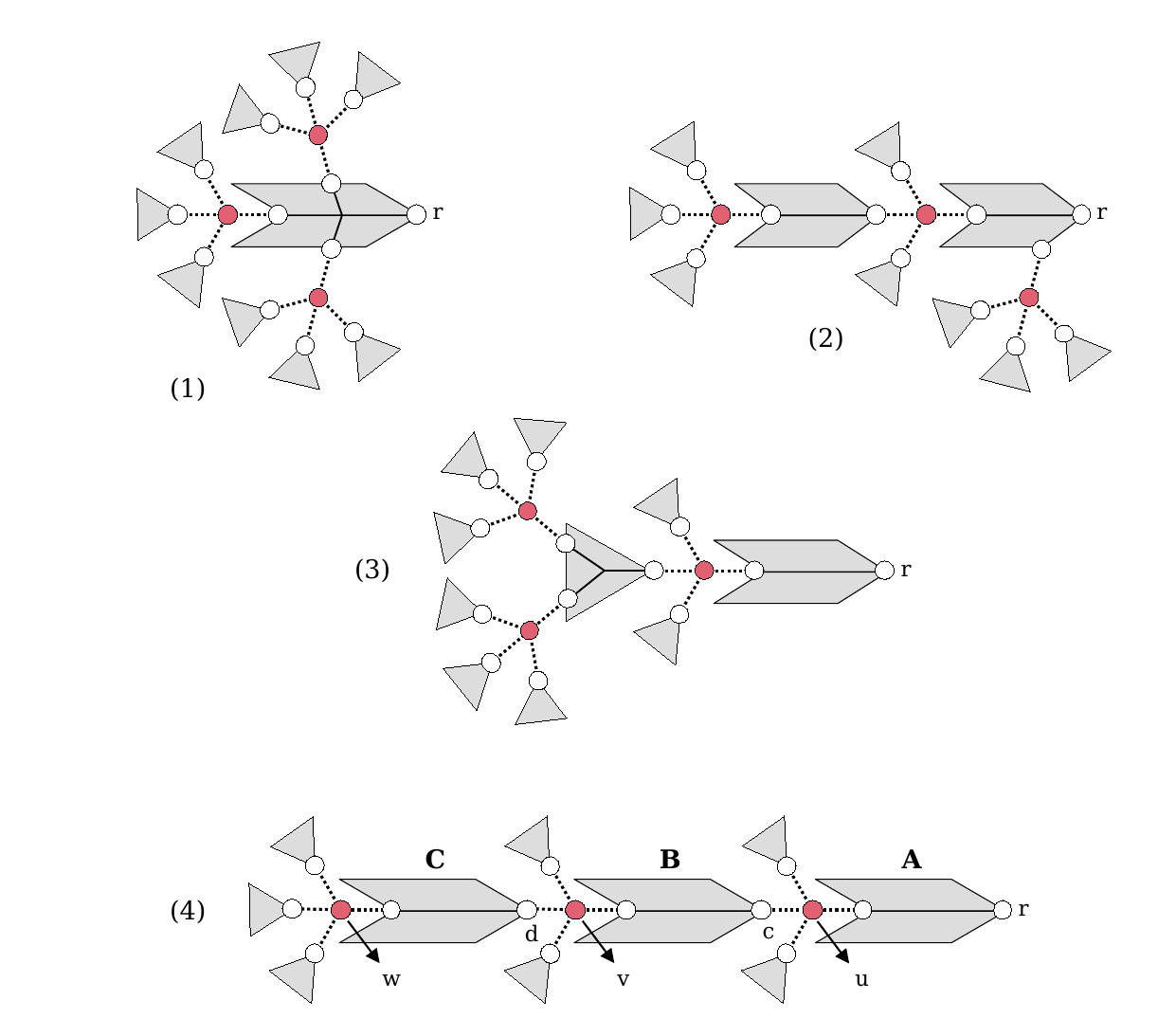}
\caption{\small{The four possibilities for the ancestry relation between the vertices from $F$, when $|F|=3$.}}\label{figure:cases}
\end{figure}

In case $(1)$ we have a single internal component in $T\setminus F$, and therefore this case is trivial. In case $(2)$ we have two internal components, and this case is handled precisely as the case where $|F|=2$ (i.e., we can determine the connectivity between the two internal components, using the parameters described in Lemma~\ref{lemma:lbelowwonT_high_MAIN}, which are associated with the child of the failed vertex that is an ancestor of the other failed vertex). 

Case $(3)$ is a little bit more involved, but it is handled with similar techniques as case $(2)$. Let $F=\{u,v,w\}$, and let us assume w.l.o.g. that $u$ is an ancestor of $v$ and $w$, but $v$ and $w$ are not related as ancestor and descendant. Here we distinguish between the case where $v$ and $w$ are descendants of different children of $u$, and the case where $v$ and $w$ are descendants of the same child of $u$. The first case is handled precisely with the same technique as case $(2)$ (see Section~\ref{section:vwdescendantsofcc'}). For the second case we have to be a little bit more careful, but the same technique essentially applies here too. (For the details, see Section~\ref{section:vwdescendantsofc}.) 

Case $(4)$ is much more involved. Let $F=\{u,v,w\}$, and let us assume w.l.o.g. that $u$ is an ancestor of $v$, and $v$ is an ancestor of $w$. Let also $c$ be the child of $u$ in the direction of $v$, and let $d$ be the child of $v$ in the direction of $w$, and let the three internal components $A$, $B$ and $C$, of $T\setminus F$ be as depicted in Figure~\ref{figure:cases}. Here we found it convenient to distinguish all different cases w.r.t. the location of $M_p(c)$. Thus, we have the six cases: $(i)$ $M_p(c)=\bot$, $(ii)$ $M_p(c)\in B$, $(iii)$ $M_p(c)=v$, $(iv)$ $M_p(c)$ is a descendant of a child $d'$ of $w$, $(v)$ $M_p(c)=w$, and $(vi)$ $M_p(c)\in C$. (Notice that these cases are mutually exclusive, and exhaust all possibilities for $M_p(c)$.) Before proceeding further, we would encourage the reader to try to solve some of those cases independently, before studying our own solutions.

In case $(i)$ we have that $A$ is isolated from $B$ and $C$, and it remains to determine whether $B$ remains connected with $C$ (either directly with a back-edge, or through a hanging subtree of $w$). We found that this case becomes easily manageable if we distinguish the four different cases for the location of $M_p(d)$: i.e., either $M_p(d)=\bot$, or $M_p(d)$ is a descendant of a child of $w$, or $M_p(d)=w$, or $M_p(d)\in C$. This case is the simplest one, and very instructive for the more demanding cases. (For the details, see Section~\ref{section:M(c)=bot}.)

In case $(ii)$ we have that $B$ is connected with $A$ directly with a back-edge, and it remains to determine whether $C$ is connected with either $B$ or $A$ (either directly with a back-edge, or through a hanging subtree of $w$). This is done similarly as in case $(i)$. (For the details, see Section~\ref{section:M(c)inB}.)

In case $(iii)$, there is a possibility that a child of $u$ induces a hanging subtree that connects $A$ and $B$. This is easy to check, and, if true, we can handle the rest as in case $(ii)$. Otherwise, we have that only $T(d)$ has the potential to provide back-edges that establish the connectivity of $A$ and $B$. Here we distinguish all the different cases for $\mathit{high}_p(d)$ and $\mathit{low}(d)$, and the most difficult case appears when $\mathit{high}_p(d)\in B$ and $\mathit{low}(d)\in A$. However, in this case $d$ is uniquely determined by $c$, and so we can gather enough information during the preprocessing phase in order to accommodate for this case. (For the details, see Section~\ref{section:Mp(c)=v}.)


In case $(iv)$, notice that $T(d')$ is the only hanging subtree that may connect $A$ with either $B$ or $C$. Here we found it very convenient to distinguish between the different cases for $M_p(d)$, as in case $(i)$. (Excluding the case $M_p(d)=\bot$, which here is impossible.) Then, if $M_p(d)$ is a descendant of a child of $w$, it must necessarily be a descendant of $d'$. Thus, this case is particularly easy. If $M_p(d)\in C$, then we can use either $L_p(d)$ and $R_p(d)$ in order to establish the connectivity between $C$ and $B$, or the leftmost and rightmost points of $d$ that reach the segment $T[p(v),c]$ (for the definition of those, see Section~\ref{section:leftmostReachSegment}). Thus, this case is still not difficult to manage. The case $M_p(d)=w$ presents a singular difficulty, and we have to resort to a counting argument to determine the existence of some back-edges. (For the details, see Section~\ref{section:Mp(c)Descw}.)


In case $(v)$ we distinguish the two cases for $M_p(d)$: either $M_p(d)=w$, or $M_p(d)\in C$. In the first case, we can appropriately use the leftmost and the rightmost points of $d$ that reach the segment $T[p(v),c]$ (for the definition of those, see Section~\ref{section:leftmostReachSegment}). In the second case, we follow essentially the same approach as for the case $M_p(d)\in C$ when $M_p(c)$ is a descendant of a child of $w$. (For the details, see Section~\ref{section:MpC=w}.)

Finally, in case $(vi)$, we have $M_p(d)\in C$ (as a consequence of $M_p(c)\in C$). An easy case appears if $M_p(d)=M_p(c)$, because then we can use the leftmost and rightmost points of $d$ that reach the segment $T[p(v),c]$. Otherwise, (after sorting out some easier cases), there appears again a singular difficulty, for which we have to resort to a counting argument to determine the existence of some back-edges. (For the details, see Section~\ref{section:MpCinC}.)

\paragraph{Reporting the number of connected component in the case where $d_{\star}=3$.}
\label{paragraph:reporting}
Our main contribution that provides the optimal oracle for the case $d_{\star}=3$ is to demonstrate how to determine the connectivity in $G\setminus F$ between the internal components of $T\setminus F$, for $|F|\leq 3$. Thus, we can use the idea described in Section~\ref{section:numberOfComponents}, in order to report the number of connected components of $G\setminus F$ as well. Recall that the idea in Section~\ref{section:numberOfComponents} is to simply count the number of the isolated hanging subtrees of $T\setminus F$, and add to their number that of the connected components of the connectivity graph $\mathcal{R}$ of the internal components of $T\setminus F$. 

However, in order to determine the connectivity between the internal components of $T\setminus F$, in the case where $d_{\star}=3$, we rely on Assumptions~\ref{assumption1} and \ref{assumption2}. The way these assumptions are ensured is to split the edges of the graph, and create a DFS tree with the desired properties. This construction is given by first using a DFS tree $T_0$ of $G$, on whose root we attach a new artificial root, and then we split the edges of $G$ appropriately, so that we get the final DFS tree $T$ on which we will work (see Figure~\ref{figure:DFS}). (In the following, we assume familiarity both with the contents of Section~\ref{section:numberOfComponents}, and with the construction of $T$ from $T_0$.)

Now there are two possible sources for miscounting. First, the number of the connected components of $\mathcal{R}$ might not be the same as what we would get from $T_0$. More precisely, the problem is that some internal components of $T\setminus F$ might not be internal components of $T_0\setminus F$, and thus we overestimate the number of internal components. The second possible source of error is that we may erroneously include in the number of the isolated hanging subtrees those that are induced by the children of $f$ that are the auxiliary vertices that split the back-edges that stem from $f$, for $f\in F$. 

It is not difficult to see that both these problems have an easy fix. First, notice that some connected components of $\mathcal{R}$ may correspond to singleton auxiliary vertices, and so it is sufficient to simply discard those. There are two possible ways to have such ``fake" internal components. First, if the root of $T_0$ is a failed vertex, then the root of $T$ is a singleton internal component of $T\setminus F$ that consists of the artificial root. (See Figure~\ref{figure:DFS}.) And second, if two failed vertices have the property that one of them is the parent of the other in $T_0$, then the vertex $z$ that splits the tree-edge that connects them in $T_0$ constitutes a singleton internal component of $T\setminus F$ that consists of $z$.

In order to avoid counting the number of isolated hanging subtrees that are induced by the auxiliary vertices that have split the back-edges, we can simply ignore them when we compute the segments of children of the failed vertices whose $\mathcal{L}$ sets consist only of failed vertices (the notation here refers to our procedure in Section~\ref{section:numberOfComponents} for counting the number of isolated hanging subtrees). To do that, we can simply have the auxiliary vertices that split back-edges put at the end of the children lists to which they belong, and then we make sure that the binary search (described in Section~\ref{section:numberOfComponents}) avoids those final segments of the children lists.

\section{Introduction to the toolkit of the DFS-based parameters}
\label{section:DFSConcepts}

\subsection{The sets of back-edges that leap over parents of vertices}
The most fundamental concept that we rely on is that of the \emph{leaping back-edges over the parent of a vertex}. Specifically, for every vertex $v$, we let $B_p(v)$ denote the set of all back-edges $(x,y)$ such that $x\in T(v)$ and $y<p(v)$.\footnote{We note that this definition is very similar to the set $B(v)$ of the back-edges that leap over a vertex $v$, which was used in previous works to deal with edge-connectivity problems \cite{DBLP:conf/esa/NadaraRSS21,DBLP:conf/esa/GeorgiadisIK21,DBLP:conf/soda/Kosinas24}. (I.e., the only difference in the definition of $B(v)$, is that, instead of ``$y<p(v)$", we have``$y<v$".)} The usefulness of the $B_p$ sets can be made apparent by the following simple example. Suppose that a vertex $v\neq r$ is removed from the graph. Then, the connected components of $T\setminus{v}$ are: the set $A$ of all vertices that are not descendants of $v$, and the subtrees of the form $T(c)$, for every child $c$ of $v$. (See Figure~\ref{figure:A}.) Then it is easy to see that, if $c$ and $c'$ are two distinct children of $v$, then $T(c)$ and $T(c')$ remain connected on $G\setminus{v}$ if and only if they remain connected with $A$ on $G\setminus{v}$. Thus, in order to establish the connectivity on $G\setminus{v}$, we only have to determine, for every child $c$ of $v$, whether $T(c)$ remains connected with $A$ on $G\setminus{v}$. But this is obviously equivalent to the condition $B_p(c)\neq\emptyset$, which can be easily checked if we have computed a certificate for the non-emptiness of $B_p(c)$ (e.g., its number of elements, or just an edge in it). 

\begin{figure}[h!]\centering
\includegraphics[trim={0cm 23cm 0 0cm}, clip=true, width=1\linewidth]{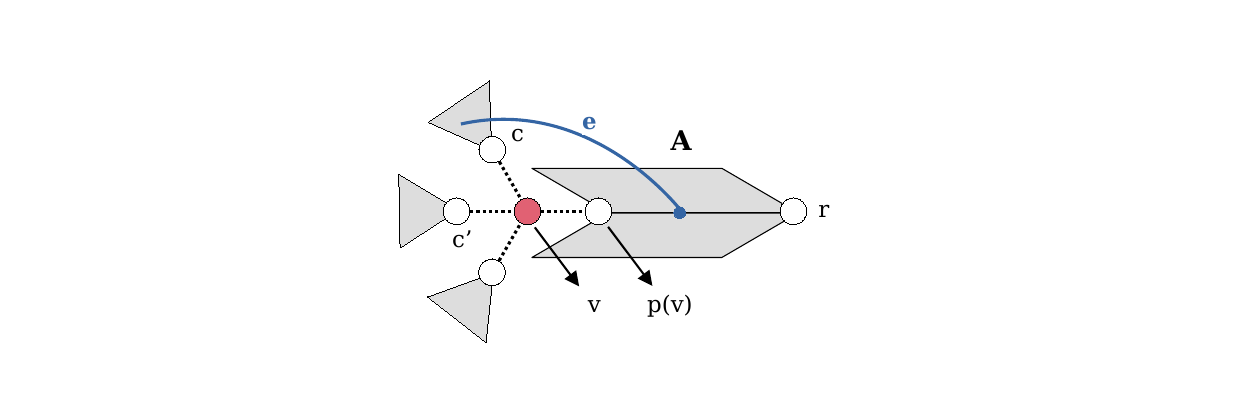}
\caption{\small{Here $v$ is a vertex that has failed to work. Thus, the connected components of $T\setminus{v}$ are: the set $A=T(r)\setminus T(v)$ of the non-descendants of $v$, and the subtrees of the form $T(c)$, where $c$ is a child of $v$. We see that $B_p(c)$ contains the back-edge $e$, and therefore $T(c)$ remains connected with $A$ on $G\setminus{v}$.}}\label{figure:A}
\end{figure}  

The sets $B_p$ prove to be very useful for vertex-connectivity purposes, but we cannot afford to compute all of them explicitly, since there are graphs with $O(n)$ edges for which the total size of all $B_p$ sets (for some DFS trees) is $\Omega(n^2)$. Instead, we will rely on parameters that capture some properties of the back-edges in those sets that will be sufficient for our purposes. For example, we will use the values $|B_p(v)|$ and $\mathit{sumY}(v)$, where $\mathit{sumY}(v)$ denotes the sum of the lower endpoints of the back-edges in $B_p(v)$. These two parameters can be easily computed for all vertices $v$, in linear time in total, by processing the vertices in a bottom-up fashion. (For $|B_p(v)|$, see, e.g. \cite{DBLP:conf/isaac/GeorgiadisK20}; $\mathit{sumY}(v)$ is computed similarly.) Moreover, we will need some concepts that capture some aspects of the distribution of the endpoints of the back-edges in the $B_p$ sets, and this is what we discuss next.

\subsection{The $\mathit{low}$ and $\mathit{high}$ points}
\label{section:dfs_highlow}
For a vertex $v$, it is useful to consider the lowest endpoints of the back-edges that stem from $v$. Thus, we let $l_1(v)$ denote the minimum lower endpoint of all back-edges whose higher endpoint is $v$. (If $B_p(v)=\emptyset$, then $l_1(v):=\bot$.) Then we inductively define $l_i(v)$, for $i>1$, as $l_i(v)=\min\{y\mid \mbox{ there is a back-edge } (v,y) \mbox{ and } y>l_{i-1}(v)\}$. Of course, it may be that $l_i(v)=\bot$, for some $i\geq 1$. In this case, we have $l_j(v)=\bot$, for every $j\geq i$.

Let $v$ be a vertex (distinct from the root or from a child of the root). If $B_p(v)\neq\emptyset$, we can consider the lowest lower endpoint of all back-edges in $B_p(v)$, denoted as $\mathit{low}(v)$. Thus, $\mathit{low}(v)=\min\{y\mid \exists (x,y)\in B_p(v)\}$. This concept was used by Tarjan~\cite{DBLP:journals/siamcomp/Tarjan72} in order to solve various problems of low connectivity (e.g., computing all articulation points in undirected graphs). The elegance in using the $\mathit{low}$ points consists in the simplicity in their computation, and in their usefulness in determining the existence of back-edges that connect subtrees with the remaining graph. For problems of higher connectivity, we shall need the generalization to the $\mathit{low}_i$ points, for $i\geq 1$, introduced by \cite{DBLP:conf/esa/Kosinas23}, which are defined inductively as follows. First, $\mathit{low}_1(v):= \mathit{low}(v)$. For $i>1$, we define $\mathit{low}_i(v):=\min\{y\mid \exists (x,y)\in B_p(v) \mbox{ and } y>\mathit{low}_{i-1}(v)\}$. In words, we may say that $\mathit{low}_i$ is the $i$-th lowest lower endpoint of all back-edges in $B_p(v)$. (Of course, it may be that $\mathit{low}_i(v)=\bot$, for some $i\geq 1$, in which case we have that $\mathit{low}_j(v)=\bot$ for every $j\geq i$.) In the sequel we will only use $\mathit{low}_1$ and $\mathit{low}_2$, and very often we will simply denote $\mathit{low}_1$ as $\mathit{low}$. In general, the usefulness of the $\mathit{low}_i$ points can be demonstrated by the example shown in Figure~\ref{figure:ABCD}. To put it briefly, the idea is that, if the $\mathit{low}_1$ point happens to be a failed vertex, then we can resort to the $\mathit{low}_2$ point in order to establish a particular kind of connection. For any positive integer $k$, the $\mathit{low}_1,\dots,\mathit{low}_k$ points of all vertices can be easily computed in $O(m+nk\log{k})$ time in total~\cite{DBLP:conf/esa/Kosinas23}. 

\begin{figure}[h!]\centering
\includegraphics[trim={0cm 14cm 0 0cm}, clip=true, width=1\linewidth]{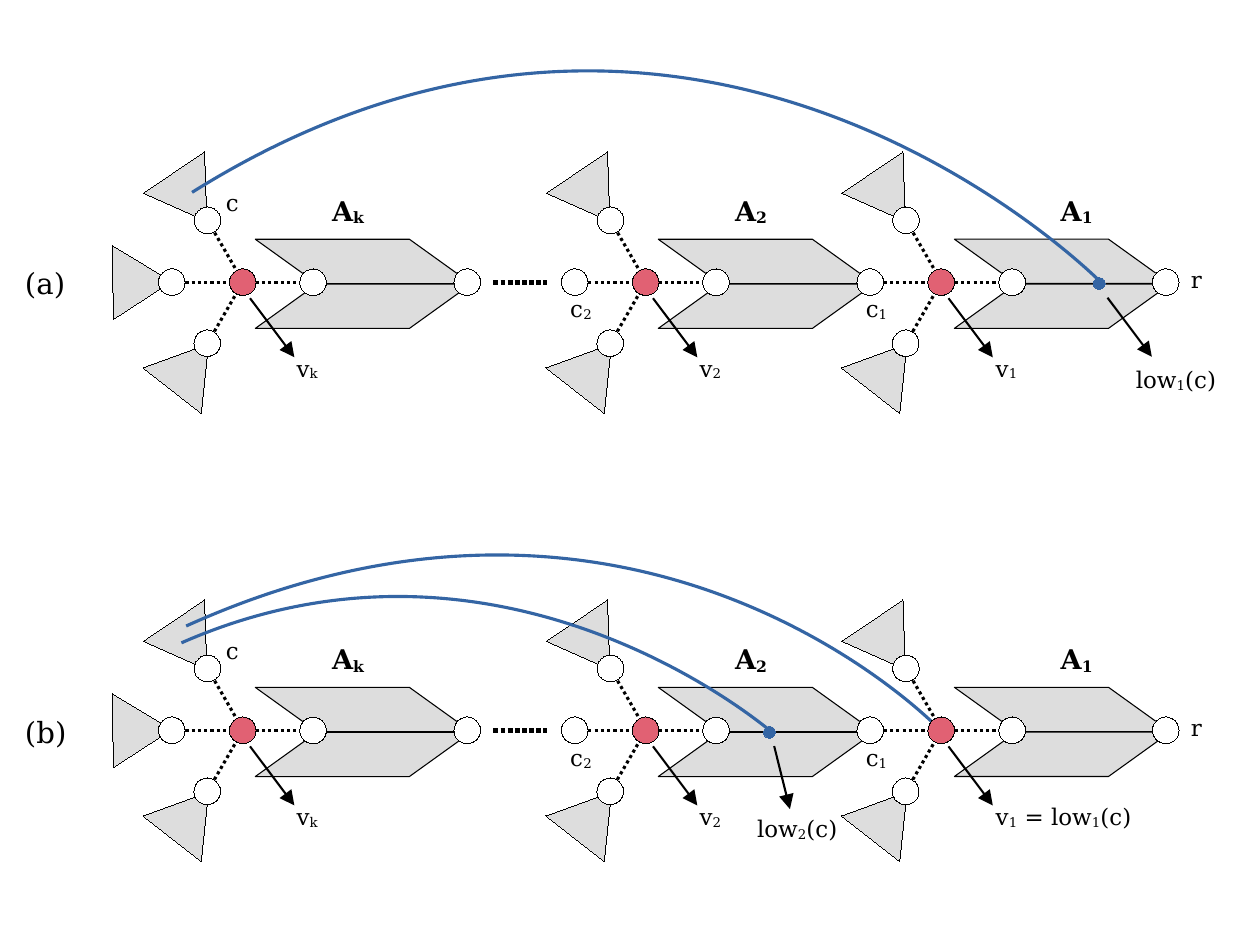}
\caption{\small{Here the vertices $v_1,\dots,v_k$ have failed to work. For every $i\in\{1,\dots,k-1\}$, we have that $v_i$ is an ancestor of $v_{i+1}$, and $c_i$ is the child of $v_i$ in the direction of $v_{i+1}$. $A_1$ denotes the set of vertices in $T(r)\setminus T(v_1)$, and $A_i$, for $i\in\{2,\dots,k\}$, denotes the set of vertices in $T(c_{i-1})\setminus T(v_i)$. Using the $\mathit{low}_i$ points, we can establish a connection between $T(c)$ and some of the parts $A_1,\dots,A_k$. For example, there is a back-edge that connects $T(c)$ directly with $A_1$ if and only if $\mathit{low}_1(c)\in A_1$, which is equivalent to $\mathit{low}_1(c)\in T[p(v_1),r]$ (since $\mathit{low}_1(c)$ is an ancestor of $c$). This situation is illustrated in $(a)$. More generally, we have that the lowest part $A_i$ which is connected directly with $T(c)$ with a back-edge is given by the the smallest $i\in\{1,\dots,k\}$ such that $\mathit{low}_1(c)\in A_i$. If it happens that $\mathit{low}_1(c)=v_i$, for some $i\in\{1,\dots,k-1\}$, then there is a back-edge that connects $T(c)$ directly with $A_{i+1}$ if and only if $\mathit{low}_2(c)\in A_{i+1}$. This situation is illustrated in $(b)$ (for $i=1$).}}\label{figure:ABCD}
\end{figure}


Similarly to the $\mathit{low}_i$ points, we define the $\mathit{high}_p$ points as follows.\footnote{We use the notation $\mathit{high}_p$ in order to distinguish those points from $\mathit{high}$, which are defined similarly \cite{DBLP:conf/soda/Kosinas24}, but are used for edge-connectivity problems.} First, $\mathit{high}_p(v)$ is the highest lower endpoint of all back-edges in $B_p(v)$. Thus, $\mathit{high}_p(v)=\max\{y\mid \exists (x,y)\in B_p(v)\}$. Then, the higher order $\mathit{high}_p$ points, denoted as $\mathit{high}^p_i$, for $i\geq 1$, are defined as follows. First, we let $\mathit{high}^p_1(v):=\mathit{high}_p(v)$, and for $i>1$ we have $\mathit{high}^p_i(v):=\max\{y\mid \exists (x,y)\in B_p(v) \mbox{ and } y<\mathit{high}^p_{i-1}(v)\}$. (Again, it may be that $\mathit{high}^p_i(v)=\bot$, for some $i\geq 1$, in which case we have that $\mathit{high}^p_j(v)=\bot$ for every $j\geq i$.) Notice that, for the case of a single vertex failure, the $\mathit{high}^p_1$ point can have the same functionality as $\mathit{low}_1$. However, for the case of more vertex failures, the functionality of $\mathit{high}^p_1$ is different, but it is a kind of dual to $\mathit{low}_1$. 
To be more specific, as in the case of the $\mathit{low}_i$ points, we need the higher order $\mathit{high}_p$ points in order to handle the case that some of them happen to be precisely failed vertices. (See Figure~\ref{figure:ABCD2}.) For the case of three vertex-failures, the first two $\mathit{high}_p$ points, $\mathit{high}^p_1$ and $\mathit{high}^p_2$, suffice for our purposes. In Section~\ref{section:computeHigh} we provide a linear-time algorithm for computing the $\mathit{high}^p_1$ and $\mathit{high}^p_2$ points of all vertices.

\begin{figure}[h!]\centering
\includegraphics[trim={0cm 14cm 0 0cm}, clip=true, width=1\linewidth]{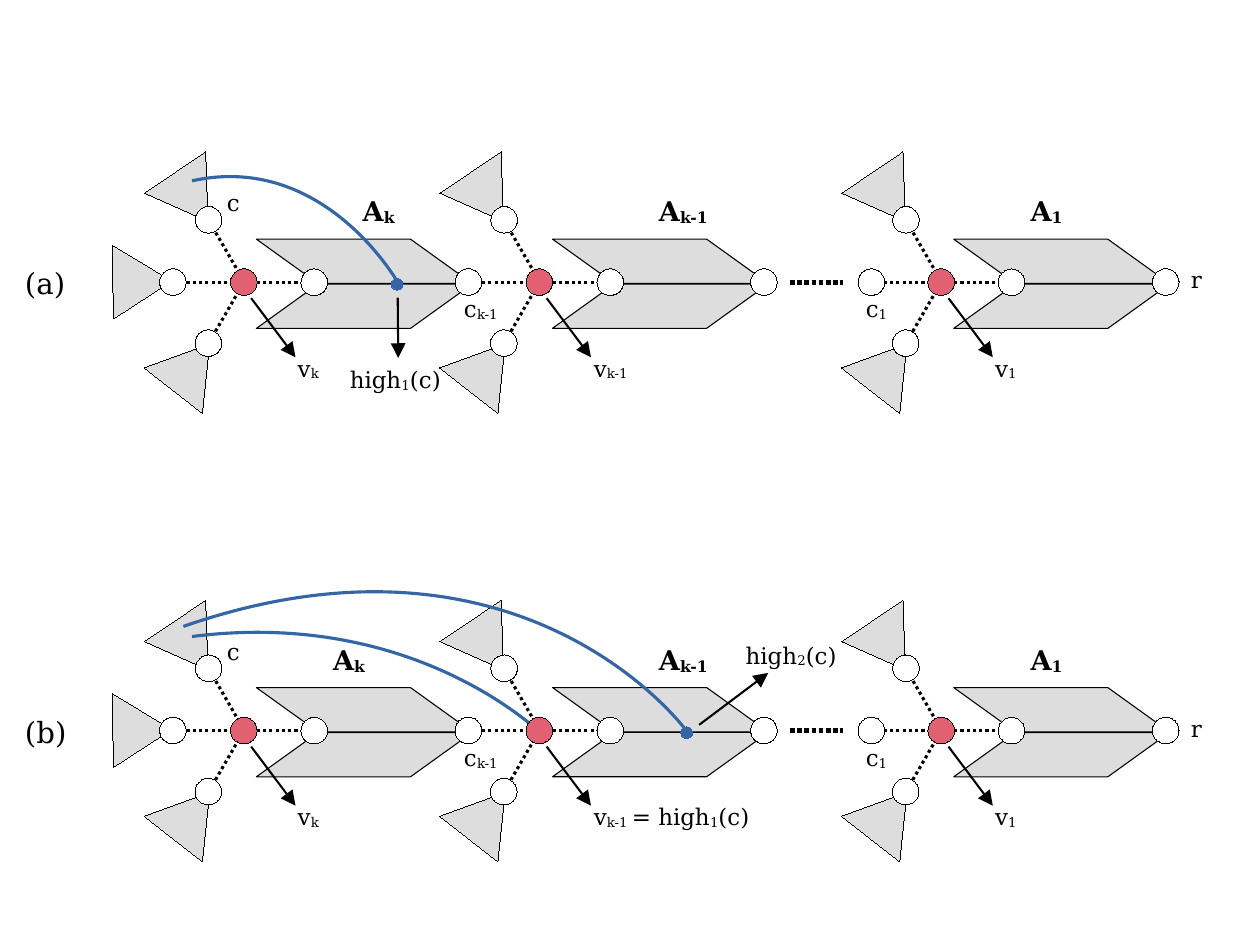}
\caption{\small{Here the vertices $v_1,\dots,v_k$ have failed to work. For every $i\in\{1,\dots,k-1\}$, we have that $v_i$ is an ancestor of $v_{i+1}$, and $c_i$ is the child of $v_i$ in the direction of $v_{i+1}$. $A_1$ denotes the set of vertices in $T(r)\setminus T(v_1)$, and $A_i$, for $i\in\{2,\dots,k\}$, denotes the set of vertices in $T(c_{i-1})\setminus T(v_i)$. Using the $\mathit{high}^p_i$ points, we can establish a connection between $T(c)$ and some of the parts $A_1,\dots,A_k$. For example, there is a back-edge that connects $T(c)$ directly with $A_k$ if and only if $\mathit{high}^p_1(c)\in A_k$, which is equivalent to $\mathit{high}^p_1(c)\in T[p(v_k),c_{k-1}]$ (since $\mathit{high}^p_1(c)$ is an ancestor of $c$). This situation is illustrated in $(a)$. More generally, we have that the highest $i$ such that $A_i$ is connected directly with $T(c)$ with a back-edge is given by the the smallest $i\in\{1,\dots,k\}$ such that $\mathit{high}^p_1(c)\in A_i$. If it happens that $\mathit{high}^p_1(c)=v_i$, for some $i\in\{1,\dots,k-1\}$, then there is a back-edge that connects $T(c)$ directly with $A_i$ if and only if $\mathit{high}^p_2(c)\in A_i$. This situation is illustrated in $(b)$ (for $i=k-1$).}}\label{figure:ABCD2}
\end{figure}

It is immediate from their definition, that the parameters $\mathit{low}_i$, for $i\geq 1$, and $\mathit{high}^p_i$, for $i\geq 1$, are an invariant across all permutations of a base DFS tree. More precisely, for any vertex $v$, any $i\geq 1$, and two DFS trees $T$ and $T'$ that are a permutation of each other, we have that the $\mathit{low}_i(v)$ points, computed in either $T$ or $T'$, are identical (if viewed as vertices), although they may be assigned different DFS numbers. The same is true for the $\mathit{high}^p_i(v)$ points, computed in either $T$ or $T'$.

In the sequel we will use the permutations $T_\mathit{lowInc}$ and $T_\mathit{highDec}$ of a base DFS tree $T$. The children lists in $T_\mathit{lowInc}$ are sorted in increasing order w.r.t. the $\mathit{low}$ points (where we consider $\bot$ to be the largest element), and the children lists in $T_\mathit{highDec}$ are sorted in decreasing order w.r.t. the $\mathit{high}_p$ points (where we consider $\bot$ to be the smallest element). Once the $\mathit{low}$ and $\mathit{high}_p$ points of all vertices are computed, we can easily construct $T_\mathit{lowInc}$ and $T_\mathit{highDec}$ in $O(n)$ time with bucket sort.

Finally, we will need the parameters $\mathit{numLow}(v):=|\{(x,y)\in B_p(v)\mid y=\mathit{low}(v)\}|$ and $\mathit{numHigh}(v):=|\{(x,y)\in B_p(v)\mid y=\mathit{high}_p(v)\}|$. In Section~\ref{section:numlow} we provide algorithms for computing the $\mathit{numLow}$ and $\mathit{numHigh}$ values of all vertices, in linear time in total.

\subsection{The leftmost, rightmost, and maximum points}
\label{section:leftmostrightmost}
Now we consider some concepts that are defined with respect to the higher endpoints of the back-edges in $B_p(v)$. Most of these concepts are not an invariant across all permutations of a DFS tree. Thus, one should be careful to specify what is the DFS tree that one uses. In the sequel, however, this will always be clear from the context, and thus we do not have to overload with special notation those parameters (i.e., in order to emphasize the DFS tree on which they are computed). Thus, from now on we simply assume that we work on a base DFS tree $T$. 

First, we consider the \emph{leftmost} and the \emph{rightmost} points of $v$, denoted as $L_p(v)$ and $R_p(v)$, respectively, which are defined as the minimum and the maximum, respectively, descendant of $v$, with the property that they provide a back-edge that leaps over the parent of $v$. More precisely, $L_p(v)=\min\{x\mid \exists (x,y)\in B_p(v)\}$ and $R_p(v)=\max\{x\mid \exists (x,y)\in B_p(v)\}$. The nearest common ancestor of $L_p(v)$ and $R_p(v)$ is denoted as $M_p(v)$, and it is called the \emph{maximum} point of $v$. In other words, $M_p(v)$ is the maximum descendant of $v$ which is an ancestor of every vertex which provides a back-edge that leaps over the parent of $v$. It is easy to see that, contrary to $L_p$ and $R_p$, the $M_p$ points are an invariant across all permutations of a DFS tree.

Notice that two vertices $v$ and $v'$ with $M_p(v)=M_p(v')$ are related as ancestor and descendant (since they have $M_p(v)=M_p(v')$ as a common descendant). For a vertex $v$ with $M_p(v)\neq\bot$, we define $\mathit{next}_{M_p}(v)$ as the greatest vertex which is lower than $v$ and has the same $M_p$ point as $v$. Then it is easy to see that the collection of all vertices that have a vertex $x$ as their $M_p$ point, is given by the sequence $x$, $\mathit{next}_{M_p}(x)$, $\mathit{next}_{M_p}(\mathit{next}_{M_p}(x))$, $\dots$, and this is a list of vertices in decreasing order.

Notice that, for a vertex $v$ with $B_p(v)\neq\emptyset$, we have $l_1(L_p(v))<p(v)$ and $l_1(R_p(v))<p(v)$. However, it is not necessary that $l_2(L_p(v))<p(v)$ or $l_2(R_p(v))<p(v)$ (and we may even have $l_2(L_p(v))=\bot$ or $l_2(R_p(v))=\bot$). In such a case, for our purposes we may need to have some ``back-up" leftmost and rightmost points, that can still establish the connection between a subtree induced by a descendant of $v$ and another part of the graph. More specifically, the problem that may appear is the same that forced us to extend the $\mathit{low}_1$ and $\mathit{high}^p_1$ points to higher order points. Thus, it may be e.g. that $z=l_1(L_p(v))$ is a failed vertex, and $l_2(L_p(v))=\bot$ or $l_2(L_p(v))\geq p(v)$. In this case, we will need the leftmost descendant of $v$ that provides a back-edge from $B_p(v)$ whose lower endpoint is not $z$. Thus, given two vertices $v$ and $z$, where $z$ is an ancestor of $v$, we define \emph{the leftmost point of $v$ that skips $z$} as the minimum descendant $x$ of $v$ for which there is a back-edge $(x,y)\in B_p(v)$ with $y\neq z$.

As in \cite{DBLP:conf/soda/Kosinas24}, we extend the concepts of the leftmost and the rightmost points to subtrees of vertices as follows. For a vertex $v$ and a descendant $d$ of $v$, we let $L(v,d)$ and $R(v,d)$ denote the minimum and the maximum, respectively, descendant of $d$ that provides a back-edge that leaps over the parent of $v$. More precisely, $L(v,d)=\min\{x\mid x\in T(d) \mbox{ and } \exists (x,y)\in B_p(v)\}$ and $R(v,d)=\max\{x\mid x\in T(d) \mbox{ and } \exists (x,y)\in B_p(v)\}$. Using a similar algorithm as in \cite{DBLP:conf/soda/Kosinas24}, we have the following:

\begin{proposition}
\label{proposition:L(v,d)}
Let $N$ be a positive integer, and let $L(v_i,d_i)$, for $i\in\{1,\dots,N\}$, be a set of queries for the leftmost point in $T(d_i)$ that provides a back-edge from $B_p(v_i)$, for $i\in\{1,\dots,N\}$. Then, we can answer all those queries in $O(N+n+m)$ time in total. Furthermore, the result holds if, instead of ``$L(v_i,d_i)$", we have ``$R(v_i,d_i)$", for every $i\in\{1,\dots,N\}$. 
\end{proposition}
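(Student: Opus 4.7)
The plan is to handle all $N$ queries offline by reducing each one to a range-threshold successor query on the DFS-ordered vertex array, and then to process the whole batch with a sweep over thresholds backed by a linear-time incremental successor structure, in the spirit of the corresponding edge-connectivity lemma of Kosinas~\cite{DBLP:conf/soda/Kosinas24}.

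First, I would establish the reduction. Precompute $l_1(x)$ (the smallest lower endpoint of a back-edge leaving $x$, or $\bot$) for every vertex $x$ by the standard bottom-up pass on $T$, in $O(n+m)$ time. Since any $x \in T(d_i)$ is automatically a descendant of $v_i$, the existence of a back-edge $(x,y) \in B_p(v_i)$ with $x\in T(d_i)$ is equivalent to $l_1(x) < p(v_i)$. Identifying $T(d_i)$ with the DFS-interval $[d_i,\,d_i+\mathit{ND}(d_i)-1]$ yields
$$
L(v_i,d_i) \;=\; \min\bigl\{\,x \in [d_i,\,d_i+\mathit{ND}(d_i)-1]\ :\ l_1(x) < p(v_i)\,\bigr\},
$$
with the analogous formula (using $\max$) for $R(v_i,d_i)$.

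Second, I would perform an offline sweep over the threshold $t = p(v_i)$. Bucket-sort the queries by threshold and the vertices by $l_1$ in $O(n+N)$ time, since both values live in $\{1,\dots,n\}$. Maintain a bit array on positions $1,\dots,n+1$, with $n+1$ a sentinel that is always marked. Sweep through the queries in increasing order of threshold; just before handling a query with threshold $t$, advance a pointer through the sorted vertex list and mark every $x$ with $l_1(x) < t$. Each vertex is touched at most once across the whole sweep, so the cumulative cost of all marks is $O(n)$. What is left, per query, is to return the smallest marked position in $[d_i,\,d_i+\mathit{ND}(d_i)-1]$ (or to report $\bot$ when the successor is the sentinel).

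Third, I would implement this batch of insertions and successor queries in $O(n+N)$ time using the Gabow--Tarjan static-tree DSU~\cite{DBLP:journals/jcss/GabowT85} on the predetermined line of positions $1, 2, \dots, n, n+1$, taken as the fixed tree of unions rooted at $n{+}1$. Processed in the standard reverse-time order, each mark turns into a union of adjacent segments along the line, whose representative is the next marked position, and each successor query is a single $\mathtt{find}$; the GT bound gives $O(1)$ amortized per operation on the word RAM, so all insertions and queries together cost $O(n+N)$. Running the mirror construction along the line handles the rightmost queries $R(v_i,d_i)$ symmetrically. Summing the reduction, the two sweeps, and the two DSUs gives the claimed $O(n+m+N)$ in total. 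The main obstacle is precisely this last step: naive union-find with path compression only gives $O((n+N)\alpha(n))$ and any balanced-tree approach an extra $\log n$ factor, and shaving those factors is exactly what the Gabow--Tarjan DSU on a static tree of unions is designed for, as was also the key ingredient in the analogous algorithm of~\cite{DBLP:conf/soda/Kosinas24}.
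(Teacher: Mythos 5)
Your proposal is correct and is essentially the paper's own approach (which is only sketched there, with details deferred to \cite{DBLP:conf/soda/Kosinas24}): reduce each query to finding the first position in the DFS interval of $T(d_i)$ whose $l_1$ value beats the threshold $p(v_i)$, order the queries so that the threshold varies monotonically, and use the Gabow--Tarjan DSU on the line of consecutive DFS positions to skip dead segments in $O(1)$ amortized time per operation. Your increasing-threshold sweep followed by time reversal collapses into exactly the paper's direct decreasing-threshold (bottom-up) scan, so the difference is purely presentational.
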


The idea for establishing Proposition~\ref{proposition:L(v,d)} is to process the vertices $v_i$ in a bottom-up fashion. Then, we search in $T(d_i)$, in increasing order, for the first vertex $x$ with the property that there is a back-edge $(x,y)\in B_p(v_i)$. In order to achieve linear time, we have to be able to avoid at once segments of vertices $x$ that are provably no longer able to provide a back-edge with low enough lower endpoint (i.e., because their $l_1$ point is $\geq p(v_i)$). We can do that with the use of a DSU data structure~\cite{DBLP:journals/jcss/GabowT85} that unites consecutive (w.r.t. the DFS numbering) vertices. For the details, see \cite{DBLP:conf/soda/Kosinas24}. 

Here we further extend those concepts, to the leftmost and the rightmost points that reach a particular segment of $T$. Specifically, for three vertices $z$, $u$ and $v$, such that $z$ is a descendant of $u$, and $u$ is a descendant of $v$, we define \emph{the leftmost and the rightmost points that start from $T(z)$ and reach the segment $T[u,v]$}, denoted as $L(z,u,v)$ and $R(z,u,v)$, as the minimum and the maximum, respectively, descendant of $z$, from which stems a back-edge with lower endpoint on $T[u,v]$. Thus, if we let $S=\{x\mid x\in T(z) \mbox{ and there is a back-edge } (x,y) \mbox{ with } y\in T[u,v]\}$, then we have $L(z,u,v)=\min(S)$ and $R(z,u,v)=\max(S)$. Notice that, for two vertices $z$ and $v$ such that $z$ is a descendant of $v$ (and $v$ is not the root or a child of the root), we have $L(v,z)=L(z,p(p(v)),r)$ and $R(v,z)=R(z,p(p(v)),r)$. In Section~\ref{section:leftmostReachSegment} we show that, if we have a collection of queries for the leftmost and the rightmost points that reach a particular segment of $T$, then we can compute all of them in linear time in total, provided that they satisfy a kind of nestedness property. Notice that such queries resemble the 2D-range queries that were used in \cite{DBLP:conf/esa/Kosinas23}, but here we need to have precomputed the answers to them in the preprocessing phase, and this we can only do for specific batches of queries that satisfy some special properties. For more details, see Section~\ref{section:leftmostReachSegment}.

\subsection{Some useful properties of the DFS-based parameters}
\label{section:usefulProperties}

In this section we state and prove some propositions that will be needed in the sequel. Throughout we assume that we work on a DFS tree $T$, and all DFS-based concepts refer to $T$.

\begin{lemma}
\label{lemma:Mp}
Let $c$ and $d$ be two vertices such that $c$ is an ancestor of $d$. Suppose that $M_p(c)$ is a descendant of $d$. Then $M_p(d)$ is an ancestor of $M_p(c)$.
\end{lemma}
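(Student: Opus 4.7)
Proof plan. The trivial case $c = d$ gives $M_p(c) = M_p(d)$, so I would dispose of it first and assume that $c$ is a proper ancestor of $d$. In that case $p(c)$ is a proper ancestor of $d$, hence an ancestor of $p(d)$, and therefore $p(c) \leq p(d)$ in DFS order.

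The main step is a two-line containment argument. First, since $M_p(c)$ is by assumption a descendant of $d$, and both $L_p(c)$ and $R_p(c)$ are descendants of $M_p(c)$ (because $M_p(c) = \mathit{nca}\{L_p(c), R_p(c)\}$), both $L_p(c)$ and $R_p(c)$ lie in $T(d)$. Second, by definition of $L_p(c), R_p(c)$ there are back-edges $(L_p(c), y_1)$ and $(R_p(c), y_2)$ in $B_p(c)$, so $y_1, y_2 < p(c)$. Combined with $p(c) \leq p(d)$, we get $y_1, y_2 < p(d)$, so both of these back-edges belong to $B_p(d)$. Hence $L_p(c)$ and $R_p(c)$ are higher endpoints of back-edges in $B_p(d)$, which immediately gives $L_p(d) \leq L_p(c)$ and $R_p(d) \geq R_p(c)$ in DFS numbering.

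To conclude, I would use the general fact (which I would state as an auxiliary observation, since it follows from the contiguity of $T(u)$ in the DFS numbering) that for any vertex $u$ and any finite set $S$ of descendants of $u$, the nca of $\min(S)$ and $\max(S)$ is an ancestor of every element of $S$. Applied with $u = d$ and $S$ being the set of higher endpoints of back-edges in $B_p(d)$, this says that $M_p(d) = \mathit{nca}\{L_p(d), R_p(d)\}$ is an ancestor of every such higher endpoint, and in particular of both $L_p(c)$ and $R_p(c)$. Therefore $M_p(d)$ is an ancestor of their nca, which is $M_p(c)$, completing the proof.

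I do not expect any genuine obstacle here; the only point requiring mild care is ensuring that the two witness back-edges of $L_p(c)$ and $R_p(c)$ really migrate from $B_p(c)$ into $B_p(d)$, which is exactly what the inequality $p(c) \leq p(d)$ supplies.
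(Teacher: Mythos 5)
Your argument is correct and is essentially the paper's proof: both show that every witness back-edge of $B_p(c)$ (in particular those of $L_p(c)$ and $R_p(c)$) migrates into $B_p(d)$ because its higher endpoint lies under $M_p(c)\subseteq T(d)$ and its lower endpoint is a proper ancestor of $p(c)$, hence of $p(d)$, and then conclude that $M_p(d)$ is an ancestor of $L_p(c)$ and $R_p(c)$ and therefore of their nca $M_p(c)$. The intermediate inequalities $L_p(d)\leq L_p(c)$ and $R_p(d)\geq R_p(c)$ are harmless but not needed.
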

\begin{proof}
Let $(x,y)$ be a back-edge in $B_p(c)$. Then $x$ is a descendant of $M_p(c)$, and therefore a descendant of $d$. Furthermore, $y$ is a proper ancestor of $p(c)$, and therefore a proper ancestor of $p(d)$. This shows that $(x,y)\in B_p(d)$. A first consequence of that is that $M_p(d)\neq\bot$. A second consequence is that $x$ is a descendant of $M_p(d)$. Therefore, due to the generality of $(x,y)\in B_p(c)$, we conclude that $M_p(c)$ is a descendant of $M_p(d)$.
\end{proof}

\begin{lemma}
\label{lemma:siblingsM}
Let $d$ be a vertex such that $\mathit{next}_{M_p}(d)\neq\bot$. Then, for every sibling $d'$ of $d$, we have $\mathit{next}_{M_p}(d')=\bot$.
\end{lemma}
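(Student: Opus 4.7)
The plan is to argue by contradiction. Suppose $\mathit{next}_{M_p}(d)=w\neq\bot$ and, for some sibling $d'$ of $d$, $\mathit{next}_{M_p}(d')=w'\neq\bot$. First I observe that, by the definition of $\mathit{next}_{M_p}$, both $w$ and $d$ are ancestors of $M_p(d)=M_p(w)$, and since $w<d$, $w$ is a proper ancestor of $d$, hence an ancestor of $p(d)$. Symmetrically, $w'$ is an ancestor of $p(d')=p(d)$. Therefore $w$ and $w'$ both lie on the tree path from $r$ to $p(d)$, so they are related as ancestor and descendant.

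The key observation is that $M_p(w)=M_p(d)$ is a descendant of $d$, and $M_p(w')=M_p(d')$ is a descendant of $d'$; since $T(d)$ and $T(d')$ are disjoint, we get $M_p(w)\in T(d)$, $M_p(w')\in T(d')$, and consequently $M_p(w)\neq M_p(w')$ (which in particular forces $w\neq w'$). Recall from the definition of $M_p$ that the higher endpoint of every back-edge in $B_p(w)$ is a descendant of $M_p(w)$, and likewise for $w'$. So all higher endpoints of back-edges in $B_p(w)$ lie in $T(d)$, and all higher endpoints of back-edges in $B_p(w')$ lie in $T(d')$.

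Now I use the fact that $T(d)\subseteq T(w')$ (because $w'$ is an ancestor of $p(d)$, hence of $d$). If some back-edge $(x,y)$ had $x\in T(d)$ and $y<p(w')$, then $(x,y)\in B_p(w')$, forcing $x$ to be a descendant of $M_p(w')\in T(d')$, which is impossible as $T(d)\cap T(d')=\emptyset$. Hence every back-edge with higher endpoint in $T(d)$ has lower endpoint $\geq p(w')$. Applying this to the back-edge that witnesses $M_p(w)\in T(d)$ (whose lower endpoint is $<p(w)$), we obtain $p(w)\geq p(w')$. Reversing the roles of $(d,w)$ and $(d',w')$ gives $p(w')\geq p(w)$, so $p(w)=p(w')$.

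Finally, both $w$ and $w'$ lie on the tree path from $r$ to $p(d)$, and two distinct vertices on a single root-to-vertex path cannot share the same parent. Hence $w=w'$, contradicting $M_p(w)\neq M_p(w')$. The main step in the argument is the transfer of information from the hypothesis $M_p(w')\in T(d')$ to the structural constraint ``no back-edge with higher endpoint in $T(d)$ leaps over $p(w')$''; once this is in place, the symmetric use of $B_p(w)$ and $B_p(w')$ immediately pins down $w=w'$.
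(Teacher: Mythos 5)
Your proof is correct and rests on the same core mechanism as the paper's: a back-edge witnessing $M_p(\mathit{next}_{M_p}(d))$ has its higher endpoint confined to $T(d)$ yet leaps over the parent of the shallower of the two $\mathit{next}_{M_p}$ vertices, so it would also belong to the other $B_p$ set and be forced into the disjoint subtree $T(d')$. The only difference is presentational — the paper fixes a WLOG ancestor relation between the two $\mathit{next}_{M_p}$ vertices and derives the contradiction in one step, whereas you run the containment argument symmetrically and compare $p(w)$ with $p(w')$.
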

\begin{proof}
Let us suppose, for the sake of contradiction, that there is a vertex $d'\neq d$ such that $p(d')=p(d)$ and $\mathit{next}_{M_p}(d')\neq\bot$. To avoid notational clutter, let $f:=\mathit{next}_{M_p}(d)$ and $f':=\mathit{next}_{M_p}(d')$. By definition, we have that $f$ is a proper ancestor of $d$, and $f'$ is a proper ancestor of $d'$. Thus, both $f$ and $f'$ are ancestors of $p(d)=p(d')$, and therefore they are related as ancestor and descendant. So let us assume, w.l.o.g., that $f'$ is an ancestor of $f$. Since $f'=\mathit{next}_{M_p}(d')$, there is a back-edge $(x,y)$ such that $x$ is a descendant of $M_p(d')$ and $y$ is a proper ancestor of $p(f')$. Thus, $x$ is a descendant of $d'$, and therefore a descendant of $p(d')=p(d)$, and therefore a descendant of $f$. Furthermore, since $f'$ is an ancestor of $f$, we have that $y$ is a proper ancestor of $p(f)$. This shows that $(x,y)\in B_p(f)$, and therefore $x$ is a descendant of $M_p(f)$. Since $f=\mathit{next}_{M_p}(d)$, we have $M_p(f)=M_p(d)$. Thus, $x$ is a descendant of $M_p(d)$, and therefore a descendant of $d$. But now we have that $x$ is a common descendant of $d$ and $d'$, in contradiction to the fact that $d$ and $d'$ are distinct vertices with the same parent.
\end{proof}

\begin{lemma}
\label{lemma:lowestl1}
Let $v$ be a vertex with $B_p(v)\neq\emptyset$. Then, Assumption~\ref{assumption2} implies that $L_p(v)$ on $T_\mathit{lowInc}$ is a descendant of $v$ with the lowest $l_1$ point.
\end{lemma}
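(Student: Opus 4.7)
The plan is to prove $l_1(L_p(v)) = \min\{l_1(x) : x \in T(v)\}$ by a descent argument on $T_\mathit{lowInc}$, in which Assumption~\ref{assumption2} is invoked repeatedly to force the relevant higher endpoints to be leaves. Write $L := L_p(v)$. By definition $L$ is a descendant of $v$ and is the higher endpoint of some back-edge in $B_p(v)$, so Assumption~\ref{assumption2} makes $L$ a leaf. Set $m := \min\{l_1(x) : x \in T(v)\}$; since $B_p(v)\neq\emptyset$ the minimum is attained by some vertex with $l_1 < p(v)$, so $m < p(v)$, and trivially $l_1(L)\geq m$. The task is to rule out the strict inequality $l_1(L) > m$.

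Suppose for contradiction that $l_1(L) > m$, and fix $x^{*}\in T(v)$ with $l_1(x^{*})=m$. Again by Assumption~\ref{assumption2}, $x^{*}$ is a leaf, and $l_1(x^{*}) < p(v)$ makes $x^{*}$ a candidate for $L_p(v)$, so the leftmost-property of $L$ in $T_\mathit{lowInc}$ DFS order forces $L < x^{*}$. The key descent step is the following. For any leaf $x\in T(v)$ with $L < x$ in $T_\mathit{lowInc}$ DFS and $l_1(x) < l_1(L)$, set $a := \mathit{nca}(L,x)$, and let $c_L$, $c_x$ be the children of $a$ in the direction of $L$ and of $x$ (both are well-defined, since two distinct leaves are incomparable, so $a$ is a proper ancestor of each). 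The ordering $L < x$ places $c_L$ strictly before $c_x$ in the $T_\mathit{lowInc}$ children list of $a$, hence $\mathit{low}(c_L)\leq \mathit{low}(c_x)$. From $l_1(x) < p(v) \leq v \leq a = p(c_x)$ we get $(x, l_1(x))\in B_p(c_x)$, and therefore $\mathit{low}(c_x)\leq l_1(x) < l_1(L)$, so $\mathit{low}(c_L) < l_1(L) < p(v)$. The back-edge realising $\mathit{low}(c_L)$ stems, by Assumption~\ref{assumption2}, from a leaf $x' \in T(c_L)$ with $l_1(x')\leq \mathit{low}(c_L) < p(v)$. Thus $(x', l_1(x'))\in B_p(v)$, so $x'$ is a candidate for $L_p(v)$; leftmostness of $L$ gives $L\leq x'$, and $l_1(x') < l_1(L)$ gives $x'\neq L$, hence $L < x'$.

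Iterate the descent with $x'$ in place of $x$. The new NCA $\mathit{nca}(L,x')$ lies in $T(c_L)$, so it is at least as deep (in $T$) as $c_L$, which is itself strictly deeper than $a$. The sequence of NCAs produced by the iteration is therefore a strictly deepening chain of proper ancestors of $L$, which is finite and bounded below by $p(L)$. After finitely many steps the current $a$ must equal $p(L)$, forcing $c_L = L$; but then the next witness $x'$ lies in $T(L) = \{L\}$ (because $L$ is a leaf), so $x' = L$, contradicting $x'\neq L$. This contradiction establishes $l_1(L)=m$, as required.

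The only real obstacle is organising the termination of the descent cleanly; Assumption~\ref{assumption2} is precisely the lever that does the work, first by forcing $L$ itself and every intermediate witness to be a leaf, and then by collapsing $T(L)$ to a singleton at the bottom of the chain, which is what produces the final contradiction.
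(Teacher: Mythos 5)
Your proof is correct, and the engine driving it is the same as in the paper's proof: compare the two children of $\mathit{nca}(L_p(v),x)$ that lead to $L_p(v)$ and to a witness $x$ with smaller $l_1$, use the sorting of children lists by $\mathit{low}$ in $T_\mathit{lowInc}$ together with Assumption~\ref{assumption2} (which forces the relevant endpoints to be leaves, so the NCA is a proper ancestor of both) to extract a new back-edge in $B_p(v)$ whose higher endpoint contradicts the leftmostness of $L_p(v)$. The only difference is organizational: the paper picks the witness $x$ extremal (the lowest descendant of $v$ in $T_\mathit{lowInc}$ with $l_1(x)<l_1(L_p(v))$) and gets the contradiction in a single step, whereas you iterate the step as an infinite descent on the depth of the NCA and terminate via the depth bound $p(L_p(v))$; the two devices are interchangeable here, with the extremal choice being slightly more economical.
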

\begin{proof}
Let us suppose, for the sake of contradiction, that there is a descendant $x$ of $v$ with $l_1(x)<l_1(L_p(v))$, and let us assume, w.l.o.g., that $x$ is the lowest descendant of $v$ on $T_\mathit{lowInc}$ with this property. Let $z$ be the nearest common ancestor of $x$ and $L_p(v)$. By Assumption~\ref{assumption2} we have that both $x$ and $L_p(v)$ are leaves (since there are back-edges that stem from them), and therefore they are not related as ancestor and descendant. Thus, there are distinct children $c$ and $c'$ of $z$, such that $x$ is a descendant of $c$, and $L_p(v)$ is a descendant of $c'$. Since both $x$ and $L_p(v)$ are descendants of $v$, we have that $z$ is also a descendant of $v$. Thus, every descendant of $z$ is a descendant of $v$. In particular, this implies that $x$ is one of the descendants of $z$ with the lowest $l_1$ point.

Now notice that $c$ is the first child of $z$ on $T_\mathit{lowInc}$. To see this, let us assume the contrary. Then, due to the sorting of the children lists on $T_\mathit{lowInc}$, we have that the first child $d$ of $z$ on $T_\mathit{lowInc}$ has $\mathit{low}(d)\leq\mathit{low}(c)$. Then, there is a descendant $x'$ of $d$ that provides the $\mathit{low}$ point of $d$, and thus we have $l_1(x')=\mathit{low}(d)$. But then we have $l_1(x')=\mathit{low}(d)\leq\mathit{low}(c)=l_1(x)$, which is impossible due to the minimality of $x$. (I.e., the problem is that $x'<x$ on $T_\mathit{lowInc}$, since $x'$ is a descendant of the first child of $z$, whereas $x$ is a descendant of a greater child of $z$ than the first.) Now, since $l_1(x)<l_1(L_p(v))$, and $x$ is a descendant of $v$, we have that $(x,l_1(x))$ is a back-edge in $B_p(v)$. But since $x$ is a descendant of the first child of $z$, whereas $L_p(v)$ is a descendant of a different child of $z$, we have $x<L_p(v)$, which contradicts the minimality in the definition of $L_p(v)$. Thus, we conclude that there is no descendant of $v$ whose $l_1$ point is lower than $l_1(L_p(v))$. 
\end{proof}


\section{Efficient computation of the DFS-based parameters}
\label{section:efficient_computation}

\subsection{Computing the $\mathit{high}^p_1$ and $\mathit{high}^p_2$ points}
\label{section:computeHigh}
In order to compute the $\mathit{high}^p_1$ and $\mathit{high}^p_2$ points, we use the same idea that was used in a previous work for computing the $\mathit{high}^p_1$ points \cite{DBLP:conf/isaac/GeorgiadisK20} (which were being referred to there as ``$\mathit{high}_p$"), but we have to slightly extend it so that it also provides the $\mathit{high}^p_2$ points as well.

So let us first describe the general idea, and then explain how it can be made to run efficiently. We initialize all $\mathit{high}^p_1$ and $\mathit{high}^p_2$ points as $\bot$. Now the idea is to process all vertices $y$ in a bottom-up fashion. Then, for every vertex $y$ that we process, we consider all back-edges of the form $(x,y)$. For every such back-edge, we ascend the path $T[x,y]$, and we process every vertex $z$ that we meet, until it satisfies $p(z)= y$ (in which case we terminate the processing of $(x,y)$). For every such vertex $z$, we set $\mathit{high}^p_1(z)\leftarrow y$, if $\mathit{high}^p_1(z)=\bot$. Otherwise, if $\mathit{high}^p_1(z)$ has been computed and it has $\mathit{high}^p_1(z)\neq y$, then we set $\mathit{high}^p_2(z)\leftarrow z$, if $\mathit{high}^p_2(z)=\bot$. This method is shown in Algorithm~\ref{algorithm:high_naive}. It has a greedy nature, and it should be clear that it correctly computes all $\mathit{high}^p_1$ and $\mathit{high}^p_2$: this is an easy consequence of the fact that we process all back-edges in decreasing order w.r.t. their lower endpoint.

The problem with Algorithm~\ref{algorithm:high_naive} is obvious: its running time can be as large as $\Omega(n^2)$. The reason for this is, that as we process the back-edges $(x,y)$, we may have to ascend the same segments of vertices an excessive amount of times, whereas the vertices that are contained in them have long had their $\mathit{high}^p_1$ and/or $\mathit{high}^p_2$ points computed. Thus, we must find a way to bypass at once the segments of vertices that have their $\mathit{high}$ points computed. But this can be done easily, precisely because the sets of vertices that have their $\mathit{high}$ points computed, at any point during the course of the algorithm, form a collection of subtrees of the DFS tree on which we work. Thus, we can rely on a disjoint-set union (DSU) data structure. Then, once we reach a vertex that has its $\mathit{high}$ point(s) computed, we can immediately ask for the root of the subtree that contains it, and has the property that all the vertices on it have their $\mathit{high}$ point(s) computed, and then go directly to that root. That's the general idea, but now we must explain the details precisely.

First, let us explain the usage of the DSU data structure. This maintains a partition of the vertex set, where initially all vertices are placed in singletons. We will maintain throughout the following two properties:

\begin{enumerate}[label={(\arabic*)}]
\item{Every set maintained by the DSU is a subtree of the DFS tree.}
\item{All vertices on a subtree maintained by the DSU have their $\mathit{high}^p_1$ and $\mathit{high}^p_2$ points correctly computed, except for the root, which has $\mathit{high}^p_2=\bot$ (and possibly $\mathit{high}^p_1=\bot$).}
\end{enumerate}

\begin{algorithm}[h!]
\caption{\textsf{A correct but inefficient method to compute the $\mathit{high}^p_1$ and $\mathit{high}^p_2$ points}}
\label{algorithm:high_naive}
\LinesNumbered
\DontPrintSemicolon

\ForEach{vertex $v$}{
  $\mathit{high}^p_1[v]\leftarrow\bot$\;
  $\mathit{high}^p_2[v]\leftarrow\bot$\;
}

\For{$y\leftarrow n$ to $y=1$}{
  \ForEach{back-edge $(x,y)$}{
    $z\leftarrow x$\;
    \While{$p(z)\neq y$}{
      \If{$\mathit{high}^p_1[z]=\bot$}{
        $\mathit{high}^p_1[z]\leftarrow y$\;
      }
      \ElseIf{$\mathit{high}^p_1[z]\neq y$ \textbf{and} $\mathit{high}^p_2[z]=\bot$}{
        $\mathit{high}^p_2[z]\leftarrow y$\;
      }
      $z\leftarrow p(z)$\;
    }
  }
}
\end{algorithm}

The operations supported by the DSU are the following\footnote{Actually, we will also need one extra, more technical, assumption, which is discussed in the next footnote.}:

\begin{itemize}
\item{$\mathtt{makeset}(v)$: create a singleton that consists of $v$.}
\item{$\mathtt{find}(v)$: return a vertex that is contained in the set maintained by the DSU that contains $v$.}
\item{$\mathtt{unite}(v,u)$: unite the sets maintained by the DSU that contain $v$ and $u$.}
\end{itemize}

Since we will maintain throughout the property that the sets maintained by the DSU are subtrees of the DFS tree (all of whose vertices, except for the root, have their $\mathit{high}$ points computed), we would prefer the operation $\mathtt{find}(v)$ to return the root of the subtree maintained by the DSU that contains $v$. To do this, we associate with every vertex $v$ a pointer $\mathit{representative}$, which is initialized as $v.\mathit{representative}\leftarrow v$. Then it is sufficient to make sure that the representative of the result of $\mathtt{find}(v)$ points precisely to the root of the set that contains $v$. Thus, we also have to take care to update the representative after a call to $\mathtt{unite}(v,u)$, since this call may change internally the value returned by $\mathtt{find}(v)$\footnote{Notice that, theoretically, a DSU data structure (as we have stated its guarantees) may change the answer to many $\mathtt{find}$ operations, after any call to the data structure -- either $\mathtt{find}$ or $\mathtt{unite}$. For example, two consequtive $\mathtt{find}(v)$ calls may provide different answers, although both of them are correct (i.e., both of them are elements of the set that contains $v$). This, of course, depends on the particular implementation of the DSU data structure that one uses. We will make the simplifying assumption (which is satisfied by the implementation in \cite{DBLP:journals/jcss/GabowT85} that we cite), that the answers provided by $\mathtt{find}$ change only after a call to $\mathtt{unite}$, and that only the answers to $\mathtt{find}$ on the new set that was formed change, and that they are the same on every vertex of that set. In other words, the assumption is the following. After any call $\mathtt{find}(v)$, the answer provided by any $\mathtt{find}(w)$ has not changed. (I.e., $\mathtt{find}(w)$ returns the same vertex either before or after the call to $\mathtt{find}(v)$, as long as no $\mathtt{unite}$ intervenes.) And if we call $\mathtt{unite}(v,u)$, then $\mathtt{find}(w)$ stays the same for any $w$ that does not belong to the set $S$ that now contains $u$ and $v$, whereas it may have changed for $w\in S$, but now all the answers provided by $\mathtt{find}$ on $S$ are the same.}. (This is precisely what is done in Lines~\ref{line:firstunite} and \ref{line:nextfind} of Algorithm~\ref{algorithm:high_linear}.)

\begin{algorithm}[h!]
\caption{\textsf{Compute the $\mathit{high}^p_1$ and $\mathit{high}^p_2$ points in linear time}}
\label{algorithm:high_linear}
\LinesNumbered
\DontPrintSemicolon

\ForEach{vertex $v$}{
  $\mathit{high}^p_1[v]\leftarrow\bot$\;
  $\mathit{high}^p_2[v]\leftarrow\bot$\;  
}

initialize the DSU data structure of \cite{DBLP:journals/jcss/GabowT85} on the DFS tree\;

\ForEach{vertex $v$}{
  $\mathtt{makeset}(v)$\;
  $v.\mathit{representative}\leftarrow v$\;
}

\For{$y\leftarrow n$ to $y=1$}{
  \ForEach{back-edge $(x,y)$}{
    $z\leftarrow \mathtt{find}(x).\mathit{representative}$\;
    \While{$p(z)\neq y$}{
    \label{line:while_high}
      \If{$\mathit{high}^p_1[z]=\bot$}{
        $\mathit{high}^p_1[z]\leftarrow y$\;
        $z\leftarrow\mathtt{find}(p(z)).\mathit{representative}$\;
        \textbf{continue}\;
      }
      \ElseIf{$\mathit{high}^p_1[z]= y$}{
        \textbf{break}\;
      }
      $\mathit{high}^p_2[z]\leftarrow y$\;
      $\mathit{next}\leftarrow\mathtt{find}(p(z)).\mathit{representative}$\;
      $\mathtt{unite}(z,p(z))$\;
      \label{line:firstunite}
      $\mathtt{find}(z).\mathit{representative}\leftarrow\mathit{next}$\;
      \label{line:nextfind}
      $z\leftarrow \mathit{next}$\;
    }
  }
}
\end{algorithm}

In order to provide a linear-time implementation of the method shown in Algoritmh~\ref{algorithm:high_naive}, we distinguish four cases for every vertex $z$ on the path $T[x,y]$, with $p(z)\neq y$, during the processing of a back-edge $(x,y)$: $(i)$ $\mathit{high}^p_1(z)=\bot$, $(ii)$ $\mathit{high}^p_1(z)=y$, $(iii)$ $\mathit{high}^p_1(z)\notin\{\bot,y\}$ and $\mathit{high}^p_2(z)=\bot$, and $(iv)$ $\mathit{high}^p_2(z)\neq\bot$. (Notice that cases $(i)$ to $(iv)$ are mutually exclusive, and exhaust all possibilities. Specifically, in cases $(i)$ and $(ii)$ we have $\mathit{high}^p_2(z)=\bot$, due to the processing of the back-edges in decreasing order w.r.t. their lower endpoint.) Some things are obvious here. In case $(i)$, we just have to set $\mathit{high}^p_1(z)\leftarrow y$, and then move on to the next vertex. In case $(iii)$, we have to set $\mathit{high}^p_2(z)\leftarrow y$, and then move on to the next vertex. In case $(iv)$, we can bypass $z$, and a whole segment of consecutive ancestors of $z$ that have $\mathit{high}^p_2\neq\bot$. The interesting thing is that, in case $(ii)$, it is sufficient to terminate the processing of $(x,y)$. This is because, we can see inductively that, at the time we reach a vertex $z$ with $\mathit{high}^p_1(z)=y$, during the processing of $(x,y)$, we have properly processed all ancestors of $z$, that include $z$, up to the child of $y$ in the direction of $x$ (or equivalently: to the direction of $z$). To be specific, since we have $\mathit{high}^p_1(z)=y$, we must have met another back-edge $(x',y)$ before, which also satisfies $z\in T[x',y]$, and therefore, since we always follow the same method, we have already processed all ancestors of $z$, up to the child of $y$ in the direction of $z$.

This essentially explains the idea behind Algorithm~\ref{algorithm:high_linear}. By keeping in mind that, at any point during the course of the algorithm, the sets maintained by the DSU form subtrees of the DFS tree, all of whose vertices have their $\mathit{high}^p_1$ and $\mathit{high}^p_2$ points computed, except the root, which still has $\mathit{high}^p_2=\bot$, we can easily see the correctness, and the fact that we perform $O(n+m)$ calls in total to the DSU operations. Since the calls to $\mathtt{unite}$ are predetermined by the DFS tree, we can use the implementation by Gabow and Tarjan~\cite{DBLP:journals/jcss/GabowT85}, to achieve linear time in total. The main idea to see the bound in the running time is to observe that, whenever we enter the \textbf{while} loop in Line~\ref{line:while_high} of Algorithm~\ref{algorithm:high_linear}, we have that either $z$ will have one of its $\mathit{high}^p_1$ and $\mathit{high}^p_2$ points converted from $\bot$ to $y$, or we have $\mathit{high}^p_1(z)=y$, in which case we will break the loop. Our result is summarized in the following:

\begin{proposition}
Algorithm~\ref{algorithm:high_linear} correctly computes the $\mathit{high}^p_1$ and $\mathit{high}^p_2$ points of all vertices in linear time.
\end{proposition}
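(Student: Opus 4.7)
The plan is to split the proof into a correctness part and a running-time part, with the correctness part hinging on the equivalence of Algorithm~\ref{algorithm:high_linear} with the naive Algorithm~\ref{algorithm:high_naive}. Correctness of the naive version is essentially definitional: since back-edges are processed in decreasing order of their lower endpoint $y$, the first non-$\bot$ write to $\mathit{high}^p_1(z)$ records the largest $y$ such that some back-edge $(x,y)\in B_p(z)$ exists (i.e., $x\in T(z)$ and $y<p(z)$), and the first non-$\bot$ write to $\mathit{high}^p_2(z)$ records the second such distinct $y$. I would verify this by an easy induction on the order in which back-edges are processed, observing that the ascent from $x$ stops precisely when $p(z)=y$, so $z$ is iterated over exactly when $(x,y)\in B_p(z)$.

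To argue that Algorithm~\ref{algorithm:high_linear} produces the same output, I would maintain the following invariants throughout its execution: (a) every set maintained by the DSU is a subtree of $T$; (b) every non-root vertex $z$ of a DSU set already has both $\mathit{high}^p_1(z)$ and $\mathit{high}^p_2(z)$ set to their correct final values; (c) the root $r_S$ of each DSU set $S$ has $\mathit{high}^p_2(r_S)=\bot$, and the field $\mathtt{find}(v).\mathit{representative}$ for every $v\in S$ equals $r_S$. These invariants are preserved at every iteration of the \textbf{while} loop: when we fall into the $\mathit{high}^p_1[z]=\bot$ branch we only set $\mathit{high}^p_1(z)$ and advance to the next representative above $z$ (without uniting, since $\mathit{high}^p_2(z)$ is still $\bot$); when we fall into the last branch we set $\mathit{high}^p_2(z)$ and then call $\mathtt{unite}(z,p(z))$ and update the representative, which is precisely what is needed so that subsequent $\mathtt{find}$ calls skip the new interior vertex $z$.

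The subtle case is the early $\mathbf{break}$ when $\mathit{high}^p_1[z]=y$. Here I would argue that a previous back-edge $(x',y)$ sharing the lower endpoint $y$ must have already been processed (else $\mathit{high}^p_1(z)\neq y$). That previous processing ascended from $x'$ through $z$ and up to the child of $y$, so every vertex on $T[z,\text{child of }y]$ already has $\mathit{high}^p_1=y$. Thus for the current back-edge $(x,y)$, none of these vertices can receive a new $\mathit{high}^p_1$ update, nor a $\mathit{high}^p_2$ update (which requires $\mathit{high}^p_1\neq y$), so breaking out of the loop is harmless. Combined with the DSU invariants, this shows that Algorithm~\ref{algorithm:high_linear} performs exactly the same assignments to $\mathit{high}^p_1$ and $\mathit{high}^p_2$ as the naive algorithm.

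For the running time, I would charge each iteration of the \textbf{while} loop to one of three disjoint events: (i) a conversion of $\mathit{high}^p_1(z)$ from $\bot$ to a real value, which happens at most once per vertex; (ii) a conversion of $\mathit{high}^p_2(z)$ from $\bot$ accompanied by a call to $\mathtt{unite}(z,p(z))$, which happens at most once per vertex; or (iii) a $\mathbf{break}$, which happens at most once per back-edge. This gives $O(n+m)$ iterations, with $O(n+m)$ DSU operations in total (each iteration issues $O(1)$ of them, plus one initial $\mathtt{find}$ per back-edge). The crucial point, and the main obstacle for achieving linear time, is that all $\mathtt{unite}$ operations have the fixed form $\mathtt{unite}(z,p(z))$; therefore the union tree is predetermined by the DFS tree, and we may use the data structure of Gabow and Tarjan~\cite{DBLP:journals/jcss/GabowT85}, which supports any sequence of $O(n+m)$ operations in $O(n+m)$ total time on the RAM model. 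Overall time is thus $O(n+m)$, as claimed.
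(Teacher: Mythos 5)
Your proposal follows essentially the same route as the paper: correctness of the naive greedy version by the decreasing order of lower endpoints, equivalence of the DSU version via the same invariants (DSU sets are subtrees whose non-root vertices are fully processed and whose root has $\mathit{high}^p_2=\bot$), the same justification for the early \textbf{break}, and the same charging scheme plus the Gabow--Tarjan structure (unions predetermined by the DFS tree) for linear time. One intermediate assertion in your treatment of the \textbf{break} case is stated too strongly: it is not true that every vertex on $T[z,\text{child of }y]$ has $\mathit{high}^p_1=y$ after the earlier processing of $(x',y)$ --- such a vertex $u$ may instead have $\mathit{high}^p_1(u)=y''>y$ and have received $\mathit{high}^p_2(u)\leftarrow y$ (or already have $\mathit{high}^p_2(u)\neq\bot$). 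The correct statement, which still yields your conclusion, is that every vertex on that segment has already received whatever update a back-edge with lower endpoint $y$ can give it, so reprocessing them with $(x,y)$ is a no-op; this is how the paper phrases it ("properly processed up to the child of $y$"). With that one claim repaired, the argument is sound.
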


\subsection{The leftmost and rightmost skipping points}
\label{section:skippingPoints}

\begin{problem}
\label{problem:leftmost_skipping}
\normalfont
Let $T$ be a DFS tree of a graph $G$. Suppose that for every vertex $v$ of $G$ we associate a vertex $z_v$ or $\bot$. Then the goal is to find, for every vertex $v$, the minimum (w.r.t. $T$) descendant $x$ of $v$ for which there is a back-edge of the form $(x,y)$ with $y<p(v)$ and $y\neq\ z_v$, or report that no such vertex $x$ exists for $v$. We call this $x$ \emph{the leftmost point of $v$ that skips $z_v$}. (See Figure~\ref{figure:skipping}.) Similarly, by considering the maximum such $x$, we define \emph{the rightmost point of $v$ that skips $z_v$}. 
\end{problem}
\begin{remark}
\normalfont
Notice that, if for every vertex $v$ we let $z_v=\bot$, then Problem~\ref{problem:leftmost_skipping} coincides with that of computing the leftmost and the rightmost points, $L_p(v)$ and $R_p(v)$, for every vertex $v$.
\end{remark}

\begin{figure}[h!]\centering
\includegraphics[trim={1.3cm 22cm 0 0cm}, clip=true, width=1.2\linewidth]{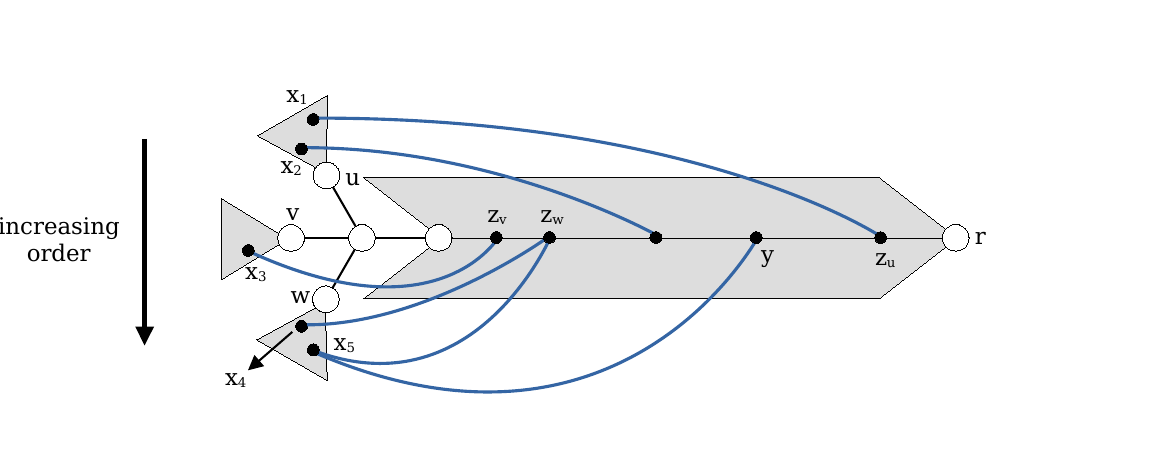}
\caption{\small{An illustration of an instance of Problem~\ref{problem:leftmost_skipping}, for the leftmost skipping points, which we denote here as $\widetilde{L}$. This figure depicts all back-edges from $B_p(u)$, $B_p(v)$ and $B_p(w)$. Then, we see that (for this particular instance of the skipping points problem), we have $\widetilde{L}(u)=x_2$, $\widetilde{L}(v)=\bot$, and $\widetilde{L}(w)=x_5$. Notice that, in particular, although there is a back-edge $(x_5,z_w)$, there is also a back-edge $(x_5,y)\in B_p(w)$ with $y\neq z_w$, and this is why $\widetilde{L}(w)=x_5$.}}\label{figure:skipping}
\end{figure}

The idea for computing the leftmost and the rightmost skipping points of all vertices is the same as that for computing the $\mathit{high}^p_1$ and $\mathit{high}^p_2$ points. The difference is that here we process the back-edges in increasing or decreasing order (depending on whether we want to compute the leftmost or the rightmost skipping points, respectively) w.r.t. their \emph{higher} endpoints. From now on, we will just refer to the ``skipping points" of the vertices, and whether we mean ``the leftmost skipping point" or ``the rightmost skipping point" depends on whether we process the back-edges in increasing or decreasing order, respectively, w.r.t. their higher endpoints.

Now, for every back-edge $(x,y)$ that we process, we assign greedily $x$ as the skipping point for as many vertices $v$ on $T[x,y]$ as possible, until we meet the vertex $v$ with $p(v)=y$, in which case the processing of $(x,y)$ terminates. In order for this greedy algorithm to be correct, we have to determine whether a vertex $v\in T[x,y]$ has indeed $x$ as its skipping point. Due to the processing of the back-edges in monotonic order w.r.t. their higher endpoints, notice that $x$ is indeed the skipping point of $v$ if and only if: $(1)$ the skipping point of $v$ has not already been computed, and $(2)$ $y\neq z_v$.

Thus, in order to get an efficient algorithm, the idea is to try to avoid, as much as possible, every $v$ on $T[x,y]$ with the property that either its skipping point has been computed, or $z_v=y$. In order to avoid completely meeting the vertices $v$ on $T[x,y]$ whose skipping point has already been computed, we use a DSU data structure as in Section~\ref{section:computeHigh} for the computation of the $\mathit{high}^p_1$ and $\mathit{high}^p_2$ points. The idea in using this data structure is to maintain subtrees of $T$ with the property that all vertices in them have their skipping point computed, except for the root, whose skipping point has not yet been computed. Thus, whenever we start to ascend the segment $T[x,y]$, we first find the root of the subtree that contains $x$, and each time we just go directly to such roots. Now, the first time we meet a vertex $v$ (during the processing of any back-edge), we definitely have that the skipping point of $v$ has not been computed, but it may be that $z_v=y$, where $y$ is the lower endpoint of the first back-edge during the processing of which we met $v$. (If $z_v\neq y$, then we have happily found the skipping point of $v$, i.e., this is the higher endpoint of the back-edge under process, and so we unite $v$ with the DSU subtree of its parent, so that we won't have to meet $v$ again.) If $z_v=y$, then we give a status \emph{marked} for $v$, and we move to the next vertex (i.e., to the root of the DSU subtree of $p(v)$). The usefulness of marking is the following. The next time we meet $v$ again, during the processing of a back-edge $(x',y')$, there are two possibilities: either $z_v=y'$, or not. In the second case, we have that the skipping point of $v$ is $x'$, and we are happily done with $v$. Otherwise, it is sufficient to terminate the processing of $(x',y')$. This is because, during the processing of the previous back-edge $(x,y)$, we have done all we could with the vertices on the segment $T[v,y]$: i.e., either we assigned them their correct skipping points, or we have marked them. But every marked vertex $u$ on $T[v,y]$ with $z_u=y$ can be avoided during the processing of $(x',y')$, because $y'=z_v=y$. Thus, it is correct to terminate the processing of a back-edge when we meet a marked vertex $v$ whose $z_v$ is the lower endpoint of that back-edge. The implementation of this procedure is shown in Algorithm~\ref{algorithm:skipping}. It should be clear that our discussion establishes the following:

\begin{proposition}
\label{proposition:skipping}
Suppose that for every vertex $v$ we associate a vertex $z_v$ (which may be $\bot$). Then, Algorithm~\ref{algorithm:skipping} correctly computes the leftmost skipping point of $v$ that skips $z_v$, for every vertex $v$, in linear time in total. If we want to compute the rightmost such points, then we can just replace Line~\ref{line:skippingFor} with ``\textbf{for} $x\leftarrow n$ to $x=2$". 
\end{proposition}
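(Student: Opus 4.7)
The plan is to establish correctness and the $O(n+m)$ running time separately, leveraging the analogy with the computation of the $\mathit{high}^p_1$ and $\mathit{high}^p_2$ points in Section~\ref{section:computeHigh}. For correctness I would argue by induction on the order in which back-edges are processed (increasing order of the higher endpoint $x$, for the leftmost case). The greedy invariant is that, after processing all back-edges with higher endpoint at most $x$, for every vertex $v$ whose true leftmost skipping point is some $x^\star \leq x$, the algorithm has already assigned $x^\star$ to $v$. This is immediate for the case in which $v$'s first visit occurs via a back-edge $(x^\star, y^\star)$ with $y^\star \neq z_v$, because the vertices on $T[x^\star, y^\star]\setminus\{y^\star\}$ whose skipping points are still $\bot$ at that moment get $x^\star$ assigned in one upward sweep, and the subsequent \texttt{unite} operations bury them inside a DSU subtree whose root is a strict ancestor.

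The subtle part of correctness is the early termination at a marked vertex. I would prove the following invariant about marks: if $v$ was marked during processing of $(x,y)$, then every vertex $u$ on $T[v,y]\setminus\{y\}$ that was still unprocessed at that moment either got $u.\mathit{skip} \leftarrow x$ (if $z_u \neq y$) or was itself marked with $z_u = y$ (if $z_u = y$), and all these vertices were then left in DSU subtrees whose roots are the marked ones. Consequently, when we later process a back-edge $(x',y')$ with $y' = z_v$, so $y' = y$, we know that every vertex on $T[v,y]\setminus\{y\}$ that is still not fixed must currently be marked with $z = y = y'$; no such vertex can receive $x'$ as its skipping point, since its $z$ matches the current lower endpoint. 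This justifies terminating the processing of $(x',y')$ at $v$: no legitimate assignment is missed. Vertices above $v$ on $T[v,y]$ whose true skipping point comes from a future back-edge with lower endpoint different from $y$ will be visited again through the DSU when that back-edge is processed.

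For the running time, I would charge the work of the ascending \textbf{while} loop to vertex status changes. Each iteration does one of the following: $(a)$ it converts $z.\mathit{skip}$ from $\bot$ to a concrete vertex (after which $z$ is unioned with its parent's DSU set and becomes invisible on future traversals); $(b)$ it marks a previously unmarked vertex (which happens at most once per vertex); or $(c)$ it encounters a marked vertex whose $z$ equals the current lower endpoint, triggering immediate termination. Case $(c)$ contributes at most one iteration per back-edge, and cases $(a)$ and $(b)$ contribute $O(n)$ iterations in total over the whole algorithm. Since the total number of DSU operations is $O(n+m)$ and all \texttt{unite} operations follow the predetermined union tree $T$, the Gabow–Tarjan DSU~\cite{DBLP:journals/jcss/GabowT85} handles them in $O(n+m)$ total time. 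The symmetric statement for the rightmost skipping points follows by processing back-edges in decreasing order of their higher endpoint; the same inductive invariant and charging argument apply after swapping ``minimum'' with ``maximum'', which justifies the replacement of the \textbf{for} loop in Line~\ref{line:skippingFor} by a descending scan.

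The main obstacle I expect is the bookkeeping for the marking-based early termination, namely making rigorous the claim that, once $v$ is marked during the processing of $(x,y)$, every vertex on $T[v,y]\setminus\{y\}$ that is still ``reachable'' from below through the DSU is necessarily marked with $z = y$. This invariant has to be carefully chained through every kind of operation the algorithm performs (assignments, marks, unions, and representative updates) so that early termination provably never loses a legitimate skipping-point assignment; the rest of the proof is essentially a greedy-correctness and amortized-counting exercise analogous to the proof for the $\mathit{high}^p_1$ and $\mathit{high}^p_2$ algorithm.
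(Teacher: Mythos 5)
Your proposal is correct and follows essentially the same route as the paper: the paper's justification is exactly the greedy invariant you state (monotone processing of back-edges by higher endpoint, DSU subtrees whose non-root vertices are already assigned, and termination at a marked vertex $v$ with $z_v=y'$ being safe because every still-unassigned vertex on $T[v,y']$ is already marked with the same $z$-value), together with the same charging of while-loop iterations to assignments, markings, and per-back-edge terminations. The "subtle part" you flag — chaining the marking invariant through an early termination that itself occurred at a previously marked vertex — is precisely the point the paper handles (somewhat more informally) by noting that the earlier back-edge with the same lower endpoint had already exhausted the segment above.
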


\begin{algorithm}[h!]
\caption{\textsf{Compute the leftmost skipping points in linear time}}
\label{algorithm:skipping}
\LinesNumbered
\DontPrintSemicolon

\ForEach{vertex $v$}{
  $\mathit{leftmost}[v]\leftarrow\bot$\;
}

initialize the DSU data structure of \cite{DBLP:journals/jcss/GabowT85} on the DFS tree\;

\ForEach{vertex $v$}{
  $\mathtt{makeset}(v)$\;
  $v.\mathit{representative}\leftarrow v$\;
  $v.\mathit{marked}\leftarrow$ \textbf{false}\;
}

\For{$x\leftarrow 2$ to $x=n$}{
\label{line:skippingFor}
  \ForEach{back-edge $(x,y)$}{
    $v\leftarrow \mathtt{find}(x).\mathit{representative}$\;
    \While{$p(v)\neq y$}{
      \If{$v.\mathit{marked}=$ \textbf{false}}{
        \If{$z_v\neq y$}{
          $\mathit{leftmost}[v]\leftarrow x$\;
          $\mathit{next}\leftarrow\mathtt{find}(p(v)).\mathit{representative}$\;
          $\mathtt{unite}(v,p(v))$\;
          $\mathtt{find}(v).\mathit{representative}\leftarrow\mathit{next}$\;
          $z\leftarrow\mathit{next}$\;
        }
        \Else{
          $v.\mathit{marked}\leftarrow$ \textbf{true}\;
          $v\leftarrow \mathtt{find}(p(v)).\mathit{representative}$\;
        }
      }
      \Else{
      \If{$z_v\neq y$}{
          $\mathit{leftmost}[v]\leftarrow x$\;
          $\mathit{next}\leftarrow\mathtt{find}(p(v)).\mathit{representative}$\;
          $\mathtt{unite}(v,p(v))$\;
          $\mathtt{find}(v).\mathit{representative}\leftarrow\mathit{next}$\;
          $z\leftarrow\mathit{next}$\;
        }
        \Else{
          \textbf{break}\;
        }
      }
    }
  }
}
\end{algorithm}

\subsection{Computing $\mathit{numLow}$ and $\mathit{numHigh}$}
\label{section:numlow}
Here we provide linear-time algorithms to compute $\mathit{numLow}(v)$ and $\mathit{numHigh}(v)$, for all vertices $v$. Recall that $\mathit{numLow}(v)$ is the number of back-edges $(x,y)\in B_p(v)$ with $y=\mathit{low}(v)$, and $\mathit{numHigh}(v)$ is the number of back-edges $(x,y)\in B_p(v)$ with $y=\mathit{high}_p(v)$. 

We can easily compute all $\mathit{numLow}(v)$ by processing the vertices in a bottom-up order, since we can rely on the recursive relation $\mathit{numLow}(v)=\mathit{numLow}(c_1)+\dots+\mathit{numLow}(c_k)+\#[\mbox{outgoing back-edges from }v\mbox{ of the form }(v,\mathit{low}(v))]$, where $c_1,\dots,c_k$ are the children of $v$ whose $\mathit{low}$ point is the same as that of $v$. (Of course, it may be that no such children exist.) The number $\#[\mbox{outgoing back-edges from }v\mbox{ of the form }(v,\mathit{low}(v))]$ is very easy to compute by processing the adjacency list of $v$, and it can be either $0$ or $1$ (since, for our purposes, it is sufficient to assume that we work on simple graphs).

The computation of $\mathit{numHigh}(v)$ is not so straightforward, because we cannot have such a convenient recursive relation as in the case of $\mathit{numLow}$.\footnote{The problem is that there might be a child $c$ of $v$ with $\mathit{high}_p(c)=p(v)$, and such that $T(c)$ provides back-edges of the form $(x,\mathit{high}_p(v))$. This would imply that $\mathit{high}^p_2(c)=\mathit{high}_p(v)$, and so we would need a counter for the back-edges from $B_p(c)$ of the form $(x,\mathit{high}^p_2(c))$, but in order to maintain that we would need a counter for the back-edges that reach the $\mathit{high}^p_3$ point, and so on.} However, we can still use a recursive procedure, just on a different tree. Specifically, for every vertex $z$, we consider the set $H(z)$ of all vertices $v$ with $\mathit{high}_p(v)=z$. We equip $H(z)$ with a forest structure as follows. If for a vertex $v\in H(z)$ there is no ancestor of $v$ (on $T$) that is also in $H(z)$, then we consider $v$ to be the root of a tree (of $H(z)$). Otherwise, we let the parent of $v$ (on $H(z)$) be the greatest proper ancestor of $v$ (on $T$) that is also in $H(z)$. 

Now we can use the forests $H(z)$, for all vertices $z$, in order to compute the values $\mathit{numHigh}(v)$, for all vertices $v$, as follows. (See also Algorithm~\ref{algorithm:numHigh}.) First, we initialize all $\mathit{numHigh}(v)\leftarrow 0$. Then, for every vertex $z$, we process the incoming back-edges to $z$ in decreasing order w.r.t. their higher endpoint. For every back-edge $(x,z)$ that we process, we find the greatest vertex $v\in H(z)$ such that $x$ is a descendant of $v$. If such a $v$ exists, then we increase the counter $\mathit{numHigh}(v)\leftarrow\mathit{numHigh}(v)+1$. Notice that, in order to find $v$, it is sufficient to process the vertices from $H(z)$ in decreasing order, starting from the last one that we processed. Let the last vertex from $H(z)$ that we processed be $v'$. Then, if $v'$ is an ancestor of $x$, we are done (we have $v=v')$. Otherwise, if $v'>x$, then we go to the next vertex from $H(z)$. We keep doing that, until we reach the greatest vertex $v'$ from $H(z)$ that is lower than, or equal to, $x$. Then, if $v'$ is an ancestor of $x$ we are done. Otherwise, we just move to the next incoming back-edge to $z$. Finally, after having processed a vertex $z$, we process every tree from $H(z)$ in a bottom-up fashion, and for every vertex $v$ that we meet we set $\mathit{numHigh}(v)\leftarrow\mathit{numHigh}(v)+\mathit{numHigh}(c_1)+\dots+\mathit{numHigh}(c_k)$, where $c_1,\dots,c_k$ are the children of $v$ on $H(z)$. The correctness and linear-time complexity of this procedure is established in the following:

\begin{algorithm}[h]
\caption{\textsf{Compute $\mathit{numHigh}(v)$, for all vertices $v$}}
\label{algorithm:numHigh}
\LinesNumbered
\DontPrintSemicolon

\ForEach{vertex $z$}{
  build the forest $H(z)$, that consists of all vertices $v$ with $\mathit{high}_p(v)=z$\;
  have the vertices in $H(z)$ sorted in decreasing order\;
  let $\mathit{currentVertex}(z)$ be the greatest vertex in $H(z)$\;
}

\ForEach{vertex $v$}{
  $\mathtt{numHigh}[v]\leftarrow 0$\;
}

\ForEach{vertex $z$}{
\label{line:processZforH(z)}
  \ForEach{back-edge $(x,z)$, with $x$ in decreasing order}{
    $v\leftarrow\mathit{currentVertex}(z)$\;
    \label{line1H(z)}
    \While{$v\neq\bot$ \textbf{and} $v>x$}{
      $v\leftarrow $ next vertex of $H(z)$\;
    }
    $\mathit{currentVertex}(z)\leftarrow v$\;
    \label{line2H(z)}
    \lIf{$v=\bot$}{\textbf{break}}
    \label{line:H(z)break}
    \lIf{$v$ is not an ancestor of $x$}{\textbf{continue}}
    $\mathtt{numHigh}[v]\leftarrow\mathtt{numHigh}[v]+1$\;
    \label{lineH(z)addunit}
  }
  \ForEach{vertex $c\in H(z)$, in decreasing order}{
  \label{line:forloopH(z)}
    \If{$c$ is not a root in $H(z)$}{
      let $v$ be the parent of $c$ on $H(z)$\;
      $\mathtt{numHigh}[v]\leftarrow\mathtt{numHigh}[v]+\mathtt{numHigh}[c]$\;
    }
  }
}

\end{algorithm}

\begin{proposition}
Algorithm~\ref{algorithm:numHigh} correctly computes $\mathit{numHigh}(v)$, for every vertex $v$. Furthermore, it has a linear-time implementation.
\end{proposition}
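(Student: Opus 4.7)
The plan is to split the proof into correctness and running-time, in both parts leaning on the forest structure of $H(z)$ which reduces the computation of $\mathit{numHigh}$ to a local aggregation on each $H(z)$. The key structural observation that unlocks correctness is that, under Assumption~\ref{assumption2} together with the edge-splitting construction of Section~\ref{section:oraclefor3} (which guarantees that every leaf of $T$ that is the higher endpoint of a back-edge is an auxiliary vertex incident to exactly one back-edge), the higher endpoint $x$ of every back-edge $(x,z)$ is itself a member of $H(z)$: $x$ is a leaf with $B_p(x) = \{(x,z)\}$, so $\mathit{high}_p(x) = z$. Consequently, when the inner loop processes $(x,z)$, the greatest $v \in H(z)$ with $v \leq x$ is $v = x$, which is trivially an ancestor of $x$; so the ``continue'' branch never fires, and $\mathtt{numHigh}[x]$ gets incremented exactly once per back-edge with higher endpoint $x$.

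Next I would analyze the bottom-up pass. By definition, $\mathit{numHigh}(v)$ counts the back-edges $(x,z) \in B_p(v)$ with $z = \mathit{high}_p(v)$, equivalently those for which $v \in H(z)$ is an ancestor of $x$ in $T$. Since each such $x$ also lies in $H(z)$, and since the forest $H(z)$ is built so that parent-pointers follow nearest $T$-ancestors within $H(z)$, the set of $T$-ancestors of $x$ lying in $H(z)$ coincides with the set of forest-ancestors of $x$ in $H(z)$. Therefore summing $\mathtt{numHigh}[u]$ over all forest-descendants $u$ of $v$ in $H(z)$ --- exactly what the final bottom-up for-loop does --- yields $\mathit{numHigh}(v)$.

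For the running time, I would break the cost into three pieces. First, building all forests $H(z)$ takes $O(n)$ via a DFS of $T$ that maintains, for each candidate $z$, a stack of currently active $H(z)$-vertices (pushed on entry to a vertex $v$ with $\mathit{high}_p(v) = z$, popped on exit). Second, the back-edge processing loop is $O(n+m)$, because $\mathit{currentVertex}(z)$ traverses the sorted list of $H(z)$ monotonically (in decreasing order) at most once, giving $O(|H(z)|)$ advancement across all back-edges to $z$; summing over $z$ and adding $O(1)$ per back-edge for the ancestor test and the increment yields $O(n+m)$. Third, the bottom-up aggregation is linear in $\sum_z |H(z)| = O(n)$.

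The subtle point I expect to be the main obstacle is verifying that the ``$v$ is not an ancestor of $x$'' branch is genuinely vacuous in this setting, since this is the only place where the algorithm could silently lose contributions --- a smaller $H(z)$-element that \emph{is} an ancestor of $x$ would then never receive its increment. Making explicit the combined role of Assumption~\ref{assumption2} and the one-back-edge-per-auxiliary-leaf invariant of the splitting construction is what I expect to be the non-routine part of the write-up; the rest of the argument is a fairly mechanical verification of the forest aggregation and of the amortized bound on the monotone $\mathit{currentVertex}(z)$ pointer.
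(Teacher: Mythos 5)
Your proof is correct and its overall skeleton (the forest $H(z)$, one direct increment per back-edge, bottom-up aggregation along $H(z)$, and the amortized analysis of the monotone $\mathit{currentVertex}(z)$ pointer) matches the paper's. But you justify the one genuinely delicate step differently, and your route is in fact the sounder one. The paper argues that Lines~\ref{line1H(z)}--\ref{line2H(z)} find \emph{the greatest $T$-ancestor of $x$ in $H(z)$}, and silently treats ``the greatest element of $H(z)$ that is $\leq x$ is not an ancestor of $x$'' as implying that \emph{no} element of $H(z)$ is an ancestor of $x$. That implication is false for an arbitrary DFS tree: take a path $1$--$2$--$3$ with $3$ having children $4$ (whose child is $5$) and $6$ (whose child is $7$), and back-edges $(5,1)$, $(7,1)$, $(7,2)$; then $H(1)=\{3,4,5\}$ while $7\in H(2)$, and when the back-edge $(7,1)$ is processed the pointer sits at $5$, which is not an ancestor of $7$, so the algorithm executes \textbf{continue} and the contribution of $(7,1)$ to $\mathit{numHigh}(3)=2$ is lost. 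So the ``subtle point'' you flagged is a genuine gap in the paper's own argument, and your resolution --- that after the edge-splitting of Section~\ref{section:oraclefor3} every higher endpoint $x$ of a back-edge $(x,z)$ is an auxiliary leaf with $B_p(x)=\{(x,z)\}$, hence $x\in H(z)$, hence the pointer lands exactly on $x$ and the \textbf{continue} branch is vacuous --- is exactly what makes the algorithm correct in the setting where $\mathit{numHigh}$ is actually used (Section~\ref{section:MpCinC}). What the paper's (intended) argument would buy, if it worked, is a statement valid for arbitrary DFS trees; what yours buys is an actually complete proof, at the price of tying the proposition to the split graph. Be careful to state explicitly that you need the one-back-edge-per-auxiliary-leaf property of the splitting construction and not merely Assumption~\ref{assumption2} as literally stated (a leaf carrying two back-edges with distinct lower endpoints would reintroduce the same failure). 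Your alternative $O(n)$ construction of the forests $H(z)$ via a DFS with per-$z$ ancestor stacks is a valid substitute for the paper's Algorithm~\ref{algorithm:H(z)forest}; the remaining details of your aggregation and running-time analysis agree with the paper's.
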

\begin{proof}
First, we shall explain how to build the forest $H(z)$, for every vertex $z$. To do that efficiently, we assume that $H(z)$ is sorted in decreasing order. In order to ensure that, we can sort all sets $H(z)$ simultaneously with bucket-sort. Thus, we process the vertices $v$ in decreasing order, and for every vertex $v$, we simply put it at the end of the list $H(\mathit{high}_p(v))$. (Notice that we use ``$H(z)$'' to denote both the list of vertices with $\mathit{high}_p=z$ sorted in decreasing order, and the forest of those vertices that we want to build; however, it will be clear which meaning is meant for $H(z)$ in everything that follows.)

Now the procedure for computing the $H(z)$ forests is shown in Algorithm~\ref{algorithm:H(z)forest}. This works by processing each vertex $z$ separately. For every vertex $z$, we process the vertices from $H(z)$ in decreasing order, and we maintain some vertices in a stack $S$ that satisfies the following two invariants, whenever we start processing a vertex $v$ from $H(z)$:

\begin{enumerate}[label={(\arabic*)}]
\item{Every vertex from $H(z)$ that has been taken out from $S$ has its parent in $H(z)$ correctly computed.}
\item{For every vertex $u$ still in $S$, no vertex from $H(z)$ that we have met so far is its parent in $H(z)$.}
\end{enumerate}

So now let us see what happens when it is the time to process a vertex $v$ from $H(z)$. Let $u$ be the top element in $S$ (which may be $\bot$). Then there are two possibilities: either $v$ is an ancestor of $u$ (on $T$), or it is not. In the first case, due to invariant $(2)$, and due to the fact that we process the vertices from $H(z)$ in decreasing order, we have that $v$ is the parent of $u$ in $H(z)$, because $v$ is indeed the greatest vertex in $H(z)$ that is an ancestor (on $T$) of $u$. Thus, it is proper to pop $u$ out of $S$, and invariant $(1)$ is maintained. In the other case, we have that $v$ is not an ancestor of $u$. Since we process the vertices from $H(z)$ in decreasing order, we have $v<u$. Thus, since $v$ is not an ancestor of $u$, we have that $u\geq v+\mathit{ND}(v)$. Furthermore, since we process the vertices from $H(z)$ in decreasing order, we have that every other element $u'$ in $S$ is greater than $u$, and therefore $u'\geq v+\mathit{ND}(v)$. Thus, no vertex from $S$ is a descendant of $v$ (on $T$), and therefore it is correct to stop the search in $S$ for children (on $H(z)$) of $v$. Thus, by finally putting $v$ in $S$, we have that invariants $(1)$ and $(2)$ are satisfied. In the end, after we have processed all vertices from $H(z)$, we have that, due to invariant $(2)$, all vertices still in $S$ must be roots in the forest $H(z)$. This shows that Algorithm~\ref{algorithm:H(z)forest} is correct, and it is easy to see that it runs in $O(n)$ time.

Now we will show that Algorithm~\ref{algorithm:numHigh} correctly computes $\mathit{numHigh}(v)$, for every vertex $v$. Notice that this algorithm processes each vertex $z$ separately (in Line~\ref{line:processZforH(z)}), and we claim that, by the time it finishes processing $z$, every vertex from $H(z)$ has its $\mathit{numHigh}$ correctly computed.

In order to understand Algorithm~\ref{algorithm:numHigh} better, notice that every back-edge of the form $(x,z)$ contributes ``$1$" to the $\mathit{numHigh}$ value of every vertex $v$ that is an ancestor of $x$ and has $\mathit{high}_p(v)=z$. Thus, all the vertices for whose $\mathit{numHigh}$ value the back-edge $(x,z)$ makes a contribution are related as ancestor and descendant (on $T$), and so they have the same relation on $H(z)$. Thus, it is sufficient to find the greatest $v$ on $H(z)$ that is an ancestor of $x$ (if it exists), and then add a unit to the $\mathit{numHigh}$ value of all the ancestors of $v$ on $H(z)$. Equivalently, it is sufficient to just add $1$ to the $\mathit{numHigh}$ value (under computation) of $v$, and then propagate this unit up to all the ancestors of $v$ on $H(z)$. This is precisely the purpose of the \textbf{For} loop in Line~\ref{line:forloopH(z)}. The purpose of Lines~\ref{line1H(z)} to \ref{line2H(z)} is to find the greatest ancestor (on $T$) of $x$ in $H(z)$. This works because we process simultaneously the incoming back-edges $(x,z)$ to $z$ in decreasing order w.r.t. $x$, and the vertices $v$ from $H(z)$ in decreasing order. Thus, if the current vertex $v$ of $H(z)$ is greater than $x$, we keep moving to the next element of $H(z)$, until we reach the greatest element $v$ of $H(z)$ with $v \leq x$. If $v=\bot$, then all remaining back-edges of the form $(x,z)$ have $x$ lower than every element of $H(z)$, and therefore no element of $H(z)$ is an ancestor of $x$. Thus, it is correct to terminate the processing of those back-edges in Line~\ref{line:H(z)break}. Otherwise, if $v$ is an ancestor of $x$, then it is the greatest ancestor (on $T$) of $x$ in $H(z)$, and therefore in Line~\ref{lineH(z)addunit} we apply the idea of adding a unit to $\mathit{numHigh}(v)$ that we described earlier. This explains the correctness of our procedure, and it is easy to see that it runs in linear time.

\end{proof}  

\begin{algorithm}[h]
\caption{\textsf{Build the forest $H(z)$, for all vertices $z$}}
\label{algorithm:H(z)forest}
\LinesNumbered
\DontPrintSemicolon

\ForEach{vertex $z$}{
  let $S$ be an empty stack for storing vertices\;
  \ForEach{vertex $v$ in $H(z)$, in decreasing order}{
    \While{$S.\mathit{top}()$ is a descendant of $v$ (on $T$)}{
      $u\leftarrow S.\mathit{pop}()$\;
      let the parent of $u$ in $H(z)$ be $v$\;
    }
    $S.\mathit{push}(v)$\;
  }  
  \ForEach{vertex $v$ in $S$}{
    make $v$ a root in $H(z)$\;
  }
}

\end{algorithm}

\subsection{The leftmost and rightmost points that reach a segment}
\label{section:leftmostReachSegment}
Consider the following problem. Given a triple of vertices $(z,u,v)$ such that $z$ is a descendant of $u$, and $u$ is a descendant of $v$, we ask for the minimum and the maximum vertices, $x$ and $y$, respectively, that are descendants of $z$, with the property that there are back-edges of the form $(x,w)$ and $(y,w')$, with $w,w'\in T[u,v]$. We denote the vertices $x$ and $y$ as $L(z,u,v)$ and $R(z,u,v)$, respectively, and we call them \emph{the leftmost and the rightmost points that start from $T(z)$ and reach the segment $T[u,v]$}. In this section we will show how to compute a batch of queries for such points in linear time. However, in order to do that, we demand that the collection of queries satisfies a kind of nestedness that is captured by the following definition.

\begin{definition}[Nested Queries]
\label{definition:nestedqueries}
\normalfont
Let $N$ be a positive integer, and let $(z_i,u_i,v_i)$, with $i\in\{1,\dots,N\}$, be a collection of triples of vertices with $z_i$ a descendant of $u_i$, and $u_i$ a descendant of $v_i$, for every $i\in\{1,\dots,N\}$. Then we say that the collection $\{(z_i,u_i,v_i)\mid i\in\{1,\dots,N\}\}$ satisfies the nested property if: $(1)$ for every $i,j\in\{1,\dots,N\}$ with $i\neq j$, either $T[u_i,v_i]\cap T[u_j,v_j]=\emptyset$, or one of $T[u_i,v_i]$ and $T[u_j,v_j]$ strictly contains the other, and $(2)$ for every $i,j\in\{1,\dots,N\}$ with $T[u_i,v_i]\subset T[u_j,v_j]$, either
\begin{enumerate}[label={(\roman*)}]
\item{$z_i$ is an ancestor of $z_j$, or}
\item{for every $k\in\{1,\dots,N\}$ such that $T[u_j,v_j]\subseteq T[u_k,v_k]$, there is no back-edge $(x,y)$ with $x\in T(z_k)$ and $y\in T[u_i,v_i]$.}
\end{enumerate}
\end{definition}

In other words, property $(1)$ in Definition~\ref{definition:nestedqueries} says that the collection $\{T[u_i,v_i]\mid i\in\{1,\dots,N\}\}$ of sets of vertices constitutes a laminar family. Property $(2)$ is a little bit more complicated, but in essence it says that, if we have established that $T[u_i,v_i]$, for some $i\in\{1,\dots,N\}$, is contained in $T[u_j,v_j]$, for some $j\in\{1,\dots,N\}$, and $z_i$ is not an ancestor of $z_j$, then, for every $k\in\{1,\dots,N\}$ such that $T[u_k,v_k]$ contains $T[u_j,v_j]$ (and therefore $T[u_i,v_i]$), we can be certain that $L(z_k,u_k,v_k)$ and $R(z_k,u_k,v_k)$ do not provide a back-edge with lower endpoint in $T[u_i,v_i]$. The usefulness of that property will appear in the proof of the following: 

\begin{proposition}
\label{proposition:leftmostinsegment}
Let $N$ be a positive integer, and let $(z_i,u_i,v_i)$, with $i\in\{1,\dots,N\}$ be a collection of triples of vertices that satisfies Definition~\ref{definition:nestedqueries}. Then, we can compute $L(z_i,u_i,v_i)$ and $R(z_i,u_i,v_i)$, for all $i\in\{1,\dots,N\}$, in $O(n+m)$ time in total.
\end{proposition}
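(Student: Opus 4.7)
The plan is to extend the DSU-based sweep underlying Proposition~\ref{proposition:L(v,d)} and the skipping-points Algorithm~\ref{algorithm:skipping} from fixed-prefix targets to segment targets, exploiting the laminar structure guaranteed by Definition~\ref{definition:nestedqueries}.

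As preprocessing, I would construct the laminar forest $\mathcal{F}$ whose nodes are $\{1,\dots,N\}$ and whose ancestry relation is the containment of the segments $T[u_i,v_i]$. By Definition~\ref{definition:nestedqueries}(1), $\mathcal{F}$ is well-defined, and I can build it in $O(n+N)$ time by walking over $T$ with a stack that tracks the currently open segments. In the same sweep I would tag every back-edge $(x,y)$ with $\sigma(y)$, the deepest node of $\mathcal{F}$ whose segment contains $y$. A back-edge $(x,y)$ is then relevant for a query $i$ precisely when $\sigma(y)$ is $i$ or an $\mathcal{F}$-descendant of $i$, and $x\in T(z_i)$.

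The main phase processes the queries in post-order on $\mathcal{F}$, maintaining a Gabow--Tarjan DSU~\cite{DBLP:journals/jcss/GabowT85} over $T$ whose sets are always subtrees of $T$. For query $i$, the sweep walks $T(z_i)$ from its minimum DFS-number vertex, uses $\mathtt{find}$ to jump past the already-unioned subtrees, and at each representative $x$ inspects the outgoing back-edges to test whether any is relevant for $i$; on a hit, $x$ is $L(z_i,u_i,v_i)$, and otherwise $x$ is unioned with its parent so that later sweeps skip it. The rightmost counterpart is symmetric, scanning in decreasing DFS order. Since the prescribed union tree is a subtree of $T$, Gabow--Tarjan gives $O(1)$ amortized per DSU operation, and the total number of operations is $O(n+m+N)$.

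The hard part of the proof is to verify that a DSU union performed at query $i$ never prevents a later (ancestor-in-$\mathcal{F}$) query $b$ from finding its true leftmost answer. The argument walks up the $\mathcal{F}$-chain from $i$ to $b$ and at each step invokes the dichotomy of Definition~\ref{definition:nestedqueries}(2): for every pair of consecutive queries $(j,j')$ on the chain with $T[u_j,v_j]\subset T[u_{j'},v_{j'}]$, either $T(z_{j'})\subseteq T(z_j)$, so the DSU state inherited from $j$ faithfully models the reachability of relevant back-edges inside $T(z_{j'})$, or no back-edge from $T(z_{j'})$ lands inside $T[u_j,v_j]$, so the contribution that $x$ could have had to $j'$ was never there. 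Chaining these cases along the $\mathcal{F}$-chain is the delicate part, and is precisely why the nested property was isolated in Definition~\ref{definition:nestedqueries}; once it is in hand, the standard amortization yields the promised $O(n+m)$ total running time.
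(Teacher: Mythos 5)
There is a genuine gap, and it sits exactly where you flagged "the hard part": the retire-on-miss step of your DSU sweep over $T$ is unsound. In the setting of Proposition~\ref{proposition:L(v,d)} the target is a prefix $\{y : y<p(v_i)\}$ and the thresholds only tighten as the queries are processed bottom-up, so a vertex that fails once fails forever and may be permanently unioned away. Here the targets are segments that strictly \emph{grow} along an $\mathcal{F}$-chain: if $T[u_j,v_j]\subset T[u_{j'},v_{j'}]$, a vertex $x\in T(z_j)\cap T(z_{j'})$ may have all of its back-edges landing in the annulus $T[u_{j'},v_{j'}]\setminus T[u_j,v_j]$. Such an $x$ misses for $j$, gets retired, and is then silently skipped by the sweep for $j'$, even though it may be the true $L(z_{j'},u_{j'},v_{j'})$. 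Neither branch of Definition~\ref{definition:nestedqueries}(2) excludes this: case (i) ($z_j$ an ancestor of $z_{j'}$) says nothing about where the back-edges of $x$ land, and case (ii) only forbids back-edges from $T(z_k)$ into the \emph{inner} segment $T[u_j,v_j]$, which is irrelevant to back-edges into the annulus. A concrete counterexample: take $z_1=z_2=z$, $T[u_1,v_1]\subset T[u_2,v_2]$, let the leftmost vertex of $T(z)$ have its only back-edge into the annulus and the next vertex have one into $T[u_1,v_1]$; your algorithm answers the second vertex for both queries. (A secondary issue: a vertex that is a hit is never retired, so its full adjacency list may be re-inspected once per ancestor query, breaking the $O(n+m)$ bound even where the answers are correct.)

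The paper's proof dualizes the sweep precisely to avoid this. It iterates over the \emph{back-edges}, sorted by higher endpoint (increasing for $L$, decreasing for $R$), and for each back-edge $(x,y)$ greedily assigns $x$ as the answer to every not-yet-answered query for which $(x,y)$ is a candidate; the monotonicity then lives in the order of the back-edges, so the first assignment to a query is automatically its leftmost point and nothing ever needs to be retracted. The role of Definition~\ref{definition:nestedqueries} is to guarantee that the set of queries for which $(x,y)$ is a candidate forms a contiguous upward path in a \emph{forest of queries} starting at the smallest segment containing $y$ — a forest built from the containment order but with the parent edge cut whenever condition (2)(i) fails — and the Gabow--Tarjan DSU is run on that query forest (not on $T$) to skip already-answered queries. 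If you want to keep a vertex-side sweep, you would need a mechanism that never discards a vertex while some pending ancestor query could still use it, which is essentially what forces the back-edge-side formulation.
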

\begin{proof}
First, we note that ``$N$" does not appear in the running time, because $N< 2n$, due to the laminarity of the collection $\mathcal{S}=\{T[u_i,v_i]\mid i\in\{1,\dots,N\}\}$ of sets of vertices, which is implied by $(1)$ of Definition~\ref{definition:nestedqueries}.

We will refer to every triple $(z_i,u_i,v_i)$, for $i\in\{1,\dots,N\}$, as a \emph{query}, and we will denote it as $Q_i$. A back-edge $(x,y)$ with the property $x\in T(z_i)$ and $y\in T[u_i,v_i]$ is called \emph{a candidate back-edge for $Q_i$}. Thus, $L(z_i,u_i,v_i)$ and $R(z_i,u_i,v_i)$ are given by the smallest and the greatest, respectively, higher endpoint of a candidate back-edge for $Q_i$.

As a consequence of the laminarity of $\mathcal{S}$, we have that, for every vertex $w$ that appears in a set from $\mathcal{S}$, there is a unique set $T[u_i,v_i]\in\mathcal{S}$ with minimum size such that $w\in T[u_i,v_i]$. (The ``size" here refers simply to the number of elements.) Then we call the triple $(z_i,u_i,v_i)$ the \emph{parent query} of $w$, and we denote it as $Q(w)$. If $w$ does not appear in a set from $\mathcal{S}$, then we let $Q(w):=\bot$.

As a further consequence of property $(1)$ of Definition~\ref{definition:nestedqueries}, we have that, for every $i\in\{1,\dots,N\}$, there is at most one  $j\in\{1,\dots,N\}$, such that $T[u_i,v_i]\subset T[u_j,v_j]$ with $T[u_j,v_j]$ having minimum size. If this $j$ indeed exists, and we also have that $z_i$ is an ancestor of $z_j$, then we call $Q_j$ \emph{the parent query} of $Q_i$. Otherwise, we consider $Q_i$ as a \emph{root query}.

Notice that this relation of parenthood that we have defined (between vertices and queries, and between the queries themselves), defines a collection of trees, which we call \emph{the forest of queries}. 
(One can imagine this forest as the usual forest that represents a laminar family of sets, where we have just cut off some parent edges.) In particular, the notion of a ``parent query" naturally extends to that of an \emph{ancestor query}, and therefore to a \emph{descendant query}. First we will show how to use this forest in order to compute all $L(z_i,u_i,v_i)$ and $R(z_i,u_i,v_i)$, for $i\in\{1,\dots,N\}$, in $O(n+m)$ time, and then we will show how to construct the forest itself in $O(n)$ time.

In order to compute all $L(z_i,u_i,v_i)$, for $i\in\{1,\dots,N\}$, we process all back-edges in increasing order w.r.t. their higher endpoint. For every back-edge $(x,y)$ that we process, the idea is to find greedily all the queries for which it is a candidate back-edge, and for which the answer is not yet computed. Thus, due to the processing in increasing order w.r.t. the higher endpoints, $(x,y)$ provides the answer to every such query (which now is marked as answered). In order to compute all $R(z_i,u_i,v_i)$, for $i\in\{1,\dots,N\}$, the idea is precisely the same, except that we process the back-edges in decreasing order w.r.t. their higher endpoint.

The forest of queries comes into play in the following way. Notice that $(x,y)$ is a candidate back-edge for a query $Q_i$, for $i\in\{1,\dots,N\}$, only if $y\in T[u_i,v_i]$. In particular, $(x,y)$ has the potential to be a candidate back-edge for $Q(y)$. (Let us assume that $Q(y)\neq\bot$, because otherwise $(x,y)$ is not a candidate back-edge for any query $Q_i$, with $i\in\{1,\dots,N\}$, because, by definition, $y$ is not contained in any $T[u_i,v_i]$, for $i\in\{1,\dots,N\}$.) Now we claim the following:\\

$(*)$ If $(x,y)$ is a candidate back-edge for a query $Q_i$, for $i\in\{1,\dots,N\}$, then $Q_i$ is an ancestor query of $Q(y)$, and $(x,y)$ is a candidate back-edge for every ancestor query of $Q(y)$ which is a descendant query of $Q_i$. (In other words, if $(x,y)$ is a candidate back-edge for a query, then it is definitely a candidate for $Q(y)$, and it is so only for a segment of queries on the tree that contains $Q(y)$.)\\

Notice why $(*)$ is useful: because it tells us that, in order to find all the queries for which $(x,y)$ is candidate back-edge, it is sufficient to start from $Q(y)$, and traverse all the ancestor queries of $Q(y)$, until we reach one for which $(x,y)$ is not a candidate back-edge, or the root query of the tree that contains $Q(y)$ (which may or may not have $(x,y)$ as a candidate back-edge).

Now we will prove $(*)$. First, let us fix some notation. Let $t$ be the index in $\{1,\dots,N\}$ with $Q_t=Q(y)$, let $\mathcal{T}$ be the query tree that contains $Q_t$, and let $s$ be the index in $\{1,\dots,N\}$ such that $Q_s$ is the root query of $\mathcal{T}$. Now, due to the laminarity of $\mathcal{S}$, notice that every query $Q_i$, for $i\in\{1,\dots,N\}$, that has $(x,y)$ as a candidate back-edge, must satisfy $T[u_t,v_t]\subseteq T[u_i,v_i]$. Notice that this property is satisfied by all ancestor queries of $Q_t$ (on $\mathcal{T}$), and that these are all the queries on $\mathcal{T}$ that satisfy this property, but there may be other queries too, that lie on other trees, that satisfy this property. So let us first prove that it is impossible that $(x,y)$ is a candidate back-edge for a query that does not lie on $\mathcal{T}$. Consider the root query $Q_s$ of $\mathcal{T}$. There are two possibilities here (that made $Q_s$ a root query): either $(a)$ there is no index $i\in\{1,\dots,N\}$ such that $T[u_s,v_s]\subset T[u_i,v_i]$, or $(b)$ the smallest $T[u_i,v_i]$ with $T[u_s,v_s]\subset T[u_i,v_i]$ has the property that $z_i$ is not a descendant of $z_s$. In case $(a)$, we can infer from property $(1)$ of Definition~\ref{definition:nestedqueries} that, for every $i\in\{1,\dots,N\}$ such that $y\in T[u_i,v_i]$, we have that $Q_i$ must lie on $\mathcal{T}[Q_t,Q_s]$. In case $(b)$, we can infer from property $(2)$ of Definition~\ref{definition:nestedqueries} that, for every $j\in\{1,\dots,N\}$ such that $T[u_s,v_s]\subset T[u_j,v_j]$, $(x,y)$ cannot be a candidate back-edge for $Q_j$ (because $y\in T[u_s,v_s]$). Notice that these are precisely all the queries with $y\in T[u_j,v_j]$ that are not contained on $\mathcal{T}$. This shows that all queries for which $(x,y)$ is a candidate back-edge lie on $\mathcal{T}$. Now let $i$ be an index in $\{1,\dots,N\}$ such that $Q_i$ has $(x,y)$ as a candidate back-edge. We have already shown that $Q_i$ is on $\mathcal{T}$, and that $T[u_t,v_t]\subseteq T[u_i,v_i]$. Thus, $Q_i$ is an ancestor query of $Q_t$. And now let $j$ be an index in $\{1,\dots,N\}$ such that $Q_j$ is an ancestor of $Q_t$ and a descendant of $Q_i$. Since $Q_j$ is an ancestor of $Q_t$ we have that $T[u_t,v_t]\subseteq T[u_j,v_j]$, and therefore $y\in T[u_j,v_j]$. And since $Q_j$ is a descendant of $Q_i$, we have that $z_i$ is a descendant of $z_j$. Since $(x,y)$ is a candidate back-edge for $Q_i$, we have that $x$ is a descendant of $z_i$, and therefore it is a descendant of $z_j$. This shows that $(x,y)$ is a candidate back-edge for $Q_j$, and thus we have established $(*)$.

Now, since we have demonstrated how to use $(*)$ in order to find all queries for which $x$ provides the answer, it remains to explain how to achieve linear time in total. Notice that the problem appears since we may have to traverse some queries on the forest of queries multiple times. However, we can avoid this problem by using precisely the same method that computes efficiently the $\mathit{high}_p$ points (see Section~\ref{section:computeHigh}). Specifically, we use a disjoint-set union (DSU) data structure to maintain sets of queries that have already been answered. Since we always traverse segments of the trees in order to provide the answer, the DSU maintains subtrees, and the sequence of unions is predetermined by the query forest. Thus, we can use the implementation by Gabow and Tarjan~\cite{DBLP:journals/jcss/GabowT85}, in order to achieve $O(n+m)$ time in total. 

Finally, we shall explain how to construct the forest of queries. Due to the laminarity of $\mathcal{S}$, it is easy to do that in a greedy manner, by processing the sets from $\mathcal{S}$ in increasing order w.r.t. their size. The procedure that we follow is shown in Algorithm~\ref{algorithm:query_forest}. So, the first step is to compute the sizes of the sets in $\mathcal{S}$. Since every set in $\mathcal{S}$ has the form of a segment $T[u,v]$, where $u$ is a descendant of $v$, we have that $|T[u,v]|=\mathit{depth}(u)-\mathit{depth}(v)+1$ (where the depth here is computed w.r.t. the DFS tree $T$). Then, we sort all segments from $\mathcal{S}$ in $O(n)$ time using bucket-sort. Now, for every $T[u,v]\in\mathcal{S}$ that we process, we do the following. We ascend the tree-path from $T[u,v]$, and for every vertex $w$ that we meet there are two possibilities: either $(a)$ $T[u,v]$ is the first segment from $\mathcal{S}$ that we meet such that $w\in T[u,v]$, or $(b)$ $w$ is an endpoint of a segment from $\mathcal{S}$ that we already met, and let $T[w,v']$ be the latest such segment that we met. (Notice that case $(b)$ is due to the laminarity of $\mathcal{S}$: i.e., since $w$ was met before, now that we meet it again it must necessarily be the highest endpoint of every previous segment that contained it.) 

Let $Q$ be the query that corresponds to $T[u,v]$. Now, in case $(a)$, we let $Q$ be the parent query of $w$. This is correct, due to the processing of the segments in increasing order w.r.t. their size. In case $(b)$, we have that the parent query of $w$ has already been computed. Moreover, due to the laminarity of the sets in $\mathcal{S}$, and due to their processing in increasing order w.r.t. their size, we have that every vertex in $T[w,v']$ has its parent query computed, and that for every set from $\mathcal{S}$ that is strictly included in $T[w,v']$ we have computed the parent query of its associated query (or determined that it is a root query). Thus, the segment $T[w,v']$ can be bypassed at once, and the next vertex for processing should be $p(v')$ (if it is still a descendant of $v$). Now let $Q=(z,u,v)$ and $Q'=(z',w,v')$ be the associated queries of $T[u,v]$ and $T[w,v']$, respectively. Then, if $z'$ is an ancestor of $z$, we have that $Q$ is the parent query of $Q'$. Otherwise, we mark $Q'$ as a root query. Since we process the segments from $\mathcal{S}$ in increasing order w.r.t. their size, we have that $T[u,v]$ is indeed the smallest segment from $\mathcal{S}$ that contains $T[w,v']$. Thus, it is not difficult to see that Algorithm~\ref{algorithm:query_forest} correctly computes the forest of queries in $O(n)$ time.
\end{proof} 

\begin{algorithm}[]
\caption{\textsf{Compute the forest of queries}}
\label{algorithm:query_forest}
\LinesNumbered
\DontPrintSemicolon

\SetKwInOut{Input}{input}
\SetKwInOut{Output}{output}

\Input{A set of triples $\mathcal{Q}=\{(z_i,u_i,v_i)\mid i\in\{1,\dots,N\}\}$ that satisfies Definition~\ref{definition:nestedqueries}.}
\Output{The forest of queries that corresponds to $\mathcal{Q}$, as defined in the proof of Proposition~\ref{proposition:leftmostinsegment}.}

\ForEach{$i\in\{1,\dots,N\}$}{
  let $Q_i:=(z_i,u_i,v_i)$\;
  $\mathit{size}(Q_i)\leftarrow\mathit{depth}(u_i)-\mathit{depth}(v_i)+1$\;
}

\ForEach{vertex $w$}{
  $\mathtt{parent\_query\_of}[w]\leftarrow\bot$\;
  $\mathtt{latest\_query}[w]\leftarrow\bot$\;
}
\ForEach{$Q\in\mathcal{Q}$}{
  $\mathtt{parent\_query\_of}[Q]\leftarrow\bot$\;
}

sort the elements from $\mathcal{Q}$ in increasing order w.r.t. their $\mathit{size}$\;

\For{$Q=(z,u,v)\in\mathcal{Q}$ in increasing order w.r.t. the $\mathit{size}$}{
  $w\leftarrow u$\;
  \While{$w\neq p(v)$}{
    \If{$\mathtt{parent\_query\_of}[w]=\bot$}{
      $\mathtt{parent\_query\_of}[w]\leftarrow Q$\;
      $\mathtt{latest\_query}[w]\leftarrow Q$ \tcp{this is actually needed only when $w=u$; otherwise, it makes no difference}
      $w\leftarrow p(w)$\;
    }
    \Else{
      let $Q':=\mathtt{latest\_query}[w]=(z',w,v')$\;
      \If{$z'$ is an ancestor of $z$}{
        $\mathtt{parent\_query\_of}[Q']\leftarrow Q$\;
      }
      \Else{
        mark $Q'$ as a root query\;
      }
      $\mathtt{latest\_query}[w]\leftarrow Q$ \tcp{the previous comment applies here too}
      $w\leftarrow p(v')$\;
    }    
  }
}

\ForEach{$Q\in\mathcal{Q}$}{
  \If{$\mathtt{parent\_query\_of}[Q]=\bot$}{
    mark $Q$ as a root query\;
  }
}

\end{algorithm}

It is important to note that the computation in Proposition~\ref{proposition:leftmostinsegment} can work on any permutation of the original DFS tree (and that is why we do not specify the DFS tree on which we work). Furthermore, notice that the nestedness of the queries that is captured in Definition~\ref{definition:nestedqueries} is an invariant across all permutations of a base DFS tree.

\begin{observation}
\label{observation:nestedqueries}
Let $\mathcal{Q}$ be a collection of triples of vertices that satisfy Definition~\ref{definition:nestedqueries} w.r.t. a DFS tree $T$. Then, $\mathcal{Q}$ satisfies Definition~\ref{definition:nestedqueries} w.r.t. any DFS tree $\widetilde{T}$ that is a permutation of $\mathcal{T}$.
\end{observation}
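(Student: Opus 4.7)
The plan is to exploit the basic fact (already noted in Section~\ref{section:basicDFS}) that a permutation $\widetilde{T}$ of a base DFS tree $T$ is obtained by rearranging the children lists only; hence the underlying rooted tree (as an unordered tree) is the same, and in particular the parent–child relation, and therefore the ancestor–descendant relation, is an invariant. First I would isolate the three invariants that make the statement essentially tautological: (i) $u$ is a descendant of $v$ in $T$ iff it is so in $\widetilde{T}$; (ii) for any $v$ and any descendant $u$ of $v$, the vertex sets $T[u,v]$ and $\widetilde{T}[u,v]$ coincide, as do $T(z)$ and $\widetilde{T}(z)$; and (iii) the set of back-edges of $G$ is the same with respect to $T$ and to $\widetilde{T}$, because the set of tree edges coincides in the two trees, and an edge is a back-edge iff it is non-tree, iff its endpoints are related as ancestor and descendant.

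With these invariants in hand, each clause of Definition~\ref{definition:nestedqueries} translates verbatim from $T$ to $\widetilde{T}$. Property~(1) only refers to inclusions and intersections between the vertex sets $T[u_i,v_i]$ and $T[u_j,v_j]$, which are invariant by (ii); so laminarity is preserved. In property~(2), the hypothesis $T[u_i,v_i]\subset T[u_j,v_j]$ is again a statement about vertex sets, hence invariant; clause~(i) of property~(2), that $z_i$ is an ancestor of $z_j$, is invariant by~(i); and clause~(ii) of property~(2), the non-existence of a back-edge $(x,y)$ with $x\in T(z_k)$ and $y\in T[u_i,v_i]$, is invariant thanks to~(ii) and~(iii) combined. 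Therefore, both properties hold in $\widetilde{T}$ whenever they hold in $T$, which is what was asserted.

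There is essentially no obstacle; the only point worth double-checking is invariant~(iii), namely that the set of back-edges does not depend on the permutation. This is immediate from the facts that (a) permuting children lists reorders but does not alter the collection of parent–child pairs, so the tree edges of $T$ and of $\widetilde{T}$ are literally the same set of edges, and (b) by Tarjan's characterization of DFS trees, an edge of $G$ is a back-edge with respect to any DFS tree iff its endpoints are related as ancestor and descendant in that tree, a property we have already recorded as invariant under permutations.
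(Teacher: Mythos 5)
Your proposal is correct and follows the same route as the paper, which simply observes that the ancestry relation (and hence the tree paths, subtrees, and set of back-edges) is invariant under permutations of a base DFS tree; you have merely spelled out clause by clause why each condition of Definition~\ref{definition:nestedqueries} is expressed purely in these invariant terms. No issues.
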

\begin{proof}
This is an easy consequence of the fact that the ancestry relation is the same for two DFS trees that are a permutation of each other.
\end{proof}

The reason that we need the above observation has to do with the structure of our arguments in what follows. Specifically, we first extract a collection of queries $\mathcal{Q}$ from a specific DFS tree $T$, on which we prove that Definition~\ref{definition:nestedqueries} is satisfied for $\mathcal{Q}$. But then, we need to have available the answers to those queries w.r.t. a possibly different DFS tree, but which is a permutation of $T$. (Notice that, although the satisfaction of Definition~\ref{definition:nestedqueries} is an invariant across all permutations of a DFS tree, the answers to the queries may be different, since they have to do with the higher endpoints of some back-edges.)

Now we will show how Proposition~\ref{proposition:leftmostinsegment} can help us in the computation of some parameters that we will need in later sections.

\begin{figure}[h!]\centering
\includegraphics[trim={2cm 23cm 0 0cm}, clip=true, width=1.1\linewidth]{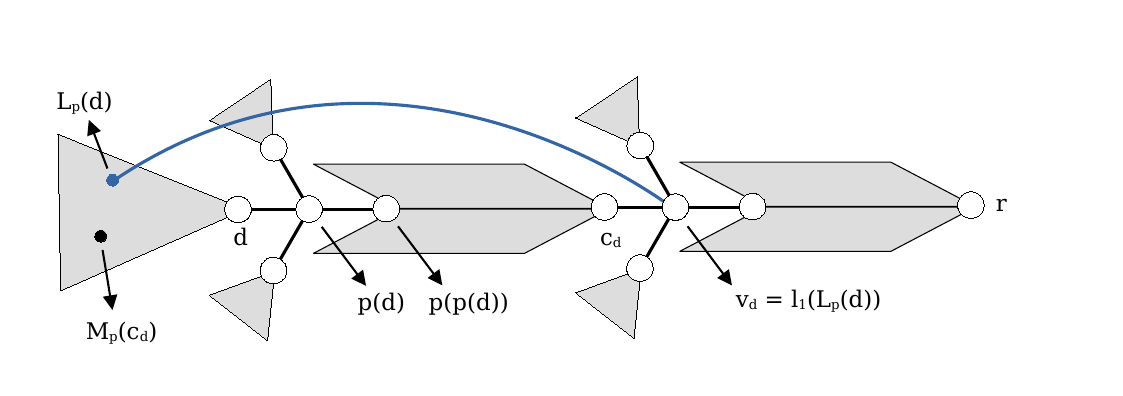}
\caption{\small{An example of a vertex $d$ that belongs to the set $U$ that appears in the statement of Lemma~\ref{lemma:segmentpointsL}. Thus, we have that $l_1(L_p(d))=v_d$, where $v_d$ is a proper ancestor of $p(p(d))$, and $M_p(c_d)$ is a descendant of $d$, where $c_d$ is the child of $v_d$ in the direction of $d$. The goal is to find the leftmost and the rightmost points that start from $T(d)$ and reach the segment $T[p(p(d)),c_d]$.}}\label{figure:segment1}
\end{figure}

\begin{lemma}
\label{lemma:segmentpointsL}
Let $T$ be a DFS tree, and let $\widetilde{T}$ be a permutation of $T$. Let $U$ be the collection of all vertices $d$ with the following property: $l_1(L_p(d))=v_d$ on $T$, where $v_d$ is a proper ancestor of $p(p(d))$, and $M_p(c_d)$ is a descendant of $d$, where $c_d$ is the child of $v_d$ in the direction of $d$. (See Figure~\ref{figure:segment1}.) Then, after a linear-time preprocessing, we can have available the points $L(d,p(p(d)),c_d)$ and $R(d,p(p(d)),c_d)$ on $\widetilde{T}$, for every $d\in U$. Furthermore, the lemma still holds if, instead of the ``$l_1(L_p(\cdot))$'' point in the definition of $U$, we use ``$l_1(R_p(\cdot))$''. 
\end{lemma}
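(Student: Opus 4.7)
The plan is to reduce the computation to a single invocation of Proposition~\ref{proposition:leftmostinsegment} on $\widetilde{T}$. For every $d \in U$, I form the triple $Q_d := (d, p(p(d)), c_d)$; by construction this is precisely a query for $L(d, p(p(d)), c_d)$ and $R(d, p(p(d)), c_d)$ in the sense of Section~\ref{section:leftmostReachSegment}. Let $\mathcal{Q} := \{Q_d : d \in U\}$. By Observation~\ref{observation:nestedqueries}, if $\mathcal{Q}$ satisfies Definition~\ref{definition:nestedqueries} on $T$, then it does so on $\widetilde{T}$ as well, and then Proposition~\ref{proposition:leftmostinsegment} will deliver all the required answers on $\widetilde{T}$ in $O(n+m)$ total time.

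The preprocessing is routine. I first compute all $L_p$ values on $T$ in linear time (via Proposition~\ref{proposition:skipping} with the trivial assignment $z_v=\bot$), all $l_1$ values by a bottom-up DFS pass, all $M_p$ values and depths, and initialize a level-ancestor oracle on $T$. Then, for every $d$, I obtain $v_d := l_1(L_p(d))$ in $O(1)$ time, test by DFS-interval containment whether $v_d$ is a proper ancestor of $p(p(d))$, compute $c_d := \mathtt{QueryLA}(d, \mathit{depth}(v_d)+1)$ in $O(1)$ time, and test whether $M_p(c_d) \in T(d)$ again by interval containment. This identifies $U$ and extracts $\mathcal{Q}$ in $O(n)$ time.

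The technical heart is the verification that $\mathcal{Q}$ satisfies Definition~\ref{definition:nestedqueries}. For laminarity, I take two distinct queries $Q_{d_i}, Q_{d_j}$ whose segments $S_i, S_j$ share a vertex. Since both segments are vertical paths, $c_{d_i}$ and $c_{d_j}$ must be related as ancestor and descendant; WLOG $c_{d_j}$ is an ancestor of $c_{d_i}$. By Lemma~\ref{lemma:Mp} applied to $c_{d_i}$ (a descendant of $c_{d_j}$) together with $M_p(c_{d_i}) \in T(d_i) \subseteq T(c_{d_j})$, I obtain that $M_p(c_{d_j})$ is an ancestor of $M_p(c_{d_i})$; combining this with $M_p(c_{d_j}) \in T(d_j)$ forces $d_i$ and $d_j$ themselves to be related as ancestor and descendant. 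A dedicated case analysis then rules out staggered configurations by extracting, under the staggered assumption, a back-edge $(L_p(d_i), v_{d_i}) \in B_p(d_j)$ together with a back-edge $(L_p(d_j), v_{d_j}) \in B_p(d_i)$ (using that $v_{d_i}$ is a proper descendant of $v_{d_j}$ and yet a proper ancestor of $p(d_j)$), which after comparing DFS indices and invoking the minimality in the definitions of $L_p(d_i)$ and $L_p(d_j)$ yields a contradiction. For property~(2), given a strict inclusion $S_i \subsetneq S_j$ with $d_i$ not an ancestor of $d_j$, and any $Q_{d_k}$ with $S_k \supseteq S_j$, the geometry forces $T(d_k) \cap T(d_i) = \emptyset$; I then argue that a back-edge $(x,y)$ with $x \in T(d_k)$ and $y \in S_i$ would lie in $B_p(c_{d_i})$ (because $y$ lies on the path from $c_{d_i}$ down to $p(p(d_i))$ and $x$ is a proper descendant of $c_{d_i}$), which via $M_p(c_{d_i}) \in T(d_i)$ and Lemma~\ref{lemma:Mp} forces $x \in T(d_i)$, contradicting the disjointness of $T(d_k)$ and $T(d_i)$. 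Once $\mathcal{Q}$ is certified nested, Proposition~\ref{proposition:leftmostinsegment} produces all answers on $\widetilde{T}$ in linear time, and the $R_p$-variant is handled by exchanging the roles of $L_p \leftrightarrow R_p$ throughout. The main obstacle is the staggered-case analysis in the laminarity proof: it is a delicate piece of book-keeping that stitches together the three defining conditions of $U$ with the structural properties of $L_p$, $l_1$ and $M_p$ recorded in Section~\ref{section:usefulProperties}.
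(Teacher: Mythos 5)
Your overall reduction --- form the triples $Q_d=(d,p(p(d)),c_d)$, certify Definition~\ref{definition:nestedqueries} on $T$, then invoke Observation~\ref{observation:nestedqueries} and Proposition~\ref{proposition:leftmostinsegment} on $\widetilde{T}$ --- is exactly the paper's, and your laminarity argument follows the paper's proof: the Lemma~\ref{lemma:Mp} step forces $d_i$ and $d_j$ to be related (which kills the side-by-side configuration), and the remaining staggered configuration is killed by playing the back-edges $(L_p(d_i),v_{d_i})$ and $(L_p(d_j),v_{d_j})$ against the minimality of the $L_p$ points. (Minor slip there: with your WLOG that $c_{d_j}$ is an ancestor of $c_{d_i}$, Lemma~\ref{lemma:Mp} yields that $M_p(c_{d_i})$ is an ancestor of $M_p(c_{d_j})$, not the reverse; either direction still produces a common descendant of $d_i$ and $d_j$, so the conclusion stands.)

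The genuine gap is in property $(2)$. You assume condition $(i)$ fails and try to establish condition $(ii)$, and both pillars of that argument are unsound. First, a back-edge $(x,y)$ with $y\in S_i=T[p(p(d_i)),c_{d_i}]$ has its lower endpoint $y$ a \emph{descendant} of $c_{d_i}$, hence $y>p(c_{d_i})$, so the edge does not leap over $p(c_{d_i})$ and cannot lie in $B_p(c_{d_i})$; the chain ``$(x,y)\in B_p(c_{d_i})\Rightarrow x\in T(M_p(c_{d_i}))\subseteq T(d_i)$'' therefore never gets off the ground, and indeed $x$ need not lie in $T(d_i)$ at all ($x$ can be any descendant of $y$, e.g.\ in a sibling subtree of $p(d_i)$). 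Second, the disjointness $T(d_k)\cap T(d_i)=\emptyset$ is not ``forced by the geometry'' for every $k$ with $S_j\subseteq S_k$: your hypothesis only says $d_i$ is not an ancestor of $d_j$, and nothing you have established prevents $d_i$ from being an ancestor of $d_k$ for a strictly larger query $Q_{d_k}$, in which case $T(d_k)\subseteq T(d_i)$. The repair is to abandon condition $(ii)$ altogether and prove that condition $(i)$ always holds, which is what the paper does: if the inner segment is $T[p(p(d')),c_{d'}]$ and the outer is $T[p(p(d)),c_d]$ and $d'$ were not an ancestor of $d$, then comparing $p(p(d'))$ with $p(p(d))$ shows $d$ and $d'$ are unrelated; yet $c_{d'}$ lies on the outer segment, so it is an ancestor of $M_p(c_d)$ and a descendant of $c_d$, whence Lemma~\ref{lemma:Mp} gives that $M_p(c_{d'})$ is an ancestor of $M_p(c_d)$, and a short case split then exhibits $M_p(c_{d'})$ as a common descendant of $d$ and $d'$ --- a contradiction. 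This is a pure ancestry/$M_p$ argument; no back-edges are needed for property $(2)$.
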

\begin{proof}
First of all, notice that, for every $d\in U$, we have that the segment $T[p(p(d)),c_d]$ contains at least one vertex, since $v_d$ is a proper ancestor of $p(p(d))$, and therefore $c_d$ is an ancestor of $p(p(d))$.

Now, for every $d\in U$, we generate the triple $Q_d=(d,p(p(d)),c_d)$, and let $\mathcal{Q}$ be the collection of all those triples. Then, according to Proposition~\ref{proposition:leftmostinsegment} and Observation~\ref{observation:nestedqueries}, it is sufficient to establish that $\mathcal{Q}$ satisfies Definition~\ref{definition:nestedqueries}.

So let us suppose, for the sake of contradiction, that property $(1)$ of Definition~\ref{definition:nestedqueries} does not hold for $\mathcal{Q}$. This means that there are two distinct vertices $d$ and $d'$ in $U$ such that the segments $S_1=T[p(p(d)),c_d]$ and $S_2=T[p(p(d')),c_{d'}]$ intersect, but neither of them strictly includes the other. First, we distinguish the two cases $c_d=c_{d'}$ and $c_d\neq c_{d'}$. Let us consider the case $c_d=c_{d'}$. Then, since neither of $S_1$ and $S_2$ strictly includes the other, we have that $p(p(d))$ and $p(p(d'))$ are not related as ancestor and descendant. (Because, if e.g. $p(p(d))$ is an ancestor of $p(p(d'))$, then $S_1\subset S_2$.) This implies that $d$ and $d'$ are not related as ancestor and descendant either. But since $d,d'\in U$, we have that $M_p(c_d)=M_p(c_{d'})$ must be a descendant of both $d$ and $d'$, which is absurd. 

Thus, we have shown that $c_d\neq c_{d'}$. Then, we may assume w.l.o.g. that $c_d<c_{d'}$. Since $S_1$ and $S_2$ intersect, there is a vertex $w$ common in both $S_1$ and $S_2$. Then, $w$ is a descendant of both $c_d$ and $c_{d'}$. Thus, $c_d$ and $c_{d'}$ are related as ancestor and descendant, and therefore $c_d<c_{d'}$ implies that $c_d$ is a proper ancestor of $c_{d'}$. Furthermore, since $S_1$ and $S_2$ intersect, we have that $c_{d'}\in T[p(p(d)),c_d)$. Then, since $S_1$ does not contain $S_2$, we have that $p(p(d'))$ cannot be an ancestor of $p(p(d))$. Thus, we have two possibilities: either \textbf{(a)} $p(p(d'))$ is a proper descendant of a vertex on $T(p(p(d)),c_d)$, but not of $p(p(d))$, or \textbf{(b)} $p(p(d'))$ is a proper descendant of $p(p(d))$. (See Figure~\ref{figure:segment1ab}.) Let us consider case \textbf{(a)} first. Notice that in this case we have that $d$ and $d'$ are not related as ancestor and descendant. 
Since $d\in U$, we have that $M_p(c_d)$ is a descendant of $d$, which is a descendant of $p(p(d))$, which is a descendant of $c_{d'}$. Then, since $c_{d'}$ is a descendant of $c_d$, Lemma~\ref{lemma:Mp} implies that $M_p(c_{d'})$ is an ancestor of $M_p(c_d)$. But then we have that $M_p(c_d)$ is a common descendant of $d$ and $d'$, which is absurd.

Thus, we are left to consider case \textbf{(b)}. Here we will crucially utilize the fact that $l_1(L_p(d))=p(c_d)$ and $l_1(L_p(d'))=p(c_{d'})$ (which is implied from $d,d'\in U$). We have that $L_p(d)$ is a descendant of $d$, and therefore a descendant of $p(p(d))$, and therefore a descendant of $c_{d'}$ (since $c_{d'}\in T[p(p(d)),c_d)$). Since $l_1(L_p(d))=p(c_d)$, we have that $l_1(L_p(d))$ is a proper ancestor of $c_d$, and therefore a proper ancestor of $p(c_{d'})$ (since $c_d$ is a proper ancestor of $c_{d'}$). This shows that $(L_p(d),l_1(L_p(d)))\in B_p(c_{d'})$, and therefore $L_p(d)$ is a descendant of $M_p(c_{d'})$. Since $d'\in U$, we have that $M_p(c_{d'})$ is a descendant of $d'$, and therefore $L_p(d)$ is a descendant of $d'$. Then, since $L_p(d)$ is a common descendant of $d$ and $d'$, we have that $d$ and $d'$ are related as ancestor and descendant. Thus, since $p(p(d'))$ is a proper descendant of $p(p(d))$, we infer that $d'$ is a proper descendant of $d$. Notice that the back-edge $(L_p(d),l_1(L_p(d)))$ is also in $B_p(d')$, and therefore (by definition of $L_p(d')$) we have that $L_p(d')\leq L_p(d)$. Now, if $L_p(d')=L_p(d)$, then we have  $p(c_{d'})=l_1(L_p(d'))=l_1(L_p(d))=p(c_{d})$, which is impossible, because $p(c_{d'})$ and $p(c_d)$ are distinct (because $c_d$ is a proper ancestor of $c_{d'})$. Thus, we have $L_p(d')<L_p(d)$. Now, consider the back-edge $e=(L_p(d'),l_1(L_p(d')))$. We have that $L_p(d')$ is a descendant of $d'$, and therefore a descendant of $d$. Furthermore, we have that $l_1(L_p(d'))=p(c_{d'})$, and $p(c_{d'})$ is a proper ancestor of $c_{d'}$, which is a proper ancestor of $p(d)$ (since $c_{d'}\in T[p(p(d)),c_d)$). This shows that $e\in B_p(d)$. But then, by the definition of $L_p(d)$, we have $L_p(d)\leq L_p(d')$, a contradiction. 

Thus, we have established condition $(1)$ of Definition~\ref{definition:nestedqueries} for $\mathcal{Q}$. Now we will establish condition $(2)$ of Definition~\ref{definition:nestedqueries} for $\mathcal{Q}$. Thus, let $d$ and $d'$ be two distinct vertices from $U$ such that one of the segments $S_1=T[p(p(d)),c_d]$ and $S_2=T[p(p(d')),c_{d'}]$ strictly contains the other. Then, we may assume w.l.o.g. that $S_2\subset S_1$. This implies that $p(p(d'))\in S_1$, and $c_{d'}\in S_1$. We will show that condition $(i)$ of $(2)$ of Definition~\ref{definition:nestedqueries} holds for $(d,p(p(d)),c_d)$ and $(d',p(p(d')),c_{d'})$. 

So let us suppose, for the sake of contradiction, that $d'$ is not an ancestor of $d$. 
Since $p(p(d'))\in S_1$, we have that $p(p(d'))$ is an ancestor of $p(p(d))$, and therefore $d'$ cannot be a descendant of $d$. Now, since $c_{d'}\in S_1$, we have that $c_{d'}$ is an ancestor of $p(p(d))$, and therefore an ancestor of $d$, and therefore an ancestor of $M_p(c_d)$ (because, since $d\in U$, we have that $d$ is an ancestor of $M_p(c_d)$). Furthermore, since $c_{d'}\in S_1$, we have that $c_{d'}$ is a descendant of $c_d$. Thus, Lemma~\ref{lemma:Mp} implies that $M_p(c_{d'})$ is an ancestor of $M_p(c_d)$. Therefore, since $M_p(c_d)$ is a common descendant of $d$ and $M_p(c_{d'})$, we have that $d$ and $M_p(c_{d'})$ are related as ancestor and descendant. Since $d'\in U$, we have that $M_p(c_{d'})$ is a descendant of $d'$. Now, if $d$ is a descendant of $M_p(c_{d'})$, this implies that $d$ is a descendant of $d'$, which is impossible. Therefore, we have that $M_p(c_{d'})$ is a descendant of $d$. But then, since $d$ and $d'$ have $M_p(c_{d'})$ as a common descendant, they must be related as ancestor and descendant, which is also impossible. Thus, we have established condition $(2)$ of Definition~\ref{definition:nestedqueries} for $\mathcal{Q}$. 

\end{proof}

\begin{figure}[h!]\centering
\includegraphics[trim={0cm 23.5cm 0 0cm}, clip=true, width=1\linewidth]{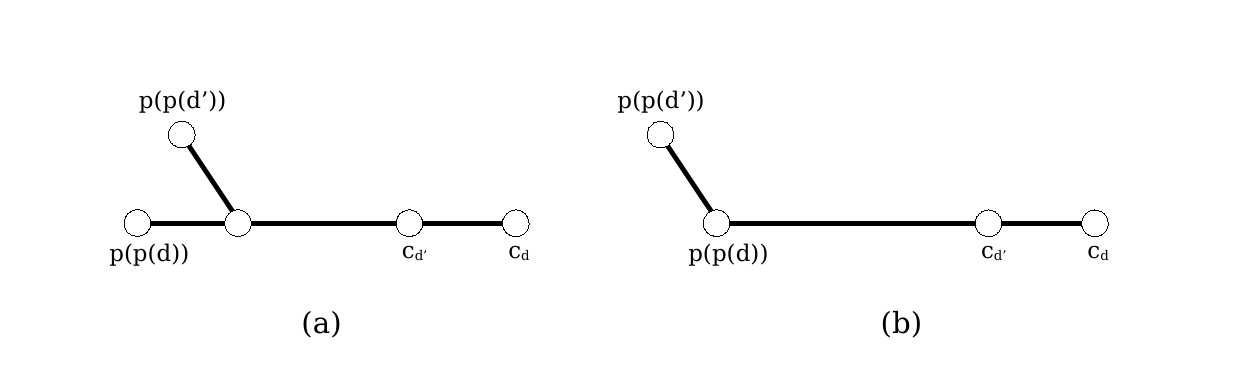}
\caption{\small{An illustration of the cases \textbf{(a)} and \textbf{(b)} that appear in the proof of Lemma~\ref{lemma:segmentpointsL}, demonstrating the ancestry relation between the vertices $c_d$, $c_{d'}$, $p(p(d))$, and $p(p(d'))$.}}\label{figure:segment1ab}
\end{figure}

\begin{figure}[h!]\centering
\includegraphics[trim={1.2cm 22cm 0 0.5cm}, clip=true, width=1.1\linewidth]{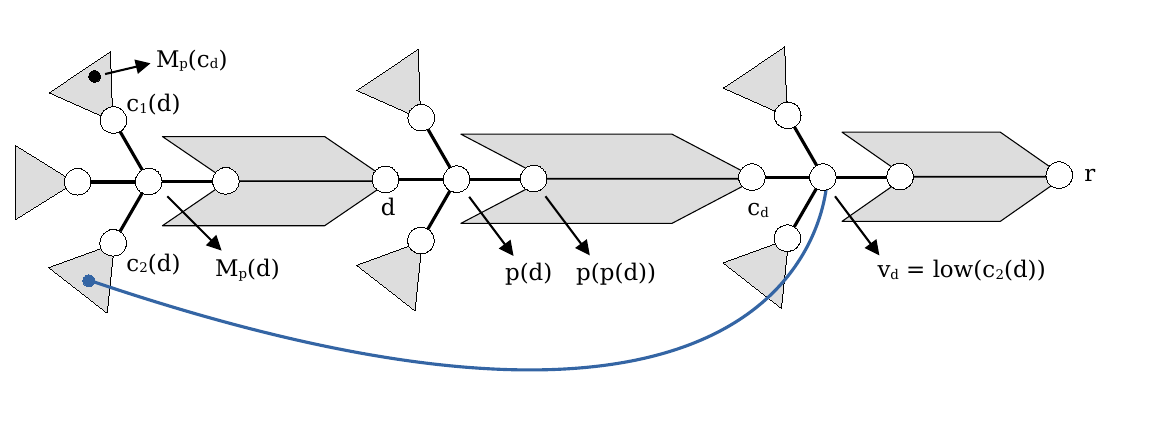}
\caption{\small{An example of a vertex $d$ that belongs to the set $U$ that appears in the statement of Lemma~\ref{lemma:segmentpointslow}. Thus, we have that $\mathit{low}(c_2(d))=v_d$, where $c_2(d)$ is a child of $M_p(d)$ that does not have the lowest $\mathit{low}$ point among the children of $M_p(d)$, and $v_d$ is a proper ancestor of $p(p(d))$. Furthermore, we have that $M_p(c_d)$ is a descendant of $c_1(d)$, where $c_d$ is the child of $v_d$ in the direction of $d$, and $c_1(d)$ is the child of $M_p(d)$ with the lowest $\mathit{low}$ point among the children of $M_p(d)$. The goal is to find the leftmost and the rightmost points that start from $T(d)$ and reach the segment $T[p(p(d)),c_d]$.}}\label{figure:segment2}
\end{figure}

\begin{lemma}
\label{lemma:segmentpointslow}
Let $T$ be a DFS tree, and let $U$ be the collection of all vertices $d$ with the following property (which is equivalent across all permutations of $T$): $M_p(d)$ has a unique child $c_1(d)$ with the lowest $\mathit{low}$ point among the children of $M_p(d)$, and there is another child $c_2(d)$ of $M_p(d)$ such that $\mathit{low}(c_2(d))=v_d$, where $v_d$ is a proper ancestor of $p(p(d))$, and $M_p(c_d)$ is a descendant of $c_1(d)$, where $c_d$ is the child of $v_d$ in the direction of $d$. (See Figure~\ref{figure:segment2}.) Then, after a linear-time preprocessing, we can have available the points $L(d,p(p(d)),c_d)$ and $R(d,p(p(d)),c_d)$ on $T$, for every $d\in U$.
\end{lemma}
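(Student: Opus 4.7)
I would mimic the strategy of Lemma~\ref{lemma:segmentpointsL} and reduce the task to Proposition~\ref{proposition:leftmostinsegment}. Form the collection $\mathcal{Q} := \{Q_d = (d, p(p(d)), c_d) \mid d \in U\}$; since $v_d$ is a proper ancestor of $p(p(d))$, the segment $T[p(p(d)), c_d]$ is non-empty and each $Q_d$ is well-formed. The job is then to verify that $\mathcal{Q}$ satisfies the two nestedness conditions of Definition~\ref{definition:nestedqueries}, after which Proposition~\ref{proposition:leftmostinsegment} immediately delivers all $L(d, p(p(d)), c_d)$ and $R(d, p(p(d)), c_d)$ in linear time. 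Throughout the proof I will lean on two structural observations: for every $d \in U$, $M_p(c_d)$ is a descendant of $c_1(d)$, hence of $M_p(d)$, hence of $d$; and $\mathit{low}(c_1(d)) < v_d = \mathit{low}(c_2(d))$ together with the uniqueness of $c_1(d)$ pins the deepest-reaching back-edges from $T(M_p(d))$ inside $T(c_1(d))$.

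For property $(1)$, I argue by contradiction that for distinct $d, d' \in U$, the segments $S_d := T[p(p(d)), c_d]$ and $S_{d'}$ are either disjoint or nested. Case A: $c_d = c_{d'}$. Then $M_p(c_d) = M_p(c_{d'})$ is simultaneously a descendant of $c_1(d)$ and of $c_1(d')$, and hence a common descendant of $d$ and $d'$, forcing $d$ and $d'$ to be related as ancestor and descendant; combined with $c_d = c_{d'}$, this forces nesting of $S_d$ and $S_{d'}$. Case B: $c_d \neq c_{d'}$. The intersection of $S_d$ and $S_{d'}$ forces $c_d$ and $c_{d'}$ to be related as ancestor and descendant; WLOG $c_d$ is a proper ancestor of $c_{d'}$, and $c_{d'}$ lies on $T(p(p(d)), c_d)$. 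Subdividing on whether $p(p(d'))$ is a descendant of $p(p(d))$, I push the argument through exactly as in cases (a)--(b) of the proof of Lemma~\ref{lemma:segmentpointsL}, replacing the appeal to the minimality of $L_p(d)$ by an appeal to Lemma~\ref{lemma:Mp} applied to $c_d, c_{d'}$ and their $M_p$ points; the conclusion is again that $M_p(c_d)$ would be a common descendant of $d$ and $d'$, contradicting their incomparability.

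For property $(2)$, assume $S_{d'} \subset S_d$; I aim to verify condition $(i)$, namely that $d'$ is an ancestor of $d$. The nesting yields that $c_{d'}$ is a descendant of $c_d$ and an ancestor of $p(p(d))$. Since $M_p(c_d)$ is a descendant of $d$ and also of $c_d$, and $c_{d'}$ lies between $c_d$ and $p(p(d))$ on the tree path to $d$, $M_p(c_d)$ is a descendant of $c_{d'}$; Lemma~\ref{lemma:Mp} then gives that $M_p(c_{d'})$ is an ancestor of $M_p(c_d)$. If $d'$ were not an ancestor of $d$, then $M_p(c_{d'}) \in T(d')$ together with $M_p(c_d) \in T(d) \cap T(M_p(c_{d'}))$ would force $d$ and $d'$ to share $M_p(c_d)$ as a common descendant, contradicting incomparability. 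Should an edge case slip past condition $(i)$, I would fall back to condition $(ii)$, showing that no back-edge from $T(z_k)$ (for any enclosing query $Q_k$) can land in $S_{d'}$, by exploiting the fact that the descendants of $M_p(d)$ with $l_1$-value strictly below $v_d$ all lie in $T(c_1(d))$, which contains $M_p(c_d)$ and hence routes its relevant back-edges through $c_d$ rather than into the interior of $S_{d'}$.

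The main obstacle I anticipate is the $c_d \neq c_{d'}$ sub-case of property $(1)$. In Lemma~\ref{lemma:segmentpointsL} the equality $v_d = l_1(L_p(d))$ lets one pit the minimality of $L_p(d)$ directly against a competing back-edge supplied by $d'$; here $v_d = \mathit{low}(c_2(d))$ is certified by the sibling $c_2(d)$ of $c_1(d)$ and not by $d$ itself, so the $L_p$ argument is unavailable. The workaround will rely on the extra structural hypotheses available in $U$: that $c_1(d)$ is the unique child of $M_p(d)$ achieving $\mathit{low}(M_p(d))$, and that $M_p(c_d)$ lies in $T(c_1(d))$ rather than in $T(c_2(d))$. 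Together these rule out the symmetric configurations in which an overlap-without-nesting of $S_d$ and $S_{d'}$ could survive, which is the delicate point that must be argued carefully.
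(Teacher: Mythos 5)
Your overall strategy is the paper's: form the queries $Q_d=(d,p(p(d)),c_d)$, verify Definition~\ref{definition:nestedqueries}, and invoke Proposition~\ref{proposition:leftmostinsegment}. Your treatment of property $(2)$ and of the easy sub-cases of property $(1)$ (the case $c_d=c_{d'}$, and the case where $p(p(d'))$ hangs off the interior of $T(p(p(d)),c_d)$ without being a descendant of $p(p(d))$) matches the paper and is fine, since those parts only use the fact that $M_p(c_d)$ is a descendant of $d$ for every $d\in U$.

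However, there is a genuine gap in the remaining sub-case of property $(1)$: $c_d\neq c_{d'}$ with $p(p(d'))$ a proper descendant of $p(p(d))$ (case \textbf{(b)} in the paper's notation). You claim the conclusion there is ``again that $M_p(c_d)$ would be a common descendant of $d$ and $d'$, contradicting their incomparability,'' but in this configuration $d$ and $d'$ need not be incomparable — the paper's proof in fact \emph{derives} that they are comparable ($d'$ is a proper descendant of $d$), so exhibiting a common descendant yields no contradiction. The actual argument is a longer chain: one first produces two back-edges in $B_p(c_{d'})$ whose higher endpoints lie under the two distinct children $c_1(d)$ and $c_2(d)$ of $M_p(d)$ (the second supplied by $\mathit{low}(c_2(d))=p(c_d)$), forcing $M_p(c_{d'})$ to be an ancestor of $M_p(d)$; combined with $M_p(c_{d'})\in T(d')$ this gives $d'$ a proper descendant of $d$ and $M_p(d')$ a proper ancestor of $M_p(d)$; one then shows $M_p(d)$ must sit inside $T(c_1(d'))$ (using the minimality and uniqueness of $\mathit{low}(c_1(d'))$); and finally the back-edge from $T(c_2(d'))$ to $p(c_{d'})$ lies in $B_p(d)$, so its higher endpoint would have to be a descendant of $M_p(d)\subseteq T(c_1(d'))$ while also lying in $T(c_2(d'))$ — the contradiction. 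Your closing paragraph correctly identifies that the $L_p$-minimality trick from Lemma~\ref{lemma:segmentpointsL} is unavailable and names the right hypotheses (uniqueness of $c_1(d)$, and $M_p(c_d)\in T(c_1(d))$), but you do not carry out this chain, and it is precisely the crux of the lemma.
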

\begin{proof}
The structure of this proof is the same as that of Lemma~\ref{lemma:segmentpointsL}. Thus, we first notice that, for every $d\in U$, the segment $T[p(p(d)),c_d]$ is not empty, because $v_d$ is a proper ancestor of $p(p(d))$, and therefore $c_d$ is an ancestor of $p(p(d))$. Now, for every $d\in U$, we generate the triple $Q_d=(d,p(p(d)),c_d)$, and let $\mathcal{Q}$ be the collection of all those triples. Then, as in the proof of Lemma~\ref{lemma:segmentpointsL}, it is sufficient to establish that $\mathcal{Q}$ satisfies Definition~\ref{definition:nestedqueries}. Recall that in the proof of Lemma~\ref{lemma:segmentpointsL} we relied heavily on the fact that, for every $d\in U$, we have that $M_p(c_d)$ is a descendant of $d$. The same fact holds here too, because $M_p(c_d)$ is a descendant of $c_1(d)$, which is a child of $M_p(d)$, which is a descendant of $d$. Thus, in what follows, it is sufficient to just recall which parts of the previous proof were a consequence of that fact, and then provide an appropriate argument in those places where the different conditions of the statement of the present lemma must be utilized.

First, let us suppose, for the sake of contradiction, that property $(1)$ of Definition~\ref{definition:nestedqueries} does not hold for $\mathcal{Q}$. This means that there are two distinct vertices $d$ and $d'$ in $U$ such that the segments $S_1=T[p(p(d)),c_d]$ and $S_2=T[p(p(d')),c_{d'}]$ intersect, but neither of them strictly includes the other. Recall that in the proof of Lemma~\ref{lemma:segmentpointsL}, here we made a distinction between the cases $c_d=c_{d'}$ and $c_d\neq c_{d'}$. The first case could be discarded as a consequence of the fact that $M_p(c_d)$ is a descendant of $d$, and $M_p(c_{d'})$ is a descendant of $d'$. Thus, it is sufficient to focus on the case that $c_d\neq c_{d'}$. And here, as previously, we assume, w.l.o.g., that $c_d<c_{d'}$, and we distinguish two possibilities: either \textbf{(a)} $p(p(d'))$ is a proper descendant of a vertex on $T(p(p(d)),c_d)$, but not of $p(p(d))$, or \textbf{(b)} $p(p(d'))$ is a proper descendant of $p(p(d))$. (See Figure~\ref{figure:segment1ab}.) In either case, we have $c_{d'}\in T[p(p(d)),c_d)$. Then, case \textbf{(a)} was discarded as a consequence of the fact that $M_p(c_d)$ is a descendant of $d$, and $M_p(c_{d'})$ is a descendant of $d'$.

Thus, we are left to consider case \textbf{(b)}. Since $d\in U$, we have that $M_p(c_d)$ is a descendant of $c_1(d)$. Now let $(x,y)$ be a back-edge where $x$ is a descendant of $c_1(d)$ and $y$ has the lowest possible $\mathit{low}$ point among all such back-edges. Then we have that $x$ is a descendant of $M_p(d)$ (since $c_1(d)$ is a child of $M_p(d)$), and therefore a descendant of $d$, and therefore a descendant of $p(p(d))$, and therefore a descendant of $c_{d'}$ (since $c_{d'}\in T[p(p(d)),c_d)$). Furthermore, due to the minimality of $y$, we have that $y$ is a proper ancestor of $p(c_d)$, and therefore a proper ancestor of $p(c_{d'})$ (since $c_d$ is a proper ancestor of $c_{d'}$). This shows that $(x,y)\in B_p(c_{d'})$. Now, since $d\in U$, we have that $\mathit{low}(c_2(d))=p(c_d)$. Thus, there is a back-edge $(x',y')$ with $x'$ a descendant of $c_2(d)$ and $y=p(c_d)$. Then, $x'$ is again a descendant of $c_{d'}$. Furthermore, $y$ is a proper ancestor of $p(c_{d'})$. Thus, $(x',y')\in B_p(c_{d'})$. Then, by the definition of $M_p(c_{d'})$, we have that $M_p(c_{d'})$ is an ancestor of both $x$ and $x'$. Therefore, since $x$ and $x'$ are descendants of two different children of $M_p(d)$, this implies that $M_p(c_{d'})$ is an ancestor of $M_p(d)$. Since $d'\in U$, we have that $M_p(c_{d'})$ is a descendant of $d'$. Therefore, $M_p(d)$ is a descendant of $d'$. 

Now, since $M_p(d)$ is a descendant of both $d$ and $d'$, we have that $d$ and $d'$ are related as ancestor and descendant. Therefore, since $p(p(d'))$ is a proper descendant of $p(p(d))$, we infer that $d'$ is a proper descendant of $d$. Now, since $d'\in U$, we have that $M_p(c_{d'})$ is a descendant of a child of $M_p(d')$. Since $M_p(c_{d'})$ is an ancestor of $M_p(d)$, this implies that $M_p(d')$ is a proper ancestor of $M_p(d)$. Since $c_1(d')$ is the child of $M_p(d')$ with the lowest $\mathit{low}$ point among the children of $M_p(d')$, this implies that $M_p(d)$ is a descendant of $c_1(d')$. To see this, let us suppose the contrary. Now, there is a back-edge $(z,w)$ where $z$ is a descendant of $c_1(d')$ and $w$ has the lowest possible $\mathit{low}$ point among all such back-edges. Then, $z$ is a descendant of $d'$, and therefore a descendant of $d$. Furthermore, $w$ must be lower than the lowest endpoint of all back-edges that stem from the child of $M_p(d')$ that is an ancestor of $M_p(d)$. But we have $\mathit{low}(c_1(d))<p(c_d)$, and therefore $w$ must be a proper ancestor of $p(c_d)$, and therefore a proper ancestor of $p(d)$ (since $c_d$ is an ancestor of $p(p(d))$). This means that $(z,w)\in B_p(d)$, and therefore $z$ is a descendant of $M_p(d)$, which is absurd (since $z$ and $M_p(d)$ are descendants of different children of $M_p(d')$). 

Thus, we have that $M_p(d)$ is a descendant of $c_1(d')$. Since $d'\in U$, we have that $\mathit{low}(c_2(d'))=p(c_{d'})$. This implies that there is a back-edge $(z,p(c_{d'}))$ such that $z$ is a descendant of $c_2(d')$. Then, we have that $z$ is a descendant of $d'$, and therefore a descendant of $d$. Furthermore, since $c_{d'}\in T[p(p(d)),c_d)$, we have that $c_{d'}$ is a proper ancestor of $p(d)$, and therefore $p(c_{d'})$ is a proper ancestor of $p(d)$. This shows that $(z,p(c_{d'}))\in B_p(d)$, and therefore $z$ is a descendant of $M_p(d)$. But since $M_p(d)$ is a descendant of $c_1(d')$, we have that $z$ is a descendant of $c_1(d')$, which is absurd (since $c_1(d')$ and $c_2(d')$ are two distinct children of $M_p(d')$). 

Thus, we have established property $(1)$ of Definition~\ref{definition:nestedqueries} for $\mathcal{Q}$. Now we can also establish property $(2)$ of Definition~\ref{definition:nestedqueries} for $\mathcal{Q}$, by following precisely the same argument as in the proof of Lemma~\ref{lemma:segmentpointsL}, since there we only used the fact that we had established property $(1)$, and that, for every $d\in U$, we have that $M_p(c_d)$ is a descendant of $d$.
\end{proof}

\begin{figure}[h!]\centering
\includegraphics[trim={1.7cm 22.5cm 0 0.5cm}, clip=true, width=1.1\linewidth]{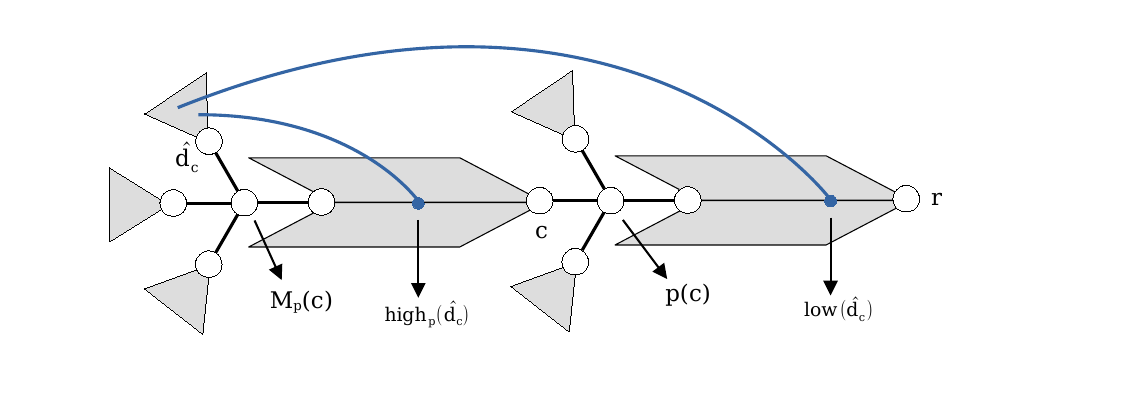}
\caption{\small{An example of a vertex $c$ that belongs to the set $W$ that appears in the statement of Lemma~\ref{lemma:segmentpointsspecial}. Thus, there is a unique child $\hat{d}_c$ of $M_p(c)$ such that $\mathit{high}_p(\hat{d}_c)\in T[p(M_p(c)),c]$ and $\mathit{low}(\hat{d}_c)<p(c)$. The goal is to find the leftmost and the rightmost points that start from $T(\hat{d}_c)$ and reach the segment $T[p(M_p(c)),c]$.}}\label{figure:segment3}
\end{figure}

\begin{lemma}
\label{lemma:segmentpointsspecial}
Let $T$ be a DFS tree, and let $W$ be the collection of all vertices $c$ with the following property (which is equivalent across all permutations of $T$): $M_p(c)$ is a proper descendant of $c$, and it has a unique child $\hat{d}_c$ with the property that $\mathit{high}_p(\hat{d}_c)\in T[p(M_p(c)),c]$ and $\mathit{low}(\hat{d}_c)<p(c)$. (See Figure~\ref{figure:segment3}.) Then, after a linear-time preprocessing, we can have available the points $L(\hat{d}_c,p(M_p(c)),c)$ and $R(\hat{d}_c,p(M_p(c)),c)$ on $T$, for every $c\in W$.
\end{lemma}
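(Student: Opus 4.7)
The plan is to mirror the structure of Lemmas~\ref{lemma:segmentpointsL} and~\ref{lemma:segmentpointslow}: for every $c\in W$ I form the query $Q_c:=(\hat{d}_c, p(M_p(c)), c)$, which is a valid triple because $M_p(c)$ is a proper descendant of $c$ (so the segment $T[p(M_p(c)),c]$ is non-empty) and $\hat{d}_c$ is a child of $M_p(c)$ (so it is a proper descendant of the segment's bottom endpoint $p(M_p(c))$). By Proposition~\ref{proposition:leftmostinsegment} together with Observation~\ref{observation:nestedqueries}, it then suffices to verify that the collection $\mathcal{Q}:=\{Q_c\mid c\in W\}$ satisfies Definition~\ref{definition:nestedqueries}; the set $W$ together with the mapping $c\mapsto \hat{d}_c$ can be enumerated in linear time using the precomputed $\mathit{low}$, $\mathit{high}_p$, and $M_p$ values, together with standard NCA and level-ancestor oracles, so the overall preprocessing time reduces to applying Proposition~\ref{proposition:leftmostinsegment} to $\mathcal{Q}$.

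For property~$(1)$, suppose two distinct segments $S_1=T[p(M_p(c_1)),c_1]$ and $S_2=T[p(M_p(c_2)),c_2]$ share a vertex. Since both are linear tree paths, $c_1$ and $c_2$ are related as ancestor and descendant; WLOG $c_1$ is a proper ancestor of $c_2$, which forces $c_2\in S_1$, i.e.\ $c_2$ is a strict ancestor of $M_p(c_1)$. I then show $S_2\subseteq S_1$ by proving that $M_p(c_2)$ is an ancestor of $M_p(c_1)$. If not, then $M_p(c_1)$ and $M_p(c_2)$ are descendants of two distinct children of $c_2$. But every back-edge in $B_p(c_1)$ has source in $T(M_p(c_1))\subseteq T(c_2)$ and lower endpoint strictly above $p(c_1)$, hence strictly above $p(c_2)$, so $B_p(c_1)\subseteq B_p(c_2)$. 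Then the sources of $B_p(c_2)$ lie in both of the two disjoint subtrees (those containing $M_p(c_1)$ and $M_p(c_2)$), which forces the NCA of these sources, and therefore $M_p(c_2)$, to equal $c_2$ -- contradicting $c_2\in W$, which demands $M_p(c_2)$ be a proper descendant of $c_2$. Hence $M_p(c_2)$ is an ancestor of $M_p(c_1)$ and $S_2\subseteq S_1$ follows directly.

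For property~$(2)$, take nested queries with $S_2\subseteq S_1$; by the previous argument $M_p(c_2)$ is an ancestor of $M_p(c_1)$, so there is a unique child $e$ of $M_p(c_2)$ that is an ancestor of $\hat{d}_{c_1}$ (with $e=\hat{d}_{c_1}$ when $M_p(c_1)=M_p(c_2)$). The plan is to establish the following dichotomy for $k=1$: either $e=\hat{d}_{c_2}$, yielding (i); or $\mathit{high}_p(\hat{d}_{c_1})$ is a strict ancestor of $c_2$, in which case no back-edge from $T(\hat{d}_{c_1})$ can reach $S_2$ (every lower endpoint in $B_p(\hat{d}_{c_1})$ is an ancestor of $\mathit{high}_p(\hat{d}_{c_1})$ on the chain of ancestors of $M_p(c_1)$, hence a strict ancestor of $c_2$; and every back-edge from $T(\hat{d}_{c_1})$ not in $B_p(\hat{d}_{c_1})$ has lower endpoint inside $T(M_p(c_1))$, so cannot lie in $S_2$). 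The first horn is justified by verifying that $e$ satisfies both defining conditions of $\hat{d}_{c_2}$ and then invoking the uniqueness built into the definition of $W$: $\mathit{low}(e)\leq \mathit{low}(\hat{d}_{c_1})<p(c_1)<p(c_2)$ handles the $\mathit{low}$ condition, and when $\mathit{high}_p(\hat{d}_{c_1})\in S_2$, the realizing back-edge has lower endpoint that is a strict ancestor of $M_p(c_2)=p(e)$ and therefore belongs to $B_p(e)$, giving $\mathit{high}_p(e)\geq \mathit{high}_p(\hat{d}_{c_1})\geq c_2$; combined with $\mathit{high}_p(e)\leq p(M_p(c_2))$, this places $\mathit{high}_p(e)$ inside $S_2$. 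For a general index $k$ with $S_1\subseteq S_k$, I apply the same dichotomy to $\hat{d}_{c_k}$ in place of $\hat{d}_{c_1}$: whenever $\mathit{high}_p(\hat{d}_{c_k})$ enters $S_2$, uniqueness forces the corresponding child of $M_p(c_2)$ to coincide with $\hat{d}_{c_2}$, placing $\hat{d}_{c_2}$ as an ancestor of $\hat{d}_{c_k}$ and hence of $\hat{d}_{c_1}$, so~(i) holds; otherwise the same argument as for $k=1$ rules out any back-edge from $T(\hat{d}_{c_k})$ into $S_2$, which gives~(ii) uniformly over all such $k$.

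The main obstacle I anticipate is the propagation of the uniqueness argument along the chain of $M_p$ values in property~$(2)$ for general $k$: one must track the position of $M_p(c_k)$ relative to $M_p(c_2)$, and in the subcase where $M_p(c_k)$ is a strict ancestor of $M_p(c_2)$, one needs to argue that any back-edge from $T(\hat{d}_{c_k})$ landing in $S_2$ necessarily originates inside $T(M_p(c_2))$, so that the bound on $\mathit{high}_p$ of the appropriate child of $M_p(c_2)$ can again be leveraged. Once this is pinned down, the construction of $\mathcal{Q}$ and the invocation of Proposition~\ref{proposition:leftmostinsegment} complete the proof within the claimed linear-time preprocessing.
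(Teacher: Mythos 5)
Your reduction to Definition~\ref{definition:nestedqueries} plus Proposition~\ref{proposition:leftmostinsegment}/Observation~\ref{observation:nestedqueries} is exactly the paper's strategy, and your property~$(1)$ argument is essentially a re-derivation of Lemma~\ref{lemma:Mp} (the intermediate claim that $M_p(c_1)$ and $M_p(c_2)$ would lie in ``two disjoint subtrees'' of $c_2$ is not what the negation gives you, but the inclusion $B_p(c_1)\subseteq B_p(c_2)$ that you do establish immediately yields that $M_p(c_2)$ is an ancestor of $M_p(c_1)$, so that part stands).

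Property~$(2)$, however, has a genuine gap. Your dichotomy for $k=1$ --- ``either $\mathit{high}_p(\hat{d}_{c_1})\in S_2$ or $\mathit{high}_p(\hat{d}_{c_1})$ is a strict ancestor of $c_2$'' --- is not exhaustive. Membership in $W$ only guarantees $\mathit{high}_p(\hat{d}_{c_1})\in S_1=T[p(M_p(c_1)),c_1]$, and when $M_p(c_2)$ is a \emph{proper} ancestor of $M_p(c_1)$ the sub-segment $T[p(M_p(c_1)),M_p(c_2)]$ of $S_1$ is nonempty; if $\mathit{high}_p(\hat{d}_{c_1})$ lands there it is a descendant of $c_2$ yet outside $S_2$, so neither horn applies, and a back-edge from $T(\hat{d}_{c_1})$ with a \emph{smaller} lower endpoint could still reach $S_2$. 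The case split must be on the existence of a back-edge $(x,y)$ with $x\in T(\hat{d}_{c_k})$ and $y\in S_2$, not on the location of the single extremal point $\mathit{high}_p(\hat{d}_{c_1})$. The second, more structural problem is in the general-$k$ step: showing that the child of $M_p(c_2)$ above $\hat{d}_{c_k}$ equals $\hat{d}_{c_2}$ only tells you $\hat{d}_{c_2}$ is an ancestor of $\hat{d}_{c_k}$; when $M_p(c_1)=M_p(c_2)$ this does \emph{not} imply $\hat{d}_{c_2}$ is an ancestor of $\hat{d}_{c_1}$ (they could a priori be distinct siblings), so condition~$(i)$ does not follow. Your closing remark about the subcase ``$M_p(c_k)$ is a strict ancestor of $M_p(c_2)$'' also has the containment reversed: $S_1\subseteq S_k$ forces $M_p(c_k)$ to be a \emph{descendant} of $M_p(c_1)$, hence of $M_p(c_2)$, so that subcase is vacuous. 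The paper closes both holes at once by assuming a back-edge $(x,y)$ from $T(\hat{d}_{c_k})$ into the innermost segment and applying the uniqueness characterization \emph{twice}, to the child of $M_p(c_2)$ above $x$ and to the child of $M_p(c_1)$ above $x$, concluding that both equal $\hat{d}_{c_2}$ and $\hat{d}_{c_1}$ respectively; since $x$ is then a common descendant of $\hat{d}_{c_2}$ and $\hat{d}_{c_1}$, the negation of $(i)$ forces $M_p(c_2)$ to be a proper descendant of $M_p(c_1)$, contradicting Lemma~\ref{lemma:Mp}. You need this second application of uniqueness at the intermediate query; without it the argument does not close.
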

\begin{proof}
First of all, notice that, for every $c\in W$, the segment $T[p(M_p(c)),c]$ is not empty, because $M_p(c)$ is a proper descendant of $c$. Now let $\mathcal{Q}$ be the collection of triples of vertices $(\hat{d}_c,p(M_p(c)),c)$, for all $c\in W$. Then, according to Proposition~\ref{proposition:leftmostinsegment}, it is sufficient to establish that $\mathcal{Q}$ satisfies Definition~\ref{definition:nestedqueries}.

Let $c$ and $c'$ be two distinct vertices from $W$ such that the segments $S_1=T[p(M_p(c)),c]$ and $S_2=T[p(M_p(c')),c']$ intersect. This means that there is a vertex $w$ in $S_1\cap S_2$, and therefore $w$ is a common descendant of $c$ and $c'$, and therefore $c$ and $c'$ are related as ancestor and descendant. Thus, we may assume w.l.o.g. that $c'$ is a proper descendant of $c$. Now, since $S_1$ and $S_2$ intersect, this implies that $c'\in S_1$. Thus, $c'$ is an ancestor of $p(M_p(c))$, and therefore an ancestor of $M_p(c)$. Then, since $c'$ is a descendant of $c$, Lemma~\ref{lemma:Mp} implies that $M_p(c')$ is an ancestor of $M_p(c)$. Therefore, $p(M_p(c'))$ is an ancestor of $p(M_p(c))$. This implies that $S_2\subset S_1$. Thus, we have established that $\mathcal{Q}$ satisfies property $(1)$ of Definition~\ref{definition:nestedqueries}.

Now let $c$ and $c'$ be two distinct vertices from $W$ such that $T[p(M_p(c)),c]\subset T[p(M_p(c')),c']$. If $\hat{d}_{c}$ is an ancestor of $\hat{d}_{c'}$, then property $(i)$ of $(2)$ of Definition~\ref{definition:nestedqueries} is satisfied. So let us assume that this is not true, and let us consider a vertex $c''\in W$ (not necessarily distinct from $c'$) such that $T[p(M_p(c')),c']\subseteq T[p(M_p(c'')),c'']$. Now let us suppose, for the sake of contradiction, that there is a back-edge $(x,y)$ with $x\in T(\hat{d}_{c''})$ and $y\in T[p(M_p(c)),c]$.

Since $T[p(M_p(c)),c]\subset T[p(M_p(c')),c']$ and $T[p(M_p(c')),c']\subseteq T[p(M_p(c'')),c'']$, we infer that $T[p(M_p(c)),c]\subset T[p(M_p(c'')),c'']$. This implies that $c''$ is an ancestor of $c$, and that $c$ is an ancestor of $p(M_p(c''))$, and therefore an ancestor of $M_p(c'')$. Thus, Lemma~\ref{lemma:Mp} implies that $M_p(c'')$ is a descendant of $M_p(c)$. Then, since $x$ is a descendant of the child $\hat{d}_{c''}$ of $M_p(c'')$, we have that $x$ is a descendant of a child $d$ of $M_p(c)$ (and that $d$ is an ancestor of $\hat{d}_{c''}$). Then, due to the existence of the back-edge $(x,y)$, where $y\in T[p(M_p(c)),c]$, we infer that $\mathit{high}_p(d)\in T[p(M_p(c)),c]$. (This is a consequence of the maximality in the definition of $\mathit{high}_p(d)$, and of the fact that $\mathit{high}_p(d)$ must be a proper ancestor of $p(d)=M_p(c)$.) Furthermore, since $\mathit{low}(\hat{d}_{c''})<p(c'')$, we have $\mathit{low}(\hat{d}_{c''})<p(c)$. Therefore, since $T(\hat{d}_{c''})\subseteq T(d)$, we have $\mathit{low}(d)<p(c)$. Thus, since $\mathit{high}_p(d)\in T[p(M_p(c)),c]$ and $\mathit{low}(d)<p(c)$, we have $d=\hat{d}_c$ (due to the uniqueness in the definition of $\hat{d}_c$). 

With the same reasoning (applied for $c'$ and $c''$), we infer that there is a child $d'$ of $M_p(c')$ that is an ancestor of $x$, and that $d'=\hat{d}_{c'}$. Then, since $x$ is a common descendant of $\hat{d}_c$ and $\hat{d}_{c'}$, we have that $\hat{d}_c$ and $\hat{d}_{c'}$ are related as ancestor and descendant. Thus, since we work under the assumption that $\hat{d}_c$ is not an ancestor of $\hat{d}_{c'}$, we have that $\hat{d}_c$ is a proper descendant of $\hat{d}_{c'}$, and therefore $p(\hat{d}_c)=M_p(c)$ is a proper descendant of $p(\hat{d}_{c'})=M_p(c')$. But since $T[p(M_p(c)),c]\subset T[p(M_p(c')),c']$, using Lemma~\ref{lemma:Mp} as previously we have that $M_p(c')$ is a descendant of $M_p(c)$, a contradiction. This shows that $\mathcal{Q}$ satisfies property $(2)$ of Definition~\ref{definition:nestedqueries}.

\end{proof}

\begin{lemma}
\label{lemma:segmentpointsMp}
After a linear-time preprocessing on a DFS tree, we can construct an oracle that can answer any query of the following form in constant time: given two vertices $c$ and $d$ with $M_p(c)=M_p(d)$ and $c<p(d)$, return $L(d,p(p(d)),c)$ and $R(d,p(p(d)),c)$.
\end{lemma}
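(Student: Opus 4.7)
The plan is to reduce any valid query to a constant-time range-minimum lookup along the chain of $m := M_p(d) = M_p(c)$. Write this chain in decreasing DFS order as $v_0 = m, v_1, \ldots, v_k$, so that $d = v_i$ and $c = v_j$ for some $j > i$ with $v_j < p(v_i)$. The target segment $T[p(p(d)), c]$ lies entirely on the ancestor chain of $m$. A crucial simplification is that every $x$ contributing to $L(d, p(p(d)), c)$ or $R(d, p(p(d)), c)$ must satisfy $x \in T(M_p(d)) = T(m)$; hence the constraint $x \in T(d)$ collapses to the $d$-independent constraint $x \in T(m)$.

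First I would precompute, for each chain and each pair of consecutive chain vertices $(v_{j+1}, v_j)$, the values $L(m, p(v_j), v_{j+1})$ and $R(m, p(v_j), v_{j+1})$---the extremal $x \in T(m)$ having a back-edge whose lower endpoint lies on the tree-path segment $T[v_{j+1}, p(v_j)]$. The total number of these ``chain-edge queries'' is bounded by the sum of chain lengths, which is $O(n)$ because each vertex belongs to at most one chain. I would invoke Proposition~\ref{proposition:leftmostinsegment} to compute them all in linear time. The key verification is that this collection satisfies Definition~\ref{definition:nestedqueries}: two chain-edge segments associated with different $M_p$-chains are either disjoint or nested in accordance with how the chains themselves are nested along ancestor paths (which follows from Lemma~\ref{lemma:Mp}); the $z$-nestedness required by condition $(2)$ is automatic because for nested chains the inner $m$-vertex is a descendant of the outer $m$-vertex.

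Next, I would precompute for each vertex $d$ a ``bottom-tail'' value: the leftmost (resp.\ rightmost) $x \in T(M_p(d))$ having a back-edge whose lower endpoint falls in the residual sub-segment between $p(p(d))$ and the nearest chain vertex strictly above it. These are $O(n)$ additional queries, handled by another application of Proposition~\ref{proposition:leftmostinsegment}, after verifying Definition~\ref{definition:nestedqueries} via the same chain-nesting argument. I would then set up a linear-time RMQ oracle on each chain's array of chain-edge values, so that any contiguous range of chain positions can be minimized (or maximized) in $O(1)$ time.

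To answer an incoming query $L(d, p(p(d)), c)$ with $d = v_i$ and $c = v_j$, I would decompose $T[p(p(d)), c]$ into the bottom tail attached to $d$ plus a union of chain-edge segments indexed by a contiguous range of chain positions in $[i+1, j-1]$, and return the minimum of the bottom-tail value at $d$ and the RMQ result on the chain-edge array over this range; a symmetric decomposition yields the answer for $R$. The main obstacle I anticipate will be the boundary case analysis at the interface between the bottom tail and the chain-edge portion---particularly when $p(p(d))$ itself coincides with a chain vertex or when $v_{i+1}$ happens to equal $p(v_i)$---together with the careful nestedness verifications required to legitimize the two invocations of Proposition~\ref{proposition:leftmostinsegment}.
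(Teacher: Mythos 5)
Your skeleton matches the paper's proof: write the vertices with $M_p$ equal to $m$ as a decreasing chain, decompose $T[p(p(d)),c]$ into a bottom tail attached to $d$ plus a contiguous run of per-chain-edge segments, use the fact that every back-edge counted by the query emanates from $T(M_p(d))$ to make the first coordinate $d$-independent, and answer with an RMQ over the chain array. The gap is precisely at the step you yourself call ``the key verification.'' You assert that your collection of triples $(m,p(v_t),v_{t+1})$ satisfies Definition~\ref{definition:nestedqueries} because it ``follows from Lemma~\ref{lemma:Mp},'' but that verification is the entire technical content of this lemma (the paper spends most of a page on it for a \emph{smaller} collection), and the one concrete justification you give for condition $(2)$ is backwards: condition $(2)(i)$ requires the $z$-coordinate of the \emph{smaller} segment to be an \emph{ancestor} of the $z$-coordinate of the larger one, whereas you claim the inner $m$-vertex is a \emph{descendant} of the outer one --- which, if it were true, would defeat $(2)(i)$ rather than establish it. (The correct relation, derivable from the containment via Lemma~\ref{lemma:Mp}, is the opposite.)

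More importantly, your query collection is not the one the paper verifies, so its nestedness proof does not transfer. Your segments $T[p(v_t),v_{t+1}]$ include the vertex $p(v_t)$; the paper deliberately restricts Proposition~\ref{proposition:leftmostinsegment} to the segments $T[p(p(v_t)),v_{t+1}]$ (and only for $v_t$ with $\mathit{next}_{M_p}(v_t)\neq p(v_t)$), and absorbs the back-edges landing exactly on $p(v_t)$ into the separately scanned points $\widetilde{L},\widetilde{R}$, whose linear-time computation rests on Lemma~\ref{lemma:siblingsM}. Including $p(v_t)$ creates a potential laminarity violation that the paper's case analysis never has to face: two vertices $a,b$ on distinct chains with $\mathit{next}_{M_p}(b)=p(a)$, $p(b)$ a proper descendant of $p(a)$ other than $a$'s sibling level, and $\mathit{next}_{M_p}(a)$ a proper ancestor of $p(a)$; then your two segments meet in exactly the single vertex $p(a)$ and neither contains the other --- a crossing, not a nesting. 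This configuration can in fact be ruled out, but doing so requires a dedicated argument (one compares $B_p(p(a))$, $B_p(a)$, $B_p(b)$ and the children of the two $M_p$-vertices to reach a contradiction), and nothing in your proposal supplies it. So you must either carry out the full verification of Definition~\ref{definition:nestedqueries} for your enlarged segments, including this new case, or follow the paper's route of peeling off the $\{p(v_t)\}$ contribution before invoking Proposition~\ref{proposition:leftmostinsegment}.
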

\begin{proof}
Let $U$ be the collection of all vertices $d$ with $M_p(d)\neq\bot$ and $\mathit{next}_{M_p}(d)<p(d)$. Then, for every $d\in U$, we generate a triple $(d,p(p(d)),\mathit{next}_{M_p}(d))$, and let $\mathcal{Q}$ be the collection of all those triples. Now we will use Proposition~\ref{proposition:leftmostinsegment}, in order to derive all points $L(d,p(p(d)),\mathit{next}_{M_p}(d))$ and $R(d,p(p(d)),\mathit{next}_{M_p}(d))$, for $d\in U$. Thus, it is sufficient to establish that $\mathcal{Q}$ satisfies Definition~\ref{definition:nestedqueries}.

Let $d$ and $d'$ be two distinct vertices in $U$ such that the segments $S_1=T[p(p(d)),\mathit{next}_{M_p}(d)]$ and $S_2=T[p(p(d')),\mathit{next}_{M_p}(d')]$ intersect. This means that there is a vertex $w$ in $S_1\cap S_2$. Then, $w$ is a common descendant of $\mathit{next}_{M_p}(d)$ and $\mathit{next}_{M_p}(d')$, and therefore $\mathit{next}_{M_p}(d)$ and $\mathit{next}_{M_p}(d')$ are related as ancestor and descendant. Thus, we may assume w.l.o.g. that $\mathit{next}_{M_p}(d)$ is a descendant of $\mathit{next}_{M_p}(d')$. Since $S_1$ and $S_2$ intersect, this implies that $\mathit{next}_{M_p}(d)\in T[p(p(d')),\mathit{next}_{M_p}(d')]$, and therefore $\mathit{next}_{M_p}(d)$ is an ancestor of $p(p(d'))$, and therefore an ancestor of $d'$. Since $M_p(\mathit{next}_{M_p}(d'))=M_p(d')$, we have that $M_p(\mathit{next}_{M_p}(d'))$ is a descendant of $d'$. Therefore, since $\mathit{next}_{M_p}(d)$ is an ancestor of $d'$, it is also an ancestor of $M_p(\mathit{next}_{M_p}(d'))$. Thus, since $\mathit{next}_{M_p}(d)$ is a descendant of $\mathit{next}_{M_p}(d')$, Lemma~\ref{lemma:Mp} implies that $M_p(\mathit{next}_{M_p}(d))$ is an ancestor of $M_p(\mathit{next}_{M_p}(d'))$. Then, since $M_p(d)=M_p(\mathit{next}_{M_p}(d))$, we have that $M_p(d)$ is an ancestor of $M_p(\mathit{next}_{M_p}(d'))=M_p(d')$, and therefore $d$ is an ancestor of $M_p(d')$. Then, since $M_p(d')$ is a common descendant of $d'$ and $d$, we have that $d'$ and $d$ are related as ancestor and descendant. 

So let us suppose, for the sake of contradiction, that $d$ is not a proper ancestor of $d'$. Then we have that $d$ is a descendant of $d'$. Since $\mathit{next}_{M_p}(d)\in T[p(p(d')),\mathit{next}_{M_p}(d')]$, we have that $\mathit{next}_{M_p}(d)$ is an ancestor of $p(p(d'))$, and therefore an ancestor of $d'$. Thus, since $M_p(\mathit{next}_{M_p}(d))=M_p(d)$ is a descendant of $d$, which is a descendant of $d'$, Lemma~\ref{lemma:Mp}, implies that $M_p(d')$ is an ancestor of $M_p(d)$. Then, since $M_p(d)$ is an ancestor of $M_p(d')$, we infer that $M_p(d)=M_p(d')$. But then, since $d$ is a descendant of $d'$ with $d\neq d'$, we have $\mathit{next}_{M_p}(d)\geq d'$, which implies that $\mathit{next}_{M_p}(d)$ cannot be an ancestor of $p(d')$, a contradiction. This shows that $d$ is a proper ancestor of $d'$, and therefore $p(p(d))$ is a proper ancestor of $p(p(d'))$, and therefore $S_1\subset S_2$. Thus, we have established that $\mathcal{Q}$ satisfies property $(1)$ of Definition~\ref{definition:nestedqueries}.


Now, in order to show that $\mathcal{Q}$ satisfies property $(2)$ of Definition~\ref{definition:nestedqueries}, let $d$ and $d'$ be two distinct vertices in $U$ such that the segments $S_1=T[p(p(d)),\mathit{next}_{M_p}(d)]$ and $S_2=T[p(p(d')),\mathit{next}_{M_p}(d')]$ satisfy $S_1\subset S_2$. This implies that $\mathit{next}_{M_p}(d)\in S_2$, and therefore $\mathit{next}_{M_p}(d)$ is a descendant of $\mathit{next}_{M_p}(d')$. Then we can work as previously, in order to conclude that $d$ is an ancestor of $d'$. Thus, property $(i)$ of $(2)$ of Definition~\ref{definition:nestedqueries} is satisfied for the triples $(d,p(p(d)),\mathit{next}_{M_p}(d))$ and $(d',p(p(d')),\mathit{next}_{M_p}(d'))$.

Thus, by Proposition~\ref{proposition:leftmostinsegment}, we can have available the points $L(d,p(p(d)),\mathit{next}_{M_p}(d))$ and $R(d,p(p(d)),\mathit{next}_{M_p}(d))$, for all $d\in U$, in linear time in total. 

Now let $U'$ be the set of all vertices $d$ with $\mathit{next}_{M_p}(d)\neq\bot$. We will need to collect all points of the form $L(M_p(d),p(d),\mathit{next}_{M_p}(d))$ and $R(M_p(d),p(d),\mathit{next}_{M_p}(d))$, for $d\in U'$. For this purpose, we define, for every vertex $d\in U'$, the two points $\widetilde{L}(d)$ and $\widetilde{R}(d)$, as the minimum and the maximum, respectively, descendant $x$ of $M_p(d)$, for which there exists a back-edge of the form $(x,p(d))$. Then, notice that, if $d\in U'$ with $\mathit{next}_{M_p}(d)=p(d)$, then $L(M_p(d),p(d),\mathit{next}_{M_p}(d))=\widetilde{L}(d)$ (and the same is true if  ``$L$" is replaced with ``$R$"). Furthermore, if $d\in U'$ with $\mathit{next}_{M_p}(d)<p(d)$, then $L(M_p(d),p(d),\mathit{next}_{M_p}(d))=\min\{\widetilde{L}(d), L(d,p(p(d)),\mathit{next}_{M_p}(d))\}$ (and the same is true if every ``$L$" is replaced with ``$R$", and ``$\min$" with ``$\max$"). In case this last claim is not entirely obvious, consider that every back-edge $(x,y)$ where $x$ is a descendant of $d$ and $y$ is an ancestor of $p(p(d))$, actually satisfies that $x$ is a descendant of $M_p(d)$. Thus, $L(M_p(d),p(d),\mathit{next}_{M_p}(d))\leq L(d,p(p(d)),\mathit{next}_{M_p}(d))$. (And $\widetilde{L}(d)$ considers the back-edges that start from a descendant of $M_p(d)$ and end precisely at $p(d)$.)

Now, computing all $\widetilde{L}(d)$ and $\widetilde{R}(d)$, for $d\in U'$, in total linear time, is very easy: just consider, for every $d\in U'$, all back-edges of the form $(x,p(d))$, where $x$ is a descendant of $M_p(d)$, and take the minimum and the maximum of all such $x$, respectively. This is obviously correct, and the linear-time bound is guaranteed by Lemma~\ref{lemma:siblingsM} (which ensures that, since $\mathit{next}_{M_p}(d)\neq\bot$ for $d\in U'$, we have that $d$ is the only child of $p(d)$ that is in $U'$). 

Now let $c$ and $d$ be two vertices with $M_p(c)=M_p(d)$ and $c<p(d)$. We will demonstrate the relation between $L(d,p(p(d)),c)$ and the above points we have discussed. (And the arguments for $R(d,p(p(d)),c)$ are similar.) Let $x_1,\dots,x_N$ be the sequence of vertices with $x_1=d$, $x_N=c$, and $x_{i+1}=\mathit{next}_{M_p}(x_i)$, for $i\in\{1,\dots,N-1\}$. (Notice that $N\geq 2$.) Now we define a set of values $L_1,\dots,L_{N-1}$ as follows. If $\mathit{next}_{M_p}(d)<p(d)$, we let $L_1=L(d,p(p(d)),x_2)$. Otherwise, (i.e., if $\mathit{next}_{M_p}(d)=p(d)$), we let $L_1=\bot$. Furthermore, we let $L_i=L(M_p(x_i),p(x_i),x_{i+1})$, for every $i\in\{2,\dots,N-1\}$. Now we claim that $L(d,p(p(d)),c)=\min\{L_1,\dots,L_{N-1}\}$ (where if $L_1=\bot$, then the $\min$ just ignores it). To see this, let $x=L(d,p(p(d)),c)$. This means that $x$ is the minimum descendant of $d$ such that there is a back-edge $(x,y)$ with $y\in T[p(p(d)),c]$. Thus, since $y<p(d)$, we have that $(x,y)\in B_p(d)$, and therefore $x$ is a descendant of $M_p(d)$. Furthermore, since $y\in T[p(p(d)),c]$, either there is an $i\in\{2,\dots,N-1\}$ such that $y\in T[p(x_i),x_{i+1}]$, or $y\in T[p(p(d)),\mathit{next}_{M_p}(d)]$ and $\mathit{next}_{M_p}(d)<p(d)$. This shows that $\min\{L_1,\dots,L_{N-1}\}\leq x$. Conversely, it should be clear that $x\leq L_i$ for every $i\in\{1,\dots,N-1\}$, and this establishes that $x=\min\{L_1,\dots,L_{N-1}\}$.

Thus, we make the following preprocessing in order to construct the oracle. For every vertex $z$, we construct the (possibly empty) list of all vertices $d$ with $M_p(d)=z$, and we have it sorted in decreasing order. We denote this list as $M_p^{-1}(z)$. Then, notice that the total size of all $M_p^{-1}(z)$, for all vertices $z$, is $O(n)$, and their construction can be completed in $O(n)$ time with bucket-sort. Now, for every $z$ with $|M_p^{-1}(z)|>1$, we create an array $\mathcal{A}(z)$ of size $|M_p^{-1}(z)|-1$, with the property that the $i$-th element of $\mathcal{A}(z)$ is $L(M_p(d),p(d),\mathit{next}_{M_p}(d))$, where $d$ is the $i$-th element of $M_p^{-1}(z)$. (Notice that we exclude an entry for the last element of $M_p^{-1}(z)$, because this has $\mathit{next}_{M_p}=\bot$.) Then, we initialize a range-minimum query (RMQ) data structure \cite{DBLP:conf/cpm/FischerH06} on $\mathcal{A}(z)$, that takes $O(|\mathcal{A}(z)|)$ time to be constructed, and can answer any query of the form ``what is the minimum entry of $\mathcal{A}(z)$ between the $i$-th and the $j$-th index?'', in constant time. To complete the description of the data structure, we also compute a pointer for every vertex $c$ to its index $\mathit{index}(c)$ in $M_p^{-1}(M_p(c))$. It is clear that all these ingredients can be computed in linear time. 

Now, if we are given two vertices $c$ and $d$ with $M_p(c)=M_p(d)$ and $c<p(d)$, we follow the method described previously, in order to compute $L(d,p(p(d)),c)$. Thus, if $c=\mathit{next}_{M_p}(d)$, then we simply return $L(d,p(p(d)),c)$, as this is already computed. Otherwise, we return the minimum of $L_1$ and $L_2$, which are computed as follows. If $\mathit{next}_{M_p}(d)\neq p(d)$, then $L_1=L(d,p(p(d)),\mathit{next}_{M_p}(d))$ (which is already available); otherwise, $L_1=\bot$. And $L_2$ is the answer to the RMQ query on $\mathcal{A}(M_p(d))$ between the indices $\mathit{index}(d)+1$ and $\mathit{index}(c)-1$. Thus, we can get the answer for $L(d,p(p(d)),c)$ in constant time.
\end{proof}

\section{The case where $u$ is an ancestor of both $v$ and $w$, but $v,w$ are not related as ancestor and descendant}
\label{section:vwnotrelated}
Let $F=\{u,v,w\}$ be the set of failed vertices. In the case where $u$ is an ancestor of both $v$ and $w$, but $v,w$ are not related as ancestor and descendant, we distinguish two cases, depending on whether $v$ and $w$ are descendants of different children of $u$, or of the same child of $u$.

\subsection{The case where $v$ and $w$ are descendants of different children of $u$}
\label{section:vwdescendantsofcc'}
Here we consider the case where $v$ and $w$ are descendants of two different children $c$ and $c'$ of $u$, respectively. (These children can be identified in constant time with a level-ancestor query as explained in Section~\ref{section:basicDFS}.) Thus, after the removal of $\{u,v,w\}$, by Assumption~\ref{assumption1} we have three internal components on $T$: $(A)$ the set of vertices of $T$ that are not descendants of $u$; $(B)$ the set of vertices of $T$ that are descendants of $c$ but not of $v$; and $(C)$ the set of vertices of $T$ that are descendants of $c'$ but not of $w$. (See Figure~\ref{figure:notrelated1}.) Then, our goal is to determine whether $B$ is connected with $A$, and whether $C$ is connected with $A$, either directly with a back-edge, or through hanging subtrees (of $v$ and $w$, respectively).

\begin{figure}[h!]\centering
\includegraphics[trim={0cm 15cm 0 0cm}, clip=true, width=1\linewidth]{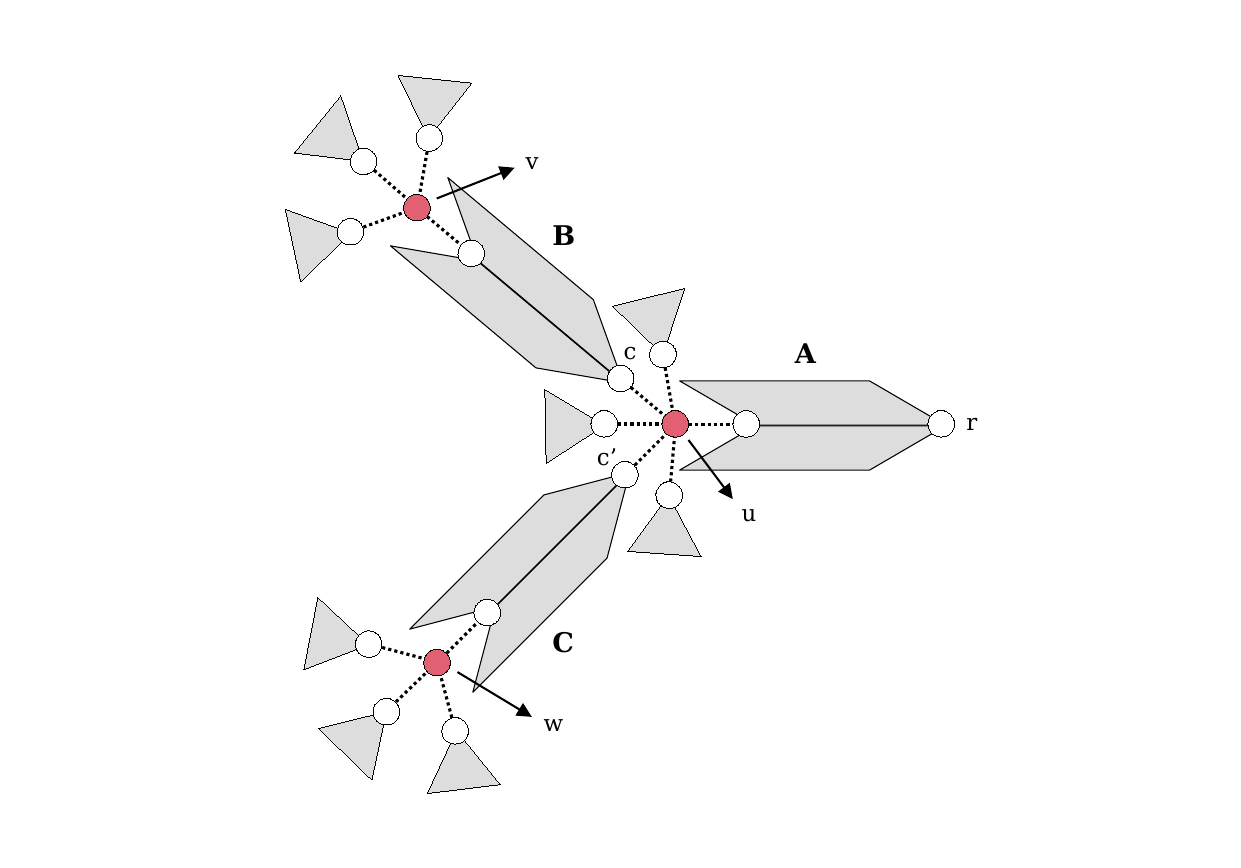}
\caption{\small{An illustration of the case where $u$ is an ancestor of $v$ and $w$, but $v$ and $w$ are descendants of different children of $u$. (Here, $v$ is a descendant of $c$, and $w$ is a descendant of $c'$.) The goal is to check whether the parts $A$, $B$ and $C$ remain connected. To do this, we rely on Lemma~\ref{lemma:lbelowwonT_high_MAIN}, and we check the connections $B$-$A$ and $C$-$A$ separately.}}\label{figure:notrelated1}
\end{figure}

We will show how to determine whether $B$ is connected with $A$, since we can follow a similar procedure to determine the connection of $C$ and $A$. Here we will use the criterion provided by Lemma~\ref{lemma:lbelowwonT_high_MAIN}. Thus, we will use the leftmost and the rightmost points $L_p(c)$ and $R_p(c)$, respectively, of $c$. Specifically, we consider both $L_p(c)$ and $R_p(c)$ on $T_\mathit{highDec}$. Now, if both $L_p(c)$ and $R_p(c)$ are $\bot$, then we conclude that $B$ is not connected with $A$. So let us consider the case that these are not $\bot$. Now, if either of those is in $B$, then we have that $B$ is connected with $A$ directly with a back-edge. Otherwise, we have that both $L_p(c)$ and $R_p(c)$ are descendants of $v$ (and there is no back-edge that connects $B$ and $A$ directly). Then, it is sufficient to consider the child $d$ of $v$ that is an ancestor of $L_p(c)$ (which can be identified with a level-ancestor query as explained in Section~\ref{section:basicDFS}), and then check whether $\mathit{high}_p(d)\in B$. Since we work on $T_\mathit{highDec}$, it is easy to see that there is a hanging subtree of $v$ that connects $B$ and $A$ if and only if $\mathit{high}_p(d)\in B$ (i.e., because, if such a hanging subtree exists, then $T(d)$ is definitely one of those). For a technical proof, we refer to Lemma~\ref{lemma:lbelowwonT_high_MAIN}.

\subsection{The case where $v$ and $w$ are descendants of the same child of $u$}
\label{section:vwdescendantsofc}
Here we consider the case where $v$ and $w$ are descendants of the same child $c$ of $u$ (which can be identified in constant time with a level-ancestor query as explained in Section~\ref{section:basicDFS}). Thus, after the removal of $\{u,v,w\}$, by Assumption~\ref{assumption1} there are two internal components on $T$: $(A)$ the set of vertices of $T$ that are not descendants of $u$, and $(B)$ the set of vertices of $T$ that are descendants of $c$ but not of $v$ or $w$. (See Figure~\ref{figure:notrelated2}.) Then, our goal is to determine whether $B$ is connected with $A$, either directly with a back-edge, or through hanging subtrees (of either $v$ or $w$).

\begin{figure}[h!]\centering
\includegraphics[trim={0cm 20cm 0 0cm}, clip=true, width=1\linewidth]{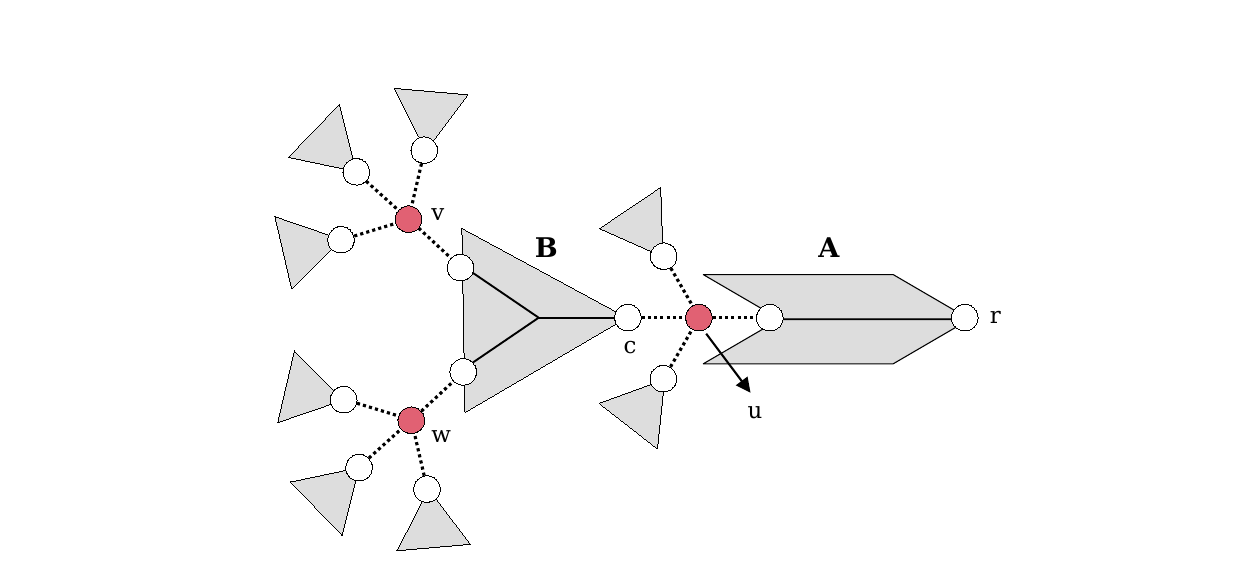}
\caption{\small{An illustration of the case where $v$ and $w$ are descendants of the same child $c$ of $u$, but they are not related as ancestor and descendant. The goal is to check whether the parts $A$ and $B$ remain connected. To do this, we consider the leftmost and rightmost points, $L_p(c)$ and $R_p(c)$, of $c$. If either of those is on $B$, then we are done. Otherwise, we proceed as described in the main text.}}\label{figure:notrelated2}
\end{figure}

Here, as previously, we will use the leftmost and the rightmost points $L_p(c)$ and $R_p(c)$, respectively, of $c$ (see Section~\ref{section:leftmostrightmost}). Specifically, we consider both $L_p(c)$ and $R_p(c)$ on $T_\mathit{highDec}$. If either of those is $\bot$, then $B$ is not connected with $A$. So let us consider the case where $L_p(c)$ and $R_p(c)$ are not $\bot$. Now, if either of those is in $B$, then we obviously have that $B$ is connected with $A$ directly with a back-edge. So let us consider the case that neither of them is in $B$. This implies that each of $L_p(c)$ and $R_p(c)$ is a descendant of either $v$ or $w$. 

Suppose first that both $L_p(c)$ and $R_p(c)$ are descendants of one of $v$ and $w$, and let us assume w.l.o.g. that both of them are descendants of $v$. Then, by Assumption~\ref{assumption2}, we have that $L_p(c)$ and $R_p(c)$ are proper descendants of $v$. Now, it is sufficient to consider the child $d$ of $v$ that is an ancestor of $L_p(c)$ (which can be identified with a level-ancestor query as explained in Section~\ref{section:basicDFS}), and then check whether $\mathit{high}_p(d)\in B$. Since we work on $T_\mathit{highDec}$, it is easy to see that there is a hanging subtree of $v$ that connects $B$ and $A$ if and only if $\mathit{high}_p(d)\in B$ (i.e., because, if such a hanging subtree exists, then $T(d)$ is definitely one of those). For a technical proof, we refer to the argument for Lemma~\ref{lemma:lbelowwonT_high_MAIN}.
 
So, finally, it remains to consider the case where one of $L_p(c)$ and $R_p(c)$ is a descendant of $v$, and the other is a descendant of $w$. Then, notice that Assumption~\ref{assumption2} implies that $L_p(c)$ and $R_p(c)$ are descendants of different children of $M_p(c)$ (i.e., because $L_p(c)$ cannot coincide with $M_p(c)$, because $L_p(c)$ provides back-edges, but $M_p(c)$ is not a leaf, because it has two distinct descendants). Now we will switch our consideration for a while to the tree $T_\mathit{lowInc}$. Since $L_p(c)$ and $R_p(c)$ are descendants of different children of $M_p(c)$, we have that $M_p(c)$ contains at least two children with $\mathit{low}$ point less than $p(c)=u$, and therefore the first two children of $M_p(c)$ on $T_\mathit{lowInc}$ have $\mathit{low}<u$. If $M_p(c)$ has a third child which also satisfies this property, then we can conclude that there is a back-edge that connects $B$ and $A$ directly. So let us consider the case that only the first two children $d$ and $d'$ of $M_p(c)$ have $\mathit{low}<u$. Then, one of those is the one that is an ancestor of $v$, and the other is the one that is an ancestor of $w$. So let us assume w.l.o.g. that $d$ is an ancestor of $L_p(c)$ and $d'$ is an ancestor of $R_p(c)$.

Now, since $d$ and $d'$ are two children of $M_p(c)$ that are uniquely determined by $c$ (i.e., they are the only two children of $M_p(c)$ with $\mathit{low}<p(c)$), we can afford to have computed, during the preprocessing phase, the leftmost and the rightmost points $L(c,d)$, $R(c,d)$, $L(c,d')$ and $R(c,d')$ (see Section~\ref{section:leftmostrightmost}, and, in particular, Proposition~\ref{proposition:L(v,d)}). Specifically, we compute those points on $T_\mathit{highDec}$. Then, we can work as previously, in order to determine whether $B$ and $A$ are connected with a back-edge directly, or through a hanging subtree. If either of those four points is on $B$, then $B$ is connected with $A$ directly with a back-edge. Otherwise, we consider the child $f$ of $v$ that is an ancestor of $L(c,d)$, and the child $f'$ of $w$ that is an ancestor of $L(c,d')$, and we check whether the $\mathit{high}_p$ point of at least one of those is on $B$. By following the same argument as in the proof of Lemma~\ref{lemma:lbelowwonT_high_MAIN}, we can see that there is a hanging subtree of $v$ or of $w$ that connects $B$ and $A$, if and only if $\mathit{high}_p(f)\in B$ or $\mathit{high}_p(f')\in B$, respectively.

\section{The case where $u$ is an ancestor of $v$, and $v$ is an ancestor of $w$}
\label{section:vwrelated}
In this section we deal with the most difficult case, which is that where $u$ is an ancestor of $v$, and $v$ is an ancestor of $w$. By Assumption~\ref{assumption1}, we have that $v$ is a proper descendant of a child $c$ of $u$, and $w$ is a proper descendant of a child $d$ of $v$. Furthermore, after the removal of $\{u,v,w\}$ there are three internal components on $T$: $(A)$ the set of vertices that are not descendants of $u$; $(B)$ the set of vertices that are descendants of $c$, but not of $v$; and $(C)$ the set of vertices that are descendants of $d$, but not of $w$. (See Figure~\ref{figure:ABC}.) Our purpose will be to determine the connection of the parts $A$, $B$ and $C$ through the existence of back-edges that connect them directly, or through hanging subtrees, by utilizing our DFS-based parameters. (Note: when we talk about the connection of $A$, $B$ and $C$, from now on we mean their connection in $G\setminus\{u,v,w\}$.) We found that it is convenient to distinguish the following six cases: $(1)$ the case where $M_p(c)=\bot$, $(2)$ the case where $M_p(c)\in B$, $(3)$ the case where $M_p(c)=v$, $(4)$ the case where $M_p(c)\in C$, $(5)$ the case where $M_p(c)=w$, and $(6)$ the case where $M_p(c)$ is a descendant of a child $d'$ of $w$. It should be clear that these cases are mutually exclusive and exhaust all possibilities for $M_p(c)$. We will consider those cases in turn, subdividing them further in a convenient manner as we proceed.

\begin{figure}[h!]\centering
\includegraphics[trim={0cm 23cm 0 0cm}, clip=true, width=1\linewidth]{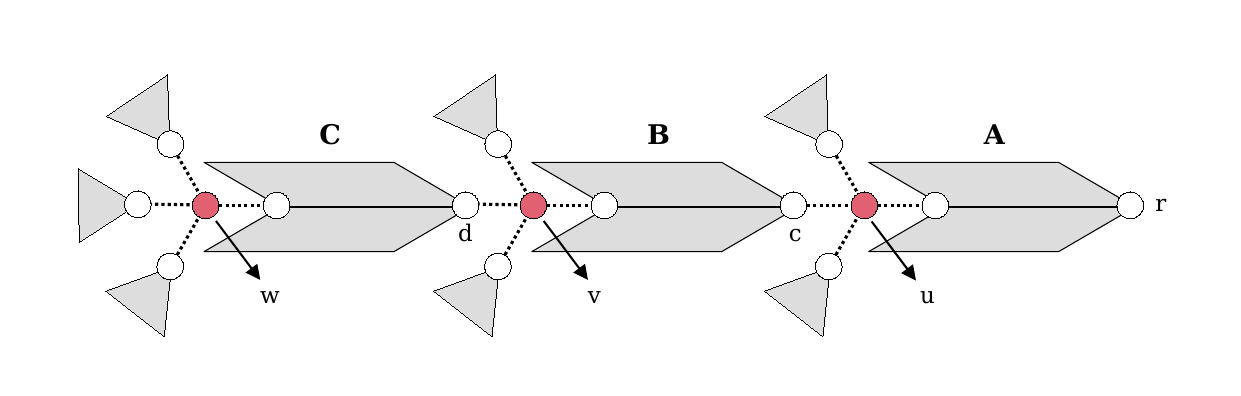}
\caption{\small{The case where $u$ is an ancestor of $v$, and $v$ is an ancestor of $w$. $c$ is the child of $u$ in the direction of $v$, and $d$ is the child of $v$ in the direction of $w$. The DFS tree is split into three internal components: $A=T(r)\setminus T(u)$, $B=T(c)\setminus T(v)$, and $C=T(d)\setminus T(w)$. The goal is to determine whether some of the parts $A$, $B$ and $C$, are connected directly with back-edges, or through the mediation of hanging subtrees. Our approach is to distinguish between the six distinct cases w.r.t. the location of $M_p(c)$: $(1)$ $M_p(c)=\bot$, $(2)$ $M_p(c)\in B$, $(3)$ $M_p(c)=v$, $(4)$ $M_p(c)\in C$, $(5)$ $M_p(c)=w$, and $(6)$ $M_p(c)$ is a proper descendant of $w$.}}\label{figure:ABC}
\end{figure} 

\subsection{The case where $M_p(c)=\bot$}
\label{section:M(c)=bot}
This is the easiest case for $M_p(c)$, and it provides an opportunity to demonstrate some of the techniques that we are going to use in the more demanding cases. Notice that the condition $M_p(c)=\bot$ implies that there is no back-edge $(x,y)$ with $x\in T(c)$ and $y\in A$, and therefore $A$ is isolated from the parts $B$ and $C$. Thus, we only have to determine whether $B$ is connected with $C$. To do this, we find it convenient to distinguish the following three cases: $(1)$ the case where $M_p(d)$ is a descendant of a child $d'$ of $w$, $(2)$ the case where $M_p(d)=w$, and $(3)$ the case where $M_p(d)\in C$. It should be clear that these cases are mutually exclusive and exhaust all possibilities for $M_p(d)$, except for the case $M_p(d)=\bot$. However, the case where $M_p(d)=\bot$ is trivial, because it implies that $B$ is not connected with $C$, because it means that there is no back-edge $(x,y)$ with $x\in T(d)$ and $y<p(d)$, and therefore no back-edge $(x,y)$ with $x\in T(d)$ and $y\in B$.

\subsubsection{The case where $M_p(d)$ is a descendant of a child of $w$}
In the case where $M_p(d)$ is a descendant of a child $d'$ of $w$ we have that there is no back-edge that connects $B$ and $C$ directly, and that the subtree with root $d'$ is the only hanging subtree that may connect $B$ and $C$. To see this, it is sufficient to observe that, since $M_p(d)\in T(d')$, we have that all back-edges $(x,y)$ with $x\in T(d)$ and $y<p(d)$ satisfy $x\in T(M_p(d))\subseteq T(d')$. 

Thus, we have that $B$ and $C$ are connected if and only if: $(1)$ there is a back-edge $(x,y)$ with $x\in T(d')$ and $y\in C$, \textbf{and} $(2)$ there is a back-edge $(x',y')$ with $x'\in T(d')$ and $y'\in B$. It is easy to see that the existence of a back-edge satisfying $(1)$ is equivalent to $\mathit{high}_p(d')\in C$. In order to check for the existence of a back-edge satisfying $(2)$, we first observe that $\mathit{low}(d')\geq u$ (which is a consequence of $M_p(c)=\bot$). Thus, it should be clear that the existence of a back-edge satisfying $(2)$ is equivalent to $\mathit{low}_1(d')\in B$ or $\mathit{low}_2(d')\in B$. (We need to check $\mathit{low}_2(d')\in B$ because we may have $\mathit{low}_1(d')=u$, in which case we must ask for the second lowest lower endpoint of a back-edge that stems from $T(d')$ and ends in a proper ancestor of $w$.)

\subsubsection{The case where $M_p(d)=w$}
\label{section:M(d)=w,M(c)=bot}
In this case, as in the previous one, we have that there is no back-edge that connects $B$ and $C$ directly. But, contrary to the previous case, there may exist several children of $w$ that are roots of subtrees that connect $B$ and $C$. However, it is sufficient (and necessary) to determine the existence of one such child. Notice that, as analyzed in the previous case, if there is a child $d'$ of $w$ that connects $B$ and $C$, then it has $(1)$ $\mathit{high}_p(d')\in C$ and $(2)$ either $\mathit{low}_1(d')\in B$ or $\mathit{low}_2(d')\in B$. Conversely, if a child $d'$ of $w$ satisfies $(1)$ and $(2)$, then it connects $B$ and $C$.

Now, in order to determine the existence of a child $d'$ of $w$ that satisfies $(1)$ and $(2)$, it is sufficient to answer the following question:\\

$(*)$ among the children of $w$ that have $\mathit{high}_p\in C$, is there at least one that has either $\mathit{low}_1\in B$ or $\mathit{low}_2\in B$?\\

To answer this question in constant time, it is sufficient to know the following three parameters: among the children of $w$ that have $\mathit{high}_p\in C$, what is $(i)$ the lowest $\mathit{low}_1$ point, $(ii)$ the second-lowest $\mathit{low}_1$ point, and $(iii)$ the lowest $\mathit{low}_2$ point. First we will show that these parameters are indeed sufficient in order to answer question $(*)$ (by checking whether at least one of them is in $B$), and then we will show how we can have them computed during a linear-time preprocessing.

First, since we are working under the assumption $M_p(c)=\bot$, notice that the $\mathit{low}$ point of any child of $w$ must be at least $u$. Now let us assume that there is a child $d'$ of $w$ that satisfies $(1)$ and $(2)$. Then, we have that either $\mathit{low}_1(d')\in B$ or $\mathit{low}_2(d')\in B$. First, suppose that $\mathit{low}_1(d')\in B$. We have that $(i)$ can be at least $u$. If $(i)$ is $u$, then the existence of $d'$ implies that $(ii)$ is in $B$. Otherwise, we have that $(i)$ is greater than $u$, and the existence of $d'$ implies that $(i)$ is in $B$. Now suppose that $\mathit{low}_1(d')\notin B$, and so $\mathit{low}_2(d')\in B$. Since the $\mathit{low}_1$ point of every child of $w$ is at least $u$, every child of $w$ with $\mathit{low}_2\neq\bot$ has its $\mathit{low}_2$ point greater than $u$. Furthermore, since $\mathit{low}_1(d')\notin B$ and $\mathit{low}_2(d')\in B$, we infer that $\mathit{low}_1(d')=u$. Thus, the existence of $d'$ implies that $(iii)$ is in $B$. Conversely, it is obvious that, if at least one of the parameters $(i)$, $(ii)$ and $(iii)$ is in $B$, then there exists a child $d'$ of $w$ with $(1)$ $\mathit{high}_p(d')\in C$ and $(2)$ either $\mathit{low}_1(d')\in B$ or $\mathit{low}_2(d')\in B$.

Now we will show how to compute the parameters $(i)$, $(ii)$ and $(iii)$ in the preprocessing phase. First, notice that such parameters are associated with every vertex $d$ that has $M_p(d)\neq\bot$. (To see this, observe, first, that $w=M_p(d)$, and second, that it is sufficient to replace the part ``$C$" that appears in the definition of those parameters with $T[p(M_p(d)),d]$, since the $\mathit{high}_p$ points under consideration are those of children of $w$, and therefore they are proper ancestors of $w$.) Then, notice that these parameters are an invariant across all permutations of the base DFS tree $T$, and that it is most convenient to work with $T_\mathit{highDec}$. Since the children of $w$ are sorted in decreasing order w.r.t. their $\mathit{high}_p$ point on $T_\mathit{highDec}$, we have that the set of children of $w$ with the property that their $\mathit{high}_p$ point is on $T[p(w),d]$ form an initial (possibly empty) segment of the list of the children $w$. Then, if we process the vertices $d$ in decreasing order, this initial segment can only become larger. Thus, it is sufficient to maintain three variables associated with $w$, that store the lowest $\mathit{low}_1$ point, the second lowest $\mathit{low}_1$ point, and the lowest $\mathit{low}_2$ point that were encountered so far. It is trivial to update those variables as we expand the initial segment of the list of children of $w$. For an illustration, the whole procedure is shown in Algorithm~\ref{algorithm:three_low_parameters}. It is easy to see that it has a linear-time implementation.  

\begin{algorithm}[h!]
\caption{\textsf{Compute the parameters $(i)$, $(ii)$ and $(iii)$ that are used in the case $M_p(c)=\bot$ and $M_p(d)=w$.}}
\label{algorithm:three_low_parameters}
\LinesNumbered
\DontPrintSemicolon

\ForEach{vertex $d$}{
  $\mathtt{lowest\_low1}[d]\leftarrow \bot$\;
  $\mathtt{second\_lowest\_low1}[d]\leftarrow \bot$\;
  $\mathtt{lowest\_low2}[d]\leftarrow \bot$\;
   \textcolor{cblue}{\tcp{these correspond to $(i)$, $(ii)$ and $(iii)$, respectively, as defined in the main text}}
}

\ForEach{vertex $w$}{
  $\mathit{currentChild}[w]\leftarrow $ first child of $w$ on $T_\mathit{highDec}$\;
  $\mathit{firstParam}[w]\leftarrow\bot$\;
  $\mathit{secondParam}[w]\leftarrow\bot$\;
  $\mathit{thirdParam}[w]\leftarrow\bot$\;
  
}

\For{$d\leftarrow n$ to $d=2$}{
  \If{$M_p(d)=\bot$ \textbf{or} $d=M_p(d)$}{\textbf{continue}\;}
  $w\leftarrow M_p(d)$\;
  $c\leftarrow\mathit{currentChild}[w]$\;
  \While{$\mathit{high}_p(c)\in T[p(w),d]$}{
  \label{line:whileofiiiiii}
    \If{$\mathit{firstParam}[w]=\bot$ \textbf{or} $\mathit{low}_1(c)<\mathit{firstParam}[w]$}{
      $\mathit{secondParam}[w]\leftarrow\mathit{firstParam}[w]$\;
      $\mathit{firstParam}[w]\leftarrow\mathit{low}_1(c)$\;
    }
    \ElseIf{(($\mathit{secondParam}[w]=\bot$ \textbf{or} $\mathit{low}_1(c)<\mathit{secondParam}[w]$) \textbf{and} $\mathit{low}_1(c)\neq\mathit{firstParam}[w]$)}{
      $\mathit{secondParam}[w]\leftarrow\mathit{low}_1(c)$\;
    }
    \If{$\mathit{low}_2(c)<\mathit{thirdParam}[w]$}{
      $\mathit{thirdParam}[w]\leftarrow\mathit{low}_2(c)$\;
    }
    $c\leftarrow$ next child of $w$ on $T_\mathit{highDec}$\;
  }
  $\mathit{currentChild}[w]\leftarrow c$\;
  $\mathtt{lowest\_low1}[d]\leftarrow\mathit{firstParam}[w]$\;
  $\mathtt{second\_lowest\_low1}[d]\leftarrow\mathit{secondParam}[w]$\;
  $\mathtt{lowest\_low2}[d]\leftarrow\mathit{thirdParam}[w]$\;
}

\end{algorithm}

\subsubsection{The case where $M_p(d)\in C$}
\label{section:M(d)inC,M(c)=bot}

In the case where $M_p(d)\in C$, we have that at least one of $L_p(d)$ and $R_p(d)$ is not a descendant of $w$, and therefore it is on $C$. Let us assume that $L_p(d)\in C$. (The case where $R_p(d)\in C$ is treated similarly.) Then, since $M_p(c)=\bot$, we have $l_1(L_p(d))\geq u$. If $l_1(L_p(d))>u$, then $l_1(L_p(d))\in B$ (because $l_1(L_p(d))$ must be a proper ancestor of $p(d)=v$). Thus, $B$ and $C$ are connected directly with the back-edge $(L_p(d),l_1(L_p(d)))$. So let us assume that $l_1(L_p(d))=u$. Then we can use the leftmost and the rightmost points, $\widetilde{L}$ and $\widetilde{R}$, respectively, of $d$, that skip $u$ on $T_\mathit{highDec}$. We can have those points computed during the preprocessing phase according to Proposition~\ref{proposition:skipping}. Specifically, during the preprocessing phase, we generate a query, for every vertex $d$, for the leftmost and the rightmost points of $d$ that skip $l_1(L_p(d))$, on the tree $T_\mathit{highDec}$. We collect all those queries, and we use Proposition~\ref{proposition:skipping} to compute all of them in linear time in total. (And we do the same w.r.t. the $l_1(R_p(\cdot))$ points, in order to prepare for dealing with the case that $R_p(d)\in C$.) 

Now, if both $\widetilde{L}$ and $\widetilde{R}$ are $\bot$, then we conclude that $B$ is not connected with $C$ (because all back-edges $(x,y)$ with $x$ a descendant of $d$ and $y$ a proper ancestor of $p(d)=v$ have $y=u$). If either of those is on $C$, then we have that $B$ is connected with $C$ directly with a back-edge. (E.g., if $\widetilde{L}\in C$, then either $(\widetilde{L},l_1(\widetilde{L}))$ or $(\widetilde{L},l_2(\widetilde{L}))$ connects $B$ and $C$, because we have that either $l_1(\widetilde{L})>u$, or $l_1(\widetilde{L})=u$ and $l_2(\widetilde{L})>u$ and $l_2(\widetilde{L})<p(d)=v$.) So let us assume that neither of $\widetilde{L}$ and $\widetilde{R}$ is on $C$. This implies that both $\widetilde{L}$ and $\widetilde{R}$ are descendants of $w$. Due to Assumption~\ref{assumption2}, this implies that both of them are proper descendants of $w$. So let $d'$ be the child of $w$ that is an ancestor of $\widetilde{L}$. (We note that $d'$ can be computed in constant time using a level-ancestor query as explained in Section~\ref{section:basicDFS}.) Then it is sufficient to check whether $\mathit{high}_p(d')\in C$, because we have that $C$ is connected with $B$ if and only if $\mathit{high}_p(d')\in C$. To see this, consider first that, since both $\widetilde{L}$ and $\widetilde{R}$ are descendants of $w$, there are no back-edges from $B$ to $C$. (I.e., because every back-edge $(x,y)$ with $x\in B$ and $y<p(d)$ must satisfy that $y=u$.) Thus, if $C$ and $B$ are connected, this can only be through hanging subtrees of $w$. Then, by applying almost the same argument that was used in the proof of Lemma~\ref{lemma:lbelowwonT_high_MAIN} (where we replace vertex ``$c$" with ``$d$", ``$v$" with ``$w$", and ``$d$" with ``$d'$", and the set of vertices ``$A$" with ``$B$", and ``$B$" with ``$C$"), we can see that, since we are working on $T_\mathit{highDec}$, there is a hanging subtree of $w$ that connects $C$ and $B$ if and only if $T(d')$ has this property. The only difference with Lemma~\ref{lemma:lbelowwonT_high_MAIN} is that here we can use only those descendants $x$ of $d$ from which stem back-edges of the form $(x,y)\in B_p(d)$ with $y\neq u$, and this is why we use $\widetilde{L}$ instead of $L_p(d)$.

\subsection{The case where $M_p(c)\in B$}
\label{section:M(c)inB}
In the case where $M_p(c)\in B$, we have that $B$ is connected with $A$ directly with a back-edge. To see this, consider the following. Since $M_p(c)\in B$, we have that $M_p(c)$ is not a descendant of $v$. Thus, there is a back-edge $(x,y)\in B_p(c)$ such that $x$ is a descendant of $c$, but not a descendant of $v$. Therefore, $x\in B$. Furthermore, we have $y\in A$, because $y$ is a proper ancestor of $p(c)=u$. Thus, we only need to determine whether $C$ is connected with either $B$ or $A$, either directly or through hanging subtrees (whose roots are children of $w$). As in the previous case (where $M_p(c)=\bot$), here we further distinguish three cases, which are treated almost identically as previously: either $(1)$ $M_p(d)$ is a descendant of a child $d'$ of $w$, or $(2)$ $M_p(d)=w$, or $(3)$ $M_p(d)\in C$. Notice that these cases are mutually exclusive and exhaust all possibilities for $M_p(d)$, except the case $M_p(d)=\bot$. However, the case where $M_p(d)=\bot$ is trivial, because we can simply conclude that $C$ is isolated from $B$ and $A$, because this means that there is no back-edge $(x,y)$ with $x\in T(d)$ and $y<p(d)=v$.

\subsubsection{The case where $M_p(d)$ is a descendant of a child of $w$}
In the case where $M_p(d)$ is a descendant of a child $d'$ of $w$, we have that there is no back-edge that directly connects $C$ with either $B$ or $A$, and that $T(d')$ is the only hanging subtree that may connect $C$ with $B$, or $C$ with $A$. Thus, we only have to check $(1)$ whether there is a back-edge $(x,y)$ with $x\in T(d')$ and $y\in C$, and $(2)$ whether there is a back-edge $(x',y')$ with $x'\in T(d')$ and either $y'\in B$ or $y'\in A$. Then we have that $C$ is connected with $B$ and $A$ if and only if $(1)$ and $(2)$ are true. It should be clear that $(1)$ is equivalent to $\mathit{high}_p(d')\in C$. For $(2)$ we use $\mathit{high}_p(d)$ and $\mathit{low}_p(d')$. It should be clear that there is a back-edge $(x',y')$ with $x'\in T(d')$ and $y'\in B$ if and only if $\mathit{high}_p(d)\in B$, and there is a back-edge $(x',y')$ with $x'\in T(d')$ and $y'\in A$ if and only if $\mathit{low}(d')\in A$. (See Figure~\ref{figure:McInB}.)

\begin{figure}[h!]\centering
\includegraphics[trim={0cm 22cm 0 0cm}, clip=true, width=1\linewidth]{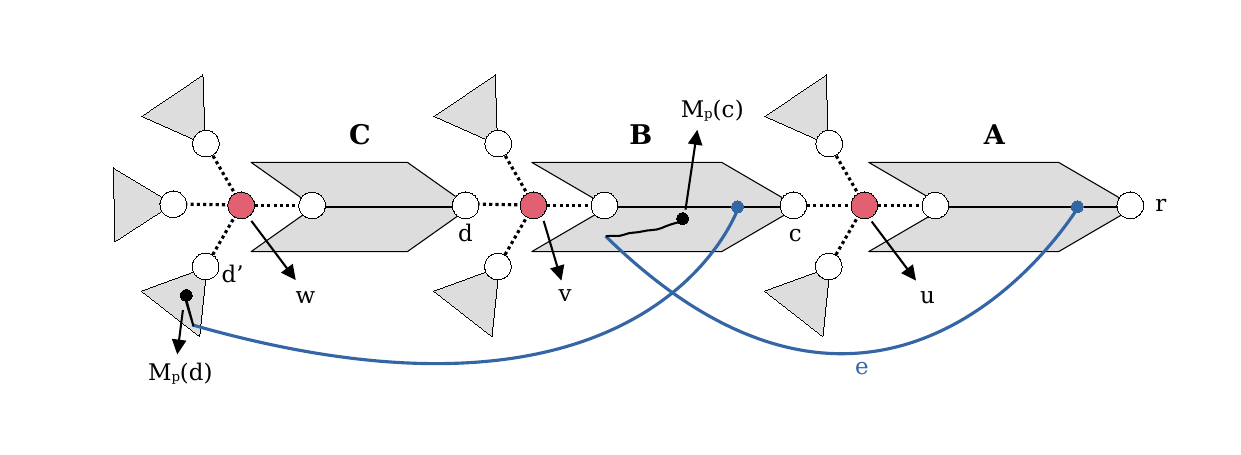}
\caption{\small{Here $M_p(c)\in B$, and $M_p(d)$ is a descendant of a child $d'$ of $w$. The condition $M_p(c)\in B$ implies that there is a back-edge that connects $B$ and $A$ directly (e.g., $e$). Then, we have to determine whether $T(d')$ is connected with $C$ (which is equivalent to $\mathit{high}_p(d')\in C$), and whether $T(d')$ is connected with either $B$ or $A$. Since $M_p(d)\in T(d')$, we have that $T(d')$ is connected with a back-edge with $B$ if and only if $\mathit{high}_p(d)\in B$.}}\label{figure:McInB}
\end{figure} 

\subsubsection{The case where $M_p(d)=w$}
In this case, as in the previous one, there is no back-edge that directly connects $C$ with either $B$ or $A$, but there may exist several hanging subtrees of $w$ that  connect $C$ with $B$, or $C$ with $A$. However, it is sufficient to determine the existence of only one hanging subtree that either connects $C$ with $B$, or $C$ with $A$. To do this, we use the parameters $(i)$, $(ii)$ and $(iii)$ that we also used in Section~\ref{section:M(d)=w,M(c)=bot}, in the case where $M_p(c)=\bot$ and $M_p(d)=w$, and that we can have available after a linear-time preprocessing. Recall that $(i)$ is the lowest $\mathit{low}_1$ point of the children of $w$ that have $\mathit{high}_p\in C$, $(ii)$ is the second-lowest such $\mathit{low}_1$ point, and $(iii)$ is the lowest $\mathit{low}_2$ point of those children of $w$. We will show that it is sufficient to check whether one of $(i)$, $(ii)$, or $(iii)$ is in $B$ or in $A$, in order to determine whether $C$ is connected with either $B$ or $A$ through a hanging subtree of $w$. To be precise, we have that $C$ is connected with $B$ or with $A$ through the subtree of a child of $w$ if and only if either $(i)$, or $(ii)$ or $(iii)$ is in $B$ or in $A$. Notice that the ``$\Leftarrow$" direction of this equivalence is trivial, and thus it is sufficient to demonstrate the ``$\Rightarrow$" direction.

So let us suppose that either $(a)$ there is a child $d'$ of $w$ such that $T(d')$ connects $C$ and $A$, or $(b)$ there is a child $d'$ of $w$ such that $T(d')$ connects $C$ and $B$. In either case, it should be clear that $\mathit{high}_p(d')\in C$. First, let us assume that $(a)$ is true. Then, it should be clear that $\mathit{low}_1(d')\in A$. Therefore, $(i)$ is also in $A$. So let us suppose that $(a)$ is not true, and therefore $(b)$ is true, and let $d'$ be a child of $w$ that satisfies $(b)$ with minimum $\mathit{low}_1$ point. Since $T(d')$ connects $C$ and $B$, this means that there is a back-edge $(x,y)$ with $x\in T(d')$ and $y\in B$. Notice that $\mathit{low}_1(d'')\notin A$ for any child $d''$ of $w$ that has $\mathit{high}_p(d'')\in C$, because otherwise $(a)$ would be true. Thus, we have that $(i)$ is either $u$ or greater than $u$. In particular, either $\mathit{low}_1(d')=u$, or $\mathit{low}_1(d')>u$. If $\mathit{low}_1(d')=u$, then surely we must have $\mathit{low}_2(d')\in B$, because $T(d')$ connects $C$ and $B$. Since $(i)\geq u$, in this case we have $(i)=u$. This implies that $(iii)$ is in $B$. Otherwise, (i.e., if $\mathit{low}_1(d')>u$), then we must have that either $(i)$ or $(ii)$ is in $C$, due to the minimality condition of $d'$.

\subsubsection{The case where $M_p(d)\in C$}
The argument in this case is almost identical to that in Section~\ref{section:M(d)inC,M(c)=bot}, where $M_p(c)=\bot$ and $M_p(d)\in C$. Thus, initially we will use either the leftmost or the rightmost point $x$ of $d$ (whichever is not a descendant of $w$), and check whether $l_1(x)\in A$, or $l_1(x)=u$ and $l_2(x)\in B$, or $l_1(x)\in B$. If neither of those is true, then the only back-edge from $B_p(d)$ provided by $x$ is $(x,u)$, and so we will use the leftmost and the rightmost points of $d$ that skip $u$ on $T_\mathit{highDec}$. If either of those is in $C$, then $C$ is connected directly with a back-edge either with $B$ or with $A$. (If those points do not exist, then $C$ is isolated from $B$ and $A$.) Otherwise, we consider the child $d'$ of $w$ that is an ancestor of the leftmost such skipping point, and we check whether $\mathit{high}_p(d')\in C$ (which, in this case, is true if and only if $C$ is connected either with $B$ or with $A$).

\subsection{The case where $M_p(c)=v$}
\label{section:Mp(c)=v}
In the case where $M_p(c)=v$, we have that $B$ is not connected with $A$ directly with a back-edge, but $B$ and $A$ may be connected through the mediation of children of $v$. Specifically, it may be that:\\

$(*)$ there is a child $d'$ of $v$, with $d'\neq d$, with the property that $\mathit{high}_p(d')\in B$ and $\mathit{low}(d')\in A$. (See Figure~\ref{figure:M(c)=v}$(a)$.)\\

\begin{figure}[h!]\centering
\includegraphics[trim={0cm 13.2cm 0 0cm}, clip=true, width=1\linewidth]{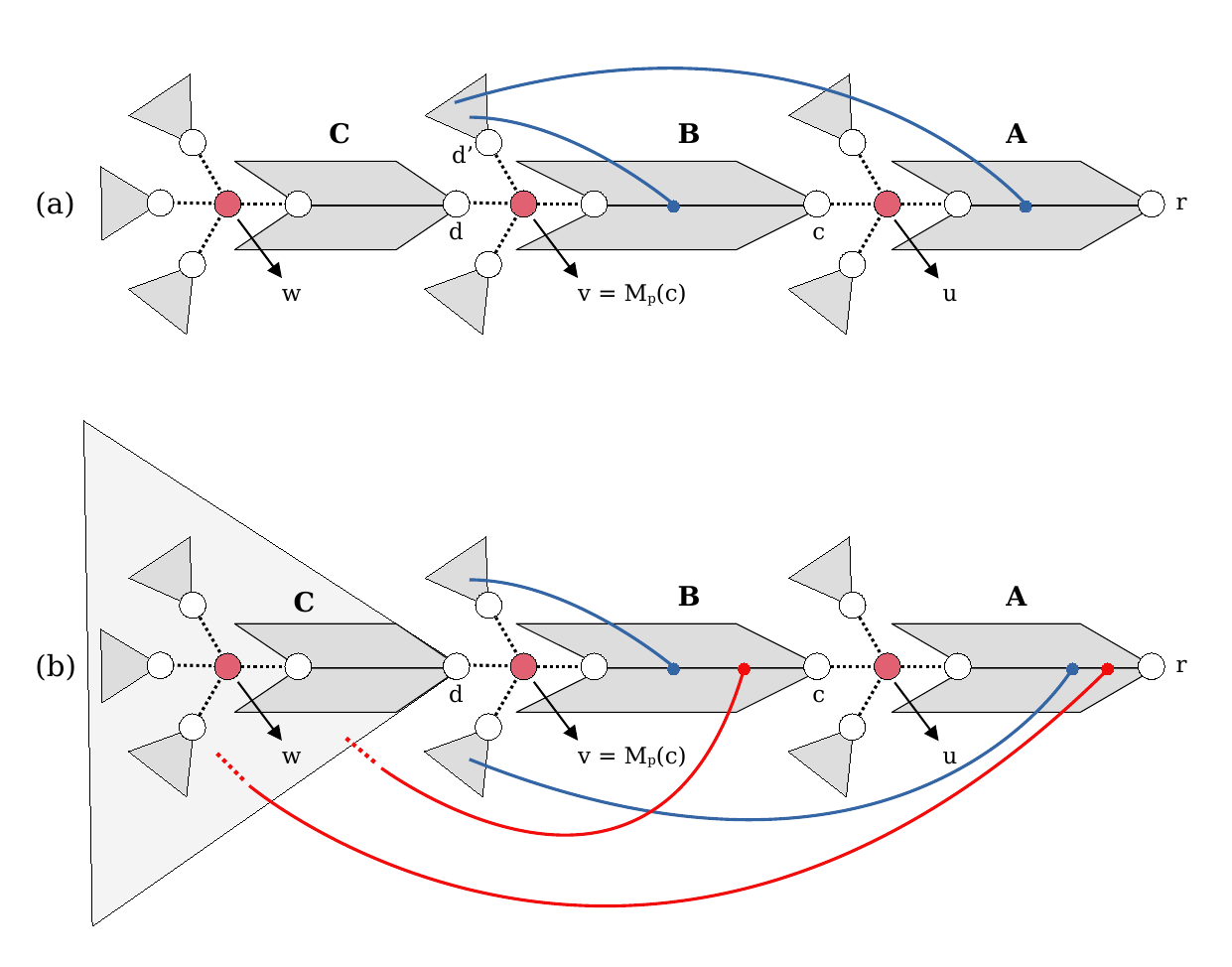}
\caption{\small{An illustration of two subcases in the case where $M_p(c)=v$. In $(a)$ there is a child $d'$ of $M_p(c)$, with $d'\neq d$, such that $\mathit{high}_p(d')\in B$ and $\mathit{low}(d')\in A$. In this case we have that $A$ and $B$ are connected through the hanging subtree $T(d')$ of $v$, and then we can establish the connection between $C$ and either $A$ or $B$ using precisely the same techniques as in Section~\ref{section:M(c)inB} (for the case where $M_p(c)\in B$). In case $(b)$, we have that $d$ is the only child of $M_p(c)$ with $\mathit{high}_p(d)\in B$ and $\mathit{low}(d)\in A$. This is the most difficult subcase (case $(3)$ in this section) that may arise in the case where $M_p(c)=v$, because the connection of the parts $A$, $B$ and $C$ hinges on back-edges that stem from $T(d)$, but $T(d)$ is possibly broken into many components, because it contains $w$.}}\label{figure:M(c)=v}
\end{figure}

In this case, we have that $A$ is connected with $B$ through the hanging subtree $T(d')$, and the connection between $C$ and either $B$ or $A$ is established precisely as in Section~\ref{section:M(c)inB}, where $M_p(c)\in B$.

We can easily check the condition $(*)$ if we have computed, during the preprocessing phase, two children $d_1$ and $d_2$ of $v$ with the property that their $\mathit{high}_p$ point is in $B$, and their $\mathit{low}_1$ point is as low as possible, and let us assume w.l.o.g. that $\mathit{low}_1(d_1)\leq\mathit{low}_1(d_2)$. (Notice: $d_1$ and $d_2$ are associated with $c$ as two children of $M_p(c)$ that have their $\mathit{low}$ point as low as possible, while their $\mathit{high}_p$ is in $T[p(M_p(c)),c]$. In particular, it may be that neither of those children exists, or that only one exists, or that one of them is unique in having the lowest $\mathit{low}$ point with $\mathit{high}_p\in T[p(M_p(c)),c]$, while the other has a greater $\mathit{low}$ point, but still the second lowest.) It is easy to see why those two children are sufficient: we have that $(*)$ is true if and only if either $\mathit{low}_1(d_1)\in A$ and $d_1\neq d$, or $\mathit{low}_1(d_2)\in A$. The computation of $d_1$ and $d_2$ during the preprocessing phase is easy if we work on $T_\mathit{highDec}$ (e.g., see how we have computed the parameters $(i)$, $(ii)$ and $(iii)$ in Section~\ref{section:M(d)=w,M(c)=bot}).

So let us assume that $(*)$ is not true. This means that, for every child $d'$ of $v$ with $d'\neq d$, we have that either $\mathit{high}_p(d')\notin B$, or $\mathit{low}(d')\notin A$. Thus, if $B$ is to be connected with $A$, this can only happen through the mediation of $T(d)$. 
Here we distinguish five cases for $d$: either $(1)$ $M_p(d)=\bot$, or $(2)$ $\mathit{high}_p(d)\in B$ and $\mathit{low}(d)\geq u$, or $(3)$ $\mathit{high}_p(d)\in B$ and $\mathit{low}(d)\in A$, or $(4)$ $\mathit{high}_p(d)=\mathit{low}(d)=u$, or $(5)$ $\mathit{high}_p(d)\leq u $ and $\mathit{low}(d)\in A$. It should be clear that these cases are mutually exclusive and exhaust all possibilities for $d$. We will explain briefly how to treat cases $(1)$, $(2)$, $(4)$, and $(5)$, because they are either trivial, or we can handle them precisely with the techniques that were presented in Section~\ref{section:M(c)=bot} (i.e., in the case where $M_p(c)=\bot$). However, case $(3)$ proves to be more complicated, for reasons that we will explain shortly.

In case $(1)$, we simply have that the parts $A$, $B$ and $C$ are isolated from each other, because there is no back-edge in $B_p(d)$ (and thus $C$ is isolated from $B$ and $A$), and there is no other hanging subtree of $v$ that could connect $B$ and $A$. In case $(4)$, we have that every back-edge $(x,y)\in B_p(d)$ satisfies $y=u$. Thus, we infer again that the parts $A$, $B$ and $C$ are isolated from each other. In case $(2)$, we have that $A$ is isolated from $B$ and $C$, and it remains to determine whether $B$ is connected with $C$ (either directly with a back-edge, or through a hanging subtree of $w$). To do this, we can work as in Section~\ref{section:M(c)=bot} (i.e., the case where $M_p(c)=\bot$), by distinguishing the three cases for $M_p(d)$, and using precisely the same techniques. Case $(5)$ is treated similarly as case $(2)$: the only difference is that here $B$ is isolated from $A$ and $C$, and it remains to determine whether $C$ is connected with $A$.

Case $(3)$ is more complicated, because the conditions $\mathit{high}_p(d)\in B$ and $\mathit{low}(d)\in A$ imply that $A$ may be connected with $B$ through the mediation of $T(d)$, but this is not necessarily the case, because $T(d)$ may be broken up into several pieces, since it contains $w$. (See Figure~\ref{figure:M(c)=v}$(b)$.) First we will provide a criterion in order to determine whether $C$ is connected with $B$ either directly with a back-edge, or through a hanging subtree of $w$. (Notice, however, that if $C$ and $B$ are not connected with one of those two ways, they may still be connected through the mediation of $A$.) Then we will provide criteria in order to determine the connection of $A$ and $B$, and $A$ and $C$, through either of those two ways. In any case, we crucially rely on the fact that $d$ is the only child of $M_p(c)$ with the property that $\mathit{high}_p(d)\in B$ and $\mathit{low}(d)\in A$. In turns out that this is enough information to allow for sufficient preparation for this case during the preprocessing phase. 

By Lemma~\ref{lemma:segmentpointsspecial}, we can assume that we have access to the leftmost and the rightmost descendants of $d$ with the property that they provide a back-edge whose lower endpoint is in $B$. Specifically, we compute those points on $T_\mathit{highDec}$, and let us call them $\widetilde{L}$ and $\widetilde{R}$, respectively. If they are $\bot$, then we conclude that $C$ and $B$ are not connected directly with a back-edge or through a hanging subtree of $w$ (and thus they may be connected only through the mediation of $A$). So let us suppose that $\widetilde{L}$ and $\widetilde{R}$ are not $\bot$. Now, if either of those is in $C$, then we obviously have that $C$ and $B$ are connected directly with a back-edge. Otherwise, we have that $\widetilde{L}$ and $\widetilde{R}$ are both descendants of $w$. In this case, it is sufficient to find the child $d'$ of $w$ that is an ancestor of $\widetilde{L}$ (using a level-ancestor query, as explained in Section~\ref{section:basicDFS}), and check whether $\mathit{high}_p(d')\in C$. Then, since we are working on the tree $T_\mathit{highDec}$, we have that there is a hanging subtree of $w$ that connects $C$ and $B$ if and only if $\mathit{high}_p(d')\in C$.

Thus, we have demonstrated how to check in constant time whether $C$ is connected with $B$ either directly with a back-edge, or through a hanging subtree of $w$.
Now we distinguish two cases, depending on whether $(a)$ $B$ is connected with $C$ either directly with a back-edge or through a hanging subtree of $w$, or $(b)$ there is no such connection between $B$ and $C$.

In case $(a)$, we only have to determine whether $A$ is connected either with $B$ or with $C$, directly or through a hanging subtree of $w$. Here we rely on the fact that $d$ is completely determined by $c$ (i.e., it is the only child of $M_p(c)$ with $\mathit{high}_p(d)\in T[p(M_p(c)),c]$ and $\mathit{low}(d)<p(c)$). Thus, we can use Lemma~\ref{lemma:segmentpointsspecial} in order to compute (more precisely: to assume that we have computed, during the preprocessing phase) the leftmost and the rightmost descendant of $d$ with the property that they provide leaping back-edges from $B_p(c)$. Specifically, we compute those points on $T_\mathit{lowInc}$, and let us call them $L(c,d)$ and $R(c,d)$, respectively. (Notice that neither of them is $\bot$, because $\mathit{low}(d)\in A$.) Now, it may be that one of them is in $C$. This implies that $C$ is connected with $A$ directly with a back-edge, and thus the parts $A$, $B$ and $C$ are connected. If this is not the case, then we have that both $L(c,d)$ and $R(c,d)$ are descendants of $w$. In this case there are two possibilities: either $(a1)$ both $L(c,d)$ and $R(c,d)$ are descendants of the same child $d'$ of $w$, or $(a2)$ $L(c,d)$ and $R(c,d)$ are descendants of different children of $w$. In case $(a1)$, we can determine $d'$ by using a level-ancestor query as explained in Section~\ref{section:basicDFS}. Then it is sufficient to check $\mathit{high}^p_1(d')$ and $\mathit{high}^p_2(d')$. Specifically, we consider the following scenaria. If $\mathit{high}^p_1(d')\in C$, then $C$ is connected with $A$ through $T(d')$. Otherwise, if $\mathit{high}^p_1(d')=v$ and $\mathit{high}^p_2(d')\in B$, or if $\mathit{high}^p_1(d')\in B$, then $B$ is connected with $A$ through $T(d')$. Otherwise, we conclude that $A$ is isolated from $B$ and $C$. 

Now let us consider case $(a2)$. In this case, we can have $w$ associated with $c$ during the preprocessing phase, as $w$ is the nearest common ancestor of $L(c,d)$ and $R(c,d)$ (and we assume that all the necessary nearest common ancestors that we need have been computed off-line during the preprocessing phase, using e.g. \cite{DBLP:journals/siamcomp/BuchsbaumGKRTW08}). Now let $d'$ and $d''$ be the children of $w$ that are ancestors of $L(c,d)$ and $R(c,d)$, respectively (which can be determined using level-ancestor queries, as explained in Section~\ref{section:basicDFS}). Then, since we work on $T_\mathit{lowInc}$, notice that $d'$ is the first child of $w$, and the segment of children of $w$ from $d'$ to $d''$ are all the children of $w$ that have $\mathit{low}$ point lower than $u$ (and therefore, from every hanging subtree induced by such a child stems at least one back-edge to $A$). Then, for this segment, it is sufficient to have computed: $(i)$ the greatest $\mathit{high}^p_1$ point, $(ii)$ the second-greatest $\mathit{high}^p_1$ point, and $(iii)$ the greatest $\mathit{high}^p_2$ point. (Notice that the parameters $(i)$, $(ii)$ and $(iii)$ are analogous to those defined in Section~\ref{section:M(d)=w,M(c)=bot}, for the case $M_p(c)=\bot$ and $M_p(d)=w$, and can be computed in linear time during the preprocessing phase using a procedure similar to Algorithm~\ref{algorithm:three_low_parameters}.) Then it is easy to see that $A$ is connected with either $B$ or $C$ through a hanging subtree of $w$ if and only if at least one of $(i)$, $(ii)$ and $(iii)$ is in either $B$ or $C$.

In case $(b)$, we have to determine separately whether $A$ is connected with $B$, and whether $A$ is connected with $C$, either directly with a back-edge (for $A$-$C$) or through a hanging subtree of $w$ (either for $A$-$B$ or for $A$-$C$). In order to check the connection of $A$ and $C$, we use the same idea with the $L(c,d)$ and $R(c,d)$ points that we discussed in the previous paragraph. However, if this criterion establishes a connection between $A$ and $C$, then it cannot be used in order to determine whether $A$ is connected with $B$ in such a way. (This is because there may exist a hanging subtree $T(d')$ of $w$ from which stem back-edges to both $B$ and $A$, but the $\mathit{high}^p_1$ and $\mathit{high}^p_2$ points of $d'$ are descendants of $v$, and thus we will not be able to use $\mathit{high}^p_1(d')$ or $\mathit{high}^p_2(d')$ to certify the connection of $B$ with $T(d')$.) In any case, in order to determine the connection between $A$ and $B$ (through a hanging subtree of $w$), we can use the $\widetilde{L}$ and $\widetilde{R}$ points defined above, but this time computed on $T_\mathit{lowInc}$ (using Lemma~\ref{lemma:segmentpointsspecial}). (Notice that $\widetilde{L}$ and $\widetilde{R}$ cannot be $\bot$, because $\mathit{high}_p(d)\in B$.) Since $B$ is not connected with $C$ directly with a back-edge, we have that both of those points are descendants of $w$. In this case, it is sufficient to find the child $d'$ of $w$ that is an ancestor of $\widetilde{L}$ (using a level-ancestor query as explained in Section~\ref{section:basicDFS}), and check whether $\mathit{low}(d')\in A$. Then, since we are working on the tree $T_\mathit{lowInc}$, we have that there is a hanging subtree of $w$ that connects $B$ and $A$ if and only if $\mathit{low}(d')\in A$. 

Thus, to summarize the algorithm for handling case $(3)$. First, we check whether $C$ and $B$ are connected either directly with a back-edge, or through a hanging subtree of $w$. This is done by using the $\widetilde{L}$ and $\widetilde{R}$ points computed on $T_\mathit{highDec}$. If we are able to establish such a connection, then we are in case $(a)$ described above, and it remains to check whether $A$ is connected with either $B$ or $C$. This is done by using the $L(c,d)$ and $R(c,d)$ points computed on $T_\mathit{lowInc}$. Otherwise, we are in case $(b)$, and we have to check separately whether $A$ is connected with $B$, and whether $A$ is connected with $C$. In order to determine the connection of $A$ and $C$, we use the $L(c,d)$ and $R(c,d)$ points on $T_\mathit{highDec}$. And in order to determine the connection of $A$ and $B$, we use the $\widetilde{L}$ point on $T_\mathit{lowInc}$.

\subsection{The case where $M_p(c)$ is a descendant of a child $d'$ of $w$}
\label{section:Mp(c)Descw}
In this case, $M_p(c)$ is a descendant of a child $d'$ of $w$. Thus, if the parts $C$ and $B$ are connected with $A$, this can only be through the mediation of $T(d')$. However, it may also be that $C$ is connected with $B$ (either directly with a back-edge, or through the mediation of another hanging subtree of $w$), but $A$ is isolated from $C$ and $B$. In order to determine the connection between the parts $A$, $B$ and $C$, we find it again convenient to distinguish three cases for $M_p(d)$: either $(1)$ $M_p(d)$ is a descendant of $d'$, or $(2)$ $M_p(d)=w$, or $(3)$ $M_p(d)\in C$. It should be clear that those cases are mutually exclusive, but we have to explain a little bit why they exhaust all possibilities for $M_p(d)$. First, notice that, in order to have an exhaustive list, we should also consider the cases: $(4)$ $M_p(d)$ is a descendant of a child $d''$ of $w$ with $d''\neq d'$, and $(5)$ $M_p(d)=\bot$. However, those cases are impossible: since $c$ is an ancestor of $d$ and $M_p(c)$ is a descendant of $d$, Lemma~\ref{lemma:Mp} implies that $M_p(d)$ (exists and) is an ancestor of $M_p(c)$. Thus, $M_p(d)\neq\bot$, and $M_p(d)$ cannot be a descendant of a child of $w$ different than $d'$.

\subsubsection{The case where $M_p(d)$ is a descendant of $d'$}
In the case where $M_p(d)$ is a descendant of $d'$ we have that there is no back-edge that connects $C$ directly with $B$, and that $T(d')$ is the only hanging subtree that may connect the parts $A$, $B$ and $C$. Thus, it is sufficient to check for $(1)$ the existence of a back-edge $(x,y)$ with $x\in T(d')$ and $y\in C$, and $(2)$ the existence of a back-edge $(x,y)$ with $x\in T(d')$ and $y\in B$. (In any case, we know that there is a back-edge $(x,y)$ with $x\in T(d')$ and $y\in A$, since $M_p(c)$ is a descendant of $d'$.) It is easy to see that the existence of a back-edge with property $(1)$ is equivalent to $\mathit{high}_p(d')\in C$, and the existence of a back-edge with property $(2)$ is equivalent to $\mathit{high}_p(d)\in B$. Thus, we can determine in constant time whether $C$ is connected with $A$ through $T(d')$, and whether $B$ is connected with $A$ through $T(d')$. 

\subsubsection{The case where $M_p(d)=w$}
\label{section:Mp(c)descDandMp(d)=w}
In the case where $M_p(d)=w$ we have that there is no back-edge that connects $C$ directly with $B$, but there may be several hanging subtrees of $w$ that connect $C$ and $B$, and only $T(d')$ has the potential to connect $A$ with either $B$ or $C$.

First, we will explain how to determine and handle the case where $T(d')$ does not connect $A$ with $B$ or $C$. Notice that this is equivalent to one of the following: either $(1)$ $\mathit{high}^p_1(d')\in A$, or $(2)$ $\mathit{high}^p_1(d')=u$, or $(3)$ $\mathit{high}^p_1(d')=v$ and $\mathit{high}^p_2(d')\leq u$. In this case, we have that $A$ is isolated from $B$ and $C$, and it remains to determine the existence of a hanging subtree of $w$ that connects $B$ and $C$. To do this, we follow precisely the same procedure as in Section~\ref{section:M(d)=w,M(c)=bot} (for the case where $M_p(c)=\bot$ and $M_p(d)=w$).

Otherwise, we have that $T(d')$ either connects $A$ with $B$, or $A$ with $C$ (or both). First, let us consider the case that $T(d')$ connects $A$ with $B$, but not $A$ with $C$. It is easy to see that this is equivalent to $\mathit{high}^p_1(d')\in B$, or $\mathit{high}^p_1(d')=v$ and $\mathit{high}^p_2(d')\in B$. In this case, it remains to determine the existence of a hanging subtree of $w$ that connects $B$ and $C$. To do this, we can follow again precisely the same procedure as in Section~\ref{section:M(d)=w,M(c)=bot} (for the case where $M_p(c)=\bot$ and $M_p(d)=w$).

Finally, the most difficult case appears when it is definitely true that $T(d')$ connects $A$ and $C$ (which is equivalent to $\mathit{high}_p(d')\in C$). This case is more complicated, because now it may be that $T(d')$ also connects $A$ and $B$, and it happens to be the only hanging subtree of $w$ with this property. It is not straightforward to determine this, and we can only show how to do it indirectly, through a counting argument. We will be enabled to use this argument only if we have established first that no other hanging subtree of $w$ connects $B$ and $C$, and so $T(d')$ is the only candidate to do this. 

First, in order to determine whether there is a hanging subtree of $w$ different from $T(d')$ that connects $B$ and $C$, we can use essentially the same argument as in Section~\ref{section:M(d)=w,M(c)=bot}, but with a necessary change. Recall that the search for such a subtree in Section~\ref{section:M(d)=w,M(c)=bot} uses three parameters: $(i)$ the lowest $\mathit{low}_1$ point among the children of $w$ that have $\mathit{high}_p\in C$, $(ii)$ the second-lowest such $\mathit{low}_1$ point, and $(iii)$ the lowest such $\mathit{low}_2$ point. Here we can also use those parameters, if we have excluded $d'$ from the children of $w$ (which also has $\mathit{high}_p(d')\in C$). What enables us to use these new parameters is that $d'$ is a distinctly identifiable child of $w$: it is the unique one that has the lowest $\mathit{low}_1$ point among the children of $w$. Thus, we can modify the parameters $(i)$, $(ii)$ and $(iii)$, as $(i')$, $(ii')$ and $(iii')$, where the only difference in their definition is that they exclude from the children of $w$ under consideration the unique one with the lowest $\mathit{low}_1$ point. (Thus, $(i')$ e.g. can be defined as: the lowest $\mathit{low}_1$ point among the children of $w$, except the one that has the lowest $\mathit{low}_1$ point, that have $\mathit{high}_p\in C$.) We can have those parameters computed in the preprocessing phase in linear time in total for all vertices (where we focus only on those $d$ whose $M_p(d)$ has a unique child with lowest $\mathit{low}_1$ point), with an algorithm almost identical to the one in Section~\ref{section:M(d)=w,M(c)=bot} that computes $(i)$, $(ii)$ and $(iii)$. Thus, we can simply use Algorithm~\ref{algorithm:three_low_parameters}, where we just add the condition ``\textbf{if} $c$ is the unique child of $w$ with the lowest $\mathit{low}$ point, \textbf{then continue}", when we enter the \textbf{while} loop in Line~\ref{line:whileofiiiiii}. Then, with an argument almost identical to the one in Section~\ref{section:M(d)=w,M(c)=bot}, we can see that it is sufficient to check whether any of $(i')$, or $(ii')$, or $(iii')$ is in $B$, in order to determine whether there is a hanging subtree of $w$, distinct from $T(d')$, that connects $C$ and $B$. In this case, notice that we are done, since we can conclude that all parts $A$, $B$ and $C$ are connected (since $A$ was easily established to be connected with $C$).

Thus, it remains to assume that there is no hanging subtree of $w$, distinct from $T(d')$, that connects $C$ and $B$, and now we have to check whether $T(d')$ connects $C$ and $B$. As we said earlier, here we will use a counting argument. The idea is to determine the number of back-edges in $B_p(d)$ that stem from the children of $w$ that are distinct from $d'$, and then also use $|B_p(d)|$ and $|B_p(c)|$, in order to determine whether there is at least one back-edge $(x,y)\in B_p(d)$ with $x\in T(d')$ and $y\in B$. (Notice that the existence of such a back-edge establishes that $T(d')$ is connected with $B$, and we already know that $T(d')$ is connected with $C$, since $\mathit{high}_p(d')\in C$.)

Now let us see this counting argument in detail. First, from the fact that no hanging subtree of $w$, distinct from $T(d')$, connects $C$ and $B$, we can infer that: 
 for every child $d''$ of $w$, with $d''\neq d'$, that has $\mathit{high}_p(d'')\in C$ and provides a back-edge $(x,y)\in B_p(d)$ with $x\in T(d'')$, we have $y=\mathit{low}(d'')=u$. (This is because $y$ must be lower than $v$, but not in $B$, and it must also be higher than $p(u)$ (since $d''\neq d'$). Thus, $y=u$ is the only available option.) 

Thus, we can distinguish three types of children of $w$, distinct from $d'$, that provide back-edges from $B_p(d)$: 

\begin{itemize}
\item[]{\textbf{Type-1:} the children $d''$ of $w$ with $\mathit{high}_p(d'')<v$.}
\item[]{\textbf{Type-2:} the children $d''$ of $w$ with $\mathit{high}_p(d'')=v$ and $\mathit{low}(d'')<v$.}
\item[]{\textbf{Type-3:} the children $d''$ of $w$ with $\mathit{high}_p(d'')\in C$ and $\mathit{low}(d'')=u$.}
\end{itemize}

Notice that these three types do not overlap, and that, based on the previous paragraph, they exhaust all possibilities for the children of $w$ that are distinct from $d'$ and provide, from the subtree that they induce, a back-edge from $B_p(d)$. To be more precise, we have the following: if there is a back-edge $(x,y)\in B_p(d)$ such that $x\notin T(d')$, then $x$ is a descendant of a child of Type-1, -2, or -3, and if it is a descendant of a child of Type-3 then we have $y=u$.

Now we will show how to count the number of back-edges from $B_p(d)$ that stem from children of $w$ distinct from $d'$. Furthermore, we will also need the sum of the lower endpoints of those back-edges. These computations can be performed in constant time, provided that we have computed some parameters for $d$ during the preprocessing phase. Specifically, we will need the following items:

\begin{itemize}
\item[]{\textbf{item1a:} the sum $|B_p(d_1)|+\dots+|B_p(d_t)|$, where $d_1,\dots,d_t$ are the children of $M_p(d)$ with $\mathit{high}_p(d_i)\leq p(d)$, for $i\in\{1,\dots,t\}$.}
\item[]{\textbf{item1b:} the sum $\mathit{sumY}(d_1)+\dots+\mathit{sumY}(d_t)$, where $d_1,\dots,d_t$ are the children of $M_p(d)$ with $\mathit{high}_p(d_i)\leq p(d)$, for $i\in\{1,\dots,t\}$.}
\item[]{\textbf{item2:} the number of back-edges $(x,y)$ where $y=p(d)$ and $x$ is a descendant of a child of $M_p(d)$ distinct from the unique one that has the lowest $\mathit{low}$ point.}
\item[]{\textbf{item3:} the sum $\mathit{numLow}(e_1)+\dots+\mathit{numLow}(e_s)$, where $e_1,\dots,e_s$ are the children of $M_p(d)$ with $\mathit{high}_p(e_i)>p(d)$, for $i\in\{1,\dots,s\}$, excluding the unique child of $M_p(d)$ that has the lowest $\mathit{low}$ point.}
\end{itemize}

First, let us see how these items can help us in order to determine the existence of a back-edge $(x,y)\in B_p(d)$ with $x\in T(d')$ and $y\in B$. Let $N$ denote the number of back-edges $(x,y)\in B_p(d)$ where $x$ is a descendant of a child of $w$ distinct from $d'$, and let $S$ denote the sum of the lower endpoints of those back-edges. Now we will determine $N$ and $S$ using the above items. Here we rely on the typology of the children of $w$ that can provide back-edges from $B_p(d)$. For every child $d''$ of Type-1, we clearly have $B_p(d'')\subset B_p(d)$. Furthermore, for every child $d''$ of Type-2, we have that all back-edges from $B_p(d'')$ are included in $B_p(d)$, except those whose lower endpoint is $v$. Finally, based on our above discussion, every child $d''$ of Type-3 provides $\mathit{numLow}(d'')$ edges to $B_p(d)$. Thus, it is easy to see that $N=\mathit{item1a}-\mathit{item2}+\mathit{item3}$. Similarly, we can see that $S=\mathit{item1b}-(\mathit{item2}\cdot v)+(\mathit{item3}\cdot u)$. Now, notice that $T(d')$ can provide two types of back-edges from $B_p(d)$: $(a)$ those that are included in $B_p(c)$, and $(b)$ those that are not included in $B_p(c)$. Thus, we are interested in the back-edges of type $(b)$, and specifically in those that do not have the form $(x,u)$. Now, the number of back-edges of type $(b)$ is obviously $N'=|B_p(d)|-N-|B_p(c)|$, and the sum of the lower endpoints of those back-edges is $S'=\mathit{sumY}(d)-S-\mathit{sumY}(c)$. If all these back-edges have the form $(x,u)$, with $x\in T(d')$, then we must have $S'=N'\cdot u$. Otherwise, we have $S'>N'\cdot u$ (because we have excluded from $B_p(d)$ the back-edges whose lower endpoint is lower than $u$, since these are precisely those in $B_p(c)$). Thus, this inequality is equivalent to the existence of a back-edge $(x,y)\in B_p(d)$ with $x\in T(d')$ and $y\in B$.


Finally, it is easy to see how we can have the parameters $\mathit{item1a}$, $\mathit{item1b}$, $\mathit{item2}$, and $\mathit{item3}$, computed in linear time during the preprocessing phase. For $\mathit{item1a}$ and $\mathit{item1b}$, we can simply work on $T_\mathit{highDec}$ and process the vertices in increasing order, in order to be able to determine, for every vertex $d$, the segment of the children list of $M_p(d)$ that consists of the children that have their $\mathit{high}_p$ point $\leq p(d)$ (if there are any). (Notice that the same idea is applied in Algorithm~\ref{algorithm:three_low_parameters}, although there the processing of the vertices is done in decreasing order.) If this segment gets expanded during the processing of $d$, then we add the extra $|B_p(\cdot)|$ and $\mathit{sumY}(\cdot)$ values, of the new children that were discovered, to two values associated with $M_p(d)$ in order to maintain $\mathit{item1a}$ and $\mathit{item1b}$. For $\mathit{item3}$, we work similarly, but we process the vertices $d$ in decreasing order (precisely as was done in Algorithm~\ref{algorithm:three_low_parameters}), in order to compute the initial segment of the children list of $M_p(d)$, that consists of those children with $\mathit{high}_p\in T[p(M_p(d)),d]$. For $\mathit{item2}$, we simply process the vertices $d$ in increasing order, and the incoming back-edges to the vertices $p(d)$ also in increasing order w.r.t. their higher endpoint. Specifically, for every vertex $d$, we process the incoming back-edges to $p(d)$. As long as the current incoming back-edge $(x,p(d))$ to $p(d)$ satisfies that $x$ is a descendant of $d$, we check whether $x$ is a proper descendant of $M_p(d)$, but not of the child of $M_p(d)$ that has the lowest $\mathit{low}$ point (if such a child of $M_p(d)$ exists). Then we move to the next incoming back-edge to $p(d)$, until the higher endpoint of the current back-edge is no longer a descendant of $d$ (or until we have exhausted the list of the incoming back-edges to $p(d)$). It should be clear that this idea is correct and can be implemented in linear time (specifically, the sorting of the lists of the incoming back-edges in increasing order w.r.t. their higher endpoint can be performed in linear time with bucket-sort).

\subsubsection{The case where $M_p(d)\in C$}
\label{section:M(c)inDM(d)inC}

In the case where $M_p(d)\in C$ we have that $C$ and $B$ may be connected directly with a back-edge, or through a hanging subtree of $w$. Since $M_p(d)\in C$, we have that at least one of $L_p(d)$ and $R_p(d)$ is in $C$. Let us assume that $L_p(d)\in C$. (The case where $R_p(d)\in C$ is treated similarly.) Since $M_p(c)$ is a descendant of $w$, we have $l_1(L_p(d))\geq u$. (Otherwise, $M_p(c)$ would be an ancestor of $L_p(d)$, which is impossible.) If $l_1(L_p(d))>u$, then $C$ and $B$ are connected directly with the back-edge $(L_p(d),l_1(L_p(d)))$. Otherwise, (i.e., if $l_1(L_p(d))=u$), we have to use  Lemma~\ref{lemma:segmentpointsL}, in order to get the leftmost and the rightmost descendant of $d$, $\widetilde{L}$ and $\widetilde{R}$, respectively, with the property that it provides a back-edge to $B$. Specifically, we have $\widetilde{L}$ and $\widetilde{R}$  computed on $T_\mathit{highDec}$. Now, if $\widetilde{L}$ and $\widetilde{R}$ are $\bot$, then $C$ is not connected with $B$ directly with a back-edge or through a hanging subtree of $w$ (and therefore, in this particular case, not at all). If either of $\widetilde{L}$ and $\widetilde{R}$ is in $C$, then there is a back-edge that directly connects $C$ and $B$. Otherwise, both $\widetilde{L}$ and $\widetilde{R}$ are proper descendants of $w$ (due to Assumption~\ref{assumption2}). Then, it is easy to see that $C$ and $B$ are connected through a hanging subtree of $w$ if and only if $\mathit{high}_p(d'')\in C$, where $d''$ is the child of $w$ that is an ancestor of $\widetilde{L}$. 

Now it remains to determine whether $A$ is connected with either $B$ or $C$ (through the hanging subtree $T(d')$ of $w$). It is easy to see the following. If $\mathit{high}_p(d')\leq u$, or $\mathit{high}_1^p(d')=v$ and $\mathit{high}_2^p(d')\leq u$, then $A$ is isolated from $B$ and $C$. If $\mathit{high}_p(d')\in B$, or $\mathit{high}_1^p(d')=v$ and $\mathit{high}_2^p(d')\in B$, then $T(d')$ can only connect $A$ and $B$. And if $\mathit{high}_p(d')\in C$, then $T(d')$ definitely connects $C$ and $A$. In this case, there is no need to check whether $T(d')$ connects $A$ and $B$, because either the test of the previous paragraph has determined that $C$ and $B$ are not connected through a hanging subtree of $w$, or it has determined that they are connected in such a way, and therefore we conclude that all parts $A$, $B$ and $C$ remain connected. 
 
\subsection{The case where $M_p(c)=w$}
\label{section:MpC=w}
In the case where $M_p(c)=w$ we have that $A$ can be connected with either $B$ or $C$ only through hanging subtrees of $w$. Here we will need a criterion to determine whether $A$ is connected with $B$ or with $C$ through at least one hanging subtree of $w$. Notice that such a hanging subtree $T(d')$ (where $d'$ is a child of $w$) must have $\mathit{low}(d')\in A$. Thus, in the preprocessing phase, we must gather some information for all children $d'$ of $w$ with $\mathit{low}(d')\in A$ collectively. Specifically, we compute the following parameters among the children of $w$ that have their $\mathit{low}$ point in $A$: $(i)$ the greatest $\mathit{high}^p_1$ point, $(ii)$ the second-greatest $\mathit{high}^p_1$ point, and $(iii)$ the greatest $\mathit{high}^p_2$ point. Notice that the parameters $(i)$, $(ii)$ and $(iii)$ here are analogous to those in Section~\ref{section:M(d)=w,M(c)=bot}, and they are easily computed with a similar algorithm (i.e., with an algorithm analogous to Algorithm~\ref{algorithm:three_low_parameters}), by working on the tree $T_\mathit{lowInc}$.

Since $M_p(c)$ is a descendant of $d$, and $d$ is a descendant of $c$, by Lemma~\ref{lemma:Mp} we have that $M_p(d)$ is an ancestor of $M_p(c)$. Thus, it is convenient to distinguish two cases: either $M_p(d)=w$, or $M_p(d)\in C$.

\subsubsection{The case where $M_p(d)=w$}
\label{section:caseM(c)=wM(d)=w}
In the case where $M_p(d)=w$ we have that $C$ may be connected with $B$ through hanging subtrees of $w$ (or through the mediation of $A$, but not directly with a back-edge). Thus, we need to determine whether there is a child $d'$ of $w$ with $\mathit{high}_p(d')\in C$ with the property that there exists a back-edge $(x,y)$ with $x\in T(d')$ and $y\in B$. Since $M_p(d)=M_p(c)$ and $d$ is a descendant of $c$, here it is very convenient to use the leftmost vertex $z$ on $T_\mathit{highDec}$ with the property that $z$ is a descendant of $M_p(d)$ and there is a back-edge $(z,y)$ such that $y\in T[p(p(d)),c]\subseteq B$. (We can get $z$ in constant time using the oracle described in Lemma~\ref{lemma:segmentpointsMp}.) If this point does not exist, then we infer that $C$ is not connected with $B$ through a hanging subtree of $w$. Otherwise, we take the child $d'$ of $w$ that is an ancestor of $z$ (using a level-ancestor query as explained in Section~\ref{section:basicDFS}), and we check whether $\mathit{high}_p(d')\in C$. It is easy to see that this condition is equivalent to there being a hanging subtree of $w$ that connects $C$ and $B$ (i.e., because, if there is one such hanging subtree, then $T(d')$ is definitely one of them). 

Now we have two cases to consider: either $C$ is connected with $B$ through a hanging subtree of $w$, or this is not true. (Notice, however, that the second case does not necessarily imply that $C$ is disconnected from $B$, because $C$ and $B$ may still be connected through the mediation of $A$.) Let us consider the first case first. In this case, it is sufficient to check whether there is a hanging subtree of $w$ that connects $A$ and $C$, or $A$ and $B$. Using the parameters $(i)$, $(ii)$ and $(iii)$, it is easy to see the following two facts:

\begin{itemize}
\item[(1)]{There is a hanging subtree of $w$ that connects $A$ and $C$ if and only if $(i)$ is in $C$.}
\item[(2)]{If $(1)$ is not true, then there is a hanging subtree of $w$ that connects $A$ and $B$ if and only if either $(i)$, or $(ii)$, or $(iii)$, is in $B$.}
\end{itemize}

Notice the relation between the above two criteria: $(2)$ can determine the connection of $A$ and $B$ through a hanging subtree of $w$, but only if $(1)$ is not true. (Otherwise, if $(1)$ is true, then the ``$\Rightarrow$'' direction of the equivalence in $(2)$ is no longer valid.) Thus, if we have established that $B$ and $C$ are connected, then those two criteria are sufficient in order to determine the connection between $A$, $B$, and $C$.

So let us finally consider the case where there is no hanging subtree of $w$ that connects $C$ and $B$. Thus, we have to determine separately whether $A$ is connected with $C$ through a hanging subtree of $w$, and whether $A$ is connected with $B$ through a hanging subtree of $w$. Using the criterion $(1)$, we can check the connection of $A$ and $C$. If this criterion fails, then $C$ is isolated from $A$ and $B$, and we can use $(2)$ in order to check whether $A$ is connected with $B$. 

However, if we have established through $(1)$ that $A$ is connected with $C$, then we can no longer use $(2)$ in order to check whether $A$ is connected with $B$ as well. Thus, we have to use a different criterion for that. Here we use the leftmost vertex $z$ on $T_\mathit{lowInc}$ with the property that $z$ is a descendant of $d$ and there is a back-edge $(z,y)$ such that $y\in B$ (using the oracle described in Lemma~\ref{lemma:segmentpointsMp}). (Notice that this point, if it exists, must be a descendant of $M_p(d)$.) If this point does not exist, then we infer that $A$ is not connected with $B$. Otherwise, we take the child $d'$ of $w$ that is an ancestor of $z$ (using a level-ancestor query as explained in Section~\ref{section:basicDFS}), and we check whether $\mathit{low}(d')\in A$. It is easy to see that this condition is equivalent to there being a hanging subtree of $w$ that connects $A$ and $B$ (i.e., because, if there is one such hanging subtree, then $T(d')$ is definitely one of them). 

\subsubsection{The case where $M_p(d)\in C$} 

In the case where $M_p(d)\in C$, we first check whether $B$ is connected with $C$, either directly with a back-edge or through a hanging subtree of $w$. In order to do this, we work precisely as in Section~\ref{section:M(c)inDM(d)inC} (for the case where $M_p(c)$ is a descendant of $w$, and $M_p(d)\in C$). Recall that the argument in that section utilized the fact that $M_p(d)\in C$, and that $M_p(c)$ is a descendant of $d$.

Now it remains to check whether $A$ is connected with $B$, and whether $A$ is connected with $C$, through hanging subtrees of $w$. Here we apply the criteria provided by facts $(1)$ and $(2)$ of Section~\ref{section:caseM(c)=wM(d)=w}. (We note that these facts do not depend on $M_p(d)=w$, which was assumed in that section, but only on $M_p(c)=w$.) Thus, we first check whether $(i)\in C$, because this is equivalent to $A$ being connected with $C$ through a hanging subtree of $w$. (The parameters $(i)$, $(ii)$ and $(iii)$ were defined in the first paragraph of Section~\ref{section:MpC=w}.) If we determine that $A$ is not connected with $C$ through a hanging subtree of $w$, then we can use criterion $(2)$ in order to determine whether $A$ is connected with $B$ through a hanging subtree of $w$, by checking whether either $(i)$, or $(ii)$, or $(iii)$ is in $B$. In this case, we are done. (Because we have gathered all the information that is needed in order to determine the connection of $A$, $B$ and $C$.) 

However, if the first criterion has established the connection of $A$ and $C$ through a hanging subtree of $w$, then we can no longer apply criterion $(2)$ in order to determine whether $A$ is connected with $B$ through a hanging subtree of $w$. Of course, there is no need to do that if we have already established that $B$ is connected with $C$, either directly with a back-edge, or through a hanging subtree of $w$. However, if we have determined that $B$ is not connected with $C$ in one of those two ways, then it might still be that $B$ is connected with $C$ through the mediation of $A$. Specifically, it might be that there is a hanging subtree of $w$ that connects $A$ and $C$, and a different hanging subtree of $w$ that connects $A$ and $B$. To determine this case, we utilize the information that $B$ is not connected with $C$ directly with a back-edge. Specifically, as we saw in Section~\ref{section:M(c)inDM(d)inC}, this implies that either $l_1(L_p(d))=u$ or $l_1(R_p(d))=u$. So let us assume that $l_1(L_p(d))=u$, since the other case is treated similarly. Then we use Lemma~\ref{lemma:segmentpointsL}, in order to get the leftmost and the rightmost descendant of $d$, $\widetilde{L}$ and $\widetilde{R}$, respectively, with the property that it provides a back-edge to $B$. Specifically, we have $\widetilde{L}$ and $\widetilde{R}$ computed on $T_\mathit{lowInc}$. Since $C$ is not connected with $B$ directly with a back-edge, we have that either both those points are $\bot$, or they are descendants of $w$. In the first case, we conclude that $B$ is isolated from $A$ and $C$. Otherwise, since we work on $T_\mathit{lowInc}$, it is easy to see that there is a hanging subtree of $w$ that connects $A$ and $B$ if and only if $\mathit{low}(d')<u$, where $d'$ is the child of $w$ that is an ancestor of $\widetilde{L}$.

\subsection{The case where $M_p(c)\in C$}
\label{section:MpCinC}
In the case where $M_p(c)\in C$ we have that $M_p(c)$ is a descendant of $d$. Therefore, since $d$ is a descendant of $c$, Lemma~\ref{lemma:Mp} implies that $M_p(d)$ is an ancestor of $M_p(c)$. In this case, we also have that at least one of $L_p(c)$ and $R_p(c)$ is not a descendant of $w$ (otherwise, $M_p(c)$ would be a descendant of $w$, in contradiction to $M_p(c)\in C$). Thus, one of $L_p(c)$ and $R_p(c)$ is in $C$, and therefore $C$ is connected with $A$ directly with a back-edge. Thus, it remains to determine whether $B$ is connected with either $A$ or $C$. Since $M_p(c)\in C$, notice that there are only three ways in which $B$ can be connected with $A$ and $C$: either there is a back-edge from $C$ to $B$, or there is a hanging subtree of $w$ that connects $C$ and $B$, or $B$ and $A$.  

Initially, we distinguish between the case where $M_p(d)=M_p(c)$, and the case where $M_p(d)$ is a proper ancestor of $M_p(c)$. So let us suppose first that $M_p(d)=M_p(c)$. Since we are interested in back-edges from $B_p(d)$ whose lower endpoint is in $B$, here it is very convenient to use the leftmost and the rightmost descendants of $d$ with the property that there is a back-edge from them to $B$. Since $M_p(d)=M_p(c)$, we can get those points from the oracle described in Lemma~\ref{lemma:segmentpointsMp}. More precisely, we first ask for those points on $T_\mathit{highDec}$, and let us call them $\widetilde{L}_\mathit{high}$ and $\widetilde{R}_\mathit{high}$, respectively. If these are $\bot$, then $B$ is isolated from $A$ and $C$. Otherwise, it may be that one of them is in $C$, in which case $C$ is connected with $B$ directly with a back-edge. And finally, if both of them are descendants of $w$, then we have that $\widetilde{L}_\mathit{high}$ is a descendant of a child $d'$ of $w$ (which we can find using a level-ancestor query, as explained in Section~\ref{section:basicDFS}). In this case, since we work on the tree $T_\mathit{highDec}$, it is sufficient to ask whether $\mathit{high}_p(d')\in C$. In this case, we have that $C$ is connected with $B$ through the hanging subtree $T(d')$. Otherwise, we conclude that there is no hanging subtree of $w$ that connects $C$ and $B$ (because all other children of $w$ from whose induced hanging subtrees stem back-edges that reach $B$, have lower $\mathit{high}_p$ point than that of $d'$, and therefore their $\mathit{high}_p$ must be an ancestor of $v$). Then, we use again the leftmost and the rightmost descendants of $d$ with the property that there is a back-edge from them to $B$ (by using the oracle described in Lemma~\ref{lemma:segmentpointsMp}), but this time we ask for them on $T_\mathit{lowInc}$, and let us call them $\widetilde{L}_\mathit{low}$ and $\widetilde{R}_\mathit{low}$, respectively. Now, we have that both $\widetilde{L}_\mathit{low}$ and $\widetilde{R}_\mathit{low}$ are descendants of $w$ (because, since we have established this for $\widetilde{L}_\mathit{high}$ and $\widetilde{R}_\mathit{high}$, this is an invariant for the leftmost and rightmost points that start from $T(d)$ and end in $B$, across all permutations of the base DFS tree). Then, we consider the child $d''$ of $w$ which is an ancestor of $\widetilde{L}_\mathit{low}$ (which we can determine using a level-ancestor query, as explained in Section~\ref{section:basicDFS}). And now it is sufficient to ask whether $\mathit{low}(d'')\in A$. Notice that there is a hanging subtree of $w$ that connects $B$ and $A$ if and only if $T(d'')$ has this property.

So let us consider the case where $M_p(d)$ is a proper ancestor of $M_p(c)$. For this case, it is convenient to work on the tree $T_\mathit{lowInc}$. Then, by Lemma~\ref{lemma:lowestl1}, we have that $L_p(d)$ is a descendant of $d$ with the lowest $l_1$ point. Therefore, since $M_p(c)$ is a descendant of $d$, we have that $L_p(d)$ is a descendant of $M_p(c)$ (with $l_1(L_p(d))<p(c)=u$). Since $M_p(d)$ is a proper ancestor of $M_p(c)$, we have that $R_p(d)$ cannot also be a descendant of $M_p(c)$. Thus, $l_1(R_p(d))\geq u$. Now we distinguish two cases, depending on whether $l_1(R_p(d))=u$ or $l_1(R_p(d))>u$. In the case where $l_1(R_p(d))=u$, we can use Lemma~\ref{lemma:segmentpointsL}, in order to get the leftmost and the rightmost descendants of $d$ with the property that there is a back-edge from them to $B$. More precisely, we first use those points computed on $T_\mathit{highDec}$, and then (if needed) on $T_\mathit{lowInc}$, and we work as previously (i.e., in the case where $M_p(c)=M_p(d)$), in order to determine whether $C$ is connected with $B$ directly with a back-edge, or whether there is a  hanging subtree of $w$ that connects $C$ and $B$, or $B$ and $A$, or whether $B$ is isolated from $A$ and $C$.  


Thus, we are left to consider the case where $l_1(R_p(d))>u$. (Notice that this implies that $l_1(R_p(d))\in B$.) If $R_p(d)\in C$, then we are done: we conclude that $C$ and $B$ are connected directly with a back-edge, e.g., $(R_p(d),l_1(R_p(d)))$. So let us assume that $R_p(d)\notin C$. This implies that $R_p(d)$ is a descendant of $w$.

Since $L_p(d)\neq R_p(d)$, by Assumption~\ref{assumption2} we have that both $L_p(d)$ and $R_p(d)$ are proper descendants of $M_p(d)$, and therefore they are descendants of different children of $M_p(d)$ (since $M_p(d)=\mathit{nca}\{L_p(d),R_p(d)\}$). Then, since $L_p(d)$ is a descendant of $M_p(c)$, and since $R_p(d)$ is a descendant of $w$, and since $M_p(c)$ and $w$ are proper descendants of $M_p(d)$, we have that $M_p(c)$ and $w$ are descendants of different children of $M_p(d)$. 
%
%
Since we are working on $T_\mathit{lowInc}$, we have that $M_p(c)$ is a descendant of the first child $c'$ of $M_p(d)$ (because $c'$ is the child of $M_p(d)$ with the lowest $\mathit{low}$ point among those children). Now we are asking the following question: what is the $\mathit{low}$ point of the next sibling $c''$ of $c'$? Notice that, since $M_p(c)$ is a descendant of $c'$, we must have that $\mathit{low}(c'')\geq u$. (Otherwise, $M_p(c)$ would be an ancestor of both $c'$ and $c''$, and therefore an ancestor of $M_p(d)$, and therefore we would have $M_p(c)=M_p(d)$.) Here we distinguish two cases: either $\mathit{low}(c'')=u$, or $\mathit{low}(c'')>u$. In the case where $\mathit{low}(c'')=u$, we can use Lemma~\ref{lemma:segmentpointslow}, in order to get the leftmost and the rightmost descendants of $d$ with the property that there is a back-edge from them to $B$. Thus, in this case we can work as previously.

So let us suppose that $\mathit{low}(c'')>u$. If $c''$ is not an ancestor of $w$, then we are done: we conclude that $C$ and $B$ are connected directly with a back-edge (because there is one that starts from $T(c'')$ and ends at $\mathit{low}(c'')$). Otherwise, we have that $c'$ and $c''$ are the only children of $M_p(d)$ with low enough $\mathit{low}$ point to provide back-edges from $B_p(d)$. (This is because $c''$ is the child of $M_p(d)$ which is an ancestor of $R_p(d)$.) In this case, we have to determine whether $(a)$ there is a back-edge $(x,y)$ with $x\in T(c')$ and $y\in B$, or $(b)$ there is a back-edge $(x,y)$ with $x\in T(c'')\setminus T(w)$ and $y\in B$, or $(c)$ there is a hanging subtree of $w$ that connects $C$ and $B$. 
First we will check $(b)$ and $(c)$ together, and if we determine that such connections do not exist, then we can check $(a)$ using a counting argument.

Since $c''$ is the second child of $M_p(d)$ on $T_\mathit{lowInc}$, we have that $c''$ is determined by $d$, and so we can use Proposition~\ref{proposition:L(v,d)} in order to get (or more precisely: in order to ensure that we have constant-time access to) the leftmost and the rightmost descendants of $c''$ that provide back-edges from $B_p(d)$. Specifically, we will compute those points on $T_\mathit{highDec}$, and let us call them $\widetilde{L}$ and $\widetilde{R}$, respectively. (I.e., $\widetilde{L}=L(d,c'')$ and $\widetilde{R}=R(d,c'')$.) If either of $\widetilde{L}$ and $\widetilde{R}$ is in $C$ (i.e., not a descendant of $w$), then we are done: this establishes that $C$ is connected with $B$ directly with a back-edge (starting from one of $\widetilde{L}$ and $\widetilde{R}$). Otherwise, we have that both $\widetilde{L}$ and $\widetilde{R}$ are descendants of $w$. In this case, it is sufficient to determine the child $d'$ of $w$ that is an ancestor of $\widetilde{L}$ (using a level-ancestor query as explained in Section~\ref{section:basicDFS}), and check whether $\mathit{high}_p(d')\in C$. Due to the definition of $\widetilde{L}$ on $T_\mathit{highDec}$, notice that there is a hanging subtree of $w$ that connects $C$ and $B$ if and only if $\mathit{high}_p(d')\in C$. This concludes the check for the cases $(b)$ and $(c)$.

Now suppose that we have established that the connections in cases $(b)$ and $(c)$ do not exist, using the test of the previous paragraph. This means that $\widetilde{L}$ and $\widetilde{R}$ are descendants of $w$, and every child $d'$ of $w$ with $\mathit{low}(d')<p(d)$ has $\mathit{high}_p(d')\leq p(d)$. We can use this information in order to determine whether there is a back-edge $(x,y)$ with $x\in T(c')$ and $y\in B$. (Notice that such a back-edge is in $B_p(d)$.) We distinguish four cases for a back-edge $(x,y)\in B_p(d)$:

\begin{enumerate}[label={(\roman*)}]
\item{$x\in T(c')$ and $y\in B$.}
\item{$x\in T(c')$ and $y=u$.}
\item{$x\in T(c')$ and $y<u$.}
\item{$x\in T(c'')$ and $y\in B$.}
\end{enumerate}

Notice that cases $(i)$ to $(iv)$ provide a partition of $B_p(d)$. Also, notice that the back-edges of type $(iii)$ are precisely those from $B_p(c)$. Then, with a counting argument we can determine the existence of a back-edge of type $(i)$. So let $N$ be the number of back-edges of type $(iv)$, and let $S$ be the sum of the lower endpoints of those back-edges. Then, the parameters $N'=|B_p(d)|-|B_p(c)|-N$ and $S'=\mathit{sumY}(d)-\mathit{sumY}(c)-S$ give us the number, and the sum of the lower endpoints, of the back-edges of types $(i)$ and $(ii)$. Then, since the lower endpoint of every back-edge of type $(ii)$ is $u$ (which is lower than any vertex in $B$), we have that there is a back-edge of type $(i)$ if and only if $S'>N'\cdot u$. 

Thus, it remains to show how to compute $N$ and $S$ in constant time, and, in particular, what are the parameters that we must have computed during the linear-time preprocessing for that purpose. First, we distinguish two cases, depending on whether $\widetilde{L}$ and $\widetilde{R}$ are descendants of the same child $d'$ of $w$ or not. (This can be checked, and $d'$ can be determined, in constant time, using a level-ancestor query as explained in Section~\ref{section:basicDFS}.) 

If both $\widetilde{L}$ and $\widetilde{R}$ are descendants of the same child $d'$ of $w$, then the important question is whether $\mathit{high}_p(d')<v$ or $\mathit{high}_p(d')=v$. (These are the only two possible cases.) In the case where $\mathit{high}_p(d')<v$, we simply have that $N=|B_p(d')|$, and $S=\mathit{sumY}(d')$. In the case where $\mathit{high}_p(d')=v$, we have that $N=|B_p(d')|-\mathit{numHigh}(d')$, and $S=\mathit{sumY}(d')-(\mathit{numHigh}(d')\cdot v)$. 

Finally, let us consider the case where $\widetilde{L}$ and $\widetilde{R}$ are descendants of different children of $w$. In this case, we have to know the following parameters:

\begin{itemize}
\item[]{\textbf{item1a:} the sum $|B_p(d_1)|+\dots+|B_p(d_t)|$, where $d_1,\dots,d_t$ are the children of $w$ with $\mathit{high}_p(d_i)\leq p(d)$, for $i\in\{1,\dots,t\}$.}
\item[]{\textbf{item1b:} the sum $\mathit{sumY}(d_1)+\dots+\mathit{sumY}(d_t)$, where $d_1,\dots,d_t$ are the children of $w$ with $\mathit{high}_p(d_i)\leq p(d)$, for $i\in\{1,\dots,t\}$.}
\item[]{\textbf{item2:} the number of back-edges $(x,y)$ where $y=p(d)$ and $x$ is a descendant of a child $d_i$ of $w$ with $\mathit{high}_p(d_i)\leq p(d)$.}
\end{itemize}

Then, it is easy to see that $N=\mathit{item1a}-\mathit{item2}$, and $S=\mathit{item1b}-(\mathit{item2}\cdot v)$.
It is not difficult to see how we can have those parameters computed in linear time during the preprocessing phase. The important observation is that $w$ is determined by $d$ as: the nearest common ancestor of the leftmost and the rightmost descendants of $d$ that provide a back-edge from $B_p(d)$ and are also descendants of the child of $M_p(d)$ that has the second lowest $\mathit{low}$ point. Thus, all vertices $d$ for which those leftmost and rightmost points are defined, are ancestors of their associated $w$. Then, we can use a similar algorithm to compute those items as the one described in Section~\ref{section:Mp(c)descDandMp(d)=w}, where we used similar items to handle the case where $M_p(c)$ is a descendant of a child of $w$ and $M_p(d)=w$.

\bibliography{test}

\newcommand{\etalchar}[1]{$^{#1}$}
\begin{thebibliography}{BGK{\etalchar{+}}08}

\bibitem[BCR19]{DBLP:journals/algorithmica/BaswanaCR19}
Surender Baswana, Keerti Choudhary, and Liam Roditty.
\newblock An efficient strongly connected components algorithm in the fault
  tolerant model.
\newblock {\em Algorithmica}, 81(3):967--985, 2019.
\newblock \href {https://doi.org/10.1007/S00453-018-0452-3}
  {\path{doi:10.1007/S00453-018-0452-3}}.

\bibitem[BF04]{DBLP:journals/tcs/BenderF04}
Michael~A. Bender and Martin Farach{-}Colton.
\newblock The level ancestor problem simplified.
\newblock {\em Theor. Comput. Sci.}, 321(1):5--12, 2004.
\newblock \href {https://doi.org/10.1016/J.TCS.2003.05.002}
  {\path{doi:10.1016/J.TCS.2003.05.002}}.

\bibitem[BGK{\etalchar{+}}08]{DBLP:journals/siamcomp/BuchsbaumGKRTW08}
Adam~L. Buchsbaum, Loukas Georgiadis, Haim Kaplan, Anne Rogers, Robert~Endre
  Tarjan, and Jeffery~R. Westbrook.
\newblock Linear-time algorithms for dominators and other path-evaluation
  problems.
\newblock {\em {SIAM} J. Comput.}, 38(4):1533--1573, 2008.
\newblock \href {https://doi.org/10.1137/070693217}
  {\path{doi:10.1137/070693217}}.

\bibitem[BT96]{DBLP:journals/algorithmica/BattistaT96}
Giuseppe~Di Battista and Roberto Tamassia.
\newblock On-line maintenance of triconnected components with spqr-trees.
\newblock {\em Algorithmica}, 15(4):302--318, 1996.
\newblock \href {https://doi.org/10.1007/BF01961541}
  {\path{doi:10.1007/BF01961541}}.

\bibitem[CBKT93]{DBLP:conf/stoc/CohenBKT93}
Robert~F. Cohen, Giuseppe~Di Battista, Arkady Kanevsky, and Roberto Tamassia.
\newblock Reinventing the wheel: an optimal data structure for connectivity
  queries.
\newblock In S.~Rao Kosaraju, David~S. Johnson, and Alok Aggarwal, editors,
  {\em Proceedings of the Twenty-Fifth Annual {ACM} Symposium on Theory of
  Computing, May 16-18, 1993, San Diego, CA, {USA}}, pages 194--200. {ACM},
  1993.
\newblock \href {https://doi.org/10.1145/167088.167152}
  {\path{doi:10.1145/167088.167152}}.

\bibitem[CC20]{DBLP:conf/icalp/ChakrabortyC20}
Diptarka Chakraborty and Keerti Choudhary.
\newblock New extremal bounds for reachability and strong-connectivity
  preservers under failures.
\newblock In Artur Czumaj, Anuj Dawar, and Emanuela Merelli, editors, {\em 47th
  International Colloquium on Automata, Languages, and Programming, {ICALP}
  2020, July 8-11, 2020, Saarbr{\"{u}}cken, Germany (Virtual Conference)},
  volume 168 of {\em LIPIcs}, pages 25:1--25:20. Schloss Dagstuhl -
  Leibniz-Zentrum f{\"{u}}r Informatik, 2020.
\newblock \href {https://doi.org/10.4230/LIPICS.ICALP.2020.25}
  {\path{doi:10.4230/LIPICS.ICALP.2020.25}}.

\bibitem[Cho16]{DBLP:conf/icalp/Choudhary16}
Keerti Choudhary.
\newblock An optimal dual fault tolerant reachability oracle.
\newblock In Ioannis Chatzigiannakis, Michael Mitzenmacher, Yuval Rabani, and
  Davide Sangiorgi, editors, {\em 43rd International Colloquium on Automata,
  Languages, and Programming, {ICALP} 2016, July 11-15, 2016, Rome, Italy},
  volume~55 of {\em LIPIcs}, pages 130:1--130:13. Schloss Dagstuhl -
  Leibniz-Zentrum f{\"{u}}r Informatik, 2016.
\newblock \href {https://doi.org/10.4230/LIPICS.ICALP.2016.130}
  {\path{doi:10.4230/LIPICS.ICALP.2016.130}}.

\bibitem[CPR11]{DBLP:journals/siamcomp/ChanPR11}
Timothy~M. Chan, Mihai P{\u{a}}tra{\c{s}}cu, and Liam Roditty.
\newblock Dynamic connectivity: Connecting to networks and geometry.
\newblock {\em {SIAM} J. Comput.}, 40(2):333--349, 2011.
\newblock \href {https://doi.org/10.1137/090751670}
  {\path{doi:10.1137/090751670}}.

\bibitem[DP20]{DBLP:journals/siamcomp/DuanP20}
Ran Duan and Seth Pettie.
\newblock Connectivity oracles for graphs subject to vertex failures.
\newblock {\em {SIAM} J. Comput.}, 49(6):1363--1396, 2020.
\newblock \href {https://doi.org/10.1137/17M1146610}
  {\path{doi:10.1137/17M1146610}}.

\bibitem[DP21]{DBLP:conf/podc/DoryP21}
Michal Dory and Merav Parter.
\newblock Fault-tolerant labeling and compact routing schemes.
\newblock In Avery Miller, Keren Censor{-}Hillel, and Janne~H. Korhonen,
  editors, {\em {PODC} '21: {ACM} Symposium on Principles of Distributed
  Computing, Virtual Event, Italy, July 26-30, 2021}, pages 445--455. {ACM},
  2021.
\newblock \href {https://doi.org/10.1145/3465084.3467929}
  {\path{doi:10.1145/3465084.3467929}}.

\bibitem[Dua10]{DBLP:conf/icalp/Duan10}
Ran Duan.
\newblock New data structures for subgraph connectivity.
\newblock In Samson Abramsky, Cyril Gavoille, Claude Kirchner, Friedhelm~Meyer
  auf~der Heide, and Paul~G. Spirakis, editors, {\em Automata, Languages and
  Programming, 37th International Colloquium, {ICALP} 2010, Bordeaux, France,
  July 6-10, 2010, Proceedings, Part {I}}, volume 6198 of {\em Lecture Notes in
  Computer Science}, pages 201--212. Springer, 2010.
\newblock \href {https://doi.org/10.1007/978-3-642-14165-2\_18}
  {\path{doi:10.1007/978-3-642-14165-2\_18}}.

\bibitem[DZ17]{DBLP:conf/wads/DuanZ17}
Ran Duan and Le~Zhang.
\newblock Faster randomized worst-case update time for dynamic subgraph
  connectivity.
\newblock In Faith Ellen, Antonina Kolokolova, and J{\"{o}}rg{-}R{\"{u}}diger
  Sack, editors, {\em Algorithms and Data Structures - 15th International
  Symposium, {WADS} 2017, St. John's, NL, Canada, July 31 - August 2, 2017,
  Proceedings}, volume 10389 of {\em Lecture Notes in Computer Science}, pages
  337--348. Springer, 2017.
\newblock \href {https://doi.org/10.1007/978-3-319-62127-2\_29}
  {\path{doi:10.1007/978-3-319-62127-2\_29}}.

\bibitem[FH06]{DBLP:conf/cpm/FischerH06}
Johannes Fischer and Volker Heun.
\newblock Theoretical and practical improvements on the rmq-problem, with
  applications to {LCA} and {LCE}.
\newblock In Moshe Lewenstein and Gabriel Valiente, editors, {\em Combinatorial
  Pattern Matching, 17th Annual Symposium, {CPM} 2006, Barcelona, Spain, July
  5-7, 2006, Proceedings}, volume 4009 of {\em Lecture Notes in Computer
  Science}, pages 36--48. Springer, 2006.
\newblock \href {https://doi.org/10.1007/11780441\_5}
  {\path{doi:10.1007/11780441\_5}}.

\bibitem[FI00]{DBLP:journals/algorithmica/FrigioniI00}
Daniele Frigioni and Giuseppe~F. Italiano.
\newblock Dynamically switching vertices in planar graphs.
\newblock {\em Algorithmica}, 28(1):76--103, 2000.
\newblock \href {https://doi.org/10.1007/S004530010032}
  {\path{doi:10.1007/S004530010032}}.

\bibitem[GIK21]{DBLP:conf/esa/GeorgiadisIK21}
Loukas Georgiadis, Giuseppe~F. Italiano, and Evangelos Kosinas.
\newblock Computing the 4-edge-connected components of a graph in linear time.
\newblock In Petra Mutzel, Rasmus Pagh, and Grzegorz Herman, editors, {\em 29th
  Annual European Symposium on Algorithms, {ESA} 2021, September 6-8, 2021,
  Lisbon, Portugal (Virtual Conference)}, volume 204 of {\em LIPIcs}, pages
  47:1--47:17. Schloss Dagstuhl - Leibniz-Zentrum f{\"{u}}r Informatik, 2021.
\newblock \href {https://doi.org/10.4230/LIPICS.ESA.2021.47}
  {\path{doi:10.4230/LIPICS.ESA.2021.47}}.

\bibitem[GIP20]{DBLP:journals/siamcomp/GeorgiadisIP20}
Loukas Georgiadis, Giuseppe~F. Italiano, and Nikos Parotsidis.
\newblock Strong connectivity in directed graphs under failures, with
  applications.
\newblock {\em {SIAM} J. Comput.}, 49(5):865--926, 2020.
\newblock \href {https://doi.org/10.1137/19M1258530}
  {\path{doi:10.1137/19M1258530}}.

\bibitem[GK20]{DBLP:conf/isaac/GeorgiadisK20}
Loukas Georgiadis and Evangelos Kosinas.
\newblock Linear-time algorithms for computing twinless strong articulation
  points and related problems.
\newblock In Yixin Cao, Siu{-}Wing Cheng, and Minming Li, editors, {\em 31st
  International Symposium on Algorithms and Computation, {ISAAC} 2020, December
  14-18, 2020, Hong Kong, China (Virtual Conference)}, volume 181 of {\em
  LIPIcs}, pages 38:1--38:16. Schloss Dagstuhl - Leibniz-Zentrum f{\"{u}}r
  Informatik, 2020.
\newblock \href {https://doi.org/10.4230/LIPICS.ISAAC.2020.38}
  {\path{doi:10.4230/LIPICS.ISAAC.2020.38}}.

\bibitem[GM00]{DBLP:conf/gd/GutwengerM00}
Carsten Gutwenger and Petra Mutzel.
\newblock A linear time implementation of spqr-trees.
\newblock In Joe Marks, editor, {\em Graph Drawing, 8th International
  Symposium, {GD} 2000, Colonial Williamsburg, VA, USA, September 20-23, 2000,
  Proceedings}, volume 1984 of {\em Lecture Notes in Computer Science}, pages
  77--90. Springer, 2000.
\newblock \href {https://doi.org/10.1007/3-540-44541-2\_8}
  {\path{doi:10.1007/3-540-44541-2\_8}}.

\bibitem[GT85]{DBLP:journals/jcss/GabowT85}
Harold~N. Gabow and Robert~Endre Tarjan.
\newblock A linear-time algorithm for a special case of disjoint set union.
\newblock {\em J. Comput. Syst. Sci.}, 30(2):209--221, 1985.
\newblock \href {https://doi.org/10.1016/0022-0000(85)90014-5}
  {\path{doi:10.1016/0022-0000(85)90014-5}}.

\bibitem[HKNS15]{DBLP:conf/stoc/HenzingerKNS15}
Monika Henzinger, Sebastian Krinninger, Danupon Nanongkai, and Thatchaphol
  Saranurak.
\newblock Unifying and strengthening hardness for dynamic problems via the
  online matrix-vector multiplication conjecture.
\newblock In Rocco~A. Servedio and Ronitt Rubinfeld, editors, {\em Proceedings
  of the Forty-Seventh Annual {ACM} on Symposium on Theory of Computing, {STOC}
  2015, Portland, OR, USA, June 14-17, 2015}, pages 21--30. {ACM}, 2015.
\newblock \href {https://doi.org/10.1145/2746539.2746609}
  {\path{doi:10.1145/2746539.2746609}}.

\bibitem[HKP24]{DBLP:conf/esa/HuK024}
Bingbing Hu, Evangelos Kosinas, and Adam Polak.
\newblock Connectivity oracles for predictable vertex failures.
\newblock In Timothy~M. Chan, Johannes Fischer, John Iacono, and Grzegorz
  Herman, editors, {\em 32nd Annual European Symposium on Algorithms, {ESA}
  2024, September 2-4, 2024, Royal Holloway, London, United Kingdom}, volume
  308 of {\em LIPIcs}, pages 72:1--72:16. Schloss Dagstuhl - Leibniz-Zentrum
  f{\"{u}}r Informatik, 2024.
\newblock \href {https://doi.org/10.4230/LIPICS.ESA.2024.72}
  {\path{doi:10.4230/LIPICS.ESA.2024.72}}.

\bibitem[HN16]{DBLP:conf/esa/HenzingerN16}
Monika Henzinger and Stefan Neumann.
\newblock Incremental and fully dynamic subgraph connectivity for emergency
  planning.
\newblock In Piotr Sankowski and Christos~D. Zaroliagis, editors, {\em 24th
  Annual European Symposium on Algorithms, {ESA} 2016, August 22-24, 2016,
  Aarhus, Denmark}, volume~57 of {\em LIPIcs}, pages 48:1--48:11. Schloss
  Dagstuhl - Leibniz-Zentrum f{\"{u}}r Informatik, 2016.
\newblock \href {https://doi.org/10.4230/LIPICS.ESA.2016.48}
  {\path{doi:10.4230/LIPICS.ESA.2016.48}}.

\bibitem[HT73]{DBLP:journals/siamcomp/HopcroftT73}
John~E. Hopcroft and Robert~Endre Tarjan.
\newblock Dividing a graph into triconnected components.
\newblock {\em {SIAM} J. Comput.}, 2(3):135--158, 1973.
\newblock \href {https://doi.org/10.1137/0202012} {\path{doi:10.1137/0202012}}.

\bibitem[IEWM23]{DBLP:conf/podc/IzumiEWM23}
Taisuke Izumi, Yuval Emek, Tadashi Wadayama, and Toshimitsu Masuzawa.
\newblock Deterministic fault-tolerant connectivity labeling scheme.
\newblock In Rotem Oshman, Alexandre Nolin, Magn{\'{u}}s~M. Halld{\'{o}}rsson,
  and Alkida Balliu, editors, {\em Proceedings of the 2023 {ACM} Symposium on
  Principles of Distributed Computing, {PODC} 2023, Orlando, FL, USA, June
  19-23, 2023}, pages 190--199. {ACM}, 2023.
\newblock \href {https://doi.org/10.1145/3583668.3594584}
  {\path{doi:10.1145/3583668.3594584}}.

\bibitem[JPP25]{arxivMerav}
Yonggang Jiang, Merav Parter, and Asaf Petruschka.
\newblock New oracles and labeling schemes for vertex cut queries.
\newblock {\em CoRR}, abs/2501.13596, 2025.

\bibitem[JX22]{DBLP:conf/stoc/JinX22}
Ce~Jin and Yinzhan Xu.
\newblock Tight dynamic problem lower bounds from generalized {BMM} and omv.
\newblock In Stefano Leonardi and Anupam Gupta, editors, {\em {STOC} '22: 54th
  Annual {ACM} {SIGACT} Symposium on Theory of Computing, Rome, Italy, June 20
  - 24, 2022}, pages 1515--1528. {ACM}, 2022.
\newblock \href {https://doi.org/10.1145/3519935.3520036}
  {\path{doi:10.1145/3519935.3520036}}.

\bibitem[Kor24]{DBLP:journals/corr/abs-2411-02658}
Tuukka Korhonen.
\newblock Linear-time algorithms for k-edge-connected components, k-lean tree
  decompositions, and more.
\newblock {\em CoRR}, abs/2411.02658, 2024.
\newblock \href {http://arxiv.org/abs/2411.02658} {\path{arXiv:2411.02658}},
  \href {https://doi.org/10.48550/ARXIV.2411.02658}
  {\path{doi:10.48550/ARXIV.2411.02658}}.

\bibitem[Kos23]{DBLP:conf/esa/Kosinas23}
Evangelos Kosinas.
\newblock Connectivity queries under vertex failures: Not optimal, but
  practical.
\newblock In Inge~Li G{\o}rtz, Martin Farach{-}Colton, Simon~J. Puglisi, and
  Grzegorz Herman, editors, {\em 31st Annual European Symposium on Algorithms,
  {ESA} 2023, September 4-6, 2023, Amsterdam, The Netherlands}, volume 274 of
  {\em LIPIcs}, pages 75:1--75:13. Schloss Dagstuhl - Leibniz-Zentrum f{\"{u}}r
  Informatik, 2023.
\newblock \href {https://doi.org/10.4230/LIPICS.ESA.2023.75}
  {\path{doi:10.4230/LIPICS.ESA.2023.75}}.

\bibitem[Kos24]{DBLP:conf/soda/Kosinas24}
Evangelos Kosinas.
\newblock Computing the 5-edge-connected components in linear time.
\newblock In David~P. Woodruff, editor, {\em Proceedings of the 2024 {ACM-SIAM}
  Symposium on Discrete Algorithms, {SODA} 2024, Alexandria, VA, USA, January
  7-10, 2024}, pages 1887--2119. {SIAM}, 2024.
\newblock \href {https://doi.org/10.1137/1.9781611977912.76}
  {\path{doi:10.1137/1.9781611977912.76}}.

\bibitem[KTBC91]{DBLP:conf/focs/KanevskyTBC91}
Arkady Kanevsky, Roberto Tamassia, Giuseppe~Di Battista, and Jianer Chen.
\newblock On-line maintenance of the four-connected components of a graph
  (extended abstract).
\newblock In {\em 32nd Annual Symposium on Foundations of Computer Science, San
  Juan, Puerto Rico, 1-4 October 1991}, pages 793--801. {IEEE} Computer
  Society, 1991.
\newblock \href {https://doi.org/10.1109/SFCS.1991.185451}
  {\path{doi:10.1109/SFCS.1991.185451}}.

\bibitem[LPS25]{DBLP:conf/soda/LongPS25}
Yaowei Long, Seth Pettie, and Thatchaphol Saranurak.
\newblock Connectivity labeling schemes for edge and vertex faults via expander
  hierarchies.
\newblock In Yossi Azar and Debmalya Panigrahi, editors, {\em Proceedings of
  the 2025 Annual {ACM-SIAM} Symposium on Discrete Algorithms, {SODA} 2025, New
  Orleans, LA, USA, January 12-15, 2025}, pages 1--47. {SIAM}, 2025.
\newblock \href {https://doi.org/10.1137/1.9781611978322.1}
  {\path{doi:10.1137/1.9781611978322.1}}.

\bibitem[LS22]{DBLP:conf/focs/LongS22}
Yaowei Long and Thatchaphol Saranurak.
\newblock Near-optimal deterministic vertex-failure connectivity oracles.
\newblock In {\em 63rd {IEEE} Annual Symposium on Foundations of Computer
  Science, {FOCS} 2022, Denver, CO, USA, October 31 - November 3, 2022}, pages
  1002--1010. {IEEE}, 2022.
\newblock \href {https://doi.org/10.1109/FOCS54457.2022.00098}
  {\path{doi:10.1109/FOCS54457.2022.00098}}.

\bibitem[LW24]{DBLP:conf/icalp/LongW24}
Yaowei Long and Yunfan Wang.
\newblock Better decremental and fully dynamic sensitivity oracles for subgraph
  connectivity.
\newblock In Karl Bringmann, Martin Grohe, Gabriele Puppis, and Ola Svensson,
  editors, {\em 51st International Colloquium on Automata, Languages, and
  Programming, {ICALP} 2024, July 8-12, 2024, Tallinn, Estonia}, volume 297 of
  {\em LIPIcs}, pages 109:1--109:20. Schloss Dagstuhl - Leibniz-Zentrum
  f{\"{u}}r Informatik, 2024.
\newblock \href {https://doi.org/10.4230/LIPICS.ICALP.2024.109}
  {\path{doi:10.4230/LIPICS.ICALP.2024.109}}.

\bibitem[NI92]{DBLP:journals/algorithmica/NagamochiI92}
Hiroshi Nagamochi and Toshihide Ibaraki.
\newblock A linear-time algorithm for finding a sparse k-connected spanning
  subgraph of a k-connected graph.
\newblock {\em Algorithmica}, 7(5{\&}6):583--596, 1992.
\newblock \href {https://doi.org/10.1007/BF01758778}
  {\path{doi:10.1007/BF01758778}}.

\bibitem[NRSS21]{DBLP:conf/esa/NadaraRSS21}
Wojciech Nadara, Mateusz Radecki, Marcin Smulewicz, and Marek Sokolowski.
\newblock Determining 4-edge-connected components in linear time.
\newblock In Petra Mutzel, Rasmus Pagh, and Grzegorz Herman, editors, {\em 29th
  Annual European Symposium on Algorithms, {ESA} 2021, September 6-8, 2021,
  Lisbon, Portugal (Virtual Conference)}, volume 204 of {\em LIPIcs}, pages
  71:1--71:15. Schloss Dagstuhl - Leibniz-Zentrum f{\"{u}}r Informatik, 2021.
\newblock \href {https://doi.org/10.4230/LIPICS.ESA.2021.71}
  {\path{doi:10.4230/LIPICS.ESA.2021.71}}.

\bibitem[Nut22]{DBLP:conf/esa/Nutov22}
Zeev Nutov.
\newblock Data structures for node connectivity queries.
\newblock In Shiri Chechik, Gonzalo Navarro, Eva Rotenberg, and Grzegorz
  Herman, editors, {\em 30th Annual European Symposium on Algorithms, {ESA}
  2022, September 5-9, 2022, Berlin/Potsdam, Germany}, volume 244 of {\em
  LIPIcs}, pages 82:1--82:12. Schloss Dagstuhl - Leibniz-Zentrum f{\"{u}}r
  Informatik, 2022.
\newblock \href {https://doi.org/10.4230/LIPICS.ESA.2022.82}
  {\path{doi:10.4230/LIPICS.ESA.2022.82}}.

\bibitem[PP22]{DBLP:conf/wdag/ParterP22a}
Merav Parter and Asaf Petruschka.
\newblock {\~{O}}ptimal dual vertex failure connectivity labels.
\newblock In Christian Scheideler, editor, {\em 36th International Symposium on
  Distributed Computing, {DISC} 2022, October 25-27, 2022, Augusta, Georgia,
  {USA}}, volume 246 of {\em LIPIcs}, pages 32:1--32:19. Schloss Dagstuhl -
  Leibniz-Zentrum f{\"{u}}r Informatik, 2022.
\newblock \href {https://doi.org/10.4230/LIPICS.DISC.2022.32}
  {\path{doi:10.4230/LIPICS.DISC.2022.32}}.

\bibitem[PPP24]{DBLP:conf/stoc/ParterPP24}
Merav Parter, Asaf Petruschka, and Seth Pettie.
\newblock Connectivity labeling and routing with multiple vertex failures.
\newblock In Bojan Mohar, Igor Shinkar, and Ryan O'Donnell, editors, {\em
  Proceedings of the 56th Annual {ACM} Symposium on Theory of Computing, {STOC}
  2024, Vancouver, BC, Canada, June 24-28, 2024}, pages 823--834. {ACM}, 2024.
\newblock \href {https://doi.org/10.1145/3618260.3649729}
  {\path{doi:10.1145/3618260.3649729}}.

\bibitem[PSS{\etalchar{+}}22]{DBLP:conf/icalp/PilipczukSSTV22}
Michal Pilipczuk, Nicole Schirrmacher, Sebastian Siebertz, Szymon Torunczyk,
  and Alexandre Vigny.
\newblock Algorithms and data structures for first-order logic with
  connectivity under vertex failures.
\newblock In Mikolaj Bojanczyk, Emanuela Merelli, and David~P. Woodruff,
  editors, {\em 49th International Colloquium on Automata, Languages, and
  Programming, {ICALP} 2022, July 4-8, 2022, Paris, France}, volume 229 of {\em
  LIPIcs}, pages 102:1--102:18. Schloss Dagstuhl - Leibniz-Zentrum f{\"{u}}r
  Informatik, 2022.
\newblock \href {https://doi.org/10.4230/LIPICS.ICALP.2022.102}
  {\path{doi:10.4230/LIPICS.ICALP.2022.102}}.

\bibitem[PSY22]{DBLP:conf/stoc/PettieSY22}
Seth Pettie, Thatchaphol Saranurak, and Longhui Yin.
\newblock Optimal vertex connectivity oracles.
\newblock In Stefano Leonardi and Anupam Gupta, editors, {\em {STOC} '22: 54th
  Annual {ACM} {SIGACT} Symposium on Theory of Computing, Rome, Italy, June 20
  - 24, 2022}, pages 151--161. {ACM}, 2022.
\newblock \href {https://doi.org/10.1145/3519935.3519945}
  {\path{doi:10.1145/3519935.3519945}}.

\bibitem[PT07]{DBLP:conf/focs/PatrascuT07}
Mihai P{\u{a}}tra{\c{s}}cu and Mikkel Thorup.
\newblock Planning for fast connectivity updates.
\newblock In {\em 48th Annual {IEEE} Symposium on Foundations of Computer
  Science {(FOCS} 2007), October 20-23, 2007, Providence, RI, USA,
  Proceedings}, pages 263--271. {IEEE} Computer Society, 2007.
\newblock \href {https://doi.org/10.1109/FOCS.2007.54}
  {\path{doi:10.1109/FOCS.2007.54}}.

\bibitem[PY21]{DBLP:conf/icalp/PettieY21}
Seth Pettie and Longhui Yin.
\newblock The structure of minimum vertex cuts.
\newblock In Nikhil Bansal, Emanuela Merelli, and James Worrell, editors, {\em
  48th International Colloquium on Automata, Languages, and Programming,
  {ICALP} 2021, July 12-16, 2021, Glasgow, Scotland (Virtual Conference)},
  volume 198 of {\em LIPIcs}, pages 105:1--105:20. Schloss Dagstuhl -
  Leibniz-Zentrum f{\"{u}}r Informatik, 2021.
\newblock \href {https://doi.org/10.4230/LIPICS.ICALP.2021.105}
  {\path{doi:10.4230/LIPICS.ICALP.2021.105}}.

\bibitem[Tar72]{DBLP:journals/siamcomp/Tarjan72}
Robert~Endre Tarjan.
\newblock Depth-first search and linear graph algorithms.
\newblock {\em {SIAM} J. Comput.}, 1(2):146--160, 1972.
\newblock \href {https://doi.org/10.1137/0201010} {\path{doi:10.1137/0201010}}.

\bibitem[vdBS19]{DBLP:conf/focs/BrandS19}
Jan van~den Brand and Thatchaphol Saranurak.
\newblock Sensitive distance and reachability oracles for large batch updates.
\newblock In David Zuckerman, editor, {\em 60th {IEEE} Annual Symposium on
  Foundations of Computer Science, {FOCS} 2019, Baltimore, Maryland, USA,
  November 9-12, 2019}, pages 424--435. {IEEE} Computer Society, 2019.
\newblock \href {https://doi.org/10.1109/FOCS.2019.00034}
  {\path{doi:10.1109/FOCS.2019.00034}}.

\end{thebibliography}

\end{document}